\DeclareMathAlphabet{\mathpzc}{OT1}{pzc}{m}{it}
\newcounter{alphcount}
{\begin{list}{{\upshape(}\alph{alphcount}\/{\upshape)\ }}%
             {\usecounter{alphcount}\labelwidth1.5em%
              \leftmargin2em\labelsep0.5em\topsep0.25em plus 0.5ex%
              \itemsep0.25em plus 0.5ex\parsep0em}}{\end{list}}
{\begin{list}{{\upshape#1\arabic{alphcount}}}%
             {\usecounter{alphcount}\labelwidth2.5em%
              \leftmargin2.5em\labelsep0em\topsep0.25em plus 0.5ex%
              \itemsep0.25em plus 0.5ex\parsep0em}}{\end{list}}
{\begin{list}{{\upshape\arabic{alphcount}.\ }}%
             {\usecounter{alphcount}\labelwidth1.5em%
              \leftmargin2em\labelsep0.5em\topsep0.25em plus 0.5ex%
              \itemsep0.25em plus 0.5ex\parsep0em}}{\end{list}}
\newtheorem*{assumption**}{\assumptionnumber}
\providecommand{\assumptionnumber}{}
\newenvironment{assump2}[3]
 {%
  \renewcommand{\assumptionnumber}{Assumption \upshape{I}\,(#2,#3)}%
  \begin{assumption**}%
  \protected@edef\@currentlabel{Assumption \upshape{I}}%
 }
 {%
  \end{assumption**}
 }
\newtheorem*{assumption***}{\assumptionnumber}
\providecommand{\assumptionnumber}{}
\newenvironment{assump3}[3]
 {%
  \renewcommand{\assumptionnumber}{Assumption \upshape{I*}\,(#2,#3)}%
  \begin{assumption***}%
  \protected@edef\@currentlabel{Assumption \upshape{I*}}%
 }
 {%
  \end{assumption***}
 }
\newtheorem{theorem}{Theorem}[section]
\newtheorem{lemma}{Lemma}[section]
\newtheorem{pro}{Proposition}[section]
\newtheorem{cor}{Corollary}[section]
\newtheorem{definition}{Definition}
\newtheorem{assumption}{Assumption}[section]
\numberwithin{equation}{section}
\numberwithin{theorem}{section}
\numberwithin{lemma}{section}
\numberwithin{pro}{section}
\numberwithin{cor}{section}
\numberwithin{definition}{section}
\numberwithin{cons}{section}
\numberwithin{rem}{section}
\numberwithin{exa}{section}
\numberwithin{table}{section}
\numberwithin{figure}{section}
\newcommand{\mR}{\mathbb{R}}
\newcommand{\mZ}{\mathbb{Z}}
\newcommand{\mT}{\mathbb{T}}
\newcommand{\mC}{\mathbb{C}}
\newcommand{\mE}{\mathbb{E}}
\newcommand{\mP}{\mathbb{P}}
\newcommand{\y}{\tilde{Y}}
\newcommand{\p}{\partial}
\newcommand{\ackname}{Acknowledgements}
\newcommand{\vare}{\varepsilon}
\newcommand{\A}{\mathbf{A}}
\renewcommand{\a}{\mathbf{a}}
\newcommand{\C}{\mathbf{C}}
\newcommand{\D}{\mathbf{D}}
\newcommand{\I}{\mathbf{I}}
\renewcommand{\S}{\mathbf{S}}
\newcommand{\U}{\mathbf{U}}
\newcommand{\X}{\mathbf{X}}
\newcommand{\Z}{\mathbf{Z}}
\newcommand{\al}{\mathbf{\alpha}}
\newcommand{\Y}{\tilde{Y}}
\newcommand{\by}{\mathbf{y}}
\newcommand{\bs}{\mathbf{s}}
\newcommand{\bd}{\mathbf{d}}
\newcommand{\bA}{\mathbf{A}}
\newcommand{\bB}{\mathbf{B}}
\newcommand{\bI}{\mathbf{I}}
\newcommand{\la}{\langle}
\newcommand{\ra}{\rangle}
\newcommand{\varp}{\varphi}
\newcommand{\btheta}{\mathbf{\theta}}
\newcommand{\bTheta}{\mathbf{\Theta}}
\newcommand{\ubA}{\underline{\bA}}
\newcommand{\ubB}{\underline{\bB}}
\newcommand{\ubd}{\underline{\bd}}
\newcommand{\ubtheta}{\underline{\btheta}}
\newcommand{\prob}{\mathbb{P}}
\newcommand{\rnum}{\mathbb{R}}
\newcommand{\cnum}{\mathbb{C}}
\newcommand{\E}{\mathbb{E}}
\newcommand{\znum}{\mathbb{Z}}
\newcommand{\nnum}{\mathbb{N}}
\newcommand{\bigsnorm}[1]{\big\vvvert{#1}\big\vvvert}
\newcommand{\Bigsnorm}[1]{\Big\vvvert{#1}\Big\vvvert}
\newcommand{\biggsnorm}[1]{\bigg\vvvert{#1}\bigg\vvvert}
\newcommand{\biginprod}[2]{\big\langle #1, #2 \big\rangle}
\newcommand{\Biginprod}[2]{\Big\langle #1, #2 \Big\rangle}
\newcommand{\cum}{\mathrm{cum}}
\newcommand{\cumker}[1]{\mathcal{C}_{t_1,\ldots,t_{{#1}-1}}(\tau_1,\ldots,\tau_{#1})}
\newcommand{\cumk}[1]{\mathcal{C}_{t_1,\ldots,t_{{#1}-1}}}
\newcommand{\cumloc}[2]{\mathcal{C}_{#1;t_1,\ldots,t_{{#2}-1}}
}
\newcommand{\cumbp}[2]{\kappa_{{#1};t_1,\ldots,t_{{#2}-1}}}
\newcommand{\ET}{\hat E^{(T)}}
\newcommand{\floc}[2]{\F_{#1;\omega_1,\ldots,\omega_{{#2}-1}}}
\newcommand{\flocatFour}[3]{f_{#1;\omega_{#3_1},\ldots,\omega_{#3_{{#2}-1}}}}
\newcommand{\Floc}[3]{\mathcal{F}_{#1;\omega_{#3_1},\ldots,\omega_{#3_{{#2}-1}}}}
\newcommand{\Fourlocs}[3]{\tilde{\mathcal{F}}_{#3;\omega_{#2_1},\ldots,\omega_{#2_{{#1}-1}} } }
\newcommand{\Fourloc}[2]{\tilde{\mathcal{F}}_{{#2_1}+\ldots+{#2_k};\omega_{#2_1},\ldots,\omega_{#2_{{#1}-1}} } }
\newcommand{\Fopkatfour}[2]{\mathcal{F}_{\omega_{#2_1},\ldots,\omega_{#2_{{#1}-1}}}}
\newcommand{\fdft}[2]{ D^{({l_{#1}})}_{\omega_{#2}}}
\newcommand{\fdftc}[2]{ D^{({l_{#1}})}_{-\omega_{#2}}}
\newcommand{\Fop}[1]{\hat{\mathcal{F}}^{(T)}_{\omega_{#1}}}
\newcommand{\F}{\mathcal{F}}
\newcommand{\tprojk}[3]{\gamma^{(l_#1,l_#2)}_{#3}}
\newcommand{\fdftinn}[3]{\langle D_{\omega_{#1}}, \phi^{\omega_{#2}}_{l_{#3}}\rangle }
\newcommand{\im}{\mathrm{i}}
\newcommand{\inprod}[2]{\langle #1, #2 \rangle}	
\newcommand{\snorm}[1]{\vvvert{#1}\vvvert}
\newcommand{\var}{\mathrm{Var}}
\newcommand{\Var}{\var}
\newcommand{\Cov}{\mathrm{Cov}}
\newcommand{\cov}{\mathrm{Cov}}
\newcommand{\Hspace}{\mathbb{H}}
\newcommand{\tproj}{{\bm{\beta}}^{(T)}}
\newcommand{\tprojes}{\hat{\bm{\beta}}^{(T)}}
\newcommand{\WT}{{w}^{(T)}_{h}}
\newcommand{\Wr}[1]{{w}^{(T)}_{h_{#1}}}
\newcommand{\Xu}[1]{X^{(u)}_{#1}}
\newcommand{\Xuu}[2]{X^{(#1)}_{t_{#2}}}
\newcommand{\XT}[1]{X^{(T)}_{#1}}
\newcommand{\T}{\top}
\newcommand\cC{{\mathcal C}}
\newcommand{\cN}{{\mathcal N}}
\newcommand{\intt}{\int\hspace{-.2cm}\int}
\newcommand{\inttt}{\int\hspace{-.2cm}\int\hspace{-.2cm}\int }
\newcommand*\tageq{\refstepcounter{equation}\tag{\theequation}}
\newcommand{\lpr}{{l^\prime}}
\newcommand{\mpr}{{m^\prime}}
\def\ol{\overline}
\def\ic{\mathbf{i}}
\def\tr{\mathrm{Tr}}
\def\beq{\begin{equation}}
\def\eeq{\end{equation}}
\def\bals{\begin{align*}}
\def\eals{\end{align*}}
\def\bal{\begin{align}}
\def\eal{\end{align}}
\begin{document}

\title{Testing for stationarity of functional time series \\ in the frequency domain%\footnote{The authors sincerely thank the Associate Editor and three referees for their constructive comments that helped produce a much improved revision of the original paper.}{\phantom{!}}
\footnote{AA was partially supported by NSF grants DMS 1305858 and DMS 1407530.  AvD was partially supported by Maastricht University, the contract ``Projet d'Actions de Recherche Concert{\'e}es'' No. 12/17-045 of the ``Communaut{\'e} fran\c{c}aise de Belgique'' and by the Collaborative Research Center 
``Statistical modeling of nonlinear dynamic processes'' (SFB 823, Project A1, C1, A7) of the German
Research Foundation (DFG).}}

\author{
Alexander Aue\footnote{Department of Statistics, University of California, Davis, CA 95616, USA, email: \tt{aaue@ucdavis.edu}}
\and Anne van Delft\footnote{Ruhr-Universit{\"a}t Bochum, Fakult{\"a}t f{\"u}r Mathematik, 44780 Bochum, Germany, email: \tt{Anne.vanDelft@rub.de}
}}
\date{\today}
\maketitle

\begin{abstract}
\setlength{\baselineskip}{1.665em}
Interest in functional time series has spiked in the recent past with papers covering both methodology and applications being published at a much increased pace. This article contributes to the research in this area by proposing a new stationarity test for functional time series based on frequency domain methods. \textcolor{black}{The proposed test statistics is based on joint dimension reduction via functional principal components analysis across the spectral density operators at all Fourier frequencies, explicitly allowing for frequency-dependent levels of truncation to adapt to the dynamics of the underlying functional time series.}
%\textcolor{red}{Setting up the test requires a delicate understanding of periodogram- and spectral density operators that are the functional counterparts of periodogram- and spectral density matrices in the multivariate world. The proposed statistics is based on the eigendecompositions of the spectral density operators.}  \todo{We should add something about the novelty; KL basis/frequency dependent L}
The properties of the test are derived both under the null hypothesis of stationary functional time series and under the smooth alternative of locally stationary functional time series. The methodology is theoretically justified through asymptotic results. Evidence from simulation studies and an application to annual temperature curves suggests that the test works well in finite samples.  
\medskip \\
\noindent {\bf Keywords:} Frequency domain methods, Functional data analysis, Locally stationary processes, Spectral analysis

\noindent {\bf MSC 2010:} Primary: 62G99, 62H99, Secondary: 62M10, 62M15, 91B84
\end{abstract}

\setlength{\baselineskip}{1.665em}

%%%%%%%%%%%%%%%%%%%
\section{Introduction}
\label{sec:intro}
%%%%%%%%%%%%%%%%%%%
The aim of this paper is to provide a new stationarity test for functional time series based on frequency domain methods. Particular attention is given to taking into account alternatives allowing for smooth variation as a source of non-stationarity, even though non-smooth alternatives \textcolor{black}{are covered within the simulation study}. Functional data analysis has seen an upsurge in research contributions for at least one decade. This is reflected in the growing number of monographs in the area. Readers interested in the current state of statistical inference procedures may consult \cite{b00}, \cite{fv}, \cite{hk12}, \cite{he15} and \cite{rs10}. 

Papers on functional time series have come into the focus more recently and constitute now an active area of research. \cite{hk} introduced a general weak dependence concept for stationary functional time series, while \cite{vde16} provided a framework for locally stationary functional time series. \cite{as03}, \cite{adh} and \cite{bcs00} constructed prediction methodology that may find application across many areas of science, economics and finance. With the exception of \cite{vde16}, the above contributions are concerned with procedures in the time domain. Complementing methodology in the frequency domain has been developed in parallel. One should mention \cite{pt13}, who provided results concerning the Fourier analysis of time series in function spaces, and \cite{hkh15}, who addressed the problem of dimension reduction for functional time series using dynamic principal components. 

The methodology proposed in this paper provides a new frequency domain inference procedure for functional time series. More precisely, tests for second-order stationarity are developed. In the univariate case, such tests have a long history, going back at least to the seminal paper \cite{psr69}, who based their method on the evaluation of evolutionary spectra of a given time series. Other contributions building on this work include \cite{vsn99}, who used local periodograms and wavelet analysis, and \cite{p09}, whose test is based on comparing a local estimate of the spectral density to a global estimate. \citet{dpv11} and \cite{dpv13} developed methods to derive both a measure of and a test for stationarity in locally stationary time series, the latter authors basing their method on empirical process theory. In all papers, interest is in smoothly varying alternatives. The same tests, however, also \textcolor{black}{tend to} have power against non-smooth alternatives such as structural breaks or change-points. A recent review discussing methodology for structural breaks in time series is \cite{ah}, while \citet{ars17+} is a recent contribution to structural breaks in functional time series. 

The proposed test for second-order stationarity of functional time series seeks to exploit that the Discrete Fourier Transform (DFT) of a functional time series evaluated at distinct Fourier frequencies are asymptotically uncorrelated if and only if the series is second-order stationary. The proposed method is therefore related to the initial work of \citet{dsr16}, who put forth similar tests in a univariate framework. Their method has since been generalized to multivariate time series in \citet{jsr15} as well as to spatial and spatio-temporal data by \citet{bsr17} and \citet{bjsr17}, respectively. A different version of functional stationarity tests, based on time domain methodology involving cumulative sum statistics \citep{ah}, was given in \cite{hkr14}. 

\textcolor{black}{The intrinsic variation of a functional time series is always larger than any sample size, and standard results known from univariate and multivariate time series analysis do not directly apply. From a practical perspective this brings to the fore the question of how to compress this infinite-dimensional variation to finite dimension in a meaningful way, as there is a complex interplay between dynamics occurring across frequencies and the function space. This means that dimension reduction has to be done jointly across estimated spectral density operators at all Fourier frequencies, yet separately as the exact level of dimension reduction has to be decided per frequency. The proposed test statistics collect these different sets of projections, obtained via functional principal components analysis, into a quadratic form encapsulating the second-order dynamics. 
To derive the large-sample behavior of this statistic under both the null hypothesis of a stationary time series and the alternative of a locally stationary functional time series requires new, and perhaps independently interesting, results on distributional convergence of a cross-periodogram operator in function space, where verifying existence of the limit process and tightness are nontrivial tasks. The subsequent proofs of distributional convergence of the test statistics which require taking into account the pecularities of fPCA estimators, are also complex and new. The main results are derived under the assumption that the curves are observed in their entirety, corresponding to a setting in which functions are sampled on a dense grid rather than a sparse grid. Differences for these two cases have been worked out in \citet{lh10}.}

The remainder of the paper is organized as follows. Section \ref{sec:setting} provides background, gives requisite notations, introduces properties of functional version of the DFT and gives intuition for the test. The exact form of the hypothesis test, model assumptions and the test statistics are introduced in Section \ref{sec:test}. The large-sample behavior under the null hypothesis of second-order stationarity and the alternative of local stationarity is established in Sections \ref{sec:large}. Empirical aspects are highlighted in Section \ref{sec:empirical}. The proofs are technical and relegated to the Appendix. Several further auxiliary results are proved in the supplementary document \citet{avd16+}, henceforth referred to simply as the Online Supplement.

%%%%%%%%%%%%%%%%%%%
\section{Notation and setup}
\label{sec:setting}
%%%%%%%%%%%%%%%%%%%

%%%%%%%%%%%%%%%%%%%
%\subsection{The function space}
\label{subsec:setting:func_space_op}

A functional time series $(X_t\colon t\in\mathbb{Z})$ will be viewed as a sequence of random elements on a probability space $(\Omega,{\cal A},P)$ with paths in a separable Hilbert space. Without loss of generality, we shall focus on processes taking values in $H_\rnum=L^2_{\rnum}([0,1])$, the space of equivalence classes of real-valued, square integrable functions on the unit interval $[0,1]$. Because the methodology introduced in this paper is based on a frequency domain approach, we shall make extensive use of the complex Hilbert space $H=L^2_\cnum([0,1])$. We briefly introduce notation and relevant properties of this space and associated operators. The complex conjugate of $z \in \cnum$ is denoted by $\overline{z}$ and the imaginary number by $\im$. For $f,g \in H$, the inner product and the induced $L_2$-norm on $H$ are respectively given by 
\begin{align}
\label{eq:innerprod_norm}
\langle f, g \rangle = \int_{0}^{1} f(\tau)\overline{g(\tau)}d\tau
\qquad\text{and}\qquad
\|f\|_2 =  \sqrt{\langle f, f\rangle}.
\end{align}
Two elements of $H$ %_\cnum$ 
are understood to be equal if their difference has vanishing $L_2$-norm. More generally, for measurable functions $g\colon [0,1]^k \to \cnum$, the $L^p$-norm shall be denoted by $\|g\|_p$
%\[
%\|g\|_p = \bigg(\int_{[0,1]^k}|g(\tau_1,\ldots,\tau_k)|^p\,
%d\tau_1 \cdots d\tau_k \big)^{1/p}.
%\]
and the supremum norm by $\|g\|_{\infty} = \sup_{\boldsymbol{\tau} \in [0,1]^k  }|g(\boldsymbol{\tau})|$.
%\textcolor{red}{In all of the above, the obvious modifications apply to $H_\rnum$, the canonical Hilbert space in the functional data analysis setting. }\todo{maybe put this up}

Next, some properties of linear operators on $H$ are stated. Denote by $S_{\infty}(H)$ the Banach space of bounded linear operators $A\colon H \to H$ equipped with the operator norm $\snorm{A}_\infty=\sup_{\|g\|_2\le1}\|Ag\|_2$. For all $f,g \in H$, the adjoint operator of $A$, denoted by $A^{\dagger}$, is defined by $\inprod{Af}{g}=\inprod{f}{A^{\dagger}g} $ and the conjugate operator of $A$ is given by $\overline{A}g=\overline{(A\overline{g})}$. An operator $A$ is called self-adjoint if $\inprod{Af}{g} = \inprod{f}{A g}$ for all $f,g \in H$ and non-negative definite if $\inprod{Ag}{g}\ge 0$ for all $g \in H$. For $v\in H$, define the tensor product $f \otimes g\colon H \otimes H \to H$ as the bounded linear operator $(f \otimes g)v =\inprod{v}{g}f$.
%Let $H$ stand more generally for $H_\cnum$, unless otherwise stated. 
%An operator $A$ on $H$ is said to be compact if its pre-image is compact. 
A compact operator $A$ admits a {\em singular value decomposition}
\begin{align}
\label{eq:svd}
A= \sum_{n=1}^\infty s_n(A)\,\psi_n \otimes \phi_n,
\end{align}
where $(s_n(A)\colon n\in\nnum)$, are the {\em singular values} of $A$, $(\phi_n\colon n\in\nnum)$ and $(\psi_n\colon n\in\nnum)$ orthonormal bases of $H$. 
The singular values are ordered to form a monotonically decreasing sequence of non-negative numbers. %Based on the convergence rate to zero, operators on $H$ can be classified into particular {\em Schatten $p$-classes}. That is, 
A compact operator $A$ is said to belong to the {\em Schatten $p$-class} $S_p(H)$ if and only if the sequence $s(A)=(s_n(A)\colon n\in\nnum)$ of singular values of $A$ belongs to the sequence space $\ell^p$, so if and only if 
%For $1 \le p < \infty$, the {\em Schatten $p$-class} $S_p(H)$ is the subspace of all compact operators $A$ on $H$ such that the sequence $s(A)=(s_n(A)\colon n\in\nnum)$ of singular values of $A$ belongs to the sequence space $\ell^p$, that is, 
%\[
%A \in S_{p}(H) \qquad\mbox{if and only if}\qquad 
$\snorm{A}_p= {%\bigg
(\sum_{n=1}^\infty s_n^p(A) %\bigg
)}^{1/p} < \infty$,
%\]
where $\snorm{A}_p$ is referred to as the {\em Schatten $p$-norm}.
%For $1 \le p \le q \le \infty$, the inclusion $S_{p}(H) \subseteq S_{q}(H)$ is valid and the space $S_p(H)$ with norm  $\snorm{A}_p$ forms a Banach space. }\todo{reduce red part}
Relevant here are $S_1(H)$, the space of trace-class operators, and particularly $S_2(H)$, the space of Hilbert--Schmidt operators. The latter is also a Hilbert space with inner product $\inprod{A}{B}_{S} = \sum_{i =1}^{\infty}\inprod{A\psi_i}{B\psi_i}$ where $A, B \in S_2(H)$ and $(\psi_n\colon n\in\nnum)$ is an ONB of $H$. The mapping  $\mathcal{T}\colon H \otimes H \to S_2(H)$ defined by the linear extension of 
$
\mathcal{T}(f \otimes g) = f \otimes \overline{g}$ is an isometric isomorphism and defines a Hilbert--Schmidt operator with kernel in $H \times H$ given by $(f\otimes g)(\tau,\sigma)=f(\tau)\overline{g}(\sigma)$, $\tau, \sigma \in [0,1]$. As a consequence, $A \in S_2(H)$ if and only if there exists $a \in H \times H$ such that $\snorm{A}_{2}=\|a\|_2$. Further useful properties needed in the proofs of the various statements of this paper are relegated to the Appendix and the Online Supplement.

%Finally, the identity operator is denoted by $I_H$ and the zero operator by $O_H$.

%More properties of Schatten-class operators, in particular of Hilbert--Schmidt operators, are provided in \cite{vde16}. 
%%%%%%%%%%%%%%%%%%%

\subsection{Dependence structure on the function space} 
\label{subsec:setting:dep_struc}

Let $L^2_{\cnum}(\Omega)$ be the Hilbert space with elements satisfiying $\E[\|X\|^2_2] < \infty$ and denote by $\E [X]$ the mean function of $X$, where the expectation should be viewed in the sense of a Bochner integral. For $X, Y \in L^2_{\cnum}(\Omega)$, the covariance operator $\mathcal{C}_{X,Y}\colon H \otimes H \to H$ is defined as $\mathcal{C}_{X,Y} =\E[(X -\E[ X]) \otimes (Y-\E[ Y])]$ and belongs to $S_2(H)$.  A functional time series $X=(X_t\colon t\in\znum)$ is called strictly stationary if, for all finite sets of indices $J \subset \mathbb{Z}$, the joint distribution of $(X_{t+j}\colon j \in J)$ does not depend on $t\in\znum$. Similarly, $X$ is weakly stationary if its first- and second-order moments exist and are invariant under translation in time. Without loss of generality, it is assumed throughout that %the mean function is equal to the zero function, i.e., 
$\E [X_t] =0$ and that $X_t \in L^2_\rnum(\Omega)$ for all $t \in \mathbb{Z}$. The lag-$h$ covariance operator between $X_t$ and $X_{t+h}$ is denoted by
\[\mathcal{C}_{t,h} = \E[X_{t+h} \otimes X_{t}]\]
which reduces to $\mathcal{C}_h=  \E[X_h \otimes X_{0}]$ in case of weak stationarity. Note that this object is a non-negative definite element of $S_1(H_\rnum)$ for $h=0$. The covariance operator $\mathcal{C}_h$ can be shown to form a Fourier pair with a non-negative Hermitian element of $S_p(H)$. %\todo{this is my Herglotz paper but maybe not mention to avoid selfcitation}. 
Provided sufficiently fast decay of the second-order structure, the {\em spectral density operator} $\mathcal{F}_{\omega}$ is well-defined and given by the Fourier transform of $\mathcal{C}_h$,
\begin{align}
\label{eq:Fomega}
\mathcal{F}_{\omega}
= \frac{1}{2\pi}\sum_{h \in \mathbb{Z}} \mathcal{C}_h\,e^{-\im\omega h}.
\end{align}
A sufficient condition for the existence of $\mathcal{F}_{\omega}$ in $S_p(H)$ is $\sum_{h\in \mathbb{Z}} \snorm{\mathcal{C}_h}_p < \infty$.\medskip

Higher-order dependence among the functional observations is defined through cumulant mixing conditions \citep{b81,br67}. For this, the notion of higher-order cumulant tensors is required; see Appendix \ref{subsec:proof:cumulants} for their definition and a discussion of their properties for nonstationary functional time series.%\todo{leave this remark unless we would move to beginning of section 4}

\subsection{The functional discrete Fourier transform}
\label{sec:fDFT}
%%%%%%%%%%%%%%%%%%%

The starting point of this paper is the following proposition that characterizes second-order stationary behavior of a functional time series in terms of a spectral representation.  Its proof is in Appendix \ref{subsec:proof:cramer}.
\begin{pro}\label{start}
A zero-mean, $H$-valued stochastic process $(X_t\colon t \in\znum)$ whose spectral measure is trace class 
% be a zero-mean, $H_\rnum$-valued stochastic process. Then, the process 
admits the representation
\begin{align}\label{cramrep}
X_t = \int_{-\pi}^{\pi} e^{\im t \omega} d Z_{\omega} \qquad \text{a.s.,}
\end{align}
where $(Z_{\omega}\colon\omega\in(-\pi,\pi])$ is a right-continuous functional orthogonal-increment process, if and only if 
it is weakly stationary. 
\end{pro}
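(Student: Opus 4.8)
The plan is to establish the two implications separately, disposing of the easy direction first. For ``only if'', I would assume the representation \eqref{cramrep} with $(Z_\omega)$ a right-continuous orthogonal-increment process. Since the process is zero-mean by hypothesis, its first moment is trivially invariant under time shifts, so only the second-order structure needs attention. Substituting \eqref{cramrep} into $\Cov(X_{t+h},X_t)$ and using the orthogonal-increment property to collapse the resulting double integral over $(\omega,\lambda)$ onto the diagonal $\omega=\lambda$, one obtains $\Cov(X_{t+h},X_t)=\int_{-\pi}^{\pi}e^{\im h\omega}\,d\mathcal{F}(\omega)$, where $\mathcal{F}$ is the monotone, operator-valued measure associated with the increments of $Z$. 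Because this expression depends on $h$ but not on $t$, weak stationarity follows, the requisite $\mathbb{E}\|X_0\|_2^2<\infty$ being guaranteed by finiteness of the total mass of $\mathcal{F}$, namely $\mathcal{C}_0=\int_{-\pi}^{\pi}d\mathcal{F}(\omega)$.

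For ``if'' I would proceed in two steps. First, I would establish an operator-valued Herglotz representation: the sequence of autocovariance operators $(\mathcal{C}_h\colon h\in\znum)$ is non-negative definite in the operator sense, and this yields a unique right-continuous, Loewner-monotone operator-valued measure $\mathcal{F}$ on $(-\pi,\pi]$ with $\mathcal{C}_h=\int_{-\pi}^{\pi}e^{\im h\omega}\,d\mathcal{F}(\omega)$ for every $h\in\znum$. This is the functional analogue of the multivariate Herglotz theorem; it reduces to \eqref{eq:Fomega} with $d\mathcal{F}(\omega)=\mathcal{F}_\omega\,d\omega$ whenever $\sum_{h}\snorm{\mathcal{C}_h}_1<\infty$, but must be phrased at the level of measures to cover a general weakly stationary process, which need not admit a spectral density.

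Second, I would build $Z$ by a spectral isometry. Let $\mathcal{H}_X$ denote the closed complex linear span of $\{X_t\colon t\in\znum\}$ inside $L^2(\Omega;H_\cnum)$, and let $L^2(\mathcal{F})$ be the Hilbert space of $H_\cnum$-valued functions on $(-\pi,\pi]$ with inner product induced by $\mathcal{F}$. By the Herglotz identity of the previous step, the assignment $X_t\mapsto e^{\im t\,\cdot}$ extends to an isometric isomorphism $\Phi\colon\mathcal{H}_X\to L^2(\mathcal{F})$, with surjectivity following from the density of trigonometric polynomials. Setting $Z_\omega=\Phi^{-1}(\mathbf{1}_{(-\pi,\omega]})$ then produces a process whose increments over disjoint intervals are orthogonal, precisely because the corresponding indicators are orthogonal with respect to $\mathcal{F}$; approximating $\omega\mapsto e^{\im t\omega}$ by simple functions and applying $\Phi^{-1}$ recovers \eqref{cramrep} as an identity in $L^2(\Omega;H_\cnum)$, hence almost surely and almost everywhere.

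The main obstacle, and the step demanding the most care, is guaranteeing that the orthogonal-increment property holds at the level of cross-covariance operators and not merely of their traces: a scalar isometry in $L^2(\Omega;H_\cnum)$ controls only $\mathbb{E}\langle\,\cdot\,,\,\cdot\,\rangle$ and cannot by itself force the full operator $\mathbb{E}[(Z_{\omega_2}-Z_{\omega_1})\otimes\overline{(Z_{\omega_4}-Z_{\omega_3})}]$ to vanish. This is exactly why the operator-valued measure $\mathcal{F}$ and the correctly weighted target space $L^2(\mathcal{F})$ are indispensable, and it also requires attention to measurability and right-continuity of $\omega\mapsto Z_\omega$, which I would inherit from the corresponding properties of $\mathcal{F}$ and of the indicator functions.
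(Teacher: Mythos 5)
Your proposal is correct and follows the same two-pronged strategy as the paper --- a direct covariance computation for the ``representation $\Rightarrow$ stationarity'' direction, and a time-domain/frequency-domain isometry for the converse --- but it executes the hard direction quite differently. The paper simply cites Panaretos and Tavakoli (2013) for an isomorphism between $\overline{\mathrm{sp}}(X_t\colon t\in\znum)$ and $\overline{\mathrm{sp}}(e^{\im t\cdot}\colon t\in\znum)$ viewed inside $L^2([-\pi,\pi],\mathscr{B},\snorm{\mathcal{F}_\omega}_1\,d\omega)$, a result that presupposes $\sum_h\snorm{\mathcal{C}_h}_1<\infty$ so that the spectral density operator exists; its covariance computation for the converse likewise writes $d\mathcal{F}(\lambda)=\mathcal{F}_\lambda\,d\lambda$. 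You instead prove an operator-valued Herglotz representation first, build the isometry onto $L^2(\mathcal{F})$ for a general positive operator-valued measure, and define $Z_\omega=\Phi^{-1}(\mathbf{1}_{(-\pi,\omega]})$. What your route buys is generality: it covers every weakly stationary process, including those without a spectral density, and so actually matches the hypotheses of the proposition as stated, whereas the paper's proof tacitly imports a summability condition that does not appear in the statement. What it costs is length: you must construct $L^2(\mathcal{F})$ carefully and verify --- as you rightly flag --- that orthogonality of increments holds at the level of cross-covariance operators rather than merely of their traces; this is where the operator-valued measure does real work and where a fully written-out version would need the most detail. Both proofs handle the easy direction identically, by collapsing the double stochastic integral onto the diagonal via the orthogonal-increment property.
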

If the process is not weakly stationary, then a representation in the frequency domain is not necessarily well-defined and certainly not with respect to complex exponential basis functions. However, a time-dependent functional Cram\'er representation exists if the characteristics of the process are captured by a Bochner-measurable mapping that is an evolutionary operator-valued mapping in time direction \citep{vde16}.
Assume that the functions $X_1,\ldots,X_{T}$ have been observed. If the process is weakly stationary, the {\em functional Discrete Fourier Transform} (fDFT) evaluated at frequency $\omega$, given by
\begin{align} \label{eq:fDFT}
D^{(T)}_{\omega} =\frac{1}{\sqrt{2 \pi T}} \sum_{t=1}^{T} X_{t} e^{-\im \omega t},
\end{align}  
can be seen as an estimate of the increment process $Z_{\omega}$ and exists almost surely as an element of $H$. The functional time series itself can then be represented through the inverse fDFT as
\begin{align}\label{eq:invfDFT}
 X_t= \sqrt{\frac{2 \pi}{T}} \sum_{j=1}^{T}D^{(T)}_{\omega_{j}} e^{\im \omega_{j} t}.
\end{align}
Under regularity conditions, a set of fDFTs evaluated at distinct frequencies yield asymptotically independent Gaussian random elements in $H$ and, for fixed $\omega$, one has $\Var(D^{(T)}_{\omega}) \to \F_{\omega}$ \citep{pt13}. The fDFT sequence of a Hilbertian-valued stationary process is in particular asymptotically uncorrelated at the canonical frequencies $\omega_{j} = {2\pi j}/{T}$. Consequently, provided the series is weakly stationary, for $j \ne j^\prime$ or $j \ne T-j^\prime$, we have 
$\snorm{\cov(D^{(T)}_{\omega_{j}},D^{(T)}_{\omega_{j^\prime}})}_2= O({1}/{T})$. In other words, the lag-$h$ covariance operator of the fDFT converges in norm and hence weak operator topology to the zero operator as $T \to \infty$. Similar to the above, the reverse argument (uncorrelatedness of the functional DFT sequence implies weak stationarity) can be shown by means of the inverse fDFT. 
Using expression \eqref{eq:fDFT}, the covariance operator $\mathcal{C}_{t,h}$ of $X_{t+h}$ and $X_{t}$ can be written in terms of the fDFT sequence as
\begin{align*}
\mathcal{C}_{t,h} %\E[ X_t \otimes X_{t^\prime}]  
=\frac{2 \pi}{T} \sum_{j,j^\prime=1}^{T} \E [ D^{(T)}_{\omega_{j}} \otimes D^{(T)}_{\omega_{j^\prime}}]e^{\im \omega_{j}h} 
=\frac{2 \pi}{T} \sum_{j=1}^{T} \E [ D^{(T)}_{\omega_{j}} \otimes D^{(T)}_{\omega_{j}}]e^{\im \omega_{j}h} 
= \mathcal{C}_h, %\E [X_{t-t^{\prime}} \otimes X_0]
\end{align*}
where the equality holds in an $L^2$-sense. This demonstrates that the autocovariance kernel of a second-order stationary functional time series is obtained and, hence, that an uncorrelated fDFT sequence implies second-order stationarity up to lag $T-1$. \textcolor{black}{The fDFT thus captures exactly the defining property of a weakly stationary process and provides a natural starting point for a test of stationarity. It is, however, a nontrivial task to construct a test statistic that optimally extracts the information contained in the infinite-dimensional process to finite dimensions. Not only can the dependence structure and the resulting dynamics of a functional time series be of a complicated nature (see Figure~\ref{fig:eig} and the example given in Section \ref{func_not_mult} of the Online Supplement), but the process will vary along both frequency and functional directions. To construct a powerful test it is therefore crucial to understand how the fDFT's behave when weak stationarity is violated. In accordance with aforementioned time series literature, the theoretical behavior of the fDFT sequence under smooth alternatives is studied. These properties will then be exploited to verify large-sample results for a testing framework for functional stationarity.}

%%%%%%%%%%%%%%%%%%%
\section{The functional stationarity testing framework}
\label{sec:test}
%%%%%%%%%%%%%%%%%%%

This section gives precise formulations of the hypotheses of interest, states the main assumptions of the paper and introduces the test statistics. Throughout, interest is in testing the null hypothesis 
\[
H_0\colon (X_t\colon t\in\mathbb{Z})~\mbox{is a weakly stationary functional time series \phantom{locally.}}
\]
versus the alternative
\[
H_A\colon (X_t\colon t\in\mathbb{Z})~\mbox{is a locally stationary functional time series},
\]
where %The definition of 
locally stationary functional time series are defined as follows. %, see \citet{vde16}.
\begin{definition}
\label{lsp}
A stochastic process $(X_t\colon t\in\mathbb{Z})$ taking values in $H_\rnum$ is said to be {\it locally stationary} if 
\begin{enumerate}
\itemsep-.4ex
\item[(1)] $X_t=X_t^{(T)}$ for $t=1,\ldots,T$ and $T\in\mathbb{N}$; and
\item[(2)] for any rescaled time $u\in[0,1]$, there is a strictly stationary process $(X_t^{(u)}\colon t\in\mathbb{Z})$ such that 
\[
\big\|X_t^{(T)}-X_t^{(u)}\big\|_2\leq\Big(\Big|\tfrac tT-u\Big|+\tfrac 1T\Big)P_{t,T}^{(u)}
\qquad\mbox{a.s.},
\]
where $P_{t,T}^{(u)}$ is a positive, real-valued triangular array of random variables such that, for some $\rho>0$, $\mathbb{E}[|P_{t,T}^{(u)}|^\rho]<\infty$ for all $t$ and $T$, uniformly in $u\in[0,1]$.
\end{enumerate}
\end{definition}

Note that, under $H_A$, the process constitutes a triangular array of functions. Inference methods are then based on in-fill asymptotics as popularized in \citet{d97} for univariate time series. The process is then considered to be observed on a finer grid as $T$ increases such that more observations are available at a local level. A rigorous statistical framework for locally stationary functional time series was recently provided in \citet{vde16}. Note that weakly stationary processes are included in Definition \ref{lsp}, which then reduces to standard asymptotics.

Based on the observations in Section \ref{sec:fDFT}, a test for weak stationarity can be set up exploiting the uncorrelatedness of the elements in the sequence $(D^{(T)}_{\omega_{j}}\colon j=1,\ldots,T)$. \textcolor{black}{This could be done considering the lag-$h$ sample covariance operator $T^{-1}\sum_{j=1}^TD^{(T)}_{\omega_j}\otimes D^{(T)}_{\omega_{j+h}}$ which should be centered at the zero operator in $S_2$ for all $h=1,\ldots,T-1$. 
Here, two statistics based on the coefficients in the Karhunen--Lo\`eve decomposition of the fDFTs are considered.  
%\textcolor{red}{The Karhunen--Lo{\`e}ve theorem implies that an optimal choice of basis to represent the function $D_{\omega}$ in finite-dimension is provided by the eigenfunctions of its covariance operator $\mathcal{F}_{\omega}$.} 
For $j=1,\ldots,T$, let $(\phi_{l}^{\omega_j}\colon l\in\mathbb{N})$ be the orthonormal basis of eigenfunctions of $\mathcal{F}_{\omega_j}$ and observe that for this choice of basis $\Var(\inprod{D_{\omega_j}}{\phi^{\omega_j}_l}) = \inprod{\F_{\omega_j}(\phi^{\omega_j}_l)}{\phi^{\omega_j}_l} = \lambda^{\omega_j}_l$, where $(\lambda^{\omega_j}_l\colon l \in \nnum) \in \rnum_{+}$ are the eigenvalues of $\F_{\omega_j}$. Then, for any $j,j^\prime$, $(\phi_l^{\omega_j}\otimes \phi_{l^\prime}^{\omega_{j^\prime}}\colon l,l^\prime \in\mathbb{N})$ is an orthonormal basis of $L^2_{\cnum}([0,1]^2)$ and, by definition of the Hilbert--Schmidt inner product on the algebraic tensor product space $H\otimes H$, %$H \otimes_{\text{alg}} H$, 
\begin{align*} 
\frac 1T\sum_{j=1}^TD_{\omega_j}^{(T)}\otimes D_{\omega_{j+h}}^{(T)}
&=\frac 1T\sum_{j=1}^T\sum_{l=1}^\infty\sum_{l^\prime=1}^\infty
\biginprod{ D_{\omega_j}^{(T)}\otimes D_{\omega_{j+h}}^{(T)}}{
\phi_l^{\omega_j}\otimes \phi_{l^\prime}^{\omega_{j+h}}}_{S} \,
\phi_l^{\omega_j}\otimes \phi_{l^\prime}^{\omega_{j+h}} \tageq\label{eq:CovDsh}\\
&\approx\frac 1T\sum_{j=1}^T\sum_{l=1}^{L}\sum_{l^\prime=1}^{L^\prime}
\langle D_{\omega_j}^{(T)},\phi_l^{\omega_j}\rangle
\overline{\langle D_{\omega_{j+h}}^{(T)}, \phi_{l^\prime}^{\omega_{j+h}}\rangle}
\phi_l^{\omega_j}\otimes \phi_{l^\prime}^{\omega_{j+h}}
% \inprod{ \E \big[D^{(T)}_{\omega_{j}} \otimes D^{(T)}_{\omega_{j^\prime}}\big]}{ \phi_{l}^{\omega_j} \otimes \phi_{l^\prime}^{\omega_{j^\prime}}}_{H \otimes H} 
% &= \E[\inprod{D^{(T)}_{\omega_{j}}}{\phi_{l}^{\omega_j}} \overline{\inprod{ D^{(T)}_{\omega_{j^\prime}}}{\phi_{l^\prime}^{\omega_{j^\prime}}}} ] .
%%= \cov\big(\inprod{D^{(T)}_{\omega_{j}}}{\phi_{l}^{\omega_j}}, \inprod{ D^{(T)}_{\omega_{j^\prime}}}{\phi_{l^\prime}^{\omega_{j^\prime}}}\big). 
\end{align*}
for sufficiently large $L$ and $L^\prime$. The foregoing motivates to set up tests based on the %standardized 
score products
\begin{align} 
\label{eq:testproj_egb}
\gamma^{(T)}_{j,h}(l,l^\prime)=
%= \frac{1}{T} \sum_{j=1}^{T} 
\langle D^{(T)}_{\omega_{j}}, \phi^{\omega_j}_{l} \rangle %\sqrt{ {\lambda^{\omega_j}_{l}}}}
\overline{\langle D^{(T)}_{\omega_{j+h}}, \phi^{\omega_{j+h}}_{l^\prime}}\rangle%{ \sqrt{ \lambda^{\omega_{j+h}}_{l^\prime}} }.
%\qquad h= 1,\ldots, T-1,
\end{align} 
or on the standardized score products 
\begin{align} 
\label{eq:testproj_egb_st}
\rho^{(T)}_{j,h}(l,l^\prime)=\frac{\gamma^{(T)}_{j,h}(l,l^\prime)}{\sqrt{ {\lambda^{\omega_j}_{l}}\lambda^{\omega_{j+h}}_{l^\prime}}}.
\end{align} 
In practice, the unknown spectral density operators $\mathcal{F}_{\omega_j}$ and $\mathcal{F}_{\omega_{j+h}}$ are to be replaced with consistent estimators $\Fop{j}$ and $\Fop{j+h}$, which will then yield % Similarly the population eigenvalues and eigenfunctions %$\lambda_{l}^{-\omega_j}$ and $\lambda_l^{\omega_{j+h}}$ need to be replaced by the
 respective sample eigenvalues $\hat\lambda_l^{\omega_j}$ and eigenfunctions $\hat\phi_l^{\omega_j}$. %$\hat\lambda_{l}^{-\omega_j}$ and $\hat\lambda_l^{\omega_{j+h}}$. 
The estimated quantities corresponding to \eqref{eq:testproj_egb} and \eqref{eq:testproj_egb_st} will be denoted by $\hat{\gamma}^{(T)}_{j,h}(l,l^\prime)$ and $\hat\rho^{(T)}_{j,h}(l,l^\prime)$, respectively.} 
As an estimator of ${\mathcal{F}}_{\omega}$, take
\begin{align} \label{eq:estimatestat}
\hat{\mathcal{F}}^{(T)}_{\omega}= \frac{2\pi}{T}\sum_{j=1}^{T}K_{b}(\omega-\omega_j){\big(D^{(T)}_{\omega_j} \otimes D^{(T)}_{\omega_j}\big)},
\end{align}
where $K_b(\cdot)$ is a kernel with bandwidth $b$ satisfying the following conditions. 
\begin{assumption} \label{windowfunction}
{\em (a) Let $K\colon[-\frac{1}{2},\frac{1}{2}] \to \mathbb{R}_+$ be %a %positive, 
symmetric %window function 
with $\int K(x) dx=1$ and $\int K(x)^2 dx< \infty$. 

(b) Let $b=b_T$ be a bandwidth such that $T^{-1/2}\ll b_T\ll T^{-1/4}$. 

(c) Let $K_b(x)=b^{-1}K((2\pi b)^{-1}x)$ and %that is periodically extended; that is, $K_b(x) =K_b(x\pm 2\pi)$.} %$ b^{-1}K(b^{-1}(x \pm 2 \pi))$.}
and extend the kernel periodically such that $K_b(x) =K_b(x\pm 2\pi)$} in order to include estimates for frequencies around $\pm \pi$.
\end{assumption}
%
%To determine the large-sample behavior %of %$\tfunces_h$ and $\tprojes_h$ 
%under both the null and the alternative further conditions on the bandwidth $b$ are imposed in Section \ref{sec:large} below. %The replacement of the unknown operators with consistent estimators requires to derive the order of the difference\todo{We can now longer define it like this as we do not have $\tproj_h(l,l^\prime)$ anymore! Should we define it in beta?}
%\begin{align}\label{eq:reperror}
%%\sqrt{T}|\langle {\gamma}^{(T)}_r-\tfunces_r, \psi_{l} \otimes \psi^{\dagger}_{m} \rangle|  \quad \text{ and } \quad 
%\sqrt{T}\big|{\gamma}^{(T)}_h-\tprojes_h\big|.
%\end{align} 
%It will be shown in the next section that, for appropriate choices of the bandwidth $b$, this term is negligible under both null and alternative hypothesis. 
%
%in \eqref{eq:testproj} or $l=1,\ldots,L$ in \eqref{eq:testproj_egb}

\textcolor{black}{To set up the test statistics, it now appears reasonable to extract information across a range of directions $l=1,\ldots, L_j$ and $l^\prime=1,\ldots,L_{j+h}$ as well as a selection of lags $h=1,\ldots,\bar{h}$, where $\bar{h}$ denotes an upper limit. The truncation parameters $L_j=L(\omega_j)$ and $L_{j+h}=L(\omega_{j+h})$ are explicitly allowed to depend on the $j$-th and $(j+h)$-th Fourier frequencies in order to accommodate heterogeneity in the Karhunen--Lo\`{e}ve decompositions across the spectral domain. Set %Build therefore first the $L\times L$ matrix $\hat\Gamma_h^{(T)}=(\tprojes_h(l,l^\prime)\colon l,l^\prime=1,\ldots,L)$ and construct the vector\todo{We can now longer define it like this as we do not have $\tproj_h(l,l^\prime)$ anymore! Should we define it in beta?}
%$
%\hat{{\bm{\gamma}}}_h^{(T)}=\mathrm{vec}(\hat\Gamma_h^{(T)})
%$
%by vectorizing $\hat\Gamma_h^{(T)}$ via stacking of its columns. 
%Define\todo{in case of the eigenbasis in blue} %a scaled version of $\hat{{\bm{\gamma}}}_h^{(T)}$ by
\begin{equation}
\label{eq:weighted}
\hat{{\beta}}_{h,u}^{(T)}=%\sqrt{T}
\frac{1}{T} \sum_{j=1}^T \sum_{l=1}^{L_j}\sum_{l^\prime=1}^{L_{j+h}} \hat\gamma_{j,h}^{(T)}(l,l^\prime)
\qquad\mbox{and}\qquad
\hat{{\beta}}_{h,s}^{(T)}=%\sqrt{T}
\frac{1}{T} \sum_{j=1}^T \sum_{l=1}^{L_j}\sum_{l^\prime=1}^{L_{j+h}} \hat\rho_{j,h}^{(T)}(l,l^\prime),
% \frac{\langle D^{(T)}_{\omega_{j}}, \hat\phi^{\omega_j}_{l} \rangle \overline{\langle D^{(T)}_{\omega_{j+h}}, \hat\phi^{\omega_{j+h}}_{l}}\rangle}{ \sqrt{ {\hat\lambda^{\omega_j}_{l} \hat\lambda^{\omega_{j+h}}_l } }} .
%\bm{e}^\top\hat{{\bm{\gamma}}}_h^{(T)} \textcolor{black}{=\sum_{l=1}^{L} \frac{1}{T} \sum_{j=1}^{T} \frac{\langle D^{(T)}_{\omega_{j}}, \phi^{\omega_j}_{l} \rangle \overline{\langle D^{(T)}_{\omega_{j+h}}, \phi^{\omega_{j+h}}_{l}}\rangle}{ \sqrt{ {\lambda^{\omega_j}_{l} \lambda^{\omega_{j+h}}_l } }} },
\end{equation}
where the subscripts $u$ and $s$ refer to the un-standardized and standardized forms, respectively. In the following, the subscript $x$ will be used to refer to any of these two versions when no confusion can arise.}
%noting that the inner double sum aggregates the (scaled) estimated contribution to the cross-periodogram tensor $\E [D^{(T)}_{\omega_{j}} \otimes D^{(T)}_{\omega_{j+h}}]$ at two particular frequencies separated by $h$, while the outer sum averages over all such frequency pairs.} %where $\bm{e}$ is a vector of dimension $L^2$ whose elements are all equal to one and ${}^\top$ denotes transposition. Note that for \eqref{eq:testproj_egb}, a somewhat simplified expression is obtained, since the off-diagonal terms $l\not=l^\prime$ do not have to be accounted for. Dropping these components reduces the number of terms in the sum \eqref{eq:weighted} from $L^2$ to $L$, even though the general form of the equation continues to hold. 
%\textcolor{black}{While in Davis, we found that actually it makes more sense for the eigenbased version to consider 
%\begin{align*}
%\frac{1}{T} \sum_{j=1}^{T} \sum_{l=1}^{L(\omega_j)}  \frac{\langle D^{(T)}_{\omega_{j}}, \phi^{\omega_j}_{l} \rangle \overline{\langle D^{(T)}_{\omega_{j+h}}, \phi^{\omega_{j+h}}_{l}}\rangle}{ \sqrt{ {\lambda^{\omega_j}_{l} \lambda^{\omega_{j+h}}_l } }} 
%\end{align*}
%}

\textcolor{black}{Choose next a collection $h_1,\ldots,h_M$ of lags each of which is upper bounded by $\bar{h}$ to pool information across a number of autocovariances and build the vectors
\[
%\sqrt{T}
\hat{\bm{b}}_{M,x}^{(T)}
=%\sqrt{T}
\big(\Re\hat{{\beta}}_{h_1,x}^{(T)},\ldots,\Re\hat{{\beta}}_{h_M,x}^{(T)},
\Im\hat{{\beta}}_{h_1,x}^{(T)},\ldots,\Im\hat{{\beta}}_{h_M,x}^{(T)}\big)^\top,
\]
where $\Re$ and $\Im$ denote real and imaginary part, respectively. Finally, set up the quadratic forms 
\begin{equation}
\label{eq:quad_form_test}
\hat{Q}_{M,x}^{(T)}=T(\hat{\bm{b}}_{M,x}^{(T)})^\top\hat{\Sigma}_{M,x}^{-1}\hat{\bm{b}}_{M,x}^{(T)},
\end{equation}
where $\hat\Sigma_{M,x}$ is an estimator of the asymptotic covariance matrix of the vectors ${\bm{b}}_{M,x}^{(T)}$ which are defined by replacing $\hat\gamma_{j,h}^{(T)}(l,l^\prime)$ and $\hat\rho_{j,h}^{(T)}(l,l^\prime)$ with $\gamma_{j,h}^{(T)}(l,l^\prime)$ and $\rho_{j,h}^{(T)}(l,l^\prime)$ in \eqref{eq:weighted} and then using the resulting $\beta_{h,x}^{(T)}$ in place of $\hat\beta_{h,x}^{(T)}$ in the definition of $\hat{\bm{b}}_{M,x}^{(T)}$. The foregoing provides the two test statistics $\hat{Q}_{M,u}^{(T)}$ and $\hat{Q}_{M,s}^{(T)}$ that will be used to test the null of stationarity against the alternative of local stationarity. Note that both quadratic forms depend on the tuning parameters $L_j$, $L_{j+h}$ and $M$, the selection of which will be evaluated empirically in Section \ref{sec:empirical}.} 

\textcolor{black}{To facilitate the derivation of large-sample results, the following assumptions are made: for the un-standardized respectively standardized test require 
\begin{itemize}\itemsep-.3ex
\item [] Condition $C_u$: Let $L_j\sim\log T$ and\/ $\lim_l\inf_\omega\lambda_l^{\omega}>0$;
\item [] Condition $C_s$: Let $\inf_{\omega}\lambda^{\omega}_{\bar{L}} >0$ for some $\bar{L}\geq\sup_jL_j$.
\end{itemize}
In keeping with the above arrangement, the respective conditions will be referred to as $C_x$ if no confusion arises. Condition $C_u$ for the un-standardized test allows to send the truncation levels $L_j$ to infinity in a coordinated manner as long as the divergence is slow (here, logarithmic) compared to $T$; see \cite{fhks14}. Condition $C_s$ for the standardized test on the other hand requires a finite truncation level, to ensure that the smallest eigenvalues of the compact operators $\mathcal{F}_{\omega_j}$ are bounded away from zero as these show up in the denominator of \eqref{eq:testproj_egb_st}.
}

 %%%%%%%%%%%%%%%%%%%
\section{Large-sample results}
\label{sec:large}
%%%%%%%%%%%%%%%%%%%

%%%%%%%%%%%%%%%%%%%

\subsection{Assumptions}
\label{sec:large:assumptions}

\textcolor{black}{The following gives the main requirements under both stationarity and local stationarity in terms of cumulant tensors of the functional time series (Appendix \ref{subsec:proof:cumulants}) that are needed to establish the asymptotic behavior of the test statistics under both hypotheses. Note that the null hypothesis is nested within the alternative. Because of this basic fact, we start with the general assumptions under local stationarity before specializing to the stationary case. 
%\begin{assump1}{1}{$n$}\label{A1} Let\/ $\sup_{t}\E[\|X_t\|^n_2] < \infty$ and\/ $\sup_{t} \sum_{h_1,\ldots, h_{n-1} \in \znum} \|\cum(\XT{t},\ldots, \XT{t+h_{n-1}})\|_2< \infty$.
%\end{assump1}
\textcolor{black}
{\begin{assump2}{2}{$k$}{\,$\ell$} \label{cumglsp} 
Assume $(\XT{t}\colon t\leq T, T\in\mathbb{N})$ and $(\Xu{t}\colon t\in\mathbb{Z})$ are as in Definition \ref{lsp}. Suppose $\sup_{t}\E[\|X_t\|^{\min(k,12)}_2] < \infty$ and that there exists a a positive sequence $\cumbp{k}{k}$ in $L^2_\mathbb{R}([0,1]^k)$, independent of\/ $T$ such that, for all $j=1,\ldots,k-1$ and some $\ell\in\mathbb{N}$,
\begin{align} 
\label{eq:kapmix}
\sum_{t_1,\ldots,t_{k-1} \in \mathbb{Z}} (1+|t_j|^\ell)\|\cumbp{k}{k}\|_2 <\infty.
\end{align}
Suppose furthermore that there exist representations
\begin{align} \label{eq:repstatap}
\XT{t}- \Xuu{{t}/{T}}{}= Y^{(T)}_{t} 
\qquad\mbox{and}\qquad  
\Xu{t}- X^{(v)}_{t}=(u-v) Y_{t}^{(u,v)}, 
\end{align}
for some processes $(Y^{(T)}_{t}\colon t\leq T, T\in\mathbb{N})$ and $(Y_{t}^{(u,v)}\colon t\in\mathbb{Z})$ taking values in $H_\rnum$ whose $k$-th order joint cumulants satisfy
\begin{enumerate}\itemsep-.3ex
\item[(i)] $\|\cum(\XT{t_1},\ldots,\XT{t_{k-1}},Y^{(T)}_{t_k}) \|_2 \le \frac{1}{T}\|\kappa_{k;t_1-t_k,\ldots,t_{k-1}-t_k}\|_2 $,
\item[(ii)] $\|\cum(\Xuu{u_1}{1},\ldots,\Xuu{u_{k-1}}{k-1},Y_{t_{k}}^{(u_k,v)}) \|_2 \le \|\kappa_{k;t_1-t_k,\ldots,t_{k-1}-t_k}\|_2 $,
\item[(iii)] $\sup_u \|\cum(\Xuu{u}{1},\ldots,\Xuu{u}{k-1},\Xuu{u}{k}) \|_2 \le \|\kappa_{k;t_1-t_k,\ldots,t_{k-1}-t_k}\|_2$,
\item[(iv)] $\sup_u \|\frac{\partial^\ell}{\partial u^\ell} \cum(\Xuu{u}{1},\ldots,\Xuu{u}{k-1},\Xuu{u}{k}) \|_2 \le \|\kappa_{k;t_1-t_k,\ldots,t_{k-1}-t_k}\|_2.$
\end{enumerate}
%\end{Alist}
\end{assump2}}
\textcolor{black}{
\ref{cumglsp} provides Lipschitz conditions that are generalizations of those in \citet{lsr16}, who investigated the properties of quadratic forms of stochastic processes in a finite-dimensional setting. The above conditions enable to express the behavior of the fDFT's of a $k$-th order locally stationary process in terms of $k$-th order time-varying spectral density tensors (Lemma \ref{cumboundglsp}). This is convenient in order to derive explicit expressions of the distributional properties under the alternative and to understand departures from stationarity. 
Under $H_A$, we can uniquely characterize the second-order stucture of the stochastic process $(\XT{t}\colon t\leq T,T \in \nnum)$ via the {\em time-varying spectral density operator}
\begin{align} 
\mathcal{F}_{u,\omega} = \frac{1}{2 \pi}\sum_{h \in \mathbb{Z}} \mathcal{C}_{u,h}e^{-\im \omega h},
\end{align}
where $ \mathcal{C}_{u,h}=\cum(X^{(u)}_h, X^{(u)}_0)$ denotes the local cumulant tensor at fixed time $u$ of the stationary approximating process $(X^{(u)}_t\colon t\in\znum)$. Note that the parameter $\ell$ and (iii)-(iv) in \ref{cumglsp}, influence the smoothness of the operator-valued mapping $(u,\omega) \mapsto \F_{u,\omega}$.
Under \ref{cumglsp}(2,2), derivative maps are well-defined elements of $S_2(H)$ and $\omega \mapsto \F_{u;\omega}$ is uniformly continuous in $\omega$ with respect to $\snorm{\cdot}_2$. We refer to Lemma \ref{tvsdo} for details.
More generally, under $k$-th order local stationarity, these properties carry over to the {\em local $k$-th order cumulant spectral density tensor} 
\begin{align}\label{eq:tvsdoker} 
\floc{u}{k} = \frac{1}{(2 \pi)^{k-1}}\sum_{t_1,\ldots,t_{k-1} \in \mathbb{Z}}  \cumloc{u}{k}e^{-\im \sum_{j=1}^{k-1} \omega_{j} t_j},
\end{align}
where $\omega_1,\ldots, \omega_{k-1} \in (-\pi, \pi]$ and $
\cumloc{u}{k}=\cum\big(\Xuu{u}{1},\ldots,\Xuu{u}{k-1},\Xuu{u}{0}\big)$
is the corresponding local cumulant kernel tensor of order $k$ at time $u_0$. Observe that, for $k>1$, \eqref{eq:tvsdoker} can be viewed as an element of $S_2(H^{\otimes^{\lfloor (k+1)/2 \rfloor}}, H^{\otimes^{\lfloor k/2 \rfloor}})$. Under $k$-th order stationarity the above objects become independent of local time $u$, so that $\floc{u}{k}\equiv \F_{\omega_1,\ldots,\omega_{k-1}}$, and \ref{cumglsp} specializes to the following.}
%\textcolor{black}{The conditions imposed through \ref{cumglsp} specialize as follows under $k$-th order stationarity.}
\textcolor{black}{\begin{assump3}{2}{$k$}{$\,\ell$}
%({\textbf{\textcolor{black}{$k$-th} order stationary functional time series}})
\label{Statcase}Let $(X_t\colon t\in\mathbb{Z})$ be a $k$-th order stationary functional time series with values in $H_\rnum$ such that
%\begin{enumerate}
%\itemsep-.4ex
(i) $\E[\|X_0\|^{\min(k,12)}_2] < \infty$ and 
%\item[(ii)] 
(ii) $\sum_{t_1,\ldots, t_{k-1}= -\infty}^{\infty} (1+|t_j|^{\ell}) \|%\cum(X_{t_1},\ldots, X_{t_{k-1}}, X_0)
\mathcal{C}_{t_1,\ldots,t_{k-1}}\|_2 < \infty$ for all $1 \le j \le k-1$.
%\end{enumerate}
\end{assump3}}
Because the test statistics require estimators of the eigenelements of $\F_{\omega}$, it is of importance to consider the properties of the estimator \eqref{eq:estimatestat} for both null and alternative hypotheses. The next theorem shows that it is a consistent estimator of the integrated (in a Bochner sense) time-varying spectral density operator 
\[
{G}_{\omega} = \int_{0}^1 \mathcal{F}_{u,\omega }du,
\]
where the convergence is uniform in $\omega \in [-\pi,\pi]$ with respect to $\snorm{\cdot}_2$. This therefore becomes an operator-valued function in $\omega$ that acts on $H$ and is independent of rescaled time $u$. Under $H_0$, $G_\omega$ thus reduces to $\mathcal{F}_\omega$. }
\begin{theorem}[{\textbf{Consistency and uniform convergence}}]
\label{UnifCon}
\textcolor{black}{Suppose $(\XT{t}\colon t \leq T, T \in \nnum)$ satisfies \ref{cumglsp}$(4,2)$. Consider the estimator $\hat{\mathcal{F}}^{(T)}_{\omega}$ in \eqref{eq:estimatestat} with smoothing kernel $K$ fulfilling Assumption \ref{windowfunction}(a) and (c). Then, 
\begin{enumerate}\itemsep-.3ex
\item[(a)] $\E[\snorm{\hat{\mathcal{F}}^{(T)}_{\omega}- {G}_{\omega}}^2_2] = O((bT)^{-1}+b^4)$, uniformly in $\omega \in [-\pi,\pi]$.
\item[(b)] If, in addition, Assumption \ref{windowfunction}(b) holds and $K$ has bounded derivative on $(-1/2,1/2)$ then,\\
 $\sup_{\omega \in [-\pi,\pi] }\snorm{\hat{\mathcal{F}}^{(T)}_{\omega}- {G}_{\omega}}_{2} \overset{p}{\to} 0. \label{eq:unif}$
\end{enumerate}}
\end{theorem}

The proof of Theorem \ref{UnifCon} is given in Section \ref{subsec:proof:alt} of the Appendix. Since the theorem shows consistency of $\hat{\F}_{\omega}$, a self-adjoint element of $S_2(H)$, it follows from \cite{mm03} that the sample eigenelements $(\hat{\lambda}^{\omega}_l, \hat{\phi}^{\omega}_l \colon l  \in \mathbb{N})$ of $\hat{\F}_{\omega}$ provide consistent estimators for the eigenelements $(\tilde{\lambda}^{\omega}_l, \tilde{\phi}^{\omega}_l\colon l  \in \mathbb{N})$ of $G_{\omega}$. If $H_0$ is satisfied, then the stated consistency holds for the eigenelements $(\lambda^{\omega}_l,\phi^{\omega}_l\colon l\in\mathbb{N})$ of $\F_{\omega}$.

%%%%%%%%%%%%%%%%%%%

\subsection{Properties under the null of stationarity}
\label{sec:large:null}

The asymptic results under $H_0$ are collected in this section. 
The first theorem establishes that the scaled difference between $\beta_{h,x}^{(T)}$ and $\hat{\beta}_{h,x}^{(T)}$ is negligible in large samples. \textcolor{black}{Note that the assumptions here and for other theorems in this section are formulated imposing stationarity on certain moments for the null hypothesis via \ref{Statcase}. To verify the results, typically further assumptions on higher-order cumulants are required. \textcolor{black}{These are controlled via  \ref{cumglsp}.}}
\begin{theorem}
\label{boundstatcase}
\textcolor{black}{Let Assumption \ref{windowfunction},  \ref{cumglsp}(12,2)  and $C_x$ hold. Then, under $H_0$, for any fixed $h$, }
\begin{align*}
{\color{black}{\sqrt{T}\big| \hat{\beta}_{h,x}^{(T)}-{\beta}^{(T)}_{h,x} \big| 
= O_p\bigg(\frac{1}{{bT}}+b^2\bigg)
\qquad (T\to\infty).
}}
\end{align*}
\end{theorem}
The proof is given in Section \ref{sec:bounderror} of the Appendix. In view of Assumption \ref{windowfunction}, Theorem \ref{boundstatcase} shows that the distributional properties of $\hat{\beta}_{h,x}^{(T)}$ are asymptotically the same as those of $\beta^{(T)}_{h,x}$. Note that these rates are necessary for the estimator in \eqref{eq:estimatestat} to be consistent, as is seen from part (a) of Theorem \ref{UnifCon}, which reduces to the stationary case if the process does not depend on $u$. They hence do not impose an additional constraint under $H_0$. 

\textcolor{black}{
The next theorem derives that, under the additional assumption of fourth-order stationarity, the asymptotic variance is uncorrelated for all lags $h$ and that there is no correlation between the real and imaginary parts. For $n\in\mathbb{N}$, set $[n]=\{1,\ldots, n\}$.
\begin{theorem}
\label{momentstatcase}
Let Assumption \ref{windowfunction} and $C_x$ hold. Suppose further that \ref{Statcase}(4,2) is satisfied.
Then, for $h_1=h_2=h$,
\begin{align*} 
(a)\quad &T\,\cov\Big( \Re \hat{\beta}_{h,u}^{(T)}, \Re\hat{\beta}_{h,u}^{(T)}\Big)  
=T\, \cov\Big(\Im \hat{\beta}_{h,u}^{(T)}, \Im\hat{\beta}_{h,u}^{(T)}\Big)   \\[.1cm]
&\to
\frac{1}{4 \pi}\int \int \sum_{(\bm{l},\bm{l}^\prime)\in{\mathcal{L}}\times{\mathcal{L}}^\prime}
\langle \F_{\omega,-\omega-\omega_{h},-\omega^{\prime}} (\phi^{\omega^{\prime}}_{l_{1}^\prime} \otimes \phi^{\omega^{\prime}+\omega^{\prime}_{h}}_{l_2^\prime}), \phi^{\omega}_{l_{1}} \otimes \phi^{\omega+\omega_h}_{l_2}\rangle d\omega d\omega^{\prime}
+\frac{1}{2\pi}\int\sum_{\bm{l}\in\mathcal{L}}\lambda_{l_1}^\omega\lambda_{l_2}^{\omega+\omega_h}d\omega,~~~~~~~~~~\phantom{x}\\[.2cm]
(b)\quad &T\,\cov\Big( \Re \hat{\beta}_{h,s}^{(T)}, \Re\hat{\beta}_{h,s}^{(T)}\Big)  
=T\, \cov\Big(\Im \hat{\beta}_{h,s}^{(T)}, \Im\hat{\beta}_{h,s}^{(T)}\Big)   \\[.1cm]
&\to
\frac{1}{4 \pi}\int \int \sum_{(\bm{l},\bm{l}^\prime)\in{\mathcal{L}}\times{\mathcal{L}}^\prime}
\frac{ \langle \F_{\omega,-\omega-\omega_{h},-\omega^{\prime}} (\phi^{\omega^{\prime}}_{l_{1}^\prime} \otimes \phi^{\omega^{\prime}+\omega^{\prime}_{h}}_{l_2^\prime}), \phi^{\omega}_{l_{1}} \otimes \phi^{\omega+\omega_h}_{l_2}\rangle }{{\sqrt{\lambda^{\omega}_{l_1}\lambda^{{\omega+\omega_{h}}}_{l_2}\lambda^{\omega^{\prime}}_{l_1^\prime} \lambda_{l_2^\prime}^{\omega^{\prime}+\omega^{\prime}_{h}}}}}d\omega d\omega^{\prime}
+\frac{1}{2\pi}\int\sum_{\bm{l}\in\mathcal{L}}\delta_{l_1,l_2}d\omega,
\end{align*} 
where $\bm{l}=(l_1,l_2)$, $\bm{l}^\prime=(l_1^\prime,l_2^\prime)$, $\mathcal{L}=[L(\omega)]\times [L(\omega+\omega_h)]$, $\mathcal{L}^\prime=[L(\omega^\prime)]\times [L(\omega^\prime+\omega^\prime_h)]$, and $\delta_{i,j} = 1$ if $i=j$ and $0$ otherwise. If $h_1\neq h_2$, $T\,\cov( \Re \hat{\beta}_{h_1,x}^{(T)}, \Re\hat{\beta}_{h_2,x}^{(T)})\to 0$,
$T\, \cov(\Im \hat{\beta}_{h_1,x}^{(T)}, \Im\hat{\beta}_{h_2,x}^{(T)})\to 0$ and $T\,\cov( \Re \hat{\beta}_{h_1,x}^{(T)}, \Im\hat{\beta}_{h_2,x}^{(T)})\to 0$.
\end{theorem}
}

The proof of Theorem \ref{momentstatcase} is given in Appendix \ref{subsec:cov-struc}. %\ref{subsec:proof:null}. 
Observe that the results in part (b) imply that the standardized test statistics is pivotal if the data is Gaussian. 
Note also that the results in the theorem use at various instances the fact that the $k$-th order spectral density operator at frequency $\bm{\omega}=(\omega_1,\ldots,\omega_k)^T\in\mathbb{R}^k$ is equal to the $k$-th order spectral density operator at frequency $-\bm{\omega}$ in the manifold $\sum_{j=1}^k\omega_j\!\mod 2\pi$. 

With the previous results in place, the large-sample behavior of the quadratic form statistics $\hat{Q}_{M,x}^{(T)}$ defined in \eqref{eq:quad_form_test} can be derived. This is done in the following theorem.

\begin{theorem}
\label{th:test_null}
\textcolor{black}{Let Assumption \ref{windowfunction} and $C_x$ hold. Suppose further that \ref{cumglsp}($k$,\,2) is satisfied for all $k\ge3$. Then, under $H_0$,}
\vspace{-.2cm}
\begin{enumerate}
\itemsep-.3ex
\item[(a)] For any collection $h_1,\ldots,h_M$ bounded by $\bar{h}$,
\[
\sqrt{T}\hat{\bm{b}}^{(T)}_{M,x}\stackrel{\cal D}{\to}\mathcal{N}_{2M}(\bm{0},\Sigma_{0,x})
\qquad(T\to\infty),
\]
where $\stackrel{\mathcal{D}}{\to}$ denotes convergence in distribution. Under the additional assumption of fourth-order stationarity, $\mathcal{N}_{2M}(\bm{0},\Sigma_{0,x})$ is a $2M$-dimensional normal distribution with mean $\bm{0}$ and diagonal covariance matrix $\Sigma_{0,x}=\mathrm{diag}(\sigma^2_{0,m,x}\colon m=1,\ldots,2M)$ whose elements are
\[
\sigma_{0,m,x}^2
%=\lim_{T\to\infty}\sum_{l_i,l_i^\prime,l_2,l_2^\prime=1}^L 
%T\cov\big(\Re\tproj_{h_m}(l_1,l_1^\prime),\Re\tproj_{h_m}(l_2,l_2^\prime)\big),
=\lim_{T\to\infty} 
T\cov\big(\Re\hat{\beta}_{h_m,x},\Re\hat{\beta}_{h_m,x}\big),
\qquad m=1,\ldots, M,
\]
and $\sigma_{0,M+m,x}^2=\sigma_{0,m,x}^2$. The explicit form of the limit is determined by Theorem \ref{momentstatcase}. \textcolor{black}{If fourth-order stationarity is violated, then the limiting normal distribution has a non-diagonal covariance structure.} 
\item[(b)] Using the result in (a), it follows that for the statistic defined in \eqref{eq:quad_form_test}
\[
\hat{Q}_{M,x}^{(T)}\stackrel{\mathcal{D}}{\to}\chi_{2M}^2
\qquad(T\to\infty),
\]
where $\chi_{2M}^2$ is a $\chi^2$-distributed random variable with $2M$ degrees of freedom. 
\end{enumerate}
\end{theorem}

The proof of Theorem \ref{th:test_null} is provided in Appendix \ref{sec:proofs:weak_conv}. Part (b) of the theorem can now be used to construct tests with asymptotic level $\alpha$. %Theorem \ref{th:test_null_proj} contains the result for the test based on \eqref{eq:testproj}. 
%To better understand the power of the test, the next section investigates the behavior under the alternative of local stationarity. 
Note that the application of the test requires an estimator of $\hat\Sigma_{M,x}$. This will be discussed in Section \ref{subsec:empirical:fourth}. 

\textcolor{black}{%\todo{should we put this in a remark}
To explicitly compute the limiting covariance structure in part (a) of Theorem \ref{th:test_null} under second-order stationarity but fourth-order nonstationarity, \textcolor{black}{the source of nonstationarity needs to be specified. For example, the results put forward in the next two sections %\todo{we might have to switch order} 
allow for the computation of $\Sigma_{0,x}$ if the process is fourth-order locally stationary.}
%, thus violating the assumption of fourth-order stationarity. 
Then, in the covariance structure of the covariance operator of the fDFT's, the fourth-order cumulant tensor component will, for $h_1\neq h_2$, (quadratically) decay in norm as the distance $|h_1 - h_2|$ increases (see Lemma~\ref{cumboundglsp}, Corollary~\ref{cumbounds}\,(ii) and equation \eqref{eq:covHa}). As a consequence of this term being present in the covariance structure, the real and imaginary part of the
projections are no longer uncorrelated but the correlation decays with increasing distance $|h_1 -h_2|$. In this scenario, a small loss of power is to be expected when the test statistic is built under the assumption of a diagonal covariance structure.
}

%%%%%%%%%%%%%%%%%%%

\subsection{Properties under the alternative}
\label{sec:large:alternative}

This section contains a generalization of the results in Section \ref{sec:large:null} to locally stationary functional time series. 
The following theorem is the counterpart to Theorem \ref{boundstatcase} under the null hypothesis.

\begin{theorem}
\label{boundlocstatcase}
\textcolor{black}{Let Assumption \ref{windowfunction}, \ref{cumglsp}(12,2) and $C_x$ hold. Then, under $H_A$,
\begin{align*}
\sqrt{T}\mathbb{E}\Big[\big|\hat{\beta}_{h,x}^{(T)}-\beta_{h,x}^{(T)}-
\mathcal{B}_{h,x}^{(T)}\big|\Big]
= O\bigg(\frac{1}{bT}+b^2+\frac{1}{b\sqrt{T}}+b^2\sqrt{T}\bigg)
\qquad (T\to\infty),
\end{align*}
where %$\mathbb{E}[\beta_{h,x}^{(T)}]$ is a noncentrality parameter defined in \textcolor{red}{\eqref{eq:noncent}} and 
\[
\mathcal{B}_{h,x}^{(T)}
=\frac 1T\sum_{j=1}^T\sum_{\bm{l}\in{\mathcal{L}}}\zeta_{\bm{l},x}
\big\langle\mathbb{E}[ D_{\omega_j}\otimes D_{\omega_{j+h}}\big],
\mathbb{E}[\hat\phi_l^{\omega_j}\otimes\hat\phi_{l^\prime}^{\omega_j+h}]
-\hat\phi_l^{\omega_j}\otimes\hat\phi_{l^\prime}^{\omega_j+h}\big\rangle_{S}
\]
is a stochastic bias term satisfying $\sqrt{T}\mathcal{B}_{h,x}^{(T)}=O_P(1)$, and $\zeta_{\bm{l},u}=1$ and $\zeta_{\bm{l},s}=(\tilde\lambda_l^\omega,\tilde\lambda_{l^\prime}^{\omega+\omega_h})^{-1/2}$.}
\end{theorem}
The proof of Theorem \ref{boundlocstatcase} is given in Section \ref{sec:bounderror} of the Appendix.
%\ref{subsec:proof:alt} of the \textcolor{red}{Online Supplement}. 
\textcolor{black}{In view of Assumption \ref{windowfunction}, the theorem shows that $\hat\beta_{h,x}^{(T)}$ has the same asymptotic sampling properties as $\beta_{h,x}^{(T)}$ up to a stochastically bounded bias term (after scaling with $\sqrt{T}$). Note that $|\hat{\beta}_{h,x}^{(T)}-\mathbb{E}[\beta_{h,x}^{(T)}]|\stackrel{P}{\to}0$, where 
\begin{equation}
\mathbb{E}[\beta_{h,x}^{(T)}]
\to\frac{1}{2\pi}\int_0^{2\pi}\int_0^1\sum_{\bm{l}\in\mathcal{L}}
\zeta_{\bm{l},x}\langle\mathcal{F}_{u;\omega}e^{-\i 2\pi uh},\tilde{\phi}_l^\omega\otimes\tilde{\phi}_{l^\prime}^{\omega+\omega_h}\rangle_{S}dud\omega
=\mu_{h,x}
\label{eq:muhx}
\end{equation}
is an noncentrality parameter (see Appendix \ref{sec:proof:first}) that will have to enter the limit distribution of $\hat{Q}_{M,x}^{(T)}$ as a consequence of the violation of weak stationarity. We discuss this term in some more detail below.}

A precise formulation of the asymptotic properties under $H_A$ is given in the next theorem.
\begin{theorem}
\label{th:test_alt}
Let Assumption \ref{windowfunction} and $C_x$ hold. Suppose further that \ref{cumglsp}($k$,\,2) is satisfied for all $k\ge2$. Then, under $H_A$, %that $\inf_\omega {\lambda}^{\omega}_{L}>0$. Then, 
\vspace{-.2cm}
\begin{enumerate}
\itemsep-.3ex
\item[(a)] For any collection $h_1,\ldots,h_M$ bounded by $\bar{h}$, 
\[
\sqrt{T}\hat{\bm{b}}_{M,x}^{(T)}\stackrel{\mathcal{D}}{\to}\mathcal{N}_{2M}(\bm{\mu}_x,\Sigma_{A,x})
\qquad(T\to\infty),
\]
where $\mathcal{N}_{2M}(\bm{\mu}_x,\Sigma_{A,x})$ denotes a $2M$-dimensional normal distribution with mean vector $\bm{\mu}_x$ 
whose first $M$ components are $\Re{\mu}_{h_m,x}$ and last $M$ components are $\Im{\mu}_{h_m,x}$, where $\mu_{h_m,x}$ is defined through \eqref{eq:muhx}, and non-diagonal block covariance matrix
\[
\Sigma_{A,x}=
\begin{pmatrix}
\displaystyle\Sigma_{A,x}^{(11)}& \Sigma_{A,x}^{(12)} \\[.2cm]
\displaystyle\Sigma_{A,x}^{(21)}& \Sigma_{A,x}^{(22)}
\end{pmatrix}
\]
whose $M\times M$ blocks are determined by the results in Appendix \ref{sec:HAdist} and Section \ref{sec:covCLT} of the Online Supplement. 
\item[(b)] Using the result in (a), it follows that for the statistic defined in \eqref{eq:quad_form_test}
\[
\hat{Q}_{M,x}^{(T)}\stackrel{\mathcal{D}}{\to}\chi^2_{\mu_x,2M},
\qquad(T\to\infty),
\]
where $\chi^2_{\mu_x,2M}$ denotes a generalized noncentral $\chi^2$-distributed random variable with noncentrality parameter $\mu_x=\|\bm{\mu}_x\|_2^2$ and $2M$ degrees of freedom.
\end{enumerate}
\end{theorem}

The proof of Theorem \ref{th:test_alt} can be found in Appendix \ref{sec:HAdist}. \textcolor{black}{Observe that the limiting noncentrality parameter $\mu_x$ of the statistic $\hat{Q}_{M,x}^{(T)}$ measures the aggregation of the functions in \eqref{eq:muhx}. Under $H_A$, the operator in \eqref{eq:CovDsh} no longer converges in norm to the zero operator but instead to the operator $\frac{1}{2\pi} \int_{0}^{2\pi} \int_0^1 \F_{u,\omega}e^{-\im 2\pi uh}du d\omega$. The properties of the latter, which are  extracted to finite dimension via $\mu_{h,x}$, carry some meaningful information on the behavior of the test under the alternative.}
Firstly, denote a general term in the limiting expansion of $\mu_{h,x}$ by
\[
\mu_{h,x}(\bm{l})=
 \frac{1}{2\pi}\int_0^{2\pi} \int_{0}^{1} \zeta_{\bm{l},x}\langle\mathcal{F}_{u;\omega}e^{-\im 2\pi uh},\tilde{\phi}_l^\omega\otimes\tilde{\phi}_{l^\prime}^{\omega+\omega_h}\rangle_{S}dud\omega.
 \]
 For fixed directions $\bm{l}=(l,\lpr)$, this function can be seen to approximate the $(h,0)$-th Fourier coefficients of the function $(u,\omega) \mapsto \zeta_{\bm{l},x}\inprod{\F_{u,\omega} (\tilde{\phi}^{\omega+\omega_{h}}_\lpr)}{  \tilde{\phi}^{\omega}_l  }$ , i.e., for small $h$ and $T \to \infty$ they approximate
\begin{align*}
\vartheta_{h,j,x}(\bm{l})
&=\frac{1}{2\pi}\int_0^{2\pi}\int_0^1
\zeta_{\bm{l},x}\inprod{\F_{u,\omega} \tilde{\phi}^{\omega+\omega_{h}}_\lpr}{  \tilde{\phi}^{\omega}_l  }
e^{\im 2\pi uh-\im j\omega}dud\omega
\end{align*}
with $j=0$. In other words, $\mu_{h,x}(\bm{l})\approx\vartheta_{h,0,x}(\bm{l})$. If the process is weakly stationary then 
the integrand of the coefficient does not depend on $u$ and all Fourier coefficients are zero except $\vartheta_{0,j,x}(\bm{l})$. In particular, $\vartheta_{0,0,s}(\bm{l})=1$. Following \cite{p09} and \cite{dsr16}, the mean functions can thus be seen to reveal long-term non-stationary behavior. Unlike testing methods based on segments in the time domain, the proposed method is therefore able to detect smoothly changing behavior in the temporal dependence structure.

\textcolor{black}{
Secondly, the operator $\int_0^1 \F_{u,\omega}e^{-\im 2\pi uh}du$ can be viewed as the $h$-th Fourier coefficient of the operator-valued function $(u)\mapsto \F_{u,\omega}$ for fixed $\omega$ (Lemma \ref{cumboundglsp}), which exhibits a quadratic decay in norm as a function of $h$ such that the sum of the norms of these coefficients is finite (Corollary \ref{cumbounds}). Since this behavior carries over to the projections, the contribution to $\mu_x$ of the functions $\mu_{h,x}$ in \eqref{eq:muhx} for larger values of $h$ will become negligible. Intuitively, utilizing large values of $M$ in the statistic $\hat{Q}_{M,x}^{(T)}$ is hence expected to increase the likelihood of a type II error; see also Section \ref{sec:empirical}. }

\textcolor{black}{
 The results in this and the previous section require an understanding of the estimator $\hat\Sigma_{M,x}$ used in the definition of the test statistics $\hat Q_{M,x}^{(T)}$ in \eqref{eq:quad_form_test}. The corresponding results are part of the next subsection.}

%%%%%%%%%%%%%%%%%%%
\subsection{Estimating the fourth-order spectrum}
\label{subsec:empirical:fourth}

\textcolor{black}{The estimation of the matrix $\Sigma_M$ is a necessary ingredient in the application of the proposed stationarity test. Generally, the estimation of the sample (co)variance can influence the power of tests, as has been observed in a number of previous works set in similar albeit nonfunctional contexts. Among the contributions more closely related to this paper are \citet{p09} %and \citet{jww15}, 
who used the spectral density of the squares, %\citet{n13}, who worked with locally stationary wavelets, 
\citet{dsr16}, who focused on Gaussianity of the observations, and \citet{jsr15}, who employed a stationary bootstrap procedure. A different idea was put forward by \citet{bsr17} and \citet{bjsr17}. These authors utilized the notion of orthogonal samples to estimate the variance, falling back on a general estimation strategy developed in \citet{sr16}.}

\textcolor{black}{In order to utilize the results of Theorem \ref{th:test_null}, we require an estimator of the tri-spectral density operator $\F_{\omega,-\omega-\omega_{h},-\omega^\prime}$, \textcolor{black}{which can then subsequently be projected onto the (standardized) empirical eigenfunctions and integrated over $\omega, \omega^\prime$.} As an estimator, consider 
\begin{align}\label{eq:fourthorderest}
\hat{\F}_{\omega_{j_1},\ldots, \omega_{j_4}}= \frac{(2\pi)^{3}}{{(b_4 T)}^{3}}\sum_{k_1,k_2, k_3}  K_4\Big(\frac{\omega_{j_1}-{\omega_{k_1}}}{b_4},\ldots,\frac{\omega_{j_4}-{\omega}_{k_4}}{b_4}\Big) \Phi({\omega_{k_1}},\ldots,{\omega}_{k_4})  I^{(T)}_{\omega_{k_1},\ldots, \omega_{k_4}},
\end{align}
where  
\begin{align*}
I^{(T)}_{\omega_{k_1},\omega_{k_2},\omega_{k_3},\omega_{k_4}}
%= \frac{1}{{(2 \pi)}^{3} T} %\sum_{t_1,t_2,t_3,t_4=1}^{T}  \mathcal{C}_{t_1,t_2,t_3,t_4} e^{-\im \sum_{j=1}^{4}t_j\omega_{k_j}}
= \frac{T}{2\pi} D_{\omega_{k_1}} \otimes D_{\omega_{k_2}} \otimes D_{\omega_{k_3}}\otimes D_{\omega_{k_4}}
% \inprod{\hat{\phi}^{\omega_{j_2}}_{l_1}}{ X_{t_2}} \inprod{\hat{\phi}^{\omega_{j_3}}_{l_2}}{ X_{t_3}} \inprod{ X_{t_4}}{\hat{\phi}^{\omega_{j_4}}_{l_2}}e^{-\im \sum_{j=1}^{4}\omega_j}
\end{align*}
denotes the tri-periodogram tensor and where $K_4(x_1,\ldots,x_4)$ is a smoothing kernel with compact support on $\rnum^4$ and where %the function 
\[\Phi(\alpha_1,\alpha_2,\alpha_3,\alpha_4) = 1\]
 if $\sum_{k=1}^{4} \alpha_{k} \equiv 0 \mod 2\pi$ such that $\sum_{k \in {J}} \alpha_{k} \not \equiv 0 \mod 2\pi$ where $J$ is any non-empty subset of $\{1,2,3,4\}$ and equals $0$ otherwise. This function therefore controls that we are only working with those combinations of frequencies that lie on the principal manifold but do not lie in any proper submanifold. The reason for this is that, for $k >2$, the expectation of $k$-th order periodogram tensors evaluated at such submanifolds possibly diverges \citep[see also][for the Euclidean case]{br67}. As the next theorem shows, the estimator in \eqref{eq:fourthorderest} can be shown to be consistent if the bandwidth $b_4$ satisfies $b_4\to 0$ but $b_4^{-3}T\to\infty$ as $T\to\infty$.
\begin{theorem}\label{th:4th_null}
\textcolor{black}{Suppose \ref{Statcase}(4,2) and \ref{cumglsp}(8,2) hold.}
%\todo{this looks weird maybe say fourth-order stationary and assume the second?} %with $k=4$ and $\ell=2$. }
Then the estimator \eqref{eq:fourthorderest} of the tri-spectral density operator satisfies
\begin{align}
\E \bigg[\biggsnorm{ \frac{(2\pi)^2}{T^2}\sum_{j_1,j_2=1}^{T}\hat{\F}_{\omega_{j_1},-\omega_{j_1+h},-\omega_{j_2}} - \int \int  {\F}_{\omega,-\omega +\omega_{h},-\omega^{\prime}}d\omega d\omega^{\prime}}^2_2 \bigg]
=O\bigg(\frac{1}{b_4^3T} +b_4^4\bigg).  \label{eq:intF4}
\end{align}
\end{theorem}
The section is rounded out with large-sample behavior under the alternative. 
\begin{theorem}\label{thm:intF4loc}
\textcolor{black}{Suppose \ref{cumglsp}(8,2) holds}. Then,
\begin{align*}
(a)\quad& \biggsnorm{\E \bigg[\frac{(2\pi)^2}{T^2}\sum_{j_1,j_2=1}^{T}\hat{\F}_{\omega_{j_1},-\omega_{j_1+h},-\omega_{j_2}}\bigg] - \int \int  {G}_{\omega,-\omega +\omega_{h},-\omega^{\prime}}d\omega d\omega^{\prime}-\mathcal{Z}_h}_2 
=O\bigg(\frac{1}{b_4 T}+b_4\bigg), 
 \\
 (b)\quad& \bigsnorm{\Cov( \hat{\F}_{\omega_{j_1},\omega_{j_2},\omega_{j_3}},\hat{\F}_{\omega_{j_1},\omega_{j_2},\omega_{j_3}}) }^2_2
 =O\bigg(\frac{1}{b_4^3 T}\bigg),
\end{align*}
where $G_{\omega,-\omega +\omega_{h},-\omega^{\prime}}$ denotes the time-integrated tri-spectral operator and where $\mathcal{Z}_h \in S_2(H \otimes H)$ \textcolor{black}{is a bias term of order $O(\snorm{\mathcal{Z}_h}_2)=1$}.
\end{theorem}
The proofs of Theorems \ref{th:4th_null} and \ref{thm:intF4loc} are given in Section \ref{sec:F4proof} of the Online Supplement. Using continuity of the inner-product, Theorem \ref{UnifCon}(a) and the continuous mapping theorem imply  projecting onto the empirical eigenfunctions will not affect the rates.}

%%%%%%%%%%%%%%%%%%%
\section{Empirical results}
\label{sec:empirical}
%%%%%%%%%%%%%%%%%%%

This section reports the results of an illustrative simulation study designed to verify that the large-sample theory is useful for applications to finite samples. The test is subsequently applied to annual temperature curves data. The findings provide guidelines for a further fine-tuning of the test procedures to be investigated in future research.

%%%%%%%%%%%%%%%%%%%
\subsection{Simulation setting}
\label{sec:empirical:setting}

To generate functional time series, the general strategy applied, for example, in \citet{adh} and \citet{hkh15}, is utilized. For this simulation study, all processes are built on a Fourier basis representation on the unit interval $[0,1]$ with basis functions $\psi_1,\ldots,\psi_{15}$. Note that the $l$th Fourier coefficient of a $p$th-order functional autoregressive, FAR($p$), process $(X_t\colon t\in\mathbb{Z})$ satisfies
\begin{align}
\langle X_t,\psi_l\rangle
&=\sum_{l^\prime=1}^\infty\sum_{t^\prime=1}^p\langle X_{t-t^\prime},\psi_l\rangle
\langle A_{t^\prime}(\psi_l), \psi_{l'}\rangle + \langle \varepsilon_t, \psi_l  \rangle  \nonumber\\
& \approx 
\sum_{l' = 1}^{L_{\rm max}}\sum_{t^\prime=1}^{p} \langle X_{t-t^\prime}, \psi_l \rangle \langle A_{t^\prime}(\psi_l), \psi_{l'}\rangle + \langle \varepsilon_t, \psi_l  \rangle, \label{eq:farp}
\end{align}
the quality of the approximation depending on the choice of $L_{\rm max}$. The vector of the first $L_{\rm max}$ Fourier coefficients $\mathbf{X}_{t}=(\langle X_t, \psi_1 \rangle ,\ldots, \langle X_t, \psi_{L_{\rm max}} \rangle )^{\top}$ can thus be generated using the $p$th-order vector autoregressive, VAR($p$), equations
\begin{align*}
\mathbf{X}_{t} = \sum_{t^\prime=1}^{p} \mathbf{A}_{t^\prime} \mathbf{X}_{t-t^\prime} + \boldsymbol{\varepsilon}_t,
\end{align*}
where the $(l,l^\prime)$ element of $\mathbf{A}_{t^\prime}$ is given by $\langle A_{t^\prime}(\psi_l), \psi_{l'}\rangle$ and $\boldsymbol{\varepsilon}_t=(\langle \varepsilon_t, \psi_1 \rangle ,\ldots, \langle \varepsilon_t, \psi_{L_{\rm max}} \rangle )^{\top}$. The entries of the matrices $\mathbf{A}_{t^\prime}$ are generated as $\mathcal{N}(0,\nu^{(t^\prime)}_{l,l'})$ random variables, with the specifications of $\nu_{l,l^\prime}$ given below. To ensure stationarity or the existence of a causal solution, the norms $\kappa_{t^\prime}$ of $\boldsymbol{A}_{t^\prime}$ are required to satisfy certain conditions,  for example, $\sum_{t^\prime=1}^{p}\snorm{\boldsymbol{A}_{t^\prime}}_{\infty} <1$, which might be of more complicated nature \citep[see][for the stationary and locally stationary case, respectively]{b00,vde16}.
%To ensure stationarity or the existence of a causal solution \citep[see][for the stationary and locally stationary case, respectively]{b00,vde16}, the norms $\kappa_{t^\prime}$ of $\boldsymbol{A}_{t^\prime}$ are required to satisfy certain conditions, for example, $\sum_{t^\prime=1}^{p}\snorm{\boldsymbol{A}_{t^\prime}}_{\infty} <1$. 
The functional white noise, FWN, process is included in \eqref{eq:farp} setting $p=0$. All simulation experiments were implement in {\tt R} and any result reported in the remainder of this section is based on 1{,}000 simulation runs.
% were implemented by means of the {\tt fda} package in R and }

%%%%%%%%%%%%%%%%%%%
\subsection{Specification of tuning parameters}
\label{subsec:empirical:tuning}

\textcolor{black}{The test statistics in \eqref{eq:quad_form_test} depends on the tuning parameters $L_j=L(\omega_j)$, determining the dimension of the projection spaces, and $M$, the number of frequency lags to be included in the procedure. In the following, a criterion will be set up to choose $L_j$, while for $M$ only a limited number of values were entertained because the selection is less critical for the performance as long as it is not chosen too large. %\textcolor{red}{Note that choosing the different $L_j$ is more cumbersome in the frequency domain than in the time domain where only a single truncation parameter is needed. However,}\todo{I opt to remove the red part} 
Figure \ref{fig:eig} shows that it can well be of interest in practice to choose $L_j$ in a frequency-dependent way, as the eigenvalue decay might vary significantly between different $\omega_j$. The left part of the figure shows the situation for a functional white noise sequence. The spectral density operators are constant operator-valued functions of frequency and consequently their spectral decompositions coincide, producing relatively straight lines in the sample eigenvalue plots. In this case, one would not necessarily have to resort to determining the various truncation levels $L_j$ individually. However, the right part of the figure shows a time series with a significant level of dependence, in fact DGP (b) introduced in Section \ref{subsec:empirical:null} below. The functional variation of this second-order autoregressive process receives drastically different contributions from different frequency bands, yielding large differences also in the spectral decompositions: sample eigenvalues plotted against frequency are far from constant. Note also how the plot of the top sample eigenvalue resembles the univariate spectral density of a scalar second-order autoregression with levels of dependence determined by the operator norm $\snorm{\cdot}_{\infty}$. Both plots taken together highlight that some flexibility in choosing the $L_j$ is desirable.
\begin{figure}
\centering
\includegraphics[width=0.45\textwidth]{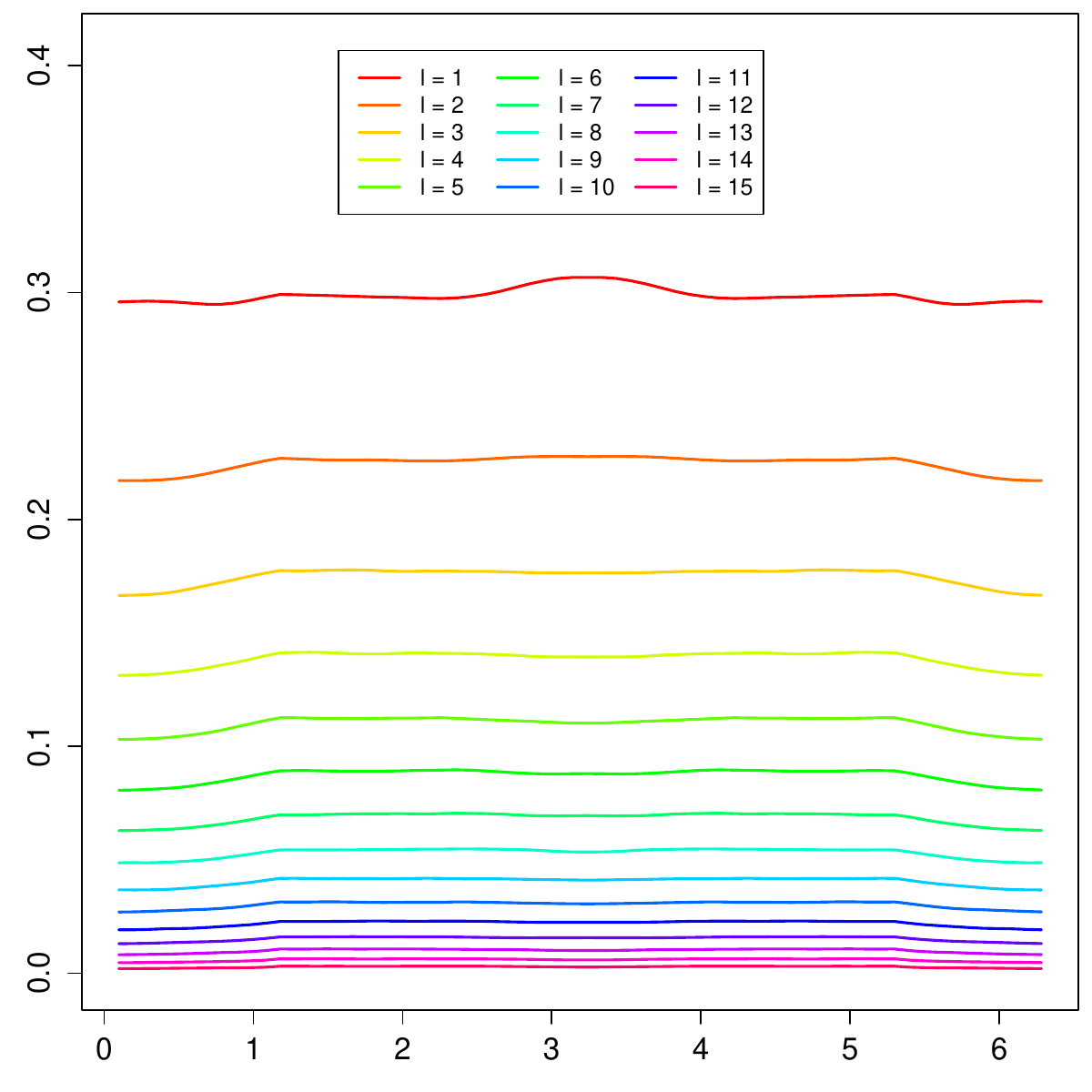}\quad
\includegraphics[width=0.45\textwidth]{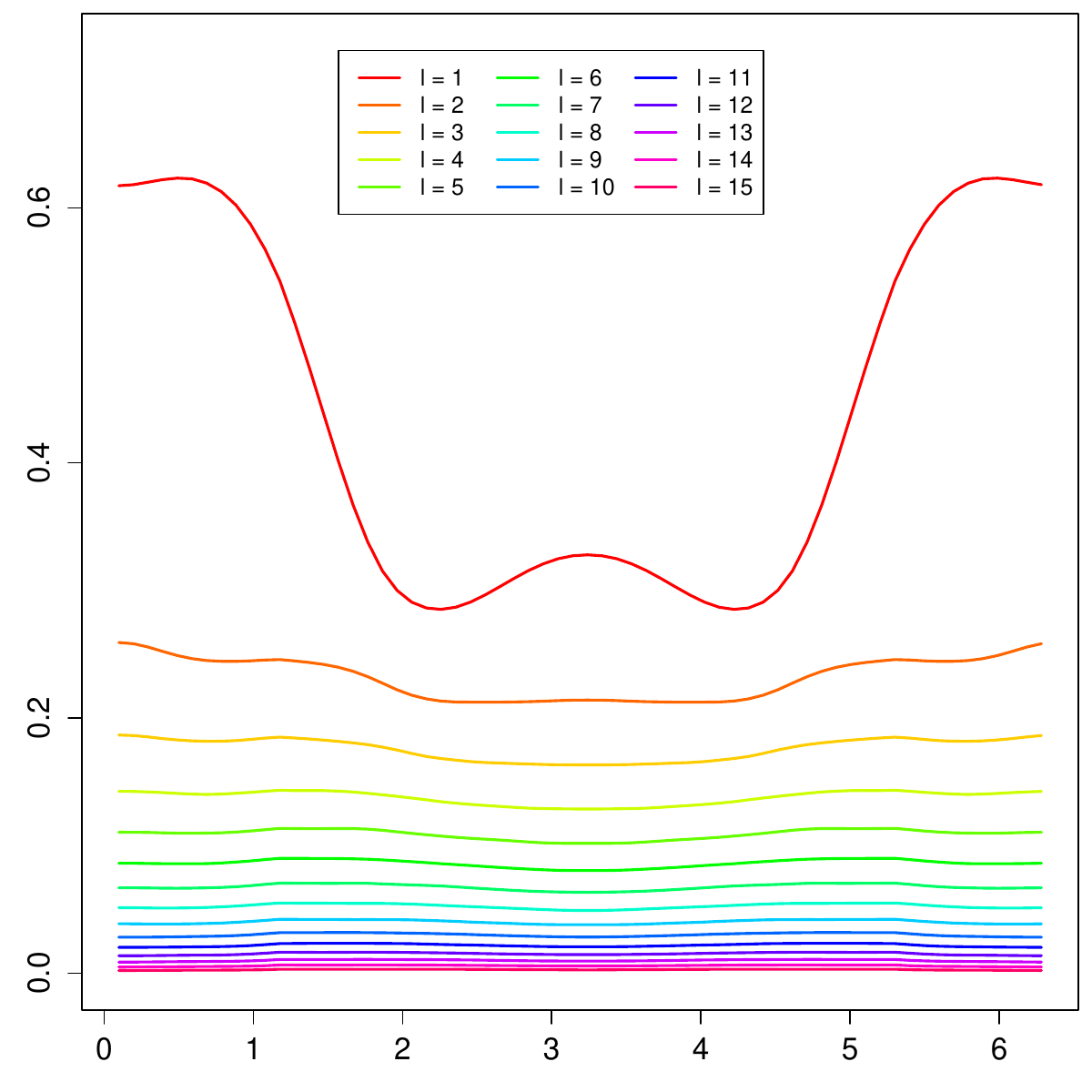}  
\caption{Plot of sample eigenvalues $\lambda_l^{\omega_j}$ across Fourier frequencies $\omega_j$ for $l=1,\ldots,15$ for a functional white noise process (left) and a second-order functional autoregression (right). }
\label{fig:eig}
\end{figure}}

\textcolor{black}{To accommodate the previous observation, the following arrangements were made for the standardized test based on $\hat Q_{M,s}^{(T)}$. In the first part, a reasonable level of variation explained at each frequency $\omega_j$ is ensured through requiring that $0.5<\mathrm{TVE}_j<0.9$ for all $j$. In the second part, the procedure adapts to different eigenvalue decays by choosing 
\[
L_j=\max\bigg\{l\colon \frac{\lambda_l^{\omega_j}}{\lambda_1^{\omega_j}}>.2-\frac{1}{\sqrt{bT}}\bigg\}
\]
subject to the TVE criterion being satisfied. If no such $L_j$ exists, choose $L_j=1$. The unstandardized test statistics is very stable in practice and does not require the specification of tuning parameters.} %\todo{should we add as long as $C_s$ holds?}.} 

Estimation of the spectral density operator and its eigenelements, needed to compute the two statistics, was achieved using \eqref{eq:estimatestat} with the concave smoothing kernel $K(x)=6(0.25-x^2)$ with compact support on $x\in[-1/2,1/2]$ and bandwidth $b=T^{-.26}$. The fourth-order estimation is done with $K_4(x_1,\ldots,x_4)=\prod_{j=1}^4K(x_j)$, where $K$ is same as before, %as in Section \ref{subsec:empirical:tuning} 
and bandwidth $b_4= T^{-1/5}$. It should be noted that the outcomes were not overly sensitive with respect to bandwidth choices for $b$ respecting Assumption \ref{windowfunction}. It is worthwhile to mention that the computational complexity of the fourth-order estimator is considerable for larger sample sizes. The implementation was therefore partially done with the compiler language {\tt {C++}} and the {\tt{Rcpp}}-package in {\tt{R}}.

%%%%%%%%%%%%%%%%%%%
\subsection{Finite sample performance under the null}
\label{subsec:empirical:null}

Under the null hypothesis of stationarity the following data generating processes, DGPs, were studied:
\begin{itemize}\itemsep-.2ex
\item[(a)] The Gaussian FWN variables $\varepsilon_1,\ldots,\varepsilon_T$ with coefficient variances $\mathrm{Var}(\langle\varepsilon_t,\psi_l\rangle)=\exp({\color{black}{-}}(l-1)/10)$;
\item[(b)] The FAR(2) variables $X_1,\ldots,X_T$ with operators specified through the respective variances $\nu^{(1)}_{l,l^\prime}=\exp(-l-l^\prime)$ and $\nu^{(2)}_{l,l^\prime}=1/(l+{l^\prime}^{3/2})$ and operator norms $\kappa_1=0.75$ and $\kappa_2=-0.4$, and innovations $\varepsilon_1,\ldots,\varepsilon_T$ as in (a);
\item[(c)] The FAR(2) variables $X_1,\ldots,X_T$ as in (b) but with operator norms $\kappa_1=0.4$ and $\kappa_2=0.45$. 
\end{itemize}
The sample sizes under consideration are $T=2^n$ for $n=6,\ldots,10$, so that the smallest sample size consists of $64$ functions and the largest of $1024$. The processes in (a)--(c) comprise a range of stationary scenarios. DGP (a) is the simplest model, specifying an independent FWN process. DGPs (b) and (c) exhibit significant second-order autoregressive dynamics of different persistence.

\begin{table}[h!]
%\vspace*{.3cm}
 \begin{center}
\resizebox{\textwidth}{!}{
\begin{tabular}{c r @{\qquad} c ll @{\qquad} c ll @{\qquad} c ll @{\qquad} c ll}
\hline
&& &   \multicolumn{2}{l  }{\% level}  & \multicolumn{2}{l  }{\% level}  &  &\multicolumn{2}{l }{\% level}&  & \multicolumn{2}{l }{\% level}\\ 
&$T$~~ & $\hat{Q}_{1,u}^{(T)}  $ & 5 &1 & $\hat{Q}_{5,u}^{(T)}  $ & 5 &1 & $ \hat{Q}_{1,s}^{(T)} $ & 5&  1 & $\hat{Q}_{5,s}^{(T)}  $  & 5&  1\\ \hline
(a) 
&64&  1.33 & 5.80 & 1.40 & 8.93 & 9.10 & 2.60 & 1.29 & 4.30 & 1.50 & 8.26 & 7.80 & 2.70 \\
&128& 1.41 & 5.90 & 1.20 & 9.03 & 7.20 & 2.10 & 1.36 & 5.70 & 1.00 & 8.96 & 5.70 & 2.10 \\
&256& 1.26 & 5.10 & 0.90 & 9.15 & 5.30 & 1.70 & 1.27 & 5.20 & 1.40 & 9.02 & 5.10 & 1.00 \\
&512 &1.37 & 4.80 & 1.30 & 9.27 & 6.80 & 1.40 & 1.40 & 4.60 & 1.30 & 9.16 & 6.30 & 1.30 \\ 
&1024 & 1.32 & 4.70 & 1.20 & 9.19 & 5.20 & 1.50 & 1.33 & 5.40 & 0.60 & 9.33 & 4.60 & 1.10 \\ 
\hline 
(b)
& 64 & 1.58 & 6.00 & 1.50 & 9.50 & 9.40 & 3.50 & 1.35 & 5.70 & 1.40 & 8.65 & 6.10 & 2.70\\ 
& 128 & 1.44 & 5.70 & 1.60 & 9.35 & 8.90 & 2.80 & 1.30 & 4.70 & 1.50 & 8.72 & 6.30 & 1.70 \\ 
& 256 & 1.28 & 4.20 & 0.90 & 9.11 & 6.20 & 2.30 & 1.32 & 4.70 & 0.60 & 8.78 & 7.00 & 1.70 \\ 
& 512 & 1.32 & 5.00 & 1.70 & 9.42 & 6.70 & 1.90 & 1.26 & 4.70 & 0.90 & 9.11 & 6.10 & 0.90 \\ 
& 1024 & 1.44 & 4.40 & 0.80 & 9.26 & 5.40 & 1.10 & 1.32 & 4.70 & 0.50 & 8.87 & 4.80 & 0.90 \\ 
\hline 
(c) 
& 64 & 1.42 & 5.60 & 1.90 & 8.50 & 7.60 & 3.30 & 1.20 & 5.70 & 0.90 & 8.36 & 8.20 & 2.60 \\ 
& 124 & 1.31 & 5.20 & 1.00 & 9.05 & 6.20 & 2.50 & 1.29 & 4.00 & 0.50 & 8.77 & 5.70 & 2.00 \\ 
& 256 & 1.48 & 6.10 & 1.20 & 9.19 & 6.70 & 1.90 & 1.42 & 5.20 & 1.70 & 8.90 & 6.10 & 1.30 \\ 
& 512 & 1.35 & 5.60 & 0.70 & 9.48 & 4.90 & 1.00 & 1.41 & 4.50 & 0.60 & 8.99 & 5.30 & 1.40\\ 
& 1024 & 1.34 & 6.90 & 1.60 & 9.26 & 5.70 & 1.30 & 1.35 & 4.60 & 1.10 & 9.10 & 4.40 & 0.90 \\ \hline 
\end{tabular}
}
\caption{Median of test statistic values and rejection rates of $\hat{Q}_{M,u}^{(T)}$ and $\hat{Q}_{M,s}^{(T)}$ at the 1\% and 5\% asymptotic level for the processes (a)--(c) for various choices of $M$ and $T$. All table entries are generated from 1000 repetitions.}
\label{tab:null}
\end{center}
\end{table}

The empirical rejection levels for the processes (a)--(c) can be found in Table \ref{tab:null}. It can be seen that the empirical levels for both statistics with $M=1$ are generally well adjusted with slight deviations in a few cases.  
The performance of the statistics with $M=5$ is similar, although the empirical rejection levels tend towards the nominal ones with increasing sample size. 
Some evidence on closeness between empirical and limit densities for the statistics $\hat Q_{5,u}^{(T)}$ and $\hat Q_{5,s}^{(T)}$ are provided in Figure \ref{fig:null}. %Naturally, the numbers for the fixed projection based statistics are more homogeneous.

\textcolor{black}{Figure \ref{fig:Lj} shows the average choices of $L$ over the 1000 repetitions for the various DGPs for the sample sizes $T=64$ and $T=1024$. First, one can see that the average $L$ increases with the sample size, as more degrees of freedom become available. For the small sample size $T=64$, choices of $L_j$ under the null hypothesis are more similar both across frequencies and across the three DGPs because the form of dependence is not yet entirely evident. With increasing sample size, the average $L_j$ increases uniformly for DGP (a), while for DGPs (b) and (c) $L_j$ in certain frequency bands are accentuated while others are attenuated according to their contributions to the spectral analysis of variance of the underlying functional time series. For DGP (b) the shape of the  curve $\omega_j\mapsto L_j$ might also be compared to the shape of the curve $\omega_j\mapsto\lambda_1^{\omega_j}$ in the right panel of Figure \ref{fig:eig}.
}

\begin{figure}
\centering
\includegraphics[width=0.4\textwidth]{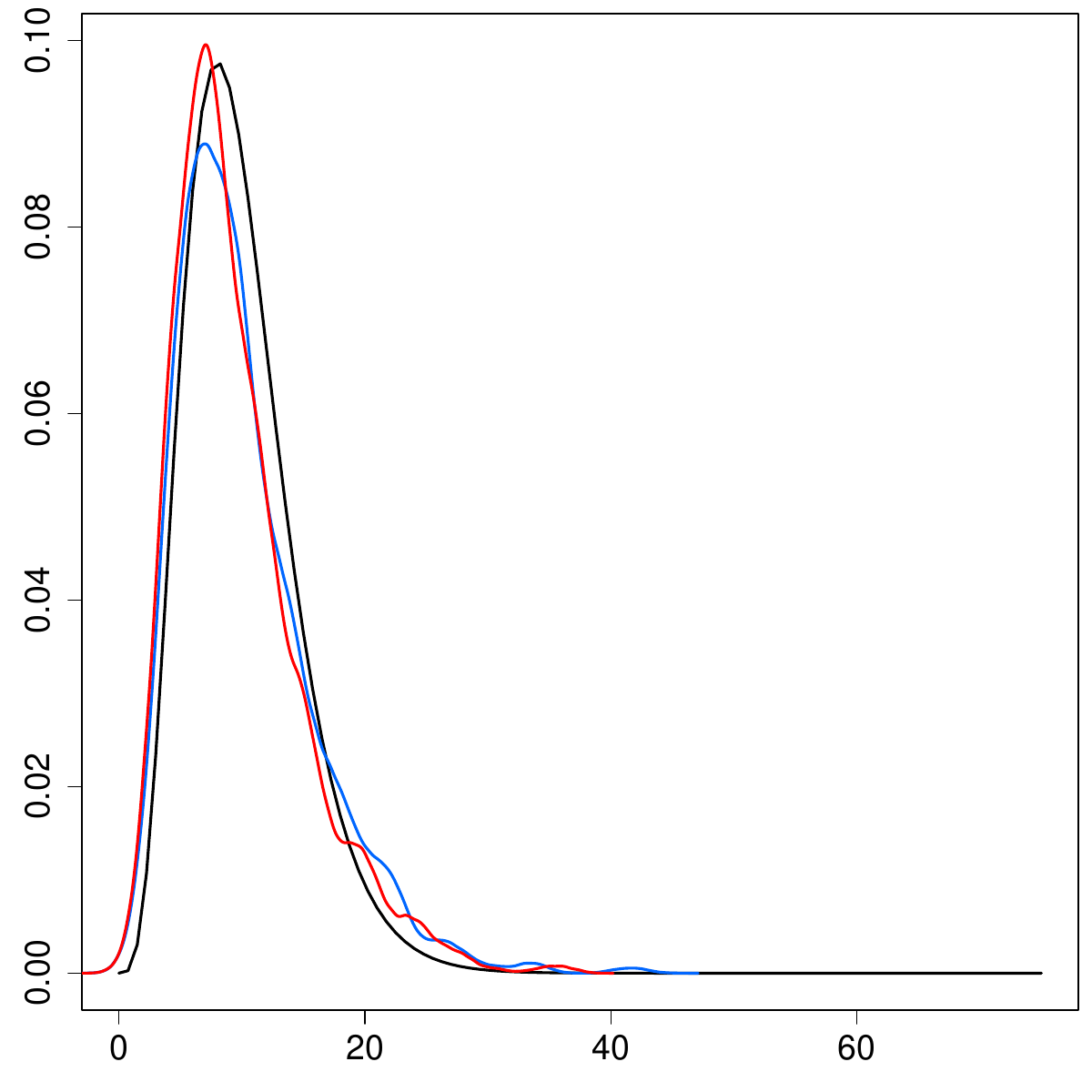}  
\includegraphics[width=0.4\textwidth]{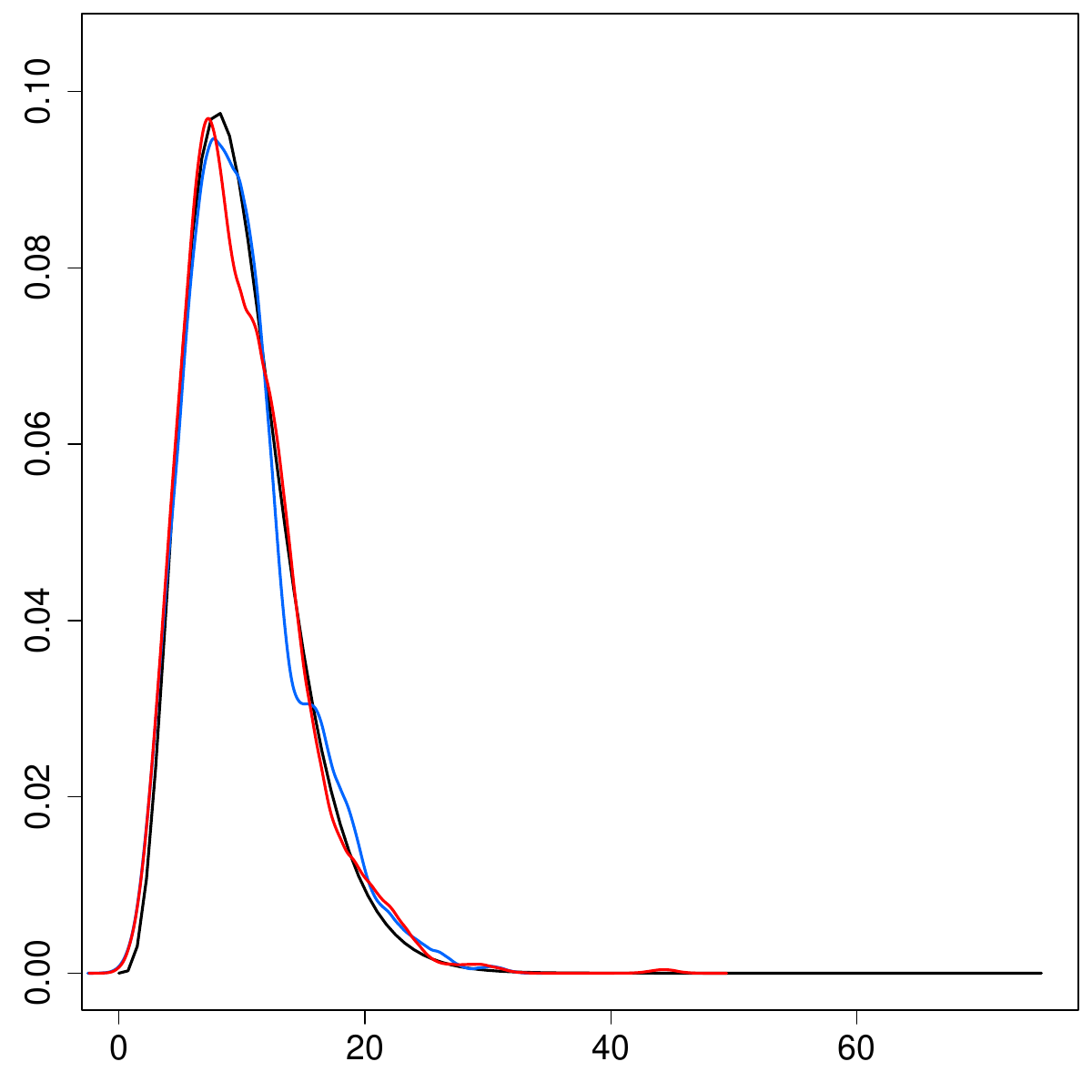}
\includegraphics[width=0.4\textwidth]{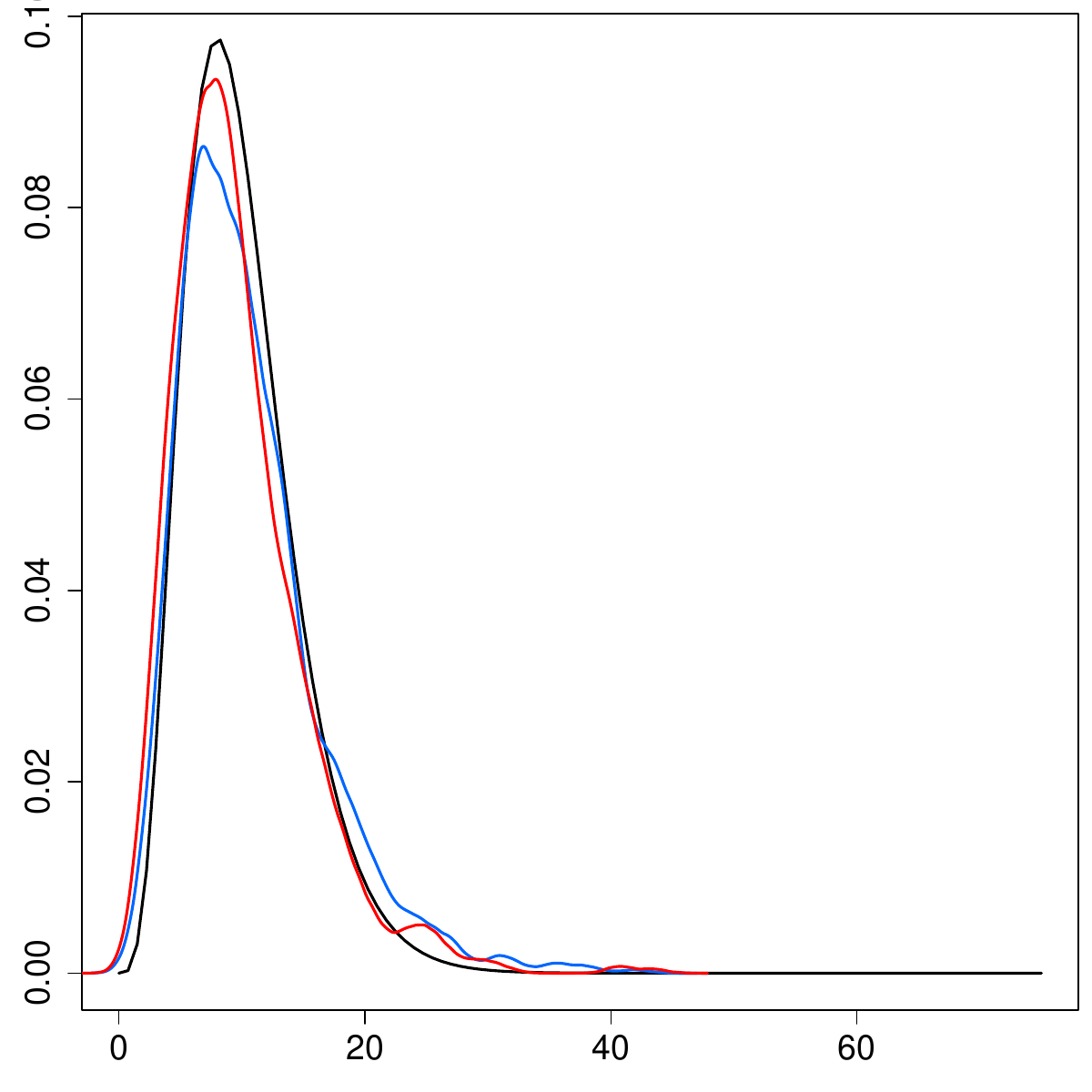}  
\includegraphics[width=0.4\textwidth]{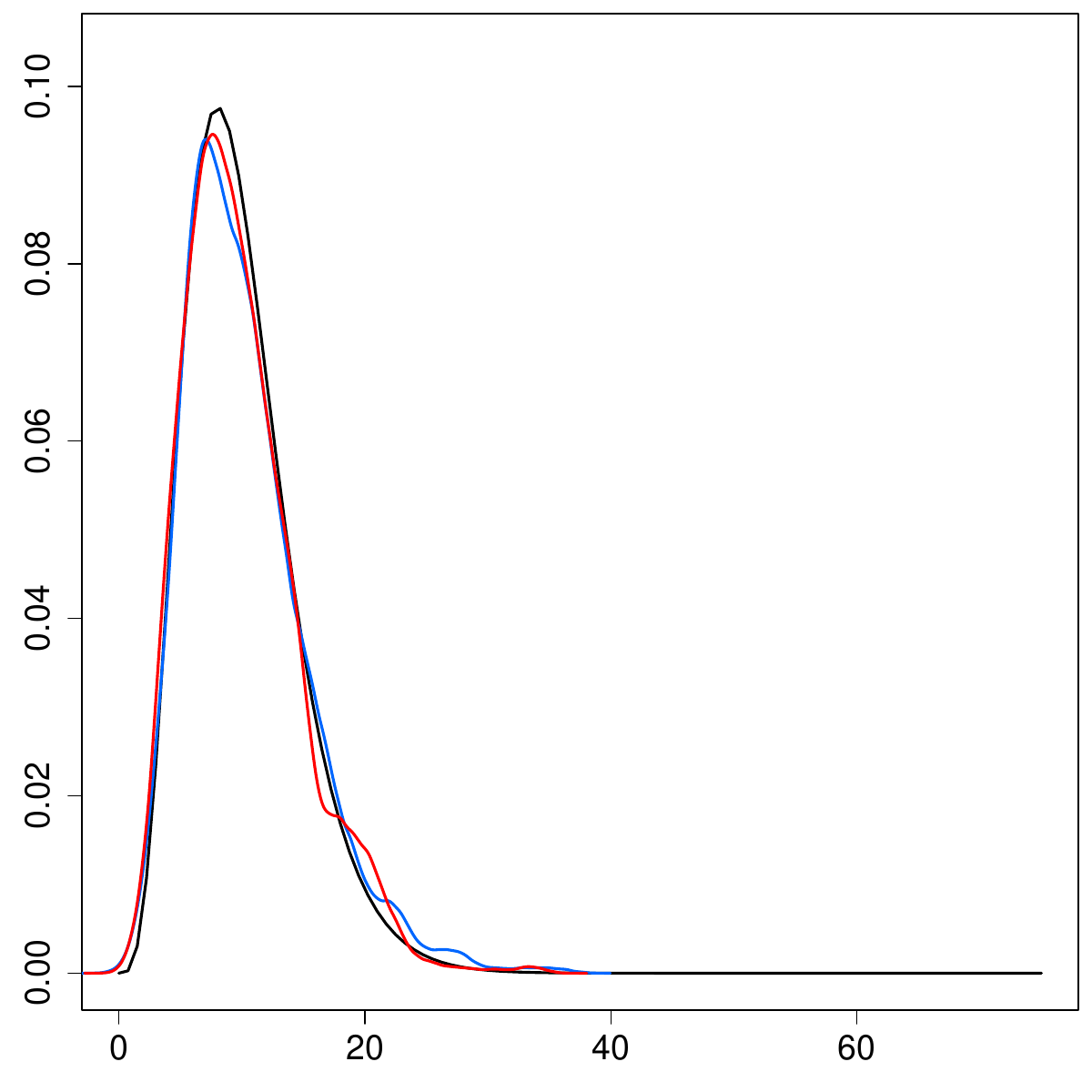}
\includegraphics[width=0.4\textwidth]{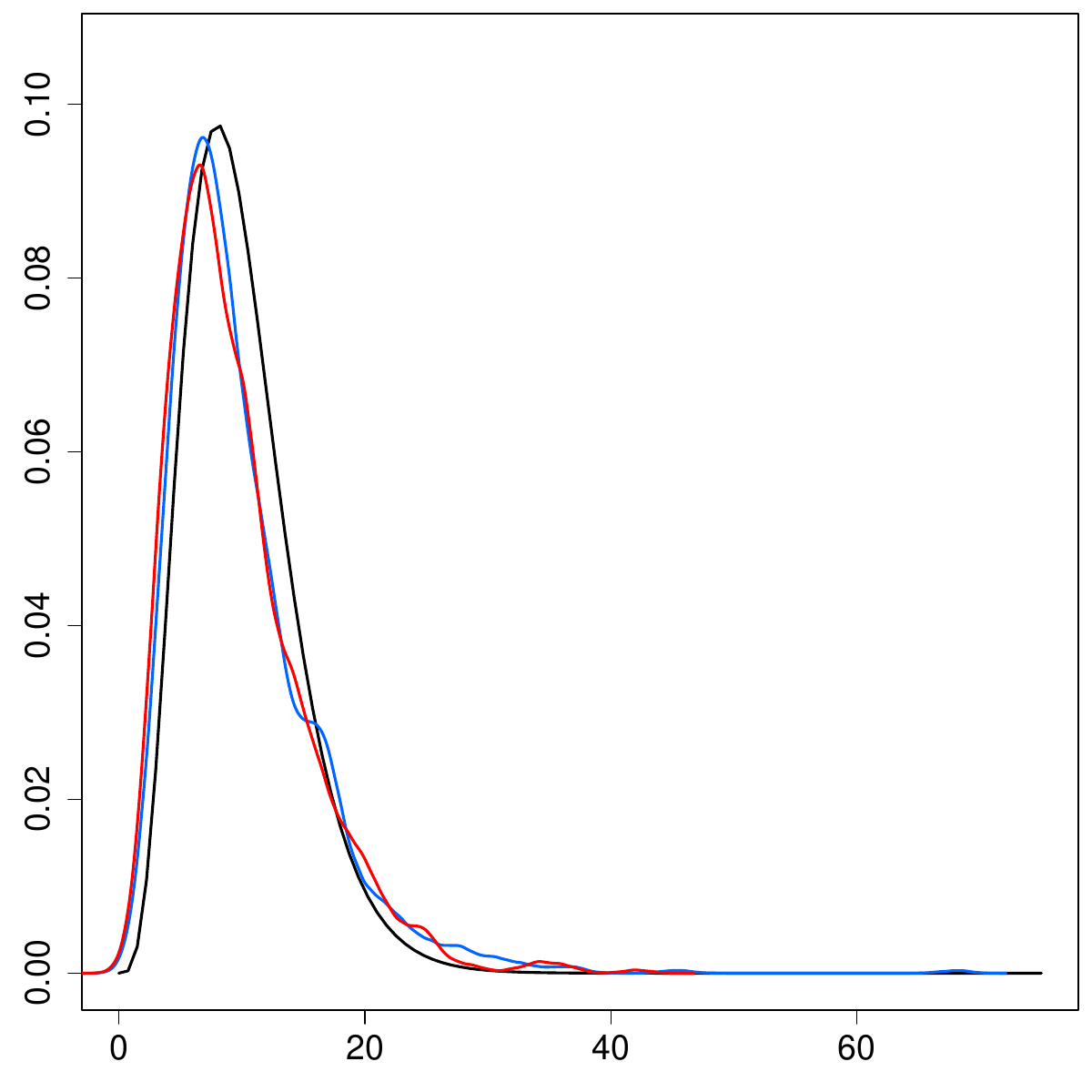}  
\includegraphics[width=0.4\textwidth]{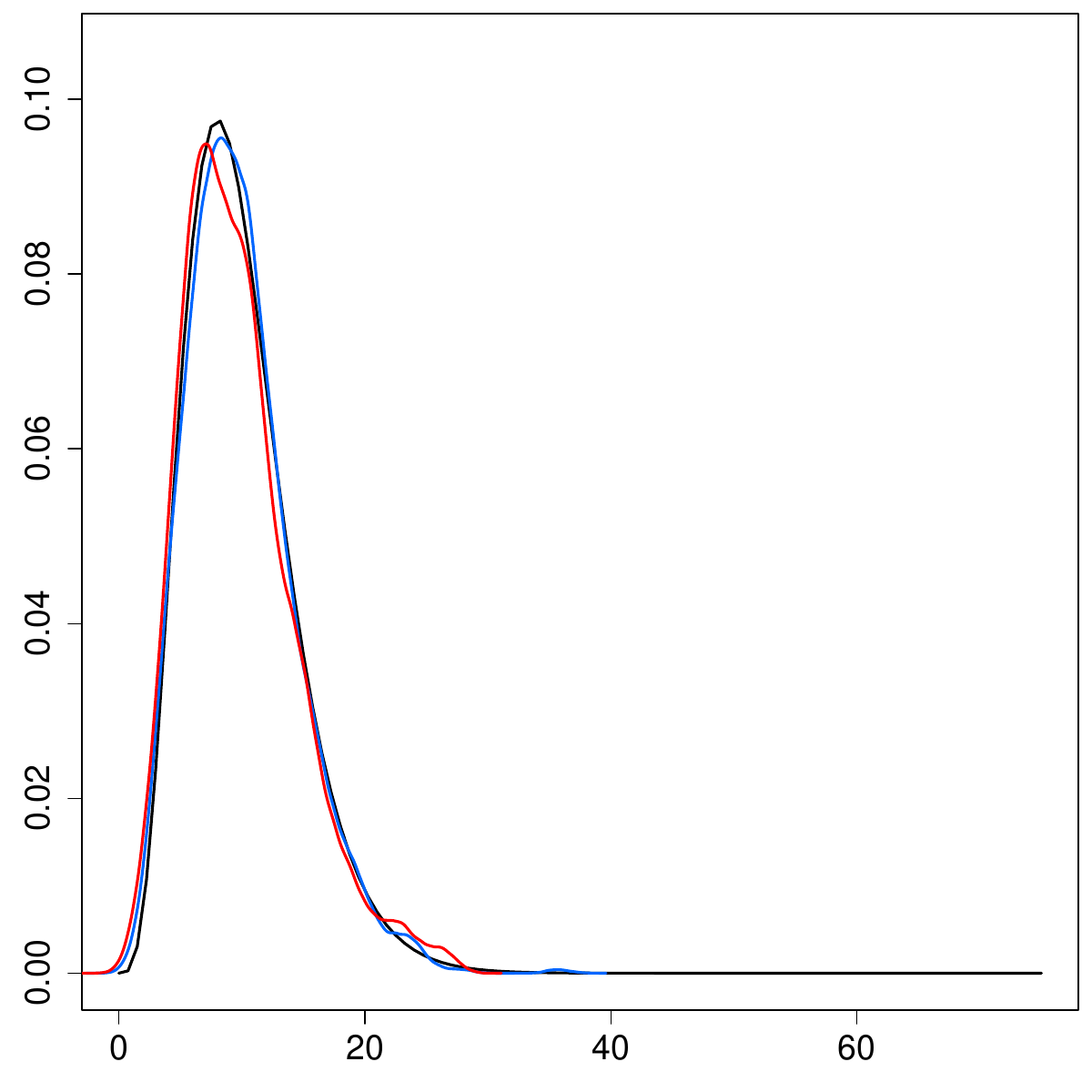}
\caption{Empirical density of $\hat Q_{5,u}^{(T)}$ (black) and $\hat Q_{5,s}^{(T)}$ (blue) for $T=64$ (left panel) and $T=512$ (right panel) for DGPs (a)--(c) (top to bottom). Red: The corresponding chi-squared densities predicted under the null.}
\label{fig:null}
\end{figure}

\begin{figure}
\centering
\includegraphics[width=0.4\textwidth]{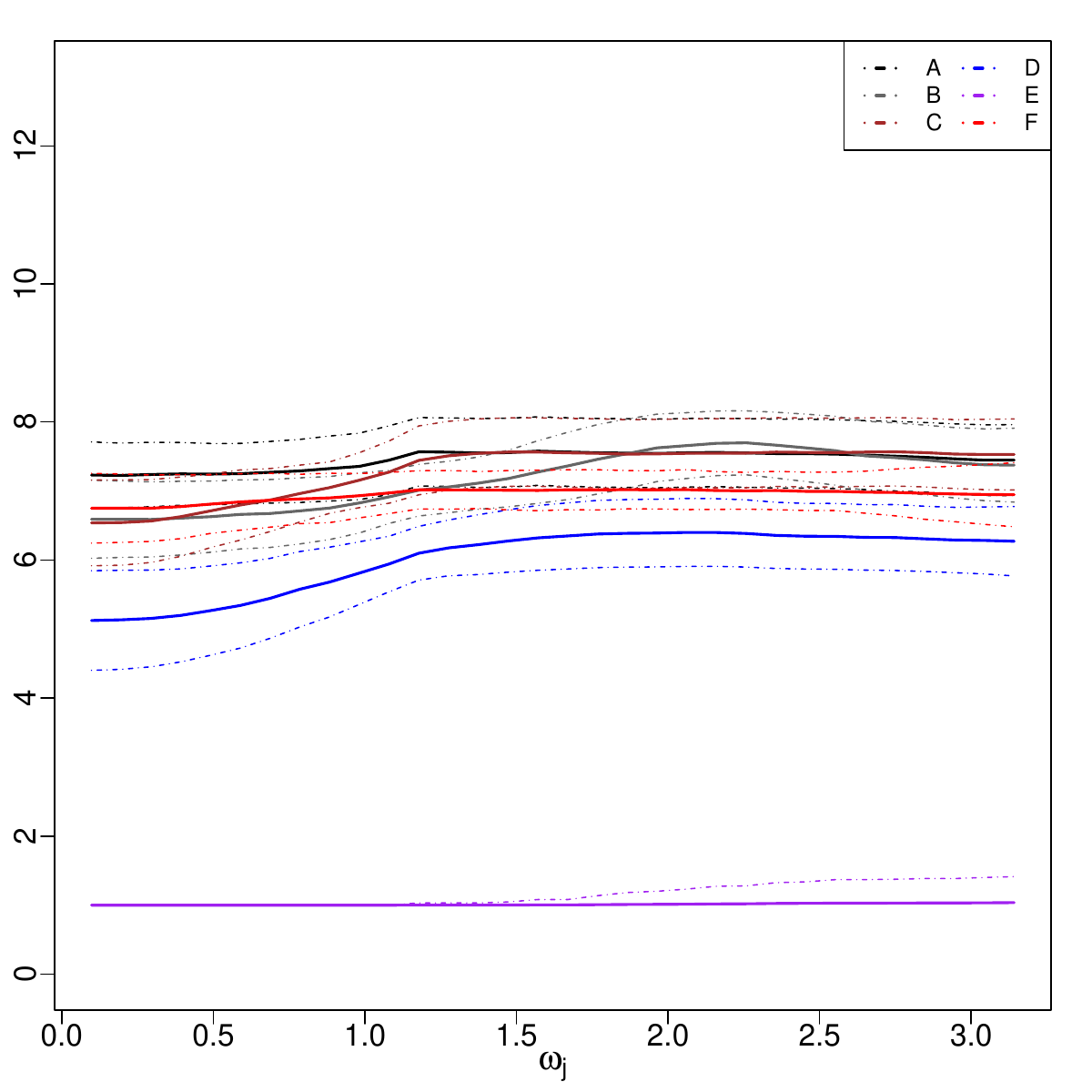}  
\includegraphics[width=0.4\textwidth]{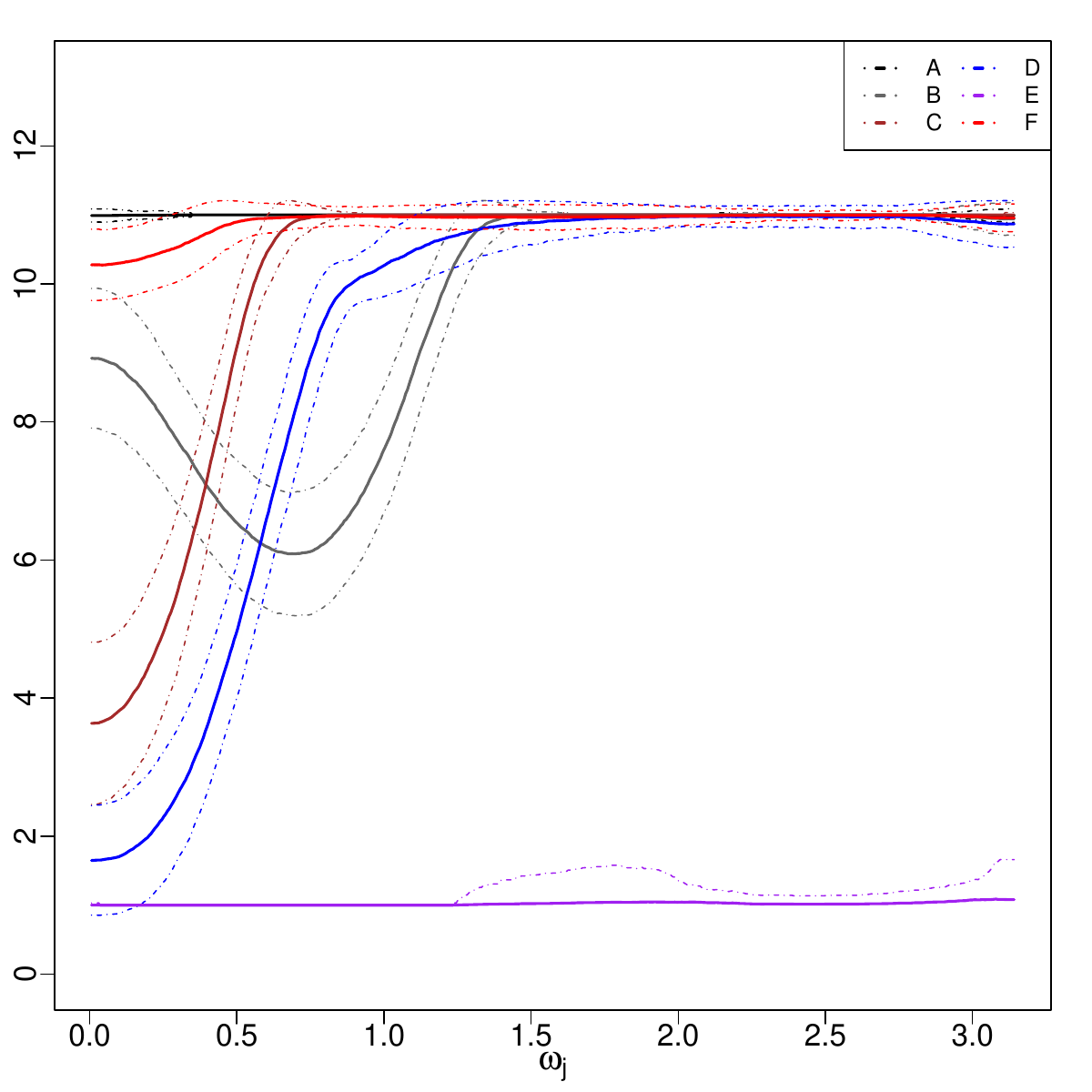}
\caption{Average choice of truncation level $L_j$ against frequency $\omega_j$ for the six DGPs (solid lines) with respective standard deviations (dashed lines) for $T=64$ (left) and $T=1024$ (right).}
\label{fig:Lj}
\end{figure}

%%%%%%%%%%%%%%%%%%%
\subsection{Finite sample performance under the alternative}
\label{subsec:empirical:alt}

Under the alternative, the following data generating processes are considered:
\begin{itemize}\itemsep-.2ex
\item[(d)] The tvFAR(1) variables $X_1,\ldots,X_T$ with operator specified through the variances $\nu^{(1)}_{l,l^\prime}=\exp(-l-l^\prime)$ and operator norm $\kappa_1=0.8$, and innovations given by (a) with added multiplicative time-varying variance
\[
{\color{black}{
\sigma^2(t)=
\cos\bigg(\frac 12+\cos\bigg(\frac{2\pi t}{T}\bigg)+0.3\sin\bigg(\frac{2\pi t}{T}\bigg)\bigg);
}}
\]
\item[(e)] The tvFAR(2) variables $X_1,\ldots,X_T$ with both operators as in (d) but with time-varying operator norm
\[
\kappa_{1,t}=1.8\cos\bigg(1.5-\cos\bigg(\frac{4\pi t}{T}\bigg)\bigg),
\]
constant operator norm $\kappa_2=-0.81$, and innovations as in (a);
\item[(f)] The structural break FAR(2) variables $X_1,\ldots,X_T$ given in the following way. 
\begin{itemize}
\item For $t\leq 3T/8$, the operators are as in (b) but with operator norms $\kappa_1=0.7$ and $\kappa_2=0.2$, and innovations as in (a);
\item For $t> 3T/8$, the operators are as in (b) but with operator norms $\kappa_1=0$ and $\kappa_2=-0.2$, and innovations as in (a) but with variances $\mathrm{Var}(\langle\varepsilon_t,\psi_l\rangle)=2\exp({\color{black}{-}}(l-1)/10)$.
\end{itemize}
\end{itemize}
All other aspects of the simulations are as in Section~\ref{subsec:empirical:null}. The processes studied under the alternative provide intuition for the behavior of the proposed tests under different deviations from the null hypothesis. DGP (d) is time-varying only through the innovation structure, in the form of a slowly varying variance component. The first-order autoregressive structure is independent of time. DGP (e) is a time-varying second-order FAR process for which the first autoregressive operator varies with time. The final DGP in (f) models a structural break, a different type of alternative. Here, the process is not locally stationary as prescribed under the alternative in this paper, but piecewise stationary with the two pieces being specified as two distinct FAR(2) processes. 

\begin{table}[h!]
\vspace*{.3cm} %\hspace*{-1.0cm} 
 \begin{center}
\resizebox{\textwidth}{!}{
\begin{tabular}{c r @{\quad} rrr @{\quad} rrr @{\quad} rrr @{\quad} rrr }
\hline
&& &   \multicolumn{2}{c  }{\% level}  &  &\multicolumn{2}{c }{\% level}&  & \multicolumn{2}{c }{\% level} & &\multicolumn{2}{c }{\% level}\\ 
&$T$~~ & $\hat{Q}_{1,u}^{(T)}  $ & 5\phantom{.00} & 1\phantom{.00} & $\hat{Q}_{5,u}^{(T)}  $ & 5\phantom{.00} & 1\phantom{.00} & $ \hat{Q}_{1,s}^{(T)} $ & 5\phantom{.00} & 1\phantom{.00} & $\hat{Q}_{5,s}^{(T)}$ & 5\phantom{.00} & 1\phantom{.00} \\ \hline
%&$T$~~ & $ \hat{Q}_{1,e}^{(T)} $ & 5\phantom{.0} &  1\phantom{.0} & $\hat{Q}_{5,e}^{(T)}  $  & 5\phantom{.0}&  1\phantom{.0} &  $ \hat{Q}_{1,f}^{(T)} $ & 5\phantom{.0} &  1\phantom{.0} & $\hat{Q}_{5,f}^{(T)} & $ 5\phantom{.0}$ &  &1\phantom{.0} \\ \hline
%& \hat{Q}_{10}^{(T)}  $ & 5\phantom{.0} &1\phantom{.0}\\ \hline
%
(d)%
&64& 9.84 & 77.80 & 54.30 & 20.33 & 57.30 & 39.80 & 8.61 & 71.30 & 46.70 & 17.92 & 48.80 & 30.70 \\ 
&128& 19.55 & 99.00 & 94.40 & 33.34 & 94.10 & 84.10 & 18.26 & 98.20 & 91.40 & 30.44 & 90.20 & 76.30 \\ 
&256 &  36.70 & 100.00 & 100.00 & 54.07 & 99.90 & 99.70 & 34.40 & 100.00 & 100.00 & 50.27 & 99.80 & 99.40 \\ 
&512 & 69.49 & 100.00 & 100.00 & 94.47 & 100.00 & 100.00 & 62.90 & 100.00 & 100.00 & 84.75 & 100.00 & 100.00  \\ 
&1024 &  140.53 & 100.00 & 100.00 & 179.75 & 100.00 & 100.00 & 118.18 & 100.00 & 100.00 & 152.12 & 100.00 & 100.00\\ 
\hline
(e)%
&64 & 33.38 & 100.00 & 100.00 & 131.80 & 100.00 & 98.10 & 33.46 & 99.50 & 99.20 & 100.13 & 99.30 & 99.20 \\ 
&128 & 49.04 & 100.00 & 100.00 & 118.13 & 100.00 & 100.00 & 66.48 & 99.70 & 99.30 & 172.30 & 99.80 & 99.70\\ 
&256 & 98.43 & 100.00 & 100.00 & 393.65 & 100.00 & 100.00 & 151.44 & 99.70 & 99.60 & 568.55 & 99.90 & 99.80 \\  
&512 & 173.35 & 100.00 & 100.00 & 763.11 & 100.00 & 100.00 & 302.51 & 100.00 & 100.00 & 1257.93 & 100.00 & 100.00 \\ 
&1024 & 286.54 & 99.90 & 99.90 & 1311.08 & 100.00 & 100.00 & 579.00 & 99.80 & 99.80 & 2484.54  & 100.00 & 99.90  \\ 
\hline 
(f) %
&64 & 5.64 & 46.50 & 25.40 & 15.02 & 33.70 & 19.90 & 4.38 & 35.20 & 16.50 & 12.36 & 24.40 & 12.70\\ 
&128 & 10.90 & 82.80 & 60.90 & 21.65 & 64.30 & 43.00 & 8.93 & 83.10 & 48.40 & 18.37 & 50.40 & 29.30 \\ 
&256 &18.29 & 98.20 & 90.50 & 30.40 & 90.00 & 77.50 & 15.71 & 95.70 & 85.20 & 27.03 & 84.70 & 66.00 \\  
&512 & 31.81 & 100.00 & 100.00 & 47.49 & 99.90 & 99.20 & 30.71 & 99.90 & 99.80 & 45.71 & 99.80 & 98.50 \\
&1024 &  62.72 & 100.00 & 100.00  & 83.82 & 100.00 & 100.00 & 62.29 & 100.00 & 100.00 & 83.18 & 100.00 & 100.00 \\ 
\hline
\end{tabular}
}
\caption{Median of test statistic values and rejection rates of $\hat{Q}_{M,u}^{(T)}$ and $\hat{Q}_{M,s}^{(T)}$ at the 1\% and 5\% asymptotic level for the processes (d)--(f) for various choices of $M$ and $T$. All table entries are generated from 1000 repetitions.}
\label{tab:alt}
\end{center}
\end{table}

The empirical power of the various test statistics for the processes in (d)--(f) are in Table~\ref{tab:alt}. Power results are roughly similar across the selected values of $M$ for both statistics. For DGP (f) and to some extend for DGP (d), power is low for the small sample sizes $T=64$. It reaches 100\% for all $T$ larger or equal to 256 for all DGPs but (f), where close to perfect detection is reached for $T=512$.
Generally, the standardized statistics is slightly more unstable than its unstandardized counterpart for DGP (e), while both statistics behave remarkably similar for the other processes.  
The results for DGP (f) indicate that the proposed statistics have power against structural break alternatives. \textcolor{black}{This is intuitive since the second-order structure is in this case not invariant under translations of time and hence induces a non-zero mean in the test statistics.}

Figure \ref{fig:alt} exhibits exemplary the empirical densities for DGP (d). It can be seen that the deviation from the chi-squared distribution predicted under the null hypothesis grows with increasing sample size. Figure \ref{fig:Lj} contains the average choice of $L_j$ for DGPs (d)-(f) under the alternative. While processes (d) and (f) display behavior more similar to the null DGPs, process (e) is significantly different, as almost always only one principal component is chosen at each frequency for both the small and the large sample size.

\begin{figure}[h!]
\centering
\vspace{.5cm}
\includegraphics[width=0.4\textwidth]{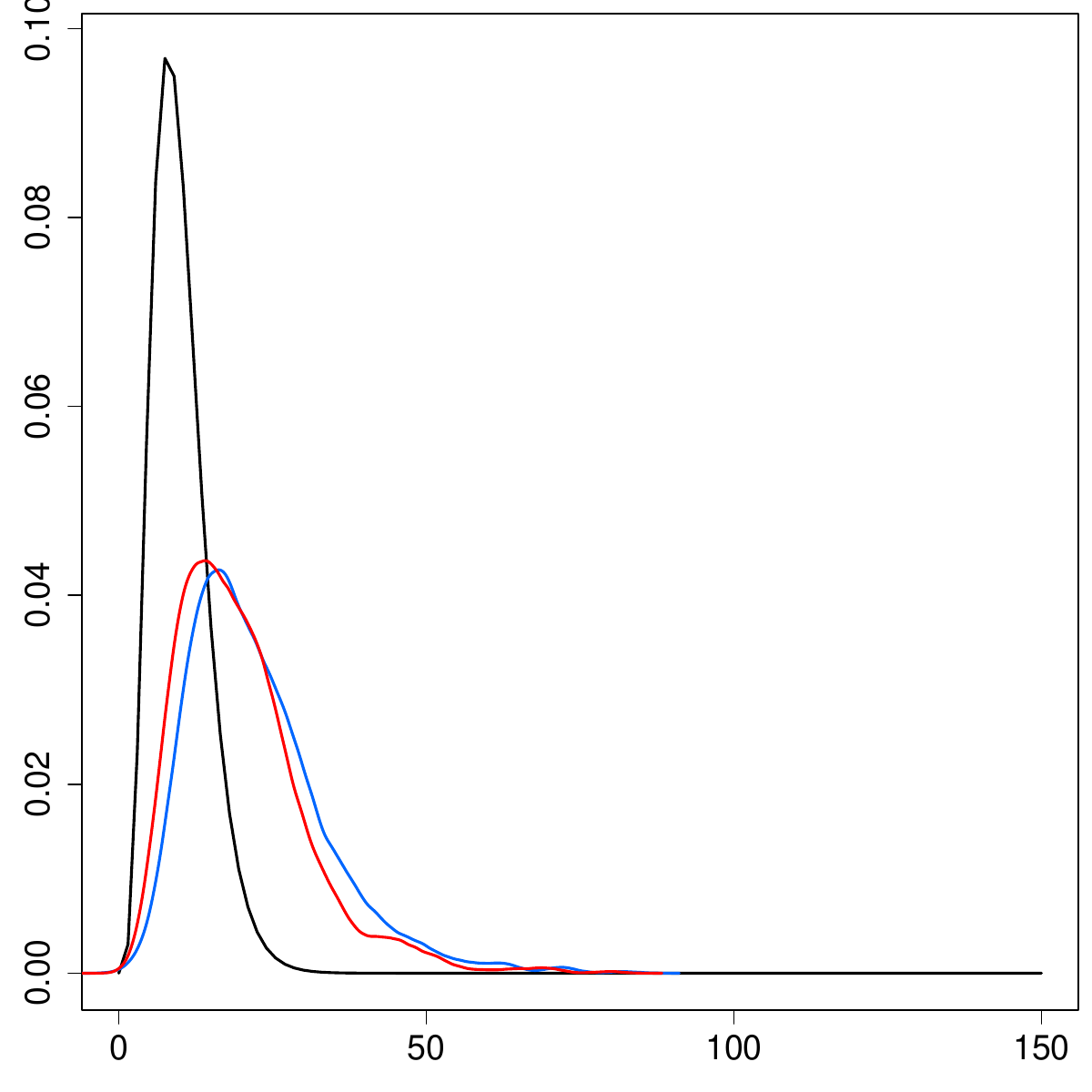}  
\includegraphics[width=0.4\textwidth]{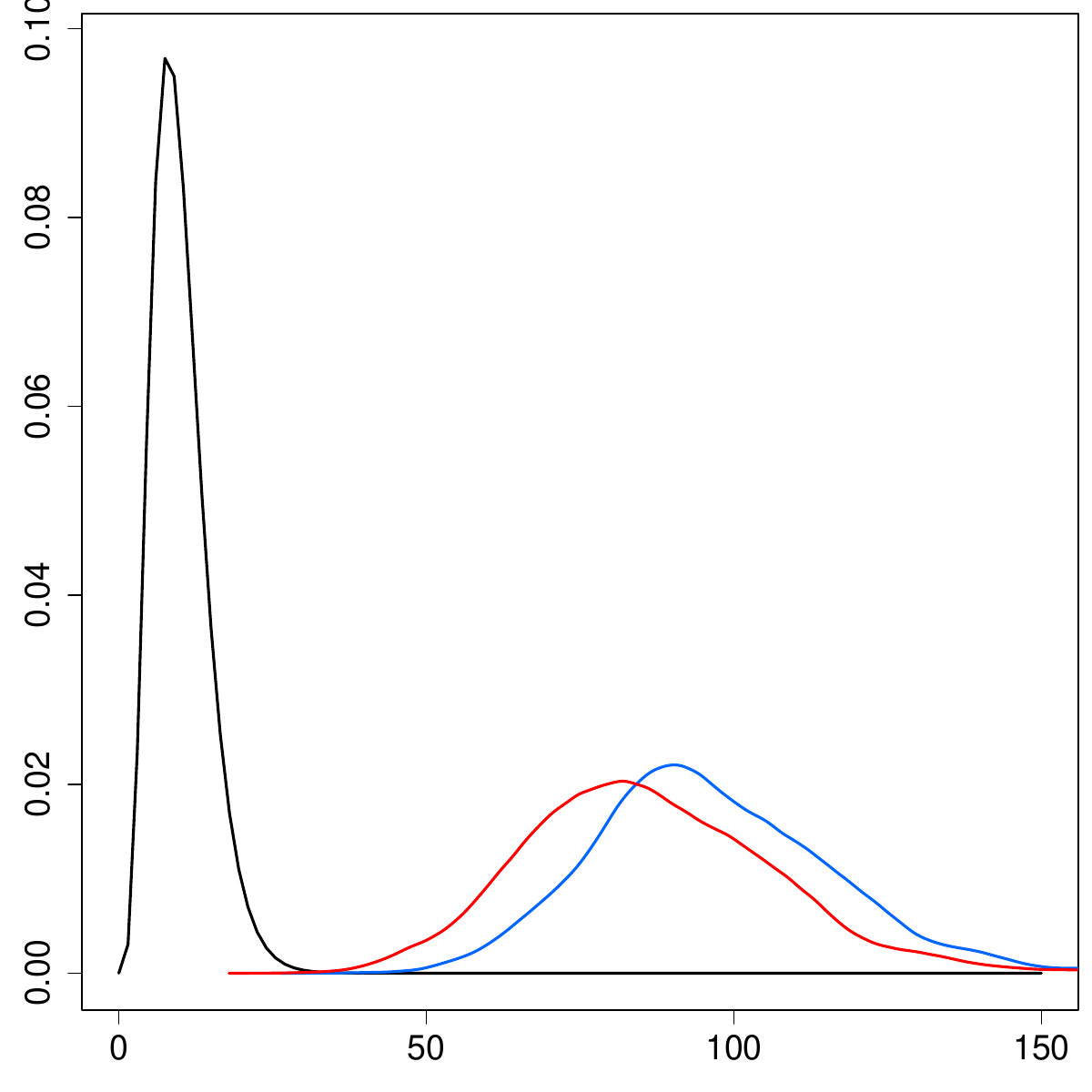}
\caption{Empirical density of $\hat Q_{5,u}^{(T)}$ (black) and $\hat Q_{5,s}^{(T)}$ (blue) for $T=64$ (left panel) and $T=512$ (right panel) for DGP (d). %--(f) (top to bottom). %Note that 2.5\% outliers have been removed for Model (e). Red: The corresponding chi-squared densities predicted under the null.
}
\label{fig:alt}
\end{figure}

%%%%%%%%%%%%%%%%%%%
\subsection{Finite sample performance under non-Gaussian observations}
\label{subsec:empirical:non-Gauss}

\textcolor{black}{In this section, the behavior of the eigenbased test under non-Gaussianity is further investigated through the following processes:
\begin{itemize}\itemsep-.2ex
\item[(g)] The FAR(2) variables $X_1,\ldots,X_T$ as in (b) but with both independent $t_{19}$-distributed FWN and independent $\beta(6,6)$-distributed FWN;
\item[(h)] The tvFAR(1) variables $X_1,\ldots,X_T$ as in (d) but with independent $t_{\color{black}{19}}$-distributed FWN and independent $\beta(6,6)$-distributed FWN. 
\end{itemize}
For direct comparison, both $t_{19}$- and $\beta(6,6)$-distributions were standardized to conform to zero mean and unit variance as the standard normal. All other aspects are as detailed in Section~\ref{subsec:empirical:null}. The additional simulations were designed to shed further light on the effect of estimating the fourth-order spectrum in situations deviating from the standard Gaussian setting. Note in particular that the $t_{19}$-distribution serves as an example for leptokurtosis (the excess kurtosis is $0.4$) and the $\beta(6,6)$ distribution for platykurtosis (the excess kurtosis is $-0.4$). Process (g) showcases the behavior under the null, while process (h) highlights the performance under the alternative. The corresponding results are given in Table \ref{tab:fourth} and can be readily compared with corresponding outcomes for the Gaussian processes (b) and (d) in Tables \ref{tab:null}--.\ref{tab:alt}.}

\begin{table}[h!]
\vspace*{.3cm} %\hspace*{-1.0cm} 
 \begin{center}
\resizebox{\textwidth}{!}{
\begin{tabular}{c r @{\quad} rrr @{\quad} rrr @{\quad} rrr @{\quad} rrr }
\hline
&& &   \multicolumn{2}{c  }{\% level}  &  &\multicolumn{2}{c }{\% level}&  & \multicolumn{2}{c }{\% level} & &\multicolumn{2}{c }{\% level}\\ 
&$T$~~ & $\hat{Q}_{1,u}^{(T)}  $ & 5\phantom{.00} & 1\phantom{.00} & $\hat{Q}_{5,u}^{(T)}  $ & 5\phantom{.00} & 1\phantom{.00} & $ \hat{Q}_{1,s}^{(T)} $ & 5\phantom{.00} & 1\phantom{.00} & $\hat{Q}_{5,s}^{(T)}$ & 5\phantom{.00} & 1\phantom{.00} \\ \hline
%&$T$~~ & $ \hat{Q}_{1,e}^{(T)} $ & 5\phantom{.0} &  1\phantom{.0} & $\hat{Q}_{5,e}^{(T)}  $  & 5\phantom{.0}&  1\phantom{.0} &  $ \hat{Q}_{1,f}^{(T)} $ & 5\phantom{.0} &  1\phantom{.0} & $\hat{Q}_{5,f}^{(T)} & $ 5\phantom{.0}$ &  &1\phantom{.0} \\ \hline
%& \hat{Q}_{10}^{(T)}  $ & 5\phantom{.0} &1\phantom{.0}\\ \hline
%
(g), $t$%
& 64 & 1.58 & 5.60 & 0.70 & 9.46 & 7.70 & 2.70 & 1.40 & 2.90 & 0.60 & 8.65 & 6.20 & 1.40 \\ 
& 128 & 1.42 & 4.50 & 1.30 & 9.37 & 6.90 & 2.00 & 1.30 & 3.60 & 0.40 & 8.81 & 4.80 & 1.30 \\ 
& 256 & 1.40 & 4.40 & 0.80 & 9.17 & 5.50 & 1.70 & 1.29 & 4.90 & 0.90 & 8.89 & 5.70 & 1.20 \\ 
& 512 & 1.47 & 4.70 & 0.70 & 9.33 & 4.70 & 1.30 & 1.44 & 4.10 & 0.50 & 9.32 & 4.30 & 1.20 \\ 
& 1024 & 1.53 & 5.90 & 0.40 & 9.52 & 5.00 & 1.00 & 1.43 & 5.40 & 0.90 & 8.92 & 4.70 & 1.10 \\ 
\hline
(g), $\beta$%
&64& 1.31 & 3.10 & 0.80 & 9.00 & 6.70 & 1.60 & 1.29 & 3.40 & 0.80 & 8.66 & 5.50 & 1.10 \\ 
 & 128 & 1.37 & 4.80 & 1.10 & 9.13 & 6.10 & 1.90 & 1.25 & 3.70 & 0.60 & 8.89 & 4.50 & 0.90 \\ 
 & 256 & 1.39 & 4.70 & 1.00 & 9.13 & 4.10 & 1.30 & 1.32 & 4.70 & 1.10 & 8.57 & 3.40 & 0.60 \\ 
 & 512 & 1.30 & 3.90 & 0.70 & 9.22 & 4.50 & 1.00 & 1.32 & 4.50 & 0.90 & 9.13 & 5.80 & 1.40 \\ 
 & 1024 & 1.43 & 5.00 & 0.90 & 9.57 & 4.40 & 0.80 & 1.34 & 4.20 & 1.20 & 9.37 & 4.90 & 0.70 \\ 
 \hline
(h), $t$%
& 64 & 9.07 & 77.10 & 49.00 & 18.95 & 53.10 & 29.10 & 8.16 & 69.30 & 41.30 & 16.82 & 43.50 & 22.10 \\ 
& 128 & 17.21 & 98.30 & 92.80 & 28.78 & 91.10 & 74.30 & 16.47 & 98.00 & 89.70 & 26.52 & 87.00 & 66.70 \\ 
& 256 & 31.12 & 100.00 & 99.90 & 45.94 & 100.00 & 99.70 & 30.12 & 100.00 & 99.70 & 43.54 & 99.70 & 98.60 \\ 
& 512 & 57.81 & 100.00 & 100.00 & 78.95 & 100.00 & 100.00 & 53.14 & 100.00 & 100.00 & 71.69 & 100.00 & 100.00 \\ 
& 1024 & 112.95 & 100.00 & 100.00 & 146.21 & 100.00 & 100.00 & 98.88 & 100.00 & 100.00 & 127.16 & 100.00 & 100.00 \\ 
\hline 
(h), $\beta$ %
&64 & 9.17 & 77.80 & 49.60 & 18.30 & 50.00 & 28.10 & 8.20 & 69.40 & 40.20 & 16.86 & 42.60 & 21.60 \\ 
& 128 & 17.49 & 98.40 & 91.40 & 29.06 & 91.30 & 74.10 & 16.31 & 97.20 & 88.50 & 27.21 & 87.00 & 67.70 \\ 
& 256 & 31.05 & 100.00 & 100.00 & 46.75 & 99.90 & 99.30 & 29.58 & 100.00 & 100.00 & 44.06 & 99.90 & 98.60 \\ 
& 512 & 57.90 & 100.00 & 100.00 & 78.97 & 100.00 & 100.00 & 52.97 & 100.00 & 100.00 & 71.40 & 100.00 & 100.00 \\ 
 &1024 & 114.13 & 100.00 & 100.00 & 146.75 & 100.00 & 100.00 & 100.95 & 100.00 & 100.00 & 128.45 & 100.00 & 100.00 \\ 
\hline
\end{tabular}
}
\caption{Median of test statistic values and rejection rates of $\hat{Q}_{M,u}^{(T)}$ and $\hat{Q}_{M,s}^{(T)}$ at the 1\% and 5\% asymptotic level for the processes (g) and (h), where $t$ and $\beta$ indicate $t_{19}$- and $\beta(6,6)$-distributed innovations, respectively. %The values for $L$ and aTVE correspond to $M=1$.
 All table entries are generated from 1000 repetitions.}
\label{tab:fourth}
\end{center}
\end{table}
It can be seen from the results in Table \ref{tab:fourth} that the proposed procedures perform roughly as expected. First, under the null hypothesis for levels for both types of innovations, both sets of tests and both choices of $M$ are well adjusted and observe similar patterns as their normal counterparts in DGP (b) in Table \ref{tab:null}. Second, under the alternative for process (h), powers align roughly as for the Gaussian case in Table \ref{tab:alt}. %Comparing to Table \ref{tab:L+TVE}, it can be seen that for process (h), the chosen $L$ are lower than in the Gaussian case. 
Overall, the simulation results reveal that the estimation of the fourth-order spectrum does not lead to a marked decay in performance.

%%%%%%%%%%%%%%%%%%%
\subsection{Application to annual temperature curves}
\label{subsec:emirical:data}

To give an instructive data example, the proposed method was applied to annual temperature curves recorded at several measuring stations across Australia over the last century and a half. The exact locations and lengths of the functional time series are reported in Table \ref{tab:data}, and the annual temperature profiles recorded at the Gayndah station are displayed for illustration in the left panel of Figure \ref{fig:data0}. To test whether these annual temperature profiles constitute stationary functional time series or not, the proposed testing method was utilized, using specifications similar to those in the simulation study. 
\textcolor{black}{To get an idea of the spectral structure of these different temperature curves, the left-hand side of Figure \ref{fig:dataL} shows the averaged eigenvalue decay standardized with respect to the largest eigenvalue at each frequency. More precisely, $\frac 1T\sum_{j=1}^T\lambda_l^{\omega_j}/\lambda_1^{\omega_j}$ is plotted against $l$. Figure \ref{fig:eig_l_freq} displays in addition the plots of the $l$-largest sample eigenvalues $\lambda_l^{\omega_j}$ against $j$ for $l=1,\ldots,15$. It can be seen that frequency-specific contributions are heterogeneous for each of the four stations. There are also substantial differences in the eigenvalue plots across different stations. The choices of $L_j$ across frequency $\omega_j$ as used by the standardized test procedure are shown in the right-hand side of Figure \ref{fig:dataL}. }
 %Focusing here on the eigenbased version, the test statistic $\hat Q_{M,e}^{(T)}$ in \eqref{eq:quad_form_test} was applied with $L$ chosen according to \eqref{eq:L} and $h_m=m$ for $m=1,\ldots,M$, where $M=1$ and $5$. 

\textcolor{black}{
The $p$-values for the standardized test statistics are essentially zero for all stations and all $M=1,\ldots,5$. The testing results for the unstandardized statistics are summarized in Table \ref{tab:data}. Stationarity is rejected in favor of the alternative at the 1\% significance level at all measuring stations for $\hat Q_{M,u}^{(T)}$ with all specifications of $M$, with one notable exception: no choice of $M$ leads to a rejection of the null hypothesis at Boulia station. Additionally, rejection at Melbourne and Sydney stations is not possible at the smallest significance levels for several $M$. At all other measuring stations rejection of the null is very strong. Note that Boulia station showed the slowest eigenvalue decay in Figure \ref{fig:dataL} and the spectral behavior most separated from the other stations in Figure \ref{fig:eig_l_freq}. It is particularly interesting that around frequency $\pi$ there is little to no separation between first and second sample eigenvalues. The lack of estimation accuracy in the case of tied eigenvalues might help explain why Boulia station delivers results at odds with the findings at the other stations. In the future, it might be worthwhile looking into running the stationarity tests only in certain frequency bands, excluding those frequencies for which separation of sample eigenvalues is not sufficiently large. This is, however, beyond the scope of the current paper.
} 

\begin{table}[h]
\vspace{.6cm}
\begin{center}
\begin{tabular}{l@{\qquad}c@{\qquad}r@{\qquad}r@{\qquad}r@{\qquad}r@{\qquad}r}
\hline%\toprule% \hline
Station\phantom{$\hat{\hat\hat{k^*}}k_{n_n}$} & $T$ & $M=1$ & $M=2$ & $M=3$ & $M=4$ & $M=5$  \\
\hline%\toprule
Boulia &120 &  %0.4288 & 0.6657 & 0.6423 & 0.7341 & 0.6064 \\ 
0.71 & 0.17 & 0.20 & 0.36 & 0.44 \\ 
Robe & 130 & %0.0001 & 0.0000 & 0.0000 & 0.0000 & 0.0000 \\  
0.01 & 0.00 & 0.00 & 0.00 & 0.00 \\
Cape Otway & 150 & %0.0001 & 0.0000 & 0.0000 & 0.0000 & 0.0000 \\
0.00 & 0.00 & 0.00 & 0.00 & 0.00 \\
Gayndah & 118 & %0.0003 & 0.0000 & 0.0000 & 0.0000 & 0.0000 \\  
0.00 & 0.00 & 0.00 & 0.00 & 0.00 \\
Gunnedah & 134 & %0.0256 & 0.0005 & 0.0003 & 0.0002 & 0.0002 \\ 
0.00 & 0.00 & 0.00 & 0.00 & 0.00 \\
Hobart& 122 & %0.0083 & 0.0045 & 0.0036  & 0.0014 & 0.0006 \\ 
0.00 & 0.00 & 0.00 & 0.00 & 0.00 \\
Melbourne & 158 & %0.0046 & 0.0055 & 0.0001 & 0.0001 & 0.0001 \\
0.03 & 0.04 & 0.02 & 0.01 & 0.01 \\
Sydney & 154 & %0.0043 & 0.0039 & 0.0003 & 0.0002 & 0.0000  
0.15 & 0.01 & 0.00 & 0.00 & 0.00
\\ \hline%\bottomrule
\end{tabular}
\caption{\label{tab:data} Summary of results for eight Australian measuring stations. The column labeled $T$ reports the sample size, the other columns report $p$-values for the given choices of $M$ for $\hat Q_{M,u}^{(T)}$. %and $\hat Q_{M,s}^{(T)}$ are plotted in the top and bottom row, respectively.
%The columns labeled $M=1$, $M=5$ and $M=10$ provide the values of $\hat Q_M^{(T)}$ in \eqref{eq:quad_form_test}.
}
\end{center}
\end{table}

\begin{figure}[h]
\centering
\vspace{-.5cm}
\includegraphics[height=0.6\textwidth]{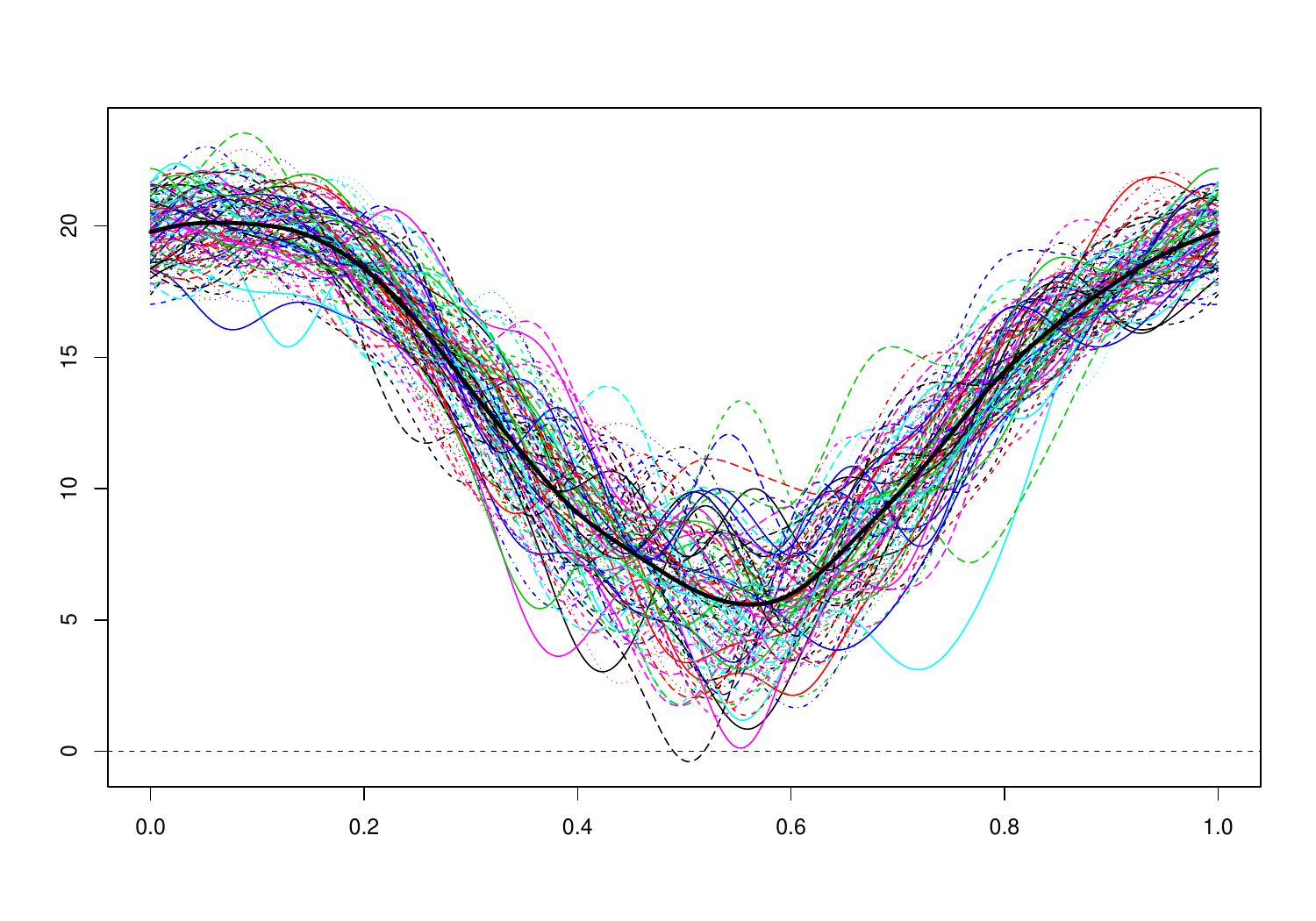} %\\
\caption{Annual temperature curves at Gayndah station.}
\label{fig:data0}
\end{figure}

\begin{figure}[h]
\centering
\vspace{-.3cm}
\includegraphics[width=0.4\textwidth]{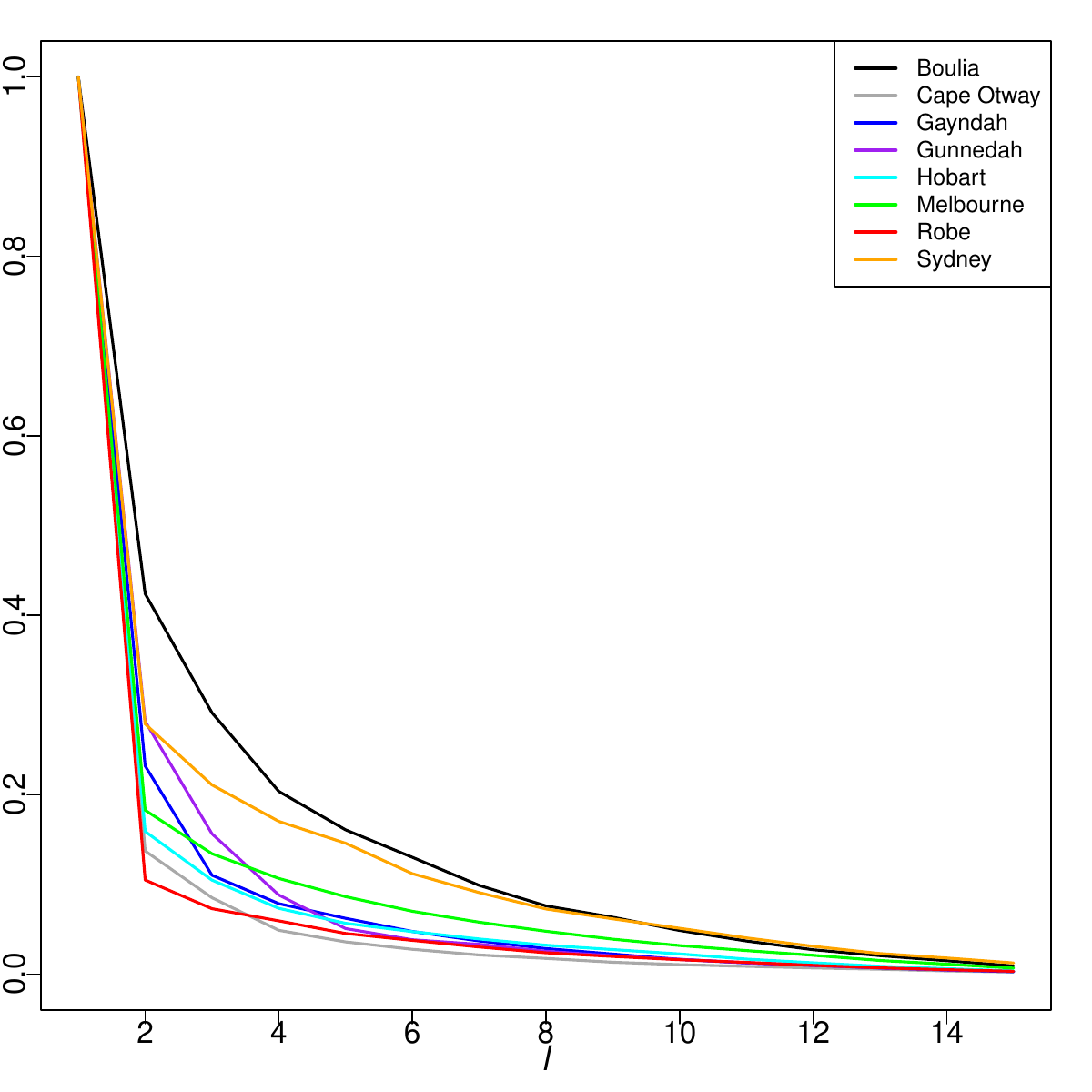}\quad
\includegraphics[width=0.4\textwidth]{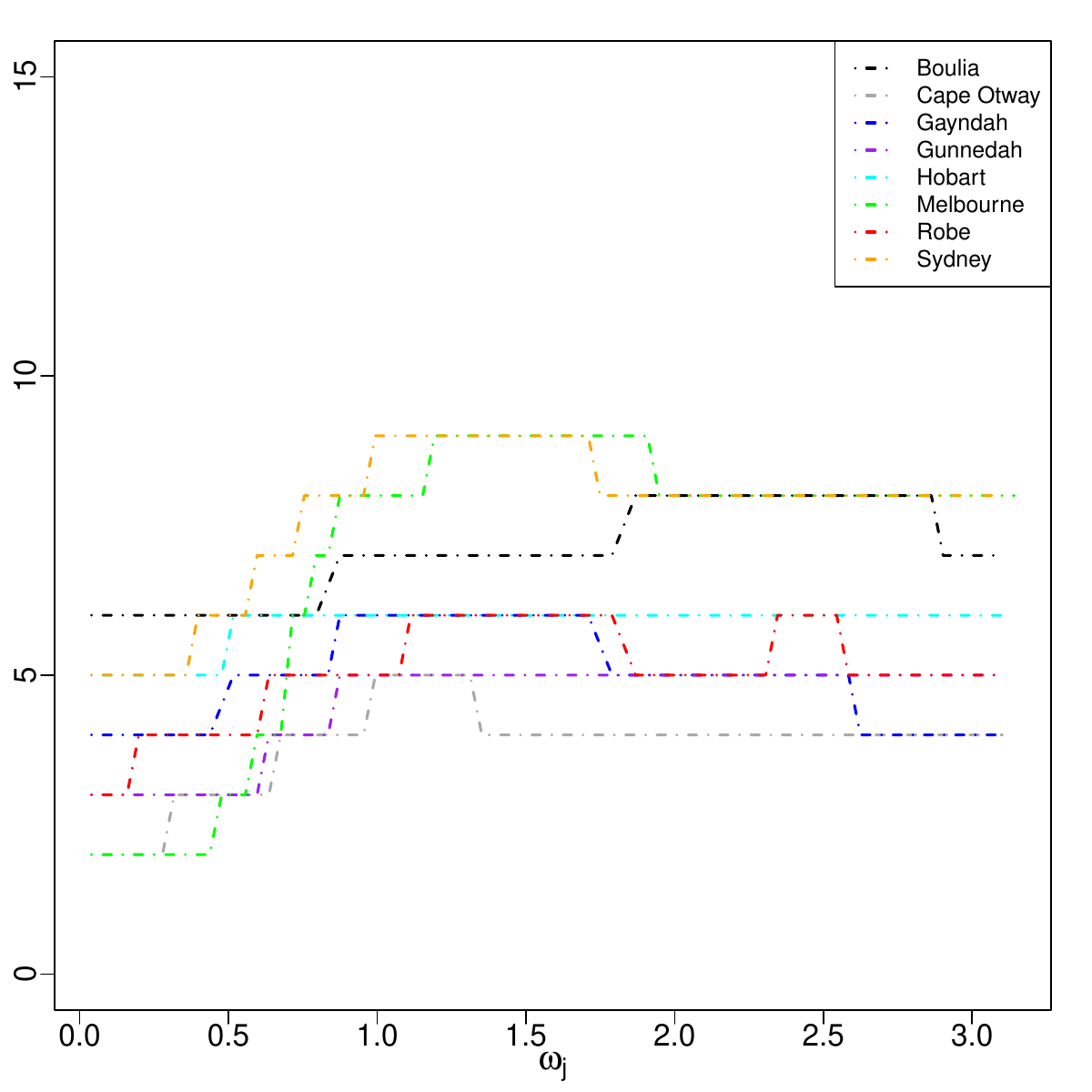} %\\
\caption{Average eigenvalue decay standardized with respect to the largest eigenvalue at each frequency, $\frac 1T\sum_{j=1}^T\lambda_l^{\omega_j}/\lambda_1^{\omega_j}$ (left) and truncation level $L_j$ across $\omega_j$ (right) across different measuring stations. }
\label{fig:dataL}
\end{figure}

\begin{figure}[h]
\centering
%\vspace{-.5cm}
\includegraphics[width=0.4\textwidth]{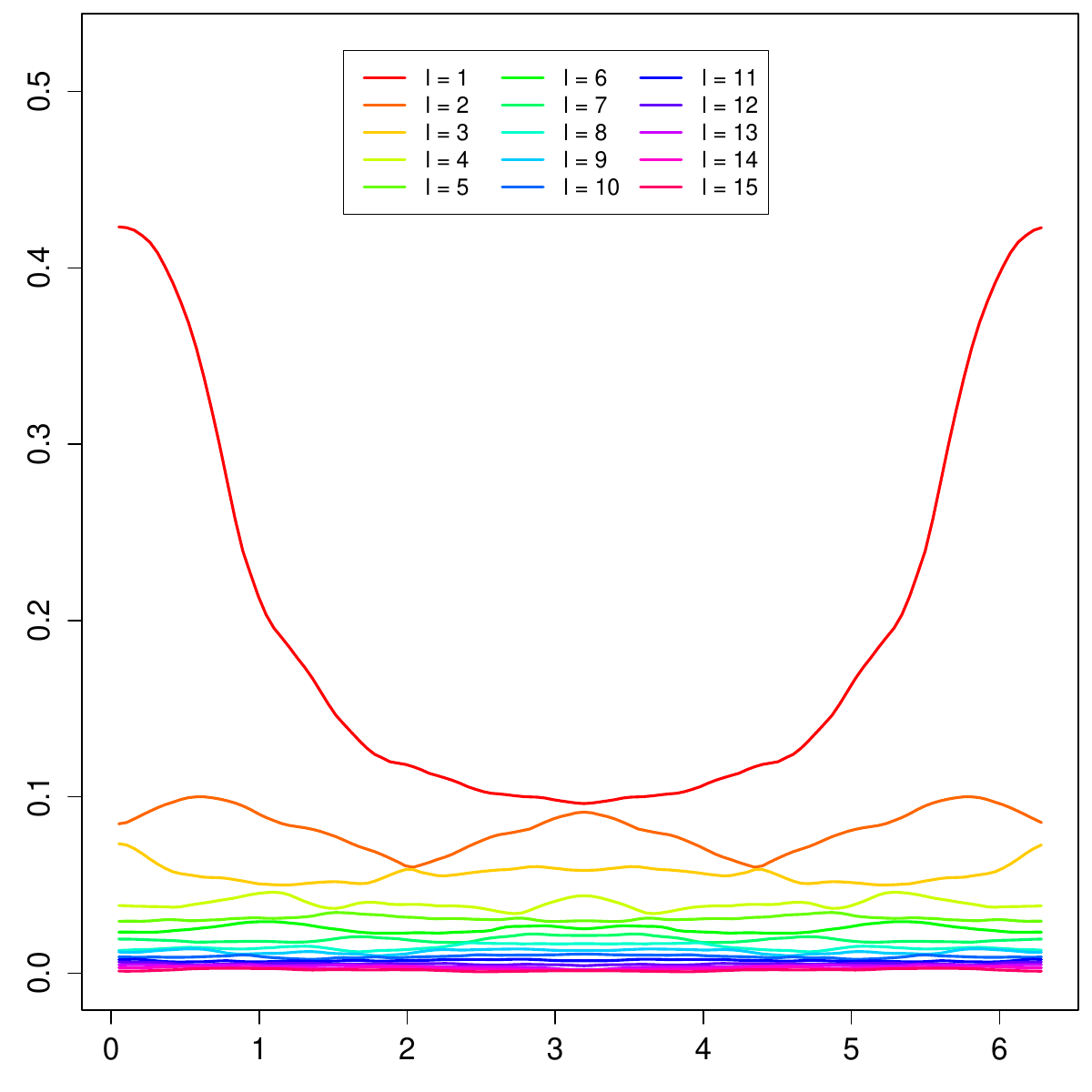}\qquad
\includegraphics[width=0.4\textwidth]{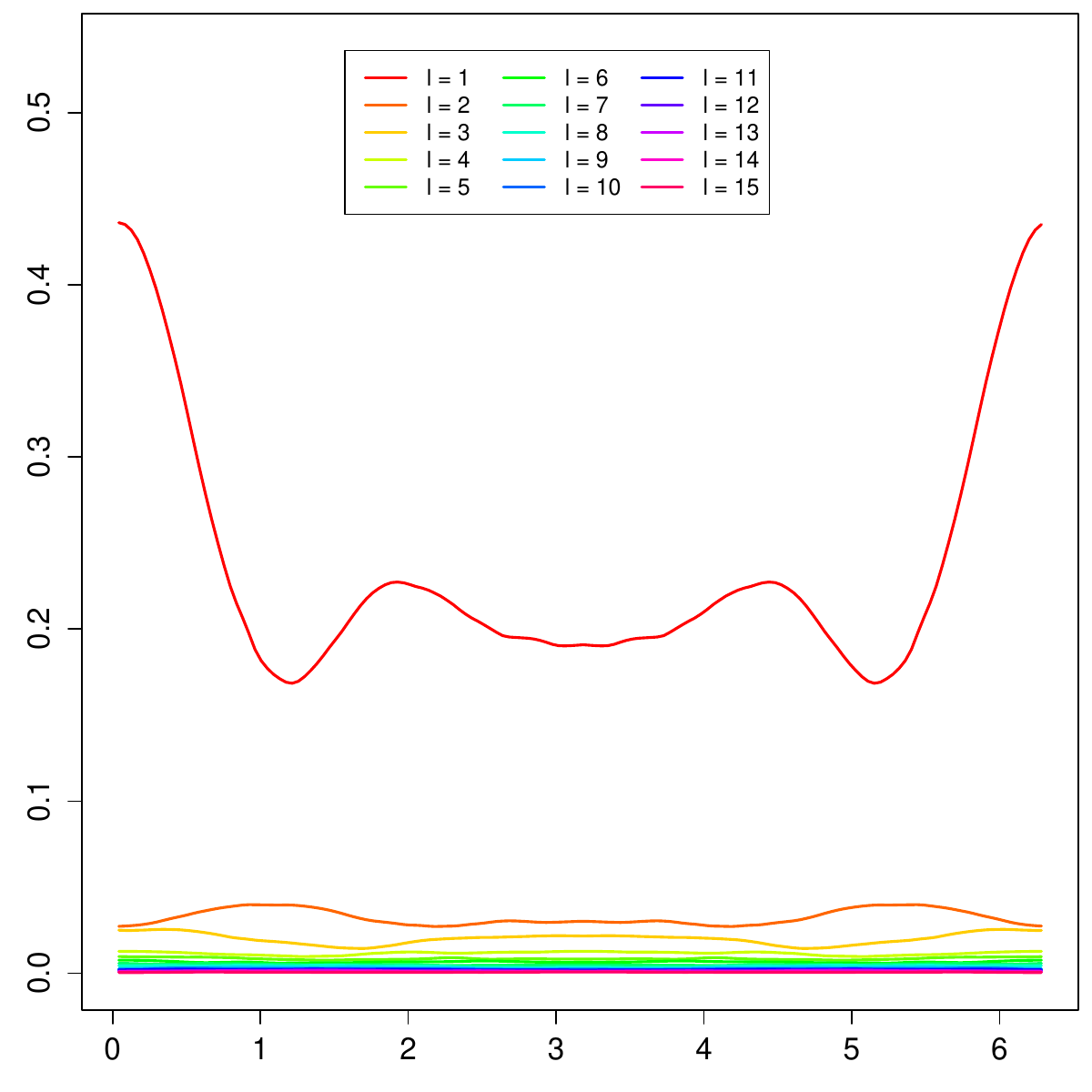} \\[.2cm]
\includegraphics[width=0.4\textwidth]{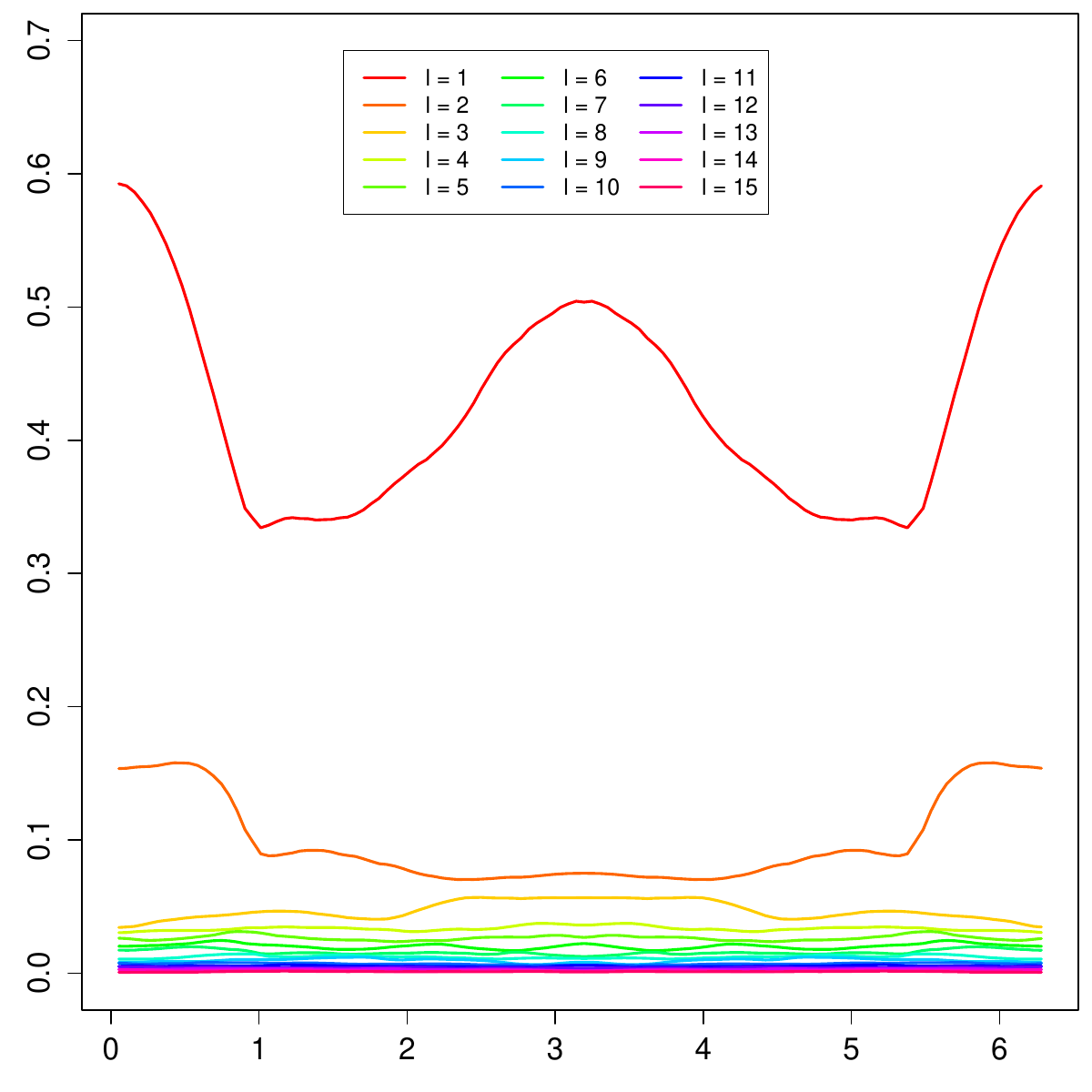}\qquad
\includegraphics[width=0.4\textwidth]{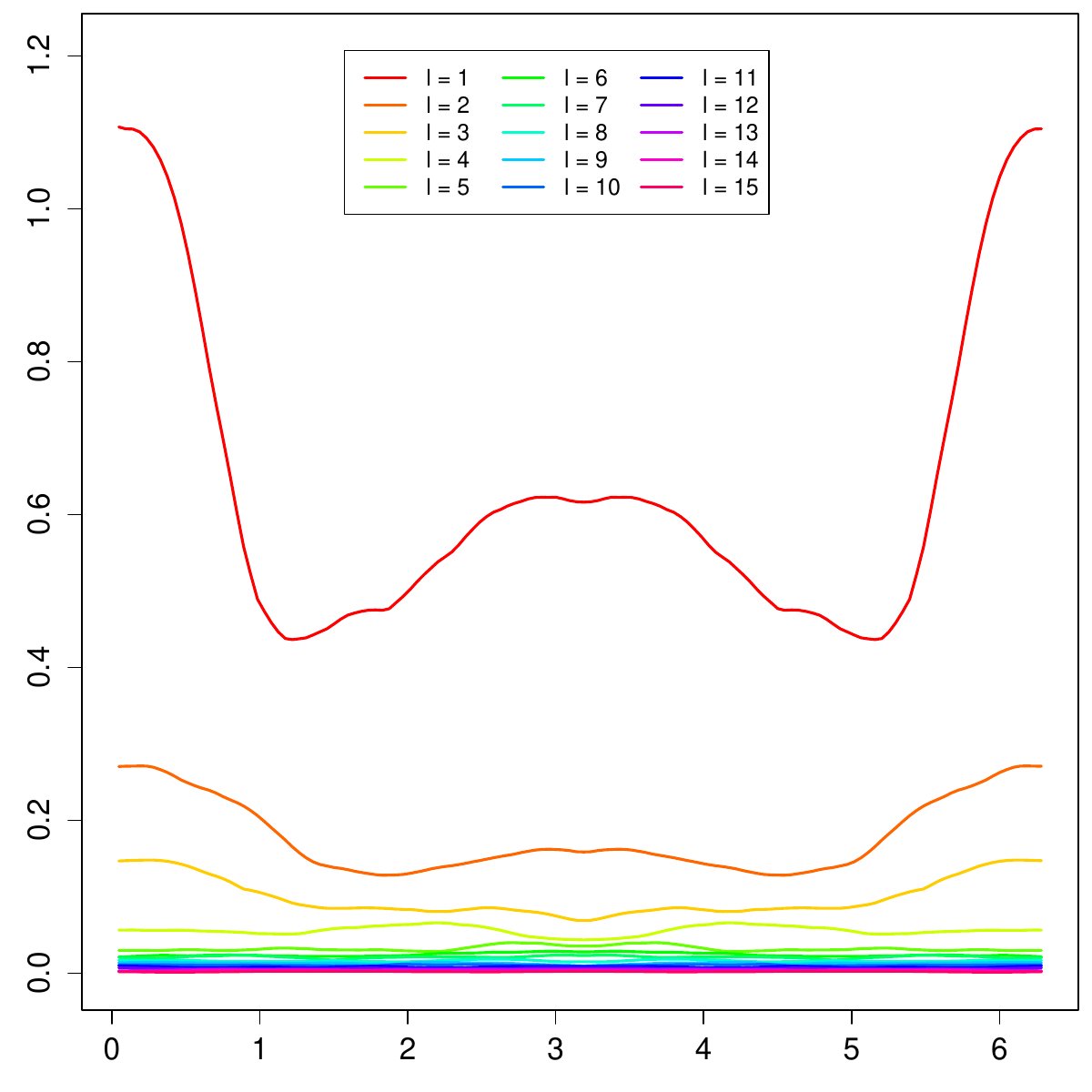} %\\
\caption{Plots of the 15 largest sample sample eigenvalues across Fourier frequencies at Boulia (top left), Cape Otway (top right), Gayndah (bottom left) and Gunnedah (bottom right) stations.}
\label{fig:eig_l_freq}
\end{figure}

%%%%%%%%%%%%%%%%%%%
\section{Conclusions and future work}
\label{sec:conclusions}
%%%%%%%%%%%%%%%%%%%

In this paper methodology for testing the stationarity of a functional time series is put forward. The tests are based on frequency domain analysis and exploit that fDFTs at different canonical frequencies are uncorrelated if and only if the underlying functional time series are stationary. The limit distribution of the quadratic form-type test statistics has been determined under the null hypothesis as well as under the alternative of local stationarity. Finite sample properties were highlighted in simulation experiments with various data generating processes and an application to annual temperature profiles. %\textcolor{red}{where deviations from stationarity were typically but not everywhere detected}\todo{does this still correspond to the current results? dont we always detect it now?}. 

The empirical results show promise for further applications to real data, but future research has to be devoted to a further fine-tuning of the proposed method; for example, an automated selection of frequencies $h_m$ outside of the standard choice $h_m=m$ for all $m=1,\ldots,M$. This can be approached through a more refined analysis of the size of the various ${\hat{\beta}}^{(T)}_{h_m,x}$ in \eqref{eq:weighted} whose real and imaginary part make up the vector $\hat{\mathbf{b}}_{M,x}^{(T)}$ in the test statistics $\hat Q_{M,x}^{(T)}$.

% !TEX spellcheck = en_US

\appendix

%%%%%%%%%%%%%%%%%%%
\section{A functional Cram\'er representation}
\label{subsec:proof:cramer}
%%%%%%%%%%%%%%%%%%%
\begin{proof}[Proof of Proposition \ref{start}]
Let $(X_t\colon t \in\znum)$ be a zero-mean, weakly stationary $H$-valued stochastic process. It has been shown \citep[][Thm 4.4]{vde18} that for processes with a trace-class spectral measure $F$,
there exists an isomorphic mapping between the subspaces $\overline{\text{sp}}(X_t \colon t \in \znum)$ of $L^2_\mathbb{C}(\Omega)$ and $\overline{\text{sp}}(e^{\im t \cdot} \colon t \in \znum)$ of $L^2([-\pi,\pi], \mathscr{B},\mu_F)$. As a consequence, $X$ admits the representation in \eqref{cramrep}. 
%\begin{align}\label{cramrep}
%X_t = \int_{-\pi}^{\pi} e^{\im t \lambda} d Z_{\lambda} \qquad \text{a.s.,}
%\end{align} where $(Z_{\lambda}\colon\lambda\in(-\pi,\pi])$ is a right-continuous, functional orthogonal-increment process. 
Conversely, 
%for any process with the above representation, orthogonality of the increments and Theorem B2.2 of} \citet{vde16} imply
\begin{align*} 
\cov(X_t,X_s) %&
= \E\bigg[ \int_{-\pi}^{\pi} e^{\im t \lambda_1} d Z_{\lambda_1} \otimes  \int_{-\pi}^{\pi} e^{\im s \lambda_2} d Z_{\lambda_2}\bigg] %\\ &
 =  \int_{-\pi}^{\pi} e^{\im (t-s) \lambda} d \mu_F = \mathcal{C}_{t-s},
 \end{align*}
showing that a process that admits representation \eqref{cramrep} must be weakly stationary.
\end{proof}

%%%%%%%%%%%%%%%%%%%
\section{Properties of functional cumulants}
\label{subsec:proof:cumulants}
%%%%%%%%%%%%%%%%%%%

For random elements $X_1, \ldots, X_k$ in a Hilbert space $H$, the {\em moment tensor of order $k$} can be defined as
\begin{align*}
%\label{momtens}
\E \big[ X_1 \otimes \cdots \otimes X_k\big] 
= \sum_{l_1, \ldots l_k  \in \nnum} \E \Big[ \prod_{t=1}^{k} \langle X_t, \psi_{l_t} \rangle \Big] 
(\psi_{l_1} \otimes \cdots \otimes \psi_{l_k}),  
\end{align*}   
where the elementary tensors $(\psi_{l_1} \otimes \cdots \otimes \psi_{l_k} \colon l_1, \ldots, l_k \in \nnum)$ form an orthonormal basis in the tensor product space $\bigotimes_{j=1}^{k} H$ if $(\psi_l\colon l \in \nnum)$ is an orthornormal basis of the separable Hilbert space $H$. Similarly, define the {\em $k$-th order cumulant tensor} by
\begin{align}
\label{cumtens}
\cum \big( X_1, \ldots, X_k\big) 
= \sum_{l_1, \ldots l_k\in \nnum} \cum \big( \inprod{X_1}{\psi_{l_1}},\ldots,\inprod{X_k}{\psi_{l_k}}\big)
 (\psi_{l_1} \otimes \cdots \otimes \psi_{l_k}),
\end{align}   
where the cumulants on the right-hand side are as usual given by
\[
\cum\big(\inprod{X_1}{\psi_{l_1}},\ldots,\inprod{X_k}{\psi_{l_k}}\big)
=\sum_{\nu=(\nu_1,\ldots,\nu_p)}(-1)^{p-1}\,(p-1)!\,\prod_{r=1}^p \E\Big[\prod_{t\in\nu_r}\inprod{X_t}{\psi_{l_t}}\Big],
\]
the summation extending over all unordered partitions $\nu$ of $\{1,\ldots,k\}$. The following is a generalization of the product theorem for cumulants \citep[Theorem 2.3.2]{b81}.

\begin{theorem} 
\label{prodcumthm} 
Consider the tensor $X_t=\otimes_{j=1}^{J_t}X_{tj}$ for random elements $X_{tj}$ in $H$ with $j=1,\ldots,J_t$ and $t=1,\ldots,k$. Let $\nu = \{\nu_1,\ldots, \nu_p\}$ be a partition of $\{1,\ldots, k\}$. The joint cumulant tensor %$\cum(X_1,\ldots,X_k)$  
is given by
\[
\cum(X_1,\ldots,X_k)
=\sum_{r_{11},\ldots,r_{kJ_t}}\sum_{\nu=(\nu_1,\ldots,\nu_p)}\prod_{n=1}^{p}
\cum\big(\inprod{X_{tj}}{\psi_{r_{tj}}}|(t,j)\in \nu_n\big)\,
\psi_{r_{11}}\otimes\cdots\otimes\psi_{r_{kJ_t}}, 
\]
where the summation extends over all indecomposable partitions $\nu=(\nu_1,\ldots,\nu_p)$ of the table
\[
\begin{matrix}
(1,1) &\cdots& (1,J_1)\\
\vdots&\ddots& \vdots\\
(k,1) &\cdots& (k,J_t).
\end{matrix}
\]
\end{theorem}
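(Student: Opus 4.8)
The plan is to reduce the statement to the scalar product theorem for cumulants \citep[Theorem 2.3.2]{b81} by passing to coordinates with respect to the orthonormal bases of each tensor power of $H$. Each factor $X_t=\otimes_{j=1}^{J_t}X_{tj}$ is a random element of $\bigotimes_{j=1}^{J_t}H$, whose orthonormal basis consists of the elementary tensors $\Psi_t^{(r)}=\psi_{r_{t1}}\otimes\cdots\otimes\psi_{r_{tJ_t}}$. Extending the definition \eqref{cumtens} to random elements living in these (possibly distinct) tensor powers, I would first write
\[
\cum(X_1,\ldots,X_k)
=\sum_{r_{11},\ldots,r_{kJ_t}}
\cum\big(\inprod{X_1}{\Psi_1^{(r)}},\ldots,\inprod{X_k}{\Psi_k^{(r)}}\big)\,
\Psi_1^{(r)}\otimes\cdots\otimes\Psi_k^{(r)}.
\]

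The key algebraic step is that inner products of elementary tensors factorize: $\inprod{X_t}{\Psi_t^{(r)}}=\inprod{\otimes_j X_{tj}}{\otimes_j\psi_{r_{tj}}}=\prod_{j=1}^{J_t}\inprod{X_{tj}}{\psi_{r_{tj}}}$. Hence the scalar cumulant inside the sum is the joint cumulant of $k$ random variables, the $t$-th of which is itself a product of $J_t$ scalar terms. Applying the classical product theorem for cumulants of products to this expression yields
\[
\cum\Big(\prod_{j}\inprod{X_{1j}}{\psi_{r_{1j}}},\ldots,\prod_j\inprod{X_{kj}}{\psi_{r_{kj}}}\Big)
=\sum_{\nu=(\nu_1,\ldots,\nu_p)}\prod_{n=1}^p
\cum\big(\inprod{X_{tj}}{\psi_{r_{tj}}}\mid (t,j)\in\nu_n\big),
\]
where the sum runs over the indecomposable partitions $\nu$ of the two-way table with rows indexed by $t$ and columns by $j$; the decomposable partitions contribute zero, which is precisely the content of the scalar theorem. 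Substituting this back and carrying the basis tensor $\Psi_1^{(r)}\otimes\cdots\otimes\Psi_k^{(r)}=\psi_{r_{11}}\otimes\cdots\otimes\psi_{r_{kJ_t}}$ through the double sum delivers the claimed identity after interchanging the (finite) summation over $\nu$ with the (infinite) summation over the indices $r_{tj}$.

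The main obstacle is analytic rather than combinatorial: one must justify that the cumulant tensor is a well-defined element of $\bigotimes_{t,j}H$ and that the interchange of the summation over the basis indices $r_{tj}$ with the partition sum is legitimate in the Hilbert--Schmidt norm. I would control this by invoking the moment hypotheses guaranteeing that each coordinate cumulant is summable --- the same $L^2$-summability used throughout Section \ref{subsec:setting:dep_struc} to define the higher-order cumulant operators --- so that the series defining $\cum(X_1,\ldots,X_k)$ converges absolutely in norm and a Fubini-type rearrangement applies within each of the finitely many partition blocks. A minor bookkeeping point to handle carefully is that the $X_t$ need not share a common $J_t$, so the ambient spaces differ across $t$; this affects only the notation, since the factorization of inner products and the structure of the table are unchanged.
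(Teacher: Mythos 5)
Your argument is correct and is exactly the reduction the paper has in mind: the paper states Theorem \ref{prodcumthm} without proof, presenting it as a direct generalization of the scalar product theorem for cumulants \citep[Theorem 2.3.2]{b81}, and your coordinatewise expansion via \eqref{cumtens}, the factorization $\inprod{\otimes_j X_{tj}}{\otimes_j\psi_{r_{tj}}}=\prod_j\inprod{X_{tj}}{\psi_{r_{tj}}}$, and the application of the scalar theorem to the resulting products is precisely how that generalization is obtained. Your attention to the summability needed to justify the interchange of the partition sum with the basis-index sum, and to the fact that the $J_t$ may differ across $t$, supplies details the paper leaves implicit.
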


Formally, abbreviate this by
\[
\cum(X_1,\ldots,X_k)
=\sum_{\nu=(\nu_1,\ldots,\nu_p)}S_\nu\Big(\otimes_{n=1}^{p}
\cum\big(X_{tj}|(t,j)\in\nu_n\big)\Big),
\]
where $S_\nu$ is the permutation that maps the components of the tensor back into the original order, that is, $S_\nu(\otimes_{r=1}^p\otimes_{(t,j)\in\nu_r} X_{tj})=X_{11}\otimes\cdots\otimes X_{kJ_t}$. 

Next, expressions and bounds for cumulants of the fDFT are given in both locally stationary and stationary regimes.

\begin{lemma}[{\textbf{Cumulants of the fDFT under local stationarity}}]
\label{cumboundglsp}
Let $(X_{t,T}\colon t\leq T, T\in\mathbb{N})$ be a $k$-th order locally stationary process in $H$ satisfying \ref{cumglsp}($k$,\,1) for arbitrary fixed $k$. The cumulant tensor of the local fDFT satisfies
\begin{align}
\label{eq:cumboundglsp}
\cum\big(D^{(T)}_{\omega_{j_1}},\ldots, D^{(T)}_{\omega_{j_k}}\big) 
&= \frac{(2\pi)^{k/2-1}}{T^{k/2}}\sum_{t=0}^{T-1} \Floc{{t}/{T}}{k}{j}e^{-\im \sum_{l=1}^{k}t \omega_{j_l}} +R_{k,T}\\
&= \frac{(2\pi)^{k/2-1}}{T^{k/2-1}} \Fourloc{k}{j} + R_{k,T},  \nonumber
\end{align}
where $\|R_{k,T}\|_2= O({T^{-k/2}})$  and the operator 
\begin{align}
\label{eq:FCbounds2}
\Fourlocs{k}{j}{s}= \int_0^{1}\Floc{u}{k}{j}e^{-\im 2\pi s u} du
\end{align}
denotes the $s$-th Fourier coefficient of $\Floc{u}{k}{j}$ and belongs to $S_2$. 
\end{lemma}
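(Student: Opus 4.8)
The plan is to start from the multilinearity of the cumulant tensor. Writing each transform as $D^{(T)}_{\omega_{j_l}} = (2\pi T)^{-1/2}\sum_{t_l=0}^{T-1} X_{t_l,T}\, e^{-\im \omega_{j_l} t_l}$ and expanding over all $k$ arguments gives
\[
\cum\big(D^{(T)}_{\omega_{j_1}},\ldots,D^{(T)}_{\omega_{j_k}}\big)
= \frac{1}{(2\pi T)^{k/2}} \sum_{t_1,\ldots,t_k=0}^{T-1} \cum\big(X_{t_1,T},\ldots,X_{t_k,T}\big)\, e^{-\im\sum_{l=1}^k \omega_{j_l} t_l}.
\]
Thus everything reduces to understanding the joint cumulant of the triangular-array observations and summing it against the complex exponentials.

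The central step is to localize: replace each $X_{t_l,T}$ by the stationary approximant anchored at the common rescaled time $u_0=t_k/T$. I would do this in two stages using the representations \eqref{eq:repstatap}: first swap $X_{t_l,T}$ for $X^{(t_l/T)}_{t_l}$, with error $Y^{(T)}_{t_l}$, then swap $X^{(t_l/T)}_{t_l}$ for $X^{(t_k/T)}_{t_l}$, with error $\tfrac{t_l-t_k}{T}\,Y^{(t_l/T,t_k/T)}_{t_l}$. By multilinearity, the discrepancy between $\cum(X_{t_1,T},\ldots,X_{t_k,T})$ and $\cum(X^{(t_k/T)}_{t_1},\ldots,X^{(t_k/T)}_{t_k})$ is a finite sum of cumulant tensors each carrying at least one error factor. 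Bounds (i) and (ii) of Assumption \ref{cumglsp} control every such term by either $T^{-1}\|\kappa_{k;t_1-t_k,\ldots,t_{k-1}-t_k}\|_2$ or $\tfrac{|t_l-t_k|}{T}\|\kappa_{k;t_1-t_k,\ldots,t_{k-1}-t_k}\|_2$. Summing over $t_1,\ldots,t_k$, using that $\sum_{t_1,\ldots,t_k}\|\kappa_{k;t_1-t_k,\ldots,t_{k-1}-t_k}\|_2 = T\sum_{s}\|\kappa_{k;s}\|_2$ and, for the second type of term, the weighted summability \eqref{eq:kapmix} with $\ell\ge1$ to absorb the factor $|t_l-t_k|$, shows that after multiplication by $(2\pi T)^{-k/2}$ these contributions are $O(T^{-k/2})$. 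This is exactly the content of Lemma \ref{cumdiffglsp}, which I would invoke.

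For the localized sum I would exploit that $(X^{(u_0)}_t)$ is stationary, so $\cum(X^{(u_0)}_{t_1},\ldots,X^{(u_0)}_{t_k})$ depends only on the lags $s_l=t_l-t_k$; substituting $s_l$ factorizes the exponential as $e^{-\im\sum_{l<k}\omega_{j_l}s_l}\,e^{-\im t_k\sum_{l}\omega_{j_l}}$. Extending the inner lag-sums from their $t_k$-dependent ranges to all of $\mathbb{Z}$, with the resulting truncation/edge error again controlled by \eqref{eq:kapmix} at order $O(T^{-k/2})$, the inner sum becomes $(2\pi)^{k-1}\mathcal{F}_{t_k/T;\omega_{j_1},\ldots,\omega_{j_{k-1}}}$ by the definition \eqref{eq:tvsdoker}, whose existence and Hilbert--Schmidt property are guaranteed by Lemma \ref{tvsdo}. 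Collecting the constants $(2\pi)^{k-1}(2\pi T)^{-k/2}=(2\pi)^{k/2-1}T^{-k/2}$ yields the first line of \eqref{eq:cumboundglsp}. To pass to the second line I would write $\sum_l\omega_{j_l}=2\pi(j_1+\cdots+j_k)/T$ and recognize $T^{-1}\sum_{t=0}^{T-1}\mathcal{F}_{t/T;\omega_{j_1},\ldots,\omega_{j_{k-1}}}\,e^{-\im 2\pi t(j_1+\cdots+j_k)/T}$ as a discrete-Fourier approximation of $\tilde{\mathcal{F}}_{j_1+\cdots+j_k;\omega_{j_1},\ldots,\omega_{j_{k-1}}}=\int_0^1\mathcal{F}_{u;\omega_{j_1},\ldots,\omega_{j_{k-1}}}\,e^{-\im 2\pi u(j_1+\cdots+j_k)}\,du$. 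The aliasing error is governed by the decay of the Fourier coefficients of $u\mapsto\mathcal{F}_{u;\omega_{j_1},\ldots,\omega_{j_{k-1}}}$, which is $C^{\ell}$ in rescaled time by Assumption \ref{cumglsp}(iv); with $\ell\ge1$ this error is $O(T^{-1})$, so after the factor $T^{-(k/2-1)}$ it is absorbed into $R_{k,T}=O(T^{-k/2})$. Finally, Hilbert--Schmidtness of $\tilde{\mathcal{F}}_{s;\omega_{j_1},\ldots,\omega_{j_{k-1}}}$ follows from $\|\tilde{\mathcal{F}}_{s;\omega_{j_1},\ldots,\omega_{j_{k-1}}}\|_2\le\int_0^1\|\mathcal{F}_{u;\omega_{j_1},\ldots,\omega_{j_{k-1}}}\|_2\,du<\infty$, the integrand being finite and integrable by Lemma \ref{tvsdo}, which also supplies the uniform constant $C>0$.

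I expect the bookkeeping of the localization error to be the delicate part: one must check that every term arising from replacing the $k$ factors two at a time genuinely carries a compensating $T^{-1}$, or a factor $|t_l-t_k|$ matched by the first-moment weight in \eqref{eq:kapmix}, so that none of the $O(T^{k})$ summands spoils the $O(T^{-k/2})$ target. The boundary effects from the $t_k$-dependent truncation of the lag-sums are the most error-prone to bound cleanly and are precisely where $\ell\ge1$ is needed. By contrast, the Riemann-to-integral passage is comparatively routine once the smoothness in rescaled time from Assumption \ref{cumglsp}(iv) is in hand.
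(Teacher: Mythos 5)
Your proposal is correct and follows essentially the same route as the paper: multilinear expansion of the cumulant of the fDFTs, localization of $\cum(X_{t_1,T},\ldots,X_{t_k,T})$ to the stationary approximant at $u_0=t_k/T$ via Lemma \ref{cumdiffglsp} with the error absorbed into $R_{k,T}$ using the weighted summability \eqref{eq:kapmix}, passage to the local spectral kernel \eqref{eq:tvsdoker}, and a Riemann-sum/Fourier-coefficient argument for the second line. Your accounting of the constants and of the $O(T^{-k/2})$ error terms matches the paper's (much terser) argument, so no further comparison is needed.
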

The proof can be found in Section \ref{subsec:proof:OScumulant} of the Online Supplement. % and requires a number of auxiliary statements. %Because of space constraints, these statements and their proofs are relegated to the Online supplement. ]
%From Lemma \ref{cumdiffglsp} and Lemma \ref{tvsdo} one derives that the cumulant tensor of the local fDFT can be expressed in terms of the time-varying spectral density operator. 
 Lemma \ref{cumboundglsp} provides a relation between the $k$-th order cumulant tensor of the local fDFT and the 
Fourier coefficients of the $k$-th order time-varying spectral density tensors, which induce Hilbert--Schmidt operators. The proof of \eqref{eq:FCbounds2} makes apparent that the dependence structure of the local fDFT behaves in a very specific manner that is based on the distance of the frequencies. The Fourier coefficients additionally provide an upper bound on the norm of the cumulant operator. 
% The corollary below is a direct consequence of Lemma \ref{cumboundglsp}.
\begin{cor}
\label{cumbounds}
If \ref{cumglsp}($k$,\,2) holds for arbitrary fixed $k$, then
\begin{enumerate}
\item[(i)] $\begin{aligned}[t] 
\snorm{\cum(D^{(T)}_{\omega_{j_1}},\ldots, D^{(T)}_{\omega_{j_k}})}_2 \le \frac{C}{T^{k/2-1}|j_1+\cdots+j_k|^2}
+ O\bigg(\frac{1}{T^{k/2}}\bigg); 
\end{aligned}$ 
\item[(ii)] $\begin{aligned}[t]
\sup_{\omega} \sum_{s \in \znum} \snorm{\tilde{\mathcal{F}}_{s;\omega} }_2 \le \infty. 
\end{aligned}$
\end{enumerate}
\end{cor}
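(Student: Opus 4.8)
The plan is to read both parts off the representation in Lemma \ref{cumboundglsp}, the entire matter reducing to a single quantitative estimate, namely that the Fourier coefficient operators obey
\[
\snorm{\Fourlocs{k}{j}{s}}_2 \le \frac{C}{|s|^2}, \qquad s = j_1 + \cdots + j_k \ne 0,
\]
uniformly in the remaining frequency arguments $\omega_{j_1},\ldots,\omega_{j_{k-1}}$. I would first establish this decay and then obtain (i) and (ii) by routine bookkeeping.

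For the decay, recall from \eqref{eq:FCbounds2} that $\Fourlocs{k}{j}{s}$ is the $s$-th Fourier coefficient, in rescaled time $u$, of the map $u \mapsto \Floc{u}{k}{j}$. Under Assumption \ref{cumglsp} with $\ell = 2$ I would argue that this map is twice continuously differentiable in the Hilbert--Schmidt norm, with
\[
\sup_{u}\,\Bigsnorm{\frac{\partial^2}{\partial u^2}\Floc{u}{k}{j}}_2 < \infty
\]
uniformly in $\omega_{j_1},\ldots,\omega_{j_{k-1}}$. This is obtained by differentiating the defining series \eqref{eq:tvsdoker} term by term; the interchange of $\partial^2/\partial u^2$ with the summation over $t_1,\ldots,t_{k-1}$ is licensed by condition (iv) of Assumption \ref{cumglsp} together with the summability \eqref{eq:kapmix}, which supply a dominating bound $\|\cumbp{k}{k}\|_2$ for each differentiated summand. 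With twice-differentiability in hand, integrating by parts twice in $u$ in \eqref{eq:FCbounds2} extracts the factor $(2\pi s)^{-2}$ and bounds the coefficient by the supremum of the second derivative, giving the displayed $|s|^{-2}$ rate uniformly in $\omega$.

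Part (i) then follows by substituting this bound into the second line of \eqref{eq:cumboundglsp}: for $s \ne 0$ the leading term contributes $(2\pi)^{k/2-1}T^{-(k/2-1)}\snorm{\Fourlocs{k}{j}{s}}_2 \le C\,T^{-(k/2-1)}|j_1+\cdots+j_k|^{-2}$, while the remainder is absorbed into the stated $O(T^{-k/2})$ using $\|R_{k,T}\|_2 = O(T^{-k/2})$. For part (ii) I would split the sum over $s \in \znum$ into the term $s = 0$ and its complement. The $s=0$ term equals $\snorm{\int_0^1 \Floc{u}{k}{j}\,du}_2 \le \sup_u \snorm{\Floc{u}{k}{j}}_2$, which is finite and uniformly bounded in $\omega$ by condition (iii) of Assumption \ref{cumglsp}, whereas the remaining terms are dominated by the convergent series $C\sum_{s \ne 0}|s|^{-2}$. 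Since every estimate above is uniform in $\omega$, taking the supremum preserves the bound and yields $\sup_\omega \sum_{s\in\znum}\snorm{\tilde{\mathcal{F}}_{s;\omega}}_2 < \infty$.

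The main obstacle is the uniform twice-differentiability step and, within it, the treatment of the boundary contributions in the double integration by parts. Because rescaled time ranges over the closed interval $[0,1]$ and $u\mapsto\mathcal{F}_{u;\omega}$ carries no a priori periodicity, a careless integration by parts leaves endpoint terms decaying only like $|s|^{-1}$, which would be fatal for the summability in (ii); controlling them requires the full regularity furnished by Assumption \ref{cumglsp}(iv) and Lemma \ref{tvsdo}, and keeping every constant independent of $\omega$ and of the suppressed frequencies throughout. Everything else is straightforward bookkeeping layered on top of Lemma \ref{cumboundglsp}.
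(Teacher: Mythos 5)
Your proposal follows the paper's own route: both parts are read off Lemma \ref{cumboundglsp}, the key input being the uniform bound $\snorm{\tilde{\mathcal{F}}_{s;\omega_{j_1},\ldots,\omega_{j_{k-1}}}}_2 \le C|s|^{-2}$ obtained by integrating by parts twice in $u$ against the second-derivative bound supplied by Assumption \ref{cumglsp}(iv), after which (i) is direct substitution into \eqref{eq:cumboundglsp} and (ii) follows from the $p$-harmonic series with the $s=0$ term bounded separately by $\sup_{u}\snorm{\mathcal{F}_{u;\omega_{j_1},\ldots,\omega_{j_{k-1}}}}_2$. The boundary-term obstacle you single out is dispatched in the paper's argument (proof of Lemma \ref{cumboundglsp} in the Online Supplement) simply by performing the integration by parts for a \emph{periodic} function of $u$ on $[0,1]$, so the bracket $[\cdot]_0^1$ vanishes identically; apart from making that convention explicit, your argument coincides with the one given there.
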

Note that if $\sum_{l=1}^{k}\omega_{j_l}=0 \mod 2\pi$, then \eqref{eq:cumboundglsp} yields approximately a time average of the $k$-th order time-varying spectral density tensor. In case the process does not depend on time $u$, $\Fourlocs{2k}{j}{s} = O_H$ for $s \ne 0$. That is, the operator $\Fourlocs{2k}{j}{s}$ maps any $\psi \in L^2([0,1]^{k},\mathbb{C})$ to the origin for $s \ne 0$. Consequently, under $k$-th order stationarity the following corollary holds.

\begin{cor}[{\textbf{Cumulants of the fDFT under stationarity}}]
\label{FCboundssd}
Let $(X_t\colon t\in\mathbb{Z})$ be a $k$-th order stationary sequence taking values in $H_\rnum$ that satisfies \ref{Statcase}($k$,1) for arbitrary fixed $k$. Then the cumulant tensor of the fDFT satisfies
\begin{align}
\label{eq:cumDFTSstat}
\cum\big(D^{(T)}_{\omega_{j_1}},\ldots, D^{(T)}_{\omega_{j_k}}\big) 
= \frac{(2\pi)^{k/2-1}}{T^{k/2}} \Delta_T^{( \sum_{l=1}^{k}\omega_{j_l})} \Fopkatfour{k}{j} + R_{T,k} ,
\end{align}
where the function $\Delta_T^{(\omega)} = T$ for $\omega \equiv 0 \!\mod 2 \pi$, $\Delta_T^{(\omega_j)} = 0$ for $j \not\equiv 0 \!\mod T$ and the remainder satisfies $\snorm{R_{T,k}}_2=O(T^{-k/2})$. 
\end{cor}

%An application of Minkowski's inequality additionally yields the following global bound.
%\begin{cor}
%\label{cumboundsglobal}
%If \ref{A1}($k$) holds, then, for some constant $C$,
%\[\sup_{\omega_{j_1}, \ldots, \omega_{j_k}}\snorm{\cum(D^{(T)}_{\omega_{j_1}},\ldots, D^{(T)}_{\omega_{j_k}})}_2 \le \frac{C}{T^{k/2-1}}.\]
%\end{cor}

%%%%%%%%%%%%%%%%%%%
\section{First- and second-order dependence structure}
\label{sec:proof:first+second}

\subsection{Expectation}
\label{sec:proof:first}
From Lemma \ref{cumboundglsp}, for $h \ne 0\mod T$,
\begin{align*}
\Bigsnorm{\frac{1}{T}\sum_{j=1}^{T}\E \big(D^{(T)}_{\omega_{j}} \otimes D^{(T)}_{\omega_{j+h}}\big)}_2 =
\Bigsnorm{\frac{1}{T}\sum_{j=1}^{T}\frac{1}{T}\sum_{t=1}^{T} \F_{t/T;\omega_{j}}e^{-\im t\omega_h}+R_{T,2}}_2 %= 
%O(\frac{1}{h^2}+\frac{1}{T})
=
\begin{cases}
O(T^{-1}) & \mbox{under } H_0.\\ 
O(h^{-2}+T^{-1}) & \mbox{under } H_A.
\end{cases}
\end{align*}
In particular, using that the operator-valued functions $(u,\omega) \mapsto \F_{u,\omega}$ are Lipschitz continuous in $(u,\omega)$, yields that, under $H_A$,
\begin{align*}
\frac{1}{T}\sum_{j=1}^{T}\frac{1}{T}\sum_{t=1}^{T} \F_{t/T;\omega_{j}}e^{-\im t\omega_h} \to   \frac{1}{2\pi} \int_0^{2\pi} \int_{0}^{1} \F_{u,\omega}e^{-\im 2\pi u h}du d\omega,
\end{align*}
where the convergence is in $S_2(H)$. 
%Observe then that,
%\begin{align*}
%\int \int \E |D^{(T)}_{\omega_{j}}(\tau)D^{(T)}_{-\omega_{j+h}}(\tau^\prime)\overline{\phi^{\omega_j}_{l}(\tau)}\phi^{\omega_{j+h}}_{l^\prime}(\tau^\prime)| d\tau d\tau^\prime \le 
%\end{align*}
Since $\E[\|D^{(T)}_{\omega}\|^2_2] <\infty$, %\todo{elaborate on this?}, 
the Cauchy--Schwarz inequality implies Fubini's theorem can be applied. Together with the above, it follows that the expectation of $\beta^{(T)}_{h,u}$ satisfies
\begin{align*}
\E[\beta^{(T)}_{h,u}] &
= \frac{1}{T} \sum_{j=1}^{T}  \sum_{l=1}^{L(\omega_j)}  \sum_{l^\prime=1}^{L(\omega_{j+h})} \inprod{ \E\big(D^{(T)}_{\omega_{j}} \otimes D^{(T)}_{\omega_{j+h}}\big)}{\tilde{\phi}^{\omega_j}_{l} \otimes \tilde{\phi}^{\omega_{j+h}}_{\lpr}}_{S}
\\& =
\frac{1}{T} \sum_{j=1}^{T}  \sum_{l=1}^{L(\omega_j)}  \sum_{l^\prime=1}^{L(\omega_{j+h})}  \inprod{\frac{1}{T}\sum_{t=0}^{T-1} {\F}_{{t}/{T},\omega_{j}} e^{-\im t \omega_{h}}+R_{2,T}}{\tilde{\phi}^{\omega_j}_{l} \otimes \tilde{\phi}^{\omega_{j+h}}_{\lpr}}_{S} =O\bigg(\frac{1}{h^{2}}\bigg)+O\bigg(\frac{1}{T}\bigg)
\\& \to \frac{1}{2\pi} \int_0^{2\pi} \int_{0}^{1}  \sum_{l=1}^{L(\omega)}  \sum_{l^\prime=1}^{L(\omega+\omega_{h})}  \inprod{\F_{u,\omega}e^{-\im 2\pi u h}}{\tilde{\phi}^{\omega}_{l} \otimes \tilde{\phi}^{\omega+\omega_{h}}_{\lpr}}_{S} du d\omega,
\end{align*}
where the stated order $O(\cdot)$ for the projections follows from the previously stated convergence in norm. 
Similarly, %for $\beta^{(T)}_{h,s}$, we find
\[
\E[\beta^{(T)}_{h,s}]
%=\frac{1}{T} \sum_{j=1}^{T}  \sum_{l=1}^{L(\omega_j)}  \sum_{l^\prime=1}^{L(\omega_{j+h})}  \inprod{\frac{1}{T}\sum_{t=0}^{T-1} \frac{{\F}_{{t}/{T},\omega_{j}} e^{-\im t \omega_{h}}}{\sqrt{{ \tilde{\lambda}_{l}^{\omega} \tilde{\lambda}_{\lpr}^{\omega+\omega_{h}}}}}+R_{2,T}}{\tilde{\phi}^{\omega_j}_{l} \otimes \tilde{\phi}^{\omega_{j+h}}_{\lpr}}_{S_2} \to 
\to  \frac{1}{2\pi} \int_0^{2\pi} \int_{0}^{1}  \sum_{l=1}^{L(\omega)}  \sum_{l^\prime=1}^{L(\omega+\omega_{h})} \frac{ \inprod{\F_{u,\omega} (\tilde{\phi}^{\omega+\omega_{h}}_\lpr)}{  \tilde{\phi}^{\omega}_l  }e^{-\im 2\pi u h}  }{\sqrt{{ \tilde{\lambda}_{l}^{\omega} \tilde{\lambda}_{\lpr}^{\omega+\omega_{h}}}}} du d\omega.
\]
under Condition $C_s$. 

\subsection{Covariance structure}\label{subsec:cov-struc}
Theorem \ref{prodcumthm} implies that the covariance structure of the cross-periodogram operators is given by 
\begin{align}
\label{eq:covariancep}
\Cov \big(D^{(T)}_{\omega_{j_1}}\otimes D^{(T)}_{\omega_{j_1+h_1}}, D^{(T)}_{\omega_{j_2}}\otimes D^{(T)}_{\omega_{j_2+h_2}} \big)
=\,&\cum(D^{(T)}_{\omega_{j_1}},D^{(T)}_{-\omega_{j_1+h_1}},
D^{(T)}_{-\omega_{j_2}}, D^{(T)}_{\omega_{j_2+h_2}})
\\&
\nonumber
+S_{1324}\big(
\cum(D^{(T)}_{\omega_{j_1}},D^{(T)}_{-\omega_{j_2}})\otimes
\cum(D^{(T)}_{-\omega_{j_1+h_1}},D^{(T)}_{\omega_{j_2+h_2}})\big)\\&
+S_{1423}\big(
\cum(D^{(T)}_{\omega_{j_1}},D^{(T)}_{\omega_{j_2+h_2}})\otimes
\cum(D^{(T)}_{-\omega_{j_1+h_1}},D^{(T)}_{-\omega_{j_2}})\big),
\nonumber
\end{align}
where $S_{ijkl}$ denotes the permutation operator on $\otimes_{i=1}^4 L^2_\cnum([0,1])$ that permutes the components of a tensor according to the permutation $(1,2,3,4)\mapsto(i,j,k,l)$, that is, $S_{ijkl}(x_1\otimes\cdots\otimes x_4)=x_i\otimes\cdots\otimes x_l$. Under \ref{cumglsp}(4,2), we obtain from Lemma \ref{cumboundglsp},
\begin{align*}
\tageq \label{eq:covHa}
\Cov &\Bigg( \frac{1}{\sqrt{T}}\sum_{j_1}^{T}D^{(T)}_{\omega_{j_1}}\otimes D^{(T)}_{\omega_{j_1+h_1}},\frac{1}{\sqrt{T}}\sum_{j_2}^{T}D^{(T)}_{\omega_{j_2}}\otimes D^{(T)}_{\omega_{j_2+h_2}} \Bigg)\\
=& \frac{1}{T}\sum_{j_1,j_2}^{T}\Bigg(  \frac{2\pi)}{T^2}\sum_t \F_{t/T:\omega_{j_1},-\omega_{j_1+h_1},-\omega_{j_2}}  e^{-\im t(\omega_{h_2}-\omega_{h_1})}+R_{T,4}\bigg)\\&
+S_{1324}\Bigg(\big( \frac{1}{T}\sum_t \F_{t/T:\omega_{j_1}}e^{-\im t(\omega_{j_1}-\omega_{j_2})}  + R_{T,2}\big) \otimes  \big( \frac{1}{T}\sum_t \F_{t/T:-\omega_{j_1+h_1}}e^{-\im t(-\omega_{j_1+h_1}+\omega_{j_2+h_2}) } + R_{T,2}\big) \Bigg)
\\&+S_{1423}\Bigg( \big( \frac{1}{T}\sum_t \F_{t/T:\omega_{j_1}}e^{-\im t (\omega_{j_1}+\omega_{j_2+h_2})} +R_{T,2}\big) \otimes \big(\frac{1}{T}\sum_t \F_{t/T:-\omega_{j_1+h_1}}e^{-\im t(-\omega_{j_1+h_1}-\omega_{j_2})} + R_{T,2}\big)\Bigg).  
\end{align*}
Using Minkowski's inequality and Corollary \ref{cumbounds}(ii) it follows that, for all $T, h_1, h_2$, %\todo{Note: this is important for weak tightness on function space}
\begin{align*}\biggsnorm{\Cov \big( \frac{1}{\sqrt{T}}\sum_{j_1}^{T}D^{(T)}_{\omega_{j_1}}\otimes D^{(T)}_{\omega_{j_1+h_1}},\frac{1}{\sqrt{T}}\sum_{j_2}^{T}D^{(T)}_{\omega_{j_2}}\otimes D^{(T)}_{\omega_{j_2+h_2}} \big)}_2=O(1), \tageq \label{eq:OrderVar}
\end{align*}
both under $H_A$ and $H_0$. The focus is here on the covariance structure under fourth-order stationarity. The more general expression is derived in Section \ref{sub:covHA} of the Online Supplement.

\begin{proof}[Proof of Theorem \ref{momentstatcase}]
Under \ref{Statcase}(4,2), Corollary \ref{FCboundssd} implies that \eqref{eq:covHa} becomes
\begin{align*}
%\tageq \label{eq:covFun}
\Cov \bigg( \frac{1}{\sqrt{T}}\sum_{j_1}^{T}&D^{(T)}_{\omega_{j_1}}\otimes D^{(T)}_{\omega_{j_1+h_1}},\frac{1}{\sqrt{T}}\sum_{j_2}^{T}D^{(T)}_{\omega_{j_2}}\otimes D^{(T)}_{\omega_{j_2+h_2}} \bigg)\\
=& \frac{1}{T}\sum_{j_1,j_2}^{T}\Bigg(  \frac{(2\pi)}{T^{2}} \F_{\omega_{j_1},-\omega_{j_1+h_1},-\omega_{j_2}}\Delta_T^{(\omega_{h_2}-\omega_{h_1})} +R_{T,4}\bigg)\\&
+S_{1324}\Bigg(\Big(\F_{\omega_{j_1}}\frac{1}{T} \Delta_T^{(\omega_{j_1}-\omega_{j_2})} + R_{T,2}\Big) \otimes  \Big(\F_{-\omega_{j_1+h_1}}\frac{1}{T} \Delta_T^{(-\omega_{j_1+h_1}+\omega_{j_2+h_2})} + R_{T,2}\Big) \Bigg)
\\&+S_{1423}\Bigg( \Big(\F_{\omega_{j_1}}\frac{1}{T} \Delta_T^{(\omega_{j_1}+\omega_{j_2+h_2})} +R_{T,2}\Big) \otimes \Big(\F_{-\omega_{j_1+h_1}}\frac{1}{T}\Delta_T^{(-\omega_{j_1+h_1}-\omega_{j_2})} + R_{T,2}\Big)\Bigg). 
\end{align*}
By the properties of $\Delta^{(\cdot)}_T$, the term on the second line is of lower order unless $h_1-h_2=0 \mod T$, while the third line requires $j_1-j_2=0 \mod T$ and $h_1-h_2=0 \mod T$. For the fourth line to not be of lower order we require  $j_1+j_2+h_2=0 \mod T$ and $-j_1-h_1-j_2 =0\mod T$ which give the constraints $j_1+j_2=T-h_2$ and $j_1+j_2=T-h_1$, implying we must have $j_1+j_2=T-h$. It follows therefore that the covariance is of order $O(T^{-1})$ in Hilbert--Schmidt norm if  $h_1-h_2 \ne 0 \mod T$. If $h_1-h_2 =0 \mod T$, then
\begin{align*}
\Cov &\bigg( \frac{1}{\sqrt{T}}\sum_{j_1}^{T}D^{(T)}_{\omega_{j_1}}\otimes D^{(T)}_{\omega_{j_1+h}}, \frac{1}{\sqrt{T}}\sum_{j_2}^{T}D^{(T)}_{\omega_{j_2}}\otimes D^{(T)}_{\omega_{j_2+h}} \bigg)= \frac{1}{T}\sum_{j_1,j_2}^{T} \frac{(2\pi)}{T} \F_{\omega_{j_1},-\omega_{j_1+h},-\omega_{j_2}} +R_{T,2} \\&
+ \frac{1}{T}\sum_{j_1}^{T}\Bigg( \big(\F_{\omega_{j_1}} + R_{T,2}\big) \widetilde{\otimes}\big(\F_{\omega_{j_1+h}}+ R_{T,2}\big)
+\big(\F_{\omega_{j_1}}+R_{T,2}\big) \widetilde{\otimes}_\top \big(\F_{\omega_{j_1+h_1}}+ R_{T,2}\big)\Bigg),\tageq \label{eq:covFun2}
\end{align*}
where Definition \ref{def:Krn} was used. Thus, as $T \to \infty$, this converges in norm to
\begin{align*}
 \frac{1}{4\pi} \int \int \F_{\omega,-\omega-\omega_h,-\omega^\prime} d\omega d\omega^\prime +\int \F_{\omega} \widetilde{\otimes} \F_{\omega+\omega_h} 
+\F_{\omega}  \widetilde{\otimes}_\top \F_{\omega+\omega_h} d\omega.
\end{align*}
Consider then the covariance structure of $\tprojes_{h,u}$, which is obtained by projecting the fDFT onto the eigenfunctions of $\F_\omega$. Write this covariance structure as %\todo{should we make precise the interchange justification?}  as
\begin{align*}
&\Cov(\sqrt{T}{\bm{\beta}}_{h_1,u}^{(T)},\sqrt{T}{\bm{\beta}}_{h_2,u}^{(T)}) =\\&  
\frac{1}{T}\sum_{j_1,j_2}^{T}  \sum_{\small{\substack{l_1 \in [L(\omega_{j_1})], l_2 \in [L(\omega_{j_1+h_1})], \\ l_3 \in [L(\omega_{j_2})] , l_4 \in [L(\omega_{j_2+h_2})]}}}  \Big\langle \Cov\Big(D^{(T)}_{\omega_{j_1}}\otimes D^{(T)}_{\omega_{j_1+h}}, D^{(T)}_{\omega_{j_2}}\otimes D^{(T)}_{\omega_{j_2+h}}\Big) \big(\phi^{\omega_{j_2}}_{l_{3}} \otimes \phi^{\omega_{j_2+h_2}}_{l_4}\big),\phi^{\omega_{j_1}}_{l_{1}} \otimes \phi^{\omega_{j_1+h_1}}_{l_2} \Big\rangle.
\end{align*} 
Under the conditions of Theorem \ref{momentstatcase}, \eqref{eq:covFun2} yields that the summand of the above expression reduces to 
{\small
\begin{align*}
&= \frac{(2\pi)}{T^{2}}\langle \F_{\omega_{j_1},-\omega_{j_1+h_1},-\omega_{j_2}} (\phi^{\omega_{j_2}}_{l_{3}} \otimes \phi^{\omega_{j_2+h_2}}_{l_4}), \phi^{\omega_{j_1}}_{l_{1}} \otimes \phi^{\omega_{j_1+h_1}}_{l_2}\rangle \Delta_T^{(\omega_{h_2}-\omega_{h_1})} + O\Big(\frac{1}{T^{2}}\Big)\\&
+\bigg[\lambda^{\omega_{j_1}}_{l_1}  \langle\phi^{\omega_{j_2}}_{l_{3}}, \phi^{\omega_{j_1}}_{l_1}\rangle \frac{1}{T}\Delta_T^{(\omega_{j_1}-\omega_{j_2})} + O\Big(\frac{1}{T}\Big)\bigg]\bigg[\lambda^{-\omega_{j_1+h_1}}_{l_2} \langle\phi^{-\omega_{j_2+h_2}}_{l_{4}}, \phi^{-\omega_{j_1+h_1}}_{l_2}\rangle \frac{1}{T}\Delta_T^{(\omega_{j_1+h_1}-\omega_{j_2+h_2})} + O\Big(\frac{1}{T}\Big)\bigg]
\\&
+\bigg[\lambda^{\omega_{j_1}}_{l_1}\langle  \phi^{-\omega_{j_2+h_2}}_{l_{4}}, \phi^{\omega_{j_1}}_{l_1}\rangle\frac{1}{T}\Delta_T^{(\omega_{j_1}+\omega_{j_2+h_2})}+ O\Big(\frac{1}{T}\Big)\bigg]\bigg[  \lambda^{-\omega_{j_1+h_1}}_{l_2}\langle\phi^{\omega_{j_2}}_{l_{3}}, \phi^{-\omega_{j_1+h_1}}_{l_2}\rangle \frac{1}{T}\Delta_T^{(-\omega_{j_1+h_1}-\omega_{j_2})} +O\Big(\frac{1}{T}\Big)\bigg],
\end{align*}
}
where self-adjointness of the spectral density operator gave 
\[\inprod{\F_{\omega_{j_1}}(\phi^{\omega_{j_2}}_{l_{2}})}{\phi^{\omega_{j_1}}_{l_1}} = \inprod{\phi^{\omega_{j_2}}_{l_{2}}}{\F_{\omega_{j_1}} (\phi^{\omega_{j_1}}_{l_1})} = \lambda^{\omega_{j_1}}_{l_1}\inprod{\phi^{\omega_{j_2}}_{l_{2}}}{\phi^{\omega_{j_1}}_{l_1}}.
\]
%since the spectral density operator is self-adjoint. From the aforementioned constraints, 
% $h_1 -h_2 =0\mod T$,  $j_1 -j_2 =0\mod T$ and $j_1+j_2=T-h \mod T$, 
Self-adjointness of $\F_{\omega}$, orthogonality of the eigenfunctions and $2\pi$-periodicity of the eigenelements imply that
\begin{align*}
\Cov(\sqrt{T}{\bm{\beta}}_{h,u}^{(T)},\sqrt{T}{\bm{\beta}}_{h,u}^{(T)})= &\frac{2\pi}{T^2} \sum_{j_1,j_2=1}^{T} \sum_{\small{\substack{l_1 \in [L(\omega_{j_1})], l_2 \in [L(\omega_{j_1+h})], \\ l_3 \in [L(\omega_{j_2})] , l_4 \in [L(\omega_{j_2+h})]}}} \langle\F_{\omega_{j_1},-\omega_{j_1+h},-\omega_{j_2}} (\phi^{\omega_{j_2}}_{l_{3}} \otimes \phi^{\omega_{j_2+h}}_{l_4}), \phi^{\omega_{j_1}}_{l_{1}} \otimes \phi^{\omega_{j_1+h}}_{l_2}\rangle   \\&
+\frac{2}{T} \sum_{j_1}^{T}  \sum_{\small{\substack{l_1 \in [L(\omega_{j_1})], l_2 \in [L(\omega_{j_1+h})]}}} \lambda^{\omega_{j_1}}_{l_1} \lambda^{\omega_{j_1+h}}_{l_2} +O\big(\frac{1}{T}\big),
\end{align*}
in case $h_1=h_2=h$ and $\Cov(\sqrt{T}{\bm{\beta}}_{h_1,u}^{(T)},\sqrt{T}{\bm{\beta}}_{h_2,u}^{(T)})=O(T^{-1})$ if $h_2 \ne h_1 \mod T$. It can then be derived similarly that $T\cov({\bm{\beta}}_{h_1,u}^{(T)},\overline{{\bm{\beta}}_{h_2,u}^{(T)}}) = O(T^{-1})$ for $h_2 \ne T-h_1 \mod T$.  Since 
\[
\Re\bm{\beta}_{h_1,u}^{(T)}
= \frac{1}{2}(\bm{\beta}_{h_1,u}^{(T)}+\overline{\bm{\beta}_{h_1,u}^{(T)}})
\quad
\text{ and }\quad
\Im  \bm{\beta}_{h_1,u}^{(T)}
= \frac{1}{2\im}(\bm{\beta}_{h_1,u}^{(T)} -\overline{\bm{\beta}_{h_1,u}^{(T)}}),
\]
it follows therefore that
\[ T\cov\big(\Re\bm{\beta}_{h_1,u}^{(T)}, \Im \bm{\beta}_{h_2,u}^{(T)})=O(T^{-1})
\] 
uniformly in $h_1,h_2$ and thus
$
T\,\cov\big(\Re \bm{\beta}_{h_1,u}^{(T)}, \Re \bm{\beta}_{h_2,u}^{(T)}\big) 
= T\,\cov\big( \Im \bm{\beta}_{h_1,u}^{(T)}, \Im \bm{\beta}_{h_2,u}^{(T)}\big) 
 = \frac{T}{2} \cov\big( \bm{\beta}_{h_1,u}^{(T)}, \bm{\beta}_{h_2,u}^{(T)}\big).
$
Finally, using Lipschitz-continuity of $\omega \mapsto \F_{\omega}$ and of its eigenelements to replace the Riemann approximations with their limits completes the proof. 
\end{proof}

%%%%%%%%%%%%%%%%%%%
\subsection{Proof of  Theorem \ref{UnifCon}}
\label{subsec:proof:alt}

\begin{proof}[Proof of Theorem \ref{UnifCon}]
(i) In order to prove the first assertion of the theorem, introduce the bias-variance decomposition
\begin{align}
\label{VBd}
\E \Big[\bigsnorm{\hat{\mathcal{F}}^{(T)}_{\omega}
- \E\big[\hat{\mathcal{F}}^{(T)}_{\omega}\big]
+ \E\big[\hat{\mathcal{F}}^{(T)}_{\omega}\big] - {G}_{\omega}}^2_2 \Big]
=   \E\Big[ \bigsnorm{\hat{\mathcal{F}}^{(T)}_{\omega} - \E\big[\hat{G}_{\omega}\big]}^2_2\Big]
+\E \Big[\bigsnorm{\E\big[\hat{\mathcal{F}}^{(T)}_{\omega}\big]-{G}_{\omega}}^2_2\Big]. 
\end{align}
The cross terms cancel because 
$\E[{\langle}  \hat{\mathcal{F}}^{(T)}_{\omega}- \E[\hat{\mathcal{F}}^{(T)}_{\omega}],  
\E[\hat{\mathcal{F}}^{(T)}_{\omega}]-{G}_{\omega} {\rangle}_{H \otimes H}]$
and $\E[\hat{\mathcal{F}}^{(T)}_{\omega}- \E[\hat{\mathcal{F}}^{(T)}_{\omega}]] = O_H$. Now, by Corollary \ref{FCboundssd},
 \begin{align*} 
\cum\big(D^{(T)}_{\omega}, D^{(T)}_{-\omega}\big) 
= \frac{1}{T} \sum_{t=0}^{T-1} \F_{{t}/{T},\omega} + R_{T,2} 
= G^{(T)}_{\omega} + R_{T,2},
\end{align*}
where $\| R_{T,2}\|_2 =O(T^{-1})$. Note that the integral approximation in time direction does not change the error term because of Lipschitz continuity of the mapping $(u,\omega) \mapsto \F_{u,\omega}$ in $u$. Convolution of the cumulant tensor with the smoothing kernel, replacing the integral approximation with the limit and a change of variables give
\begin{align*} 
\E \big[\hat{\F}^{(T)}_{\omega}\big] 
&= \frac{2 \pi}{b T}\sum_{j= 1}^{T} K_b(\omega-\omega_j) \cum\big(D^{(T)}_{\omega_j}, D^{(T)}_{-\omega_j}\big) 
%\\& =\int \frac{1}{b} K(\frac{\omega-\nu}{b}) {G}_{\nu} d\nu +R_{b,T}+R_{T,2}
\\& = \int  K(x) {G}_{\omega-xb} dx+R_{b,T}+R_{T,2},
\end{align*}
where $\|R_{b,T}\|_2 = O({b T}^{-1})$. Since $\sup_{\omega,u} \snorm{\F_{u,\omega}}_2 < \infty$ and $\sup_{\omega, u} \snorm{\frac{\partial^2}{\partial \omega_2} \F_{u,\omega}}_2 < \infty$, the mapping $\omega\mapsto G_{\omega}$ is twice differentiable and $\sup_{\omega} \snorm{\frac{\partial^2}{\partial \omega_2} G_{\omega}}_2 < \infty$. Therefore, a Taylor expansion around $\omega$ and symmetry of the kernel then lead to

\begin{align*}
\E \big[\hat{\F}^{(T)}_{\omega}\big]  =\int  K(x) {G}_{\omega-xb} dx= G_{\omega} + \sum_{i=1}^{2} \frac{1}{i! } (b)^i  \frac{\partial^i G_{\nu}}{\partial \nu^i} \Big|_{\nu=\omega}   \int x^i K (x) dx  ={G}_{\omega} + \epsilon_{b,T},
\end{align*}
where $\|\epsilon_{b,T}\|_2 = O(b^2+({b T}))^{-1}$. Thus, the second term on the right-hand side of (\ref{VBd}) satisfies
\begin{align}\label{eq:biasF}
\E\big[ \snorm{\E\hat{\mathcal{F}}^{(T)}_{\omega}-G_{\omega} }^2_2 \big]
=O\bigg(b^2 + \frac{1}{b T} \bigg)^2
\end{align}
uniformly in $\omega \in [-\pi,\pi]$. To bound the first term of the right-hand side in (\ref{VBd}), observe that, for $j_1+ j_2 \equiv 0 \mod T$, Lemma \ref{cumboundglsp} with $k=2$ yields
\begin{align*}
\cum(D^{(T)}_{\omega_{j_1}},D^{(T)}_{\omega_{j_2}}) = \frac{1}{T} \sum_{t=1}^{T} \F_{t/T} e^{-i (\omega_{j_1}+\omega_{j_2}) t}+ R_{2,T} \to \int_0^1 {\F}_{u; \omega_{j_1}} e^{-i 2\pi u ({j_1}+{j_2})} du = \tilde{\F}_{j_1+j_2; \omega_{j_1}}.
\end{align*}
Furthermore, from Corollary \ref{FCboundssd} and Minkowski's inequality 
\begin{align*}
\Bigsnorm{\cum(D^{(T)}_{\omega}, D^{(T)}_{-\omega}, D^{(T)}_{\omega^\prime}, D^{(T)}_{-\omega^\prime})}_2 
&\le \frac{1}{T} \biggsnorm{\frac{1}{T}\sum_{t=0}^{T-1}\mathcal{F}_{\frac{t}{T},\omega,-\omega,\omega^\prime}}_2 
+ O\bigg(\frac{1}{T^2}\bigg) \\
&= \frac{1}{T}\Bigsnorm{G^{(T)}_{\omega,-\omega,\omega^\prime}}_2+O\bigg(\frac{1}{T^2}\bigg)
= O\bigg(\frac{1}{T}\bigg). \tageq \label{eq:covF4term}
\end{align*} 
The last equality follows since $\sup_{u,\omega,\omega^\prime}\snorm{\mathcal{F}_{{t}/{T},\omega,-\omega,\omega^\prime}}_2 \le\sum_{h_1,h_2,h_3 \in \znum} \|\kappa_{3;h_1,h_2,h_3}\|_2 = O(1)$ by \ref{cumglsp}. Theorem \ref{prodcumthm} hence implies that
%\textcolor{red}{more generally from Corollary \ref{cumboundsglobal} since \ref{A1}(4) is satisfied.} Using Theorem \ref{prodcumthm} we can write
\begin{align*}
\cov(\hat{\F}_{\omega},\hat{\F}_{\omega}) &
%= \cov\Big(\frac{1}{b T} \sum_{k_1=1}^{T} K\big(\frac{\omega-\omega_{k_1}}{b} \big) I_{\omega_{k_1}},   
%\frac{1}{b T} \sum_{k_2=1}^{T} K\big(\frac{\omega-\omega_{k_2}}{b}\big) I_{\omega_{k_2}}\Big)
%\nonumber \\&
 =\frac{1}{(b T)^2} \sum_{k_1,k_2=1}^{T} K\bigg(\frac{\omega-\omega_{k_1}}{b} \bigg)  K\bigg(\frac{\omega-\omega_{k_2}}{b} \bigg)\Big[\cum\big(D^{(T)}_{\omega_{k_1}},D^{(T)}_{-\omega_{k_1}},
D^{(T)}_{-\omega_{k_2}}, D^{(T)}_{\omega_{k_2}}\big)
\\&
 \qquad \phantom{\cov(\hat{\F}_{\omega_{j_1}},\hat{\F}_{\omega_{j_1}}) =}
 +S_{1324}\Big(
\cum\big(D^{(T)}_{\omega_{k_1}},D^{(T)}_{-\omega_{k_2}}\big)\otimes
\cum\big(D^{(T)}_{-\omega_{k_1}},D^{(T)}_{\omega_{k_2}}\big)\Big)\\&
\qquad \phantom{\cov(\hat{\F}_{\omega_{j_1}},\hat{\F}_{\omega_{j_1}}) =}
+S_{1423}\Big(
\cum\big(D^{(T)}_{\omega_{k_1}},D^{(T)}_{\omega_{k_2}}\big)\otimes
\cum\big(D^{(T)}_{-\omega_{k_1}},D^{(T)}_{-\omega_{k_2}}\big)\Big)\Big].
\end{align*}
Using Lemma \ref{cumboundglsp}, this equals
\begin{align*}
\frac{1}{(b T)^2} \sum_{k_1,k_2=1}^{T}
& K\bigg(\frac{\omega-\omega_{k_1}}{b} \bigg)  K\bigg(\frac{\omega-\omega_{k_2}}{b} \bigg)\Big[
S_{1324}\big(
\tilde{\F}_{k_1-k_2;\omega_{k_1}}\otimes \tilde{\F}_{-k_1+k_2;-\omega_{k_1}}\big)\\&+S_{1423}\big(\tilde{\F}_{k_1+k_2;\omega_{k_1}}\otimes \tilde{\F}_{-k_1-k_2;-\omega_{k_1}}\big)+R_{2,T}\Big]. \tageq \label{eq:covariancF4}
%Using Lemma \ref{cumdiffglsp}, this is seen to be equal to
%\begin{align}
%\frac{1}{(b T)^2} \sum_{k_1,k_2=1}^{T}
%& K\big(\frac{\omega-\omega_{k_1}}{b} \big)  K\big(\frac{\omega-\omega_{k_2}}{b} \big)\Big[\frac{2\pi }{T} G_{\omega_{k_1},-\omega_{k_1},-\omega_{k_2}}+R_{4,T}\ \label{eq:covariancF4}
%\\&
%+S_{1324}\big(
%\tilde{\F}_{k_1-k_2;\omega_{k_1}}\otimes \tilde{\F}_{-k_1+k_2;-\omega_{k_1}}\big)+S_{1423}\big(\tilde{\F}_{k_1+k_2;\omega_{k_1}}\otimes \tilde{\F}_{-k_1-k_2;-\omega_{k_1}}\big)+R_{2,T}\Big].
%\nonumber
%\label{eq:covariancFi}
%\end{align} 
\end{align*} 
where we used \eqref{eq:covF4term} is of order $O(\frac{1}{T})$ in $S_2$ uniformly in $-\pi \le \omega, \omega^\prime \le \pi$. Using a change of variables, the properties of the smoothing kernel, H\"older's inequality and Corollary \ref{cumbounds}, it follows that
\begin{align*}
& \biggsnorm{\frac{1}{(b T)^2} \sum_{k_1,k_2=1}^{T} K\bigg(\frac{\omega-\omega_{k_1}}{b} \bigg)  K\bigg(\frac{\omega-\omega_{k_2}}{b} \bigg)\tilde{\F}_{k_1-k_2;\omega_{k_1}}\widetilde{\otimes}\tilde{\F}_{-k_1+k_2;-\omega_{k_1}}}_2 
%\bigsnorm{\frac{1}{(b T)^2} \sum_{k_1}^{T} \sum_{s} K\big(\frac{\omega-\omega_{k_1}}{b} \big)  K\big(\frac{\omega-\omega_{k_1-s}}{b} \big)\tilde{\F}_{s;\omega_{k_1}}\widetilde{\otimes} \tilde{\F}_{-s;-\omega_{k_1}}}_2 
%\intertext{the properties of the smoothing kernel, Holder's inequality and Corollary \ref{cumbounds}}
%\bigsnorm{\frac{1}{(b T)^2} \sum_{k_1}^{T} \sum_{s} K\big(\frac{\omega-\omega_{k_1}}{b} \big)  K\big(\frac{\omega-\omega_{k_1}-\omega_s}{b} \big)\tilde{\F}_{s;\omega_{k_1}}\otimes \tilde{\F}_{-s;-\omega_{k_1}}}_2 
\\& \le 
\biggsnorm{\frac{1}{(b T)^2} \sum_{k_1}^{T} K\bigg(\frac{\omega-\omega_{k_1}}{b} \bigg)^2   \sum_{s}\tilde{\F}_{s;\omega_{k_1}}\otimes \tilde{\F}_{-s;-\omega_{k_1}}}_2 
\\& 
\le
\sup_{\omega} \sum_{s \in \mathbb{Z}}\bigsnorm{\tilde{\F}_{s;\omega}}^2_2  \bigg\vert\frac{1}{(b T)^2} \sum_{k_1}^{T} K\bigg(\frac{\omega_{j_1}-\omega_{k_1}}{b} \bigg)^2 \bigg\vert =O\bigg(\frac{1}{bT}\bigg).
\end{align*} 
A similar argument holds for the remaining term of \eqref{eq:covariancF4}. Hence,
 \begin{align}
\Bigsnorm{\Cov  \big(\hat{\F}^{(T)}_{\omega}, \hat{\F}^{(T)}_{\omega}\big)}_2& =\Bigsnorm{  \bigg(\frac{2 \pi}{T}\bigg)^2 \sum_{j,j^\prime = 1}^{T} K_b(\omega-\omega_j) K_b(\omega-\omega_{j^\prime})\cov(I^{(T)}_{\omega_j}, I^{(T)}_{\omega_{j^\prime}}) \nonumber}_2 =O\bigg(\frac{1}{bT}\bigg)  \label{eq:variancep3}.
 \end{align}
Fubini's theorem together with the above implies that the first term of (\ref{VBd}) satisfies
\begin{align*}
\E \big[\snorm{\hat{\mathcal{F}}^{(T)}_{\omega} - \E\hat{\mathcal{F}}^{(T)}_{\omega}}^2_2\big] 
%=  \int_{[0,1]^2} \var( \hat{f}^{(T)}_{\omega} (\tau, \tau^\prime) )d\tau d\tau^\prime 
=\text{trace}(\Var (\hat{\mathcal{F}}^{(T)}_{\omega} ))
= O\bigg(\frac{1}{ b T} \bigg)
\end{align*}
uniformly in $\omega \in [-\pi,\pi]$. This establishes (i). \\
(ii) This part of the proof requires the following lemma verified in Section \ref{sec:S2stuff} of the Online Supplement.
\begin{lemma}\label{lem:ChebSP}
Let $Y_{\nu}, \nu \in [a,b]$ be a zero-mean $L^2([0,1]^k)$-valued stochastic process of which the derivative mapping $\nu \mapsto \frac{\partial}{\partial \nu}Y_{\nu}$ is well-defined in $L^2([0,1]^k)$ for any $\nu \in [a,b]$. If $\E \|Y_{\nu}\|^2_2 <\infty$ and $\E \|\frac{\partial Y_{\nu}}{\partial \nu}\|^2_2 <\infty$, then 
\[
2\E \sup_{a\le\nu \le b}\|Y_{\nu}\|_2^2  \le \E\|Y_{a}\|_2^2 +\E \| Y_b\|_2^2 + \int_{a}^{b} \sqrt{\E  \|\frac{\partial}{\partial \alpha} Y_{\alpha}\|^2_2} \sqrt{\E\| \overline{Y_{\alpha}} \|_2^2}+ \int_{a}^{b} \sqrt{\E\|Y_{\alpha}\|^2_2}d\alpha  \sqrt{\E\| \frac{\partial}{\partial \alpha}\overline{Y_{\alpha}}\|^2_2}d\alpha.\]
\end{lemma}

Lemma  \ref{lem:ChebSP} with $k=2$ applied to the spectral density kernel function $\hat{f}_{\omega}$ implies --- due to the norm equivalence with the operator $\hat{\F}_{\omega}$ --- that
\begin{align*}
\E \sup_{0\le\omega \le \pi}2 \snorm{\hat{\F}_{\omega}-\E {\hat{\F}}_{\omega}}^2_2 \le &\E \snorm{\hat{\F}_{0}-\E {\hat{\F}}_{0}}^2_2 + \E \snorm{\hat{\F}_{\pi}-\E {\hat{\F}}_{\pi}}^2_2 \\&+ 2 \int^\pi_0 \sqrt{\E \snorm{\hat{\F}_{\omega}-\E {\hat{\F}}_{\omega}}^2_2} \sqrt{\E \snorm{\frac{\partial}{\partial \omega}(\hat{\F}_{\omega}-\E {\hat{\F}}_{\omega} )}^2_2}  d\omega
\\& =\tr \var(\hat{\F}_{0})+\tr \var(\hat{\F}_{\pi}) + 2 \int^\pi_0 \sqrt{\tr \var(\hat{\F}_{\omega})} \sqrt{\tr \var(\frac{\partial}{\partial \omega}\hat{\F}_{\omega})}  d\omega
\\& = O\bigg(\frac{1}{bT}\bigg) + O\bigg(\frac{1}{\sqrt{b T} \sqrt{b^2 T}}\bigg) = O\bigg(\frac{1}{b^2T}\bigg), \tageq \label{eq:Exsup}
\end{align*}
where the latter follows from part (i). The rate for the covariance structure of the operator-valued function $\omega \mapsto \frac{\partial}{\partial \omega}\hat{\F}_{\omega}$ follow as before, noting that an application of the chain rule of the derivative will lead to an extra $O(\frac{1}{b^2})$ term in $S_2(H)$ in comparison to the covariance of $\hat{\F}_{\omega}$. Minkowski's inequality therefore implies
\begin{align*}
\mathbb{P}\bigg(\sup_{\omega \in [-\pi,\pi] }\!\snorm{\hat{\mathcal{F}}^{(T)}_{\omega}- {G}_{\omega}}_{2} > \epsilon\bigg) \!\!
\le  \mathbb{P}\bigg(\sup_{\omega \in [-\pi,\pi] }\!\snorm{\hat{\mathcal{F}}^{(T)}_{\omega}- \E {\hat{\F}}_{\omega}}_{2} >\frac{\epsilon}{2}\bigg) 
\!\!+\!\mathbb{P}\bigg(\sup_{\omega \in [-\pi,\pi] }\!\snorm{\E\hat{\mathcal{F}}^{(T)}_{\omega}-  {G}_{\omega}}_{2} > \frac{\epsilon}{2}\bigg).
\end{align*}
Using Markov's inequality together with \eqref{eq:Exsup}, for any $\epsilon >0$,
\[
\mathbb{P}\bigg(\sup_{\omega \in [-\pi,\pi] }\snorm{\hat{\mathcal{F}}^{(T)}_{\omega}- \E {\hat{\F}}_{\omega}}_{2} > \frac{\epsilon}{2}\bigg)  
\le O\bigg(\frac{1}{\epsilon^2 b^2 T}\bigg) \to 0
\]
as $b^2 T \to \infty$. Similary, Markov's inequality together with \eqref{eq:biasF} yields
\[
\mathbb{P}\bigg(\sup_{\omega \in [-\pi,\pi] }\snorm{\E\hat{\mathcal{F}}^{(T)}_{\omega}-  {G}_{\omega}}_{2} > \frac{\epsilon}{2}\bigg) 
\le  O\bigg( \frac{1}{\epsilon^2} \Big(b^2+\frac{1}{bT}\Big)^2 \bigg)\to 0
\]
as $b T \to \infty$ and $b\to 0$ as $T \to \infty$. The result therefore holds provided Assumption \ref{windowfunction} is satisfied. 
\end{proof}
%%%%%%%%%%%%%%%%%%%

\section{Weak convergence} 
\label{sec:proofs:weak_conv}
The proof of the distributional properties of $\tprojes_{h,x}$ as stated in Theorem \ref{th:test_null} and \ref{th:test_alt} are established in this section. The proof consists of several steps. First, the distributional properties are derived for $\tproj_{h,x}$, when spectral density operators and their corresponding eigenelements are known. For this, we investigate the distributional properties of the operator
\begin{align}
\WT= \frac{1}{T} \sum_{j=1}^{T} D^{(T)}_{\omega_{j}} \otimes D^{(T)}_{\omega_{j+h}} \quad h =1, \ldots T-1. \label{eq:WT}
\end{align}
Theorem \ref{CLT_alt} below shows that $\sqrt{T}(\WT -\E \WT)$ converges weakly to a functional Gaussian process both under the null and the alternative. The distributional properties of $\tproj_{h,x}$ immediately follow from this result and thus converge weakly to a Gaussian process under both hypotheses.  Focus is finally on $\tprojes_{h,x}$, where the effect of replacing the eigenelements with their empirical counterparts on the distributional properties is clarified. In particular, Theorems \ref{boundstatcase} and \ref{boundlocstatcase} are established as well as the orders of 
\begin{align*}
\E \sqrt{T}|\tprojes_{h,u}-\tproj_{h,u}| 
\quad\text{ and }\quad
\E \sqrt{T}|\tprojes_{h,s}-\tproj_{h,s}|.
\end{align*}
%That is, given appropriate rates of the bandwidths as stated in assumption \textcolor{red}{ADD rates $T^{-1/2}<<b << T^{-1/4}$}, we will establish that the under $H_0$, the distributional properties are unaffected. Under $H_A$, we find a stochastic bias term. In \textcolor{red}{theorem/section (ADD)}, we derive therefore the limiting distribution and show this is still asymptotically normal. 
%\end{align*}
\subsection{Weak convergence on the function space}

To demonstrate weak convergence of \eqref{eq:WT}, the following result by \citet{ck86} is used, which considerably simplifies the verification of the usual tightness condition often invoked in weak convergence proofs of Banach space-valued random variables. %In particular, the following lemma indicates that weak convergence of the functional process will almost directly follow from the weak convergence of the finite dimensional distributions once it is weakly tight in the following sense. 

\begin{lemma}
\label{convergencecriterion}
Let $(\mathscr{T},\mathcal{A},\mu)$ be a measure space, let $(B,|\cdot|)$ be a Banach space, and let $X=(X_n\colon n\in\nnum)$ be a sequence of random elements in $L^p_B(\mathscr{T},\mu)$ such that
\begin{enumerate}
\itemsep-.3ex
\item[(i)] the finite-dimensional distributions of $X$ converge weakly to those of a random element $X_0$ in $L^p_B(\mathscr{T},\mu)$;
\item[(ii)] $\displaystyle\limsup_{n\to\infty}\E[\|X_n\|_p^p]\leq\E[\|X_0\|_p^p] < \infty$.
\end{enumerate}
Then, $X$ converges weakly to $X_0$ in $L^p_B(\mathscr{T},\mu)$.
\end{lemma}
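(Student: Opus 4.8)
The plan is to follow the classical route to weak convergence on a metric space --- prove convergence of the finite-dimensional distributions together with a tightness-type control --- but to implement the tightness control through finite-dimensional approximations, letting hypothesis (ii) supply the uniform tail bound that ordinarily requires a separate tightness argument. After the usual reductions (it suffices to take $\mu$ finite and $\mathcal{A}$ countably generated), I would fix an increasing sequence of finite measurable partitions $(\mathcal{P}_m \colon m \in \nnum)$ of $\mathscr{T}$ generating $\mathcal{A}$ and let $\pi_m$ be the associated conditional-expectation (block-averaging) operators on $L^p_B(\mathscr{T},\mu)$. Two standard facts drive everything: each $\pi_m$ is a contraction, $\|\pi_m x\|_p \le \|x\|_p$ by Jensen, and $\pi_m x \to x$ in $L^p_B$ for every $x$. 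Since $\pi_m X$ is determined by the finitely many block integrals $(\int_A X\,d\mu \colon A \in \mathcal{P}_m)$, for any bounded continuous $f$ the map $x \mapsto f(\pi_m x)$ is a bounded continuous function of the finite-dimensional data, so hypothesis (i) gives, for each fixed $m$, that $\E[f(\pi_m X_n)] \to \E[f(\pi_m X_0)]$ as $n \to \infty$.

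Because bounded Lipschitz functions are convergence-determining on the separable metric space $L^p_B$, it suffices to show $\E[f(X_n)] \to \E[f(X_0)]$ for every such $f$. I would split
\[
|\E f(X_n) - \E f(X_0)| \le |\E f(X_n) - \E f(\pi_m X_n)| + |\E f(\pi_m X_n) - \E f(\pi_m X_0)| + |\E f(\pi_m X_0) - \E f(X_0)|.
\]
For fixed $m$ the middle term vanishes as $n \to \infty$ by the previous paragraph, and the last term vanishes as $m \to \infty$ by dominated convergence ($\pi_m X_0 \to X_0$ a.s., $f$ bounded continuous). Everything therefore reduces to showing that the first term is negligible uniformly in $n$ for large $m$; using $|f(x) - f(y)| \le \min(2\|f\|_\infty, \mathrm{Lip}(f)\,\|x-y\|_p)$, this is the claim
\[
\lim_{m \to \infty}\ \sup_{n}\ \E\big[\,\|X_n - \pi_m X_n\|_p \wedge 1\,\big] = 0,
\]
which plays the role of tightness.

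The heart of the argument is establishing this display, and here hypothesis (ii) is indispensable. In the Hilbert-space case $p=2$ (the one relevant to \eqref{eq:WT}), the projections are orthogonal and the Pythagorean identity gives $\E\|X_n - \pi_m X_n\|_2^2 = \E\|X_n\|_2^2 - \E\|\pi_m X_n\|_2^2$. Combining the lower bound $\liminf_n \E\|\pi_m X_n\|_2^2 \ge \E\|\pi_m X_0\|_2^2$ --- which follows from the weak convergence $\pi_m X_n \Rightarrow \pi_m X_0$ of the fixed finite-dimensional projections together with the portmanteau inequality for the nonnegative continuous functional $\|\cdot\|_2^2$ --- with the upper bound $\limsup_n \E\|X_n\|_2^2 \le \E\|X_0\|_2^2$ from (ii), one obtains
\[
\limsup_{n} \E\|X_n - \pi_m X_n\|_2^2 \le \E\|X_0\|_2^2 - \E\|\pi_m X_0\|_2^2 \to 0 \qquad (m \to \infty).
\]
The passage from this iterated limit to a genuine uniform-in-$n$ bound is then routine: for given $\varepsilon$ pick $m$ with $\limsup_n \E\|X_n - \pi_m X_n\|_2^2 < \varepsilon$, absorb the finitely many remaining indices by enlarging $m$, and bound $\E[\,\|X_n - \pi_m X_n\|_2 \wedge 1\,] \le (\E\|X_n - \pi_m X_n\|_2^2)^{1/2}$.

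I expect the main obstacle to be the general Banach/$L^p$ case with $p \ne 2$, where the clean Pythagorean identity is unavailable and $\E\|X_n - \pi_m X_n\|_p^p$ can no longer be written as a difference of norm moments. There the uniform tail bound has to be extracted from genuine uniform integrability of the family $(|X_n|^p)$ on the product space $(\Omega \times \mathscr{T}, P \otimes \mu)$; mere boundedness of $\E\|X_n\|_p^p$ does not suffice, and one must upgrade the $\limsup$ domination in (ii) --- together with the finite-dimensional weak limits, which pin down the limiting total mass as $\E\|X_0\|_p^p$ --- to uniform integrability before the approximation error can be controlled uniformly in $n$. This interchange-of-limits / uniform-integrability step, rather than the finite-dimensional convergence, is the delicate point that makes the lemma worth isolating.
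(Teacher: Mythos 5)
The paper does not actually prove this lemma: it is quoted verbatim from \citet{ck86}, and the whole point of invoking it is to avoid reproving a tightness criterion. So there is no in-paper argument to match yours against; the right comparison is with the cited source, whose proof follows essentially the same architecture you describe (approximation by averaged/conditional-expectation operators, convergence of the approximants from the finite-dimensional hypothesis, and a uniform control of the approximation error extracted from the norm condition (ii)). Your argument is a clean, self-contained proof of the case the paper actually uses --- $p=2$ with $B$ a Hilbert space and $\mathscr{T}$ effectively discrete, where $\pi_m$ is an orthogonal projection, the Pythagorean identity converts (ii) plus the portmanteau lower bound into $\limsup_n\mathbb{E}\|X_n-\pi_mX_n\|_2^2\le\mathbb{E}\|X_0\|_2^2-\mathbb{E}\|\pi_mX_0\|_2^2\to0$, and the three-term splitting closes the argument (indeed the iterated limit $\limsup_m\limsup_n$ already suffices there; the uniformization over $n$ is not needed).

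Two caveats keep this from being a proof of the lemma as stated. First, you correctly flag that for $p\ne2$ (or non-Hilbert $B$) the Pythagorean step disappears and one must instead upgrade (ii) to uniform integrability of $(|X_n|^p)$ on $\Omega\times\mathscr{T}$; since you do not carry that out, the general statement remains unproved by your argument. Second, when $\mu$ is non-atomic the block integrals $\int_AX_n\,d\mu$ are \emph{not} functions of finitely many point evaluations, so if ``finite-dimensional distributions'' in (i) means the laws of $(X_n(t_1),\ldots,X_n(t_k))$ --- as it does in Cremers--Kadelka --- then the weak convergence $\pi_mX_n\Rightarrow\pi_mX_0$ does not follow immediately and needs an extra Fubini/randomized-evaluation argument. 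In the paper's application $\mathscr{T}$ carries counting measure and the coordinates are exactly the basis coefficients, so both issues are vacuous there, but they should be acknowledged if the lemma is to be proved in the generality in which it is stated.
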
 

To apply Lemma \ref{convergencecriterion} in the present context, consider the sequence $(\ET_h\colon T \in \nnum)$ of random elements in $L^2_\cnum([0,1]^2)$, for $h=1,\ldots,T-1$ defined through
\begin{align*}
\ET_h =\sqrt{T}\Big(\WT -\E\big[\WT\big] \Big)=
\sum_{l,l^\prime=1}^\infty  \langle \ET_h , \psi_{ll^\prime}\rangle \psi_{ll^\prime}, 
\end{align*}
where the second equality uses a representation with respect to an $L^2_{\cnum}([0,1]^2)$ orthonormal basis $\psi_{ll^\prime}=\psi_l \otimes \psi_l^\prime$. From this representation it is easily seen that the finite-dimensional distributions of the basis coefficients provide a complete characterization of the distributional properties of $\ET_h$. 
%Thus, weak convergence of $(\langle \ET_h , \psi_{ll^\prime}\rangle\colon l,l^\prime \in \nnum)$ in %the sequence space 
%$\ell^2_\cnum$ will imply weak convergence of the process $(\ET_h\colon T \in \nnum)$. 
%To formalize this, 
To formalize this, we put the functional $\ET_h$ in duality with $(\ET_h)^{*} \in {L^2_{\cnum}([0,1]^2)}^{*}$ through the pairing
$%\[
{\ET}_h(\phi) = \langle \ET_h , \phi \rangle
$ %\]
for all $\phi \in L^2_{\cnum}([0,1]^2)^{*}$. This leads to the following result, which is stated under the more general \ref{cumglsp}, which encompasses the stationary case.
\begin{theorem}[\bf Weak convergence]
\label{CLT_alt}
Let $(X_t\colon t\in\mathbb{Z})$ be a stochastic process taking values in $H_\rnum$ satisfying \ref{cumglsp} with $\ell=2$. Then,
\begin{equation} 
\big(\Re\ET_{h_i} , \Im\ET_{h_i} \colon i=1,\ldots,k\big)\stackrel{d}{\to} (\mathcal{R}_{h_i}, \mathcal{I}_{h_i}\colon i=1,\ldots,k),
\end{equation}
where $\mathcal{R}_{h},  \mathcal{I}_{h^\prime}$, $h, h^\prime \in \{1,\ldots,T-1\}$, are jointly Gaussian elements in $L^2_{\cnum}([0,1]^2)$ with means $\E[\mathcal{R}_{h}(\psi_{ll^\prime})]=\E[\mathcal{I}_{h^\prime}(\psi_{ll^\prime})]=0$ and covariance structure %\textcolor{red}{and finite covariance structure, which under fourth-order local stationarity can be expressed through}
\begin{enumerate} \label{covaltemprroc}
\item $\begin{aligned}[t]
  \cov( &\mathcal{R}_{h}(\psi_{l_1l_1^\prime}),\mathcal{R}_{h^\prime}(\psi_{l_2l_2^\prime})) =\\& \frac{1}{4}\big[
\Upsilon_{h,h^\prime}(\psi_{l_1l_1^\prime\,l_2l_2^\prime})+\acute{\Upsilon}_{h,h^\prime}(\psi_{l_1l_1^\prime\,l_2l_2^\prime})+\grave{\Upsilon}_{h,h^\prime}(\psi_{l_1l_1^\prime\,l_2l_2^\prime})+\bar{\Upsilon}_{h,h^\prime}(\psi_{l_1l_1^\prime\,l_2l_2^\prime})\big]\end{aligned}$
\item $\begin{aligned}[t]
\cov(& \mathcal{I}_{h}(\psi_{l_1l_1^\prime}),\mathcal{R}_{h^\prime}(\psi_{l_2l_2^\prime})) =\\& \frac{1}{4 \im}\big[\Upsilon_{h,h^\prime}(\psi_{l_1l_1^\prime\,l_2l_2^\prime})+\acute{\Upsilon}_{h,h^\prime}(\psi_{l_1l_1^\prime\,l_2l_2^\prime})-\grave{\Upsilon}_{h,h^\prime}(\psi_{l_1l_1^\prime\,l_2l_2^\prime})-\bar{\Upsilon}_{h,h^\prime}(\psi_{l_1l_1^\prime\,l_2l_2^\prime})\big]\end{aligned}$
\item $\begin{aligned}[t]
\cov(& \mathcal{R}_{h}(\psi_{l_1l_1^\prime}),\mathcal{I}_{h^\prime}(\psi_{l_2l_2^\prime})) =\\& \frac{1}{4 \im}\big[\Upsilon_{h,h^\prime}(\psi_{l_1l_1^\prime\,l_2l_2^\prime})-\acute{\Upsilon}_{h,h^\prime}(\psi_{l_1l_1^\prime\,l_2l_2^\prime})+\grave{\Upsilon}_{h,h^\prime}(\psi_{l_1l_1^\prime\,l_2l_2^\prime})-\bar{\Upsilon}_{h,h^\prime}(\psi_{l_1l_1^\prime\,l_2l_2^\prime})\big]\end{aligned}$
\item $\begin{aligned}[t]
\cov(& \mathcal{I}_{h}(\psi_{l_1l_1^\prime}), \mathcal{I}_{h^\prime}(\psi_{l_2l_2^\prime}))=\\& \frac{1}{4}\big[\Upsilon_{h,h^\prime}(\psi_{l_1l_1^\prime\,l_2l_2^\prime})-\acute{\Upsilon}_{h,h^\prime}(\psi_{l_1l_1^\prime\,l_2l_2^\prime})-\grave{\Upsilon}_{h,h^\prime}(\psi_{l_1l_1^\prime\,l_2l_2^\prime})+\bar{\Upsilon}_{h,h^\prime}(\psi_{l_1l_1^\prime\,l_2l_2^\prime})\big]  \end{aligned}$
\end{enumerate}
for all $h, h^\prime$ and $l_1,l_1^\prime,l_2,l_2^\prime$, and where $\Upsilon_{h,h^\prime}, \acute{\Upsilon}_{h,h^\prime}, \grave{\Upsilon}_{h,h^\prime}$ and $\bar{\Upsilon}_{h,h^\prime}$ are given in \eqref{eq:nonnon_lim}--\eqref{eq:concon_lim}.
\end{theorem}

\begin{proof}
It remains to verifiy the conditions of Lemma \ref{convergencecriterion}. For the first, the following theorem establishes that the finite-dimensional distributions converge weakly to a Gaussian process both under the null and the alternative. 
\begin{theorem}\label{cumconv}
Under the conditions of Theorem \ref{CLT_alt}, we have for all $l_i,l_i^\prime \in \nnum$, $h_i=1,\ldots,T-1$, $i=1,\ldots, k$ and $k \ge 3$,
\begin{align*} 
\cum \Big( \ET_{h_1} (\psi_{l_1 l_1^\prime}), \ldots, \ET_{h_k} (\psi_{l_k l_k^\prime}) \Big)  =o(1) 
\qquad(T \to \infty).
\end{align*}
\end{theorem}
The proof of \ref{cumconv} can be found in Section \ref{asdist} of the Online Supplement. Note that, for the second condition of Lemma \ref{convergencecriterion},  Parseval's identity and the monotone convergence theorem yield
\begin{equation}
\label{cremerkadelka-condition}
\E\big[\big\|\ET_{h}\big\|^2_2\big]
=\sum_{l,l^\prime=1}^\infty\E\big[\big|\ET_{h}(\psi_{ll^\prime})\big|^2\big]\to
\sum_{l,l^\prime=1}^\infty\E\big[\big|E_{h}(\psi_{ll^\prime})\big|^2\big]
=\E\big[\big\|E_{h}\big\|^2_2 \big]
\qquad(T\to\infty),
\end{equation}
with $E_{h}$ denoting the limiting process. Observe that, from \eqref{eq:OrderVar} and the Cauchy--Schwarz inequality, the terms $\Upsilon_{h,h^\prime}, \acute{\Upsilon}_{h,h^\prime}, \grave{\Upsilon}_{h,h^\prime}$ and $\bar{\Upsilon}_{h,h^\prime}$ are finite. 
%Using Theorem \ref{cumconv} and \eqref{cremerkadelka-condition} weak convergence of the functional process can now be determined.
%The covariance structure is fully derived in the Online Supplement while the convergence of the finite--dimensional distributions follows again from Theorem \ref{cumconv}.
Condition (ii) of Lemma \ref{convergencecriterion} is then satisfied, since
\begin{align*}
\E\big[\big\|\ET_{h}\big\|^2_2\big]
=\int_{[0,1]^2} \var \big(\ET_{h}(\tau,\tau^\prime)\big) d\tau d\tau^\prime 
= T\E\|\WT\|^2_2 \to \tr(\var(\mathcal{R}_h)) +\tr(\var(\mathcal{I}_h))<\infty,
\end{align*}
where Tonelli's theorem was applied to obtain the first equality. This completes the proof.
\end{proof}

\subsection{Replacing eigenelements with estimates}
\label{subsec:proof:projtoest}

\subsubsection{Invariance under rotation}

We now focus on replacing the projection basis with estimates of the eigenfunctions of the spectral density operators. It can be shown \citep{mm03} that for %appropriate 
rates of the bandwidth $b$ for which the estimated spectral density operator is a consistent estimator of the true spectral density operator, the corresponding estimated eigenprojectors $\hat{\Pi}_l^{\omega}= \hat{\phi_l}^{\omega} \otimes \hat{\phi_l}^{\omega}$ are consistent for the eigenprojectors ${\Pi_{l}^{\omega}}$. However, the estimated eigenfunctions are not unique and only identified up to rotation on the unit circle. In order to show that replacing the eigenfunctions with estimates does not affect the limiting distribution, the issue of rotation has to be considered first. More specifically, when estimating, %$\phi^{\omega_{j}}_{l}$, 
a version $\hat{z}_l \hat{\phi}^{\omega_{j}}_{l}$, where $\hat{z}_l \in \mathbb{C}$ with modulus $|\hat{z}_l|=1$, is obtained %. We can therefore not guarantee that an estimated version $\hat{z}_l \hat{\phi}^{\omega_{j}}_{l,T}$ is 
which cannot be guaranteed to be close to the true eigenfunction ${\phi}^{\omega_{j}}_{l}$. It is therefore essential that the test statistic is invariant under rotations. To show this, write 
\[
\Psi_h(j,l,l^\prime) = \langle D^{(T)}_{\omega_{j}}, \hat{\phi}^{\omega_j}_{l} \rangle \overline{\langle D^{(T)}_{\omega_{j+h}}, \hat{\phi}^{\omega_{j+h}}_{l^\prime}}\rangle
\]
and let $\boldsymbol{\Psi}(h) = \text{vec}(\Psi_h(j,l,l^\prime,))$ be the stacked vector of dimension $\prod_{j=1}^{T-h}L(\omega_j) L(\omega_{j+h})$. Note that then % then we can write
$%\[
\hat{\bm{\beta}}_{h,u}^{(T)} = e^{\top} \boldsymbol{\Psi}(h).
$ %\]
Construct the diagonal matrix
\begin{align*}
Z^{j}_{L(\omega_j)} = 
\begin{pmatrix}
 \hat{z}^{j}_1&\cdots & &\\
\vdots &  \hat{z}^{j}_2 & &\\
&  & \ddots &  \\
& & &  \hat{z}^{j}_{L(\omega_j)} 
\end{pmatrix},
\end{align*}
the block diagonal matrix $Z^{1:T}_{L(\omega_j)} = \text{diag}(Z^{j}_{L(\omega_{j})}\colon j =1,\ldots, T)$ and the Kronecker product
$%\[
\boldsymbol{Z}(h) = Z^{1:T-h}_{L(\omega_j)}\otimes Z^{h:T}_{L(\omega_j)}.
$ %\]
This is a diagonal object of dimension $(\prod_{j=1}^{T-h}L(\omega_j) L(\omega_{j+h}))^2$, %\times\prod_{j=1}^{T-h}L(\omega_j) L(\omega_{j+h})$ 
whose diagonal elements are given by $\{\hat{z}^{j}_l \overline{\hat{z}}^{j+h}_{l^\prime}\}$.
%, j = 1, \ldots, T-h, l =1,\ldots, L(\omega_j), l^\prime =1,\ldots, L(\omega_{j+h})$. 
Rotating the eigenfunctions on the unit circle, yields versions 
\[
\hat{\bm{\beta}}_{h,u}^{(T)}=e^{\top} \boldsymbol{Z}(h)\boldsymbol{\Psi}(h).
\]
For these versions, write
$%\[
\sqrt{T}\mathcal{Z}_{M}\hat{\bm{b}}_{M,u}^{(T)}
=\sqrt{T}\big(\Re\hat{\bm{\beta}}_{h_1,u}^{(T)},\ldots,\Im\hat{\bm{\beta}}_{h_M,u}^{(T)},
\Im\hat{\bm{\beta}}_{h_1,u}^{(T)},\ldots,\Im\hat{\bm{\beta}}_{h_M,u}^{(T)}\big)^\top,
$ % \]
where the block diagonal matrix is given by 
$ %\[
\mathcal{Z}_{M} = \text{diag}( \boldsymbol{\Re Z}(h_1), \ldots,\boldsymbol{\Re Z}(h_M),\boldsymbol{\Im Z}(h_1), \ldots \boldsymbol{\Im Z}(h_M))^{\top}.
$ %\]
The same rotation, however, also implies that $\hat{\Sigma}_{M,u}$ becomes $\mathcal{Z}_{M}\hat{\Sigma}_{M,u} \mathcal{Z}_{M}^{\top}$ and hence
\[
T(\hat{\bm{b}}_{M,u}^{(T)})^\top (\mathcal{Z}_{M})^{\top}{[\mathcal{Z}_{M} \hat{\Sigma}_{M,u} {\mathcal{Z}_{M}}^{\top}]}^{-1}\mathcal{Z}_{M} \hat{\bm{b}}_{M,u}^{(T)} = \hat{Q}_{M,u}^{(T)},
\]
thereby showing that the value of the test statistic is not affected by rotation of the estimated eigenfunctions. In the rest of the proof, focus is therefore only on estimates $\hat{\phi}^{\omega_{j+h}}_{l}$ and $\hat{\phi}^{\omega_{j+h}}_{l^\prime}$ and their respective unknown rotations $\hat{z}^{j}_l$ and $ \overline{\hat{z}}^{j+h}_{l^\prime}$ are ignored. 

\subsubsection{Limiting distristributions of $\tprojes_{h,u}$ and $\tprojes_{h,s}$}
\label{sec:bounderror}

We now investigate the rate of convergence of the statistic when the eigenfunctions as a basis are replaced with their empirical counterparts, and prove Theorems \ref{boundstatcase} and \ref{boundlocstatcase}. For this, it is sufficient to derive the order of the difference
 \begin{align}\label{eq:J}   \sqrt{T}\E |{\hat{\bm{\beta}}}_{h,x}^{(T)}- {\bm{\beta}}_{h,x}^{(T)}|. \end{align}
In the following we shall focus on ${\hat{\bm{\beta}}}_{h,u}^{(T)}$ and postpone the derivation for ${\hat{\bm{\beta}}}_{h,s}^{(T)}$ to Section \ref{sec:hatbS}. In order to bound \eqref{eq:J}, relate $\hat{\phi}^{\omega_j}_{l} \otimes \hat{\phi}^{\omega_{j+h}}_{\lpr}-{\phi}^{\omega_j}_{l} \otimes {\phi}^{\omega_{j+h}}_{\lpr}$ with $\hat{\F}_{\omega_{j_1}} \widetilde{\otimes}  \hat{\F}_{\omega_{j_1+h}} -\F_{\omega_{j_1}} \widetilde{\otimes}  \F_{\omega_{j_1+h}}$ from noting that
\begin{align*}
(\F_{\omega_{j}} \widetilde{\otimes}  \F_{\omega_{j+h}}) ({\phi}^{\omega_j}_{l} \otimes {\phi}^{\omega_{j+h}}_{\lpr}) & = \sum_{m, m^\prime} \lambda^{\omega_{j}}_{m}  \lambda^{\omega_{j+h}}_{m^\prime} {\phi}^{\omega_j}_{m} \otimes  {\phi}^{\omega_j}_{m}  ({\phi}^{\omega_j}_{l} \otimes {\phi}^{\omega_{j+h}}_{\lpr})
{\phi}^{\omega_j+h}_{m} \otimes  {\phi}^{\omega_j+h}_{m^\prime}  
\\& =  \lambda^{\omega_{j}}_{l}  \lambda^{\omega_{j+h}}_{l^\prime}( {\phi}^{\omega_j}_l \otimes \phi^{\omega_{j+h}}_{\lpr})
\end{align*}
where we used Definition \ref{def:Krn}(i). Similarly,
% for the empirical versions,
%\begin{align*}
$(\hat{\F}_{\omega_{j}} \widetilde{\otimes}  \hat{\F}_{\omega_{j+h}}) (\hat{\phi}^{\omega_j}_{l} \otimes \hat{\phi}^{\omega_{j+h}}_{\lpr}) =  \hat{\lambda}^{\omega_{j}}_{l}  \hat{\lambda}^{\omega_{j+h}}_{l^\prime} (\hat{\phi}^{\omega_j}_l \otimes \hat{\phi}^{\omega_{j+h}}_{\lpr}).$
%\end{align*}
A first-order Taylor expansion of the eigenvalue-eigenvector equation yields \citep[e.g.,][]{HH-N}
\begin{align}\label{eq:eigapprox}
&\hat{\phi}^{\omega_j}_{l} \otimes \hat{\phi}^{\omega_{j+h}}_{\lpr}-({\phi}^{\omega_j}_{l} \otimes {\phi}^{\omega_{j+h}}_{\lpr}) \\
&=\!\!
\sum_{\substack{m \ne l \\ m^\prime \ne \lpr}} \frac{1}{\lambda^{\omega_{j}}_{l}  \lambda^{\omega_{j+h}}_{l^\prime}-\lambda^{\omega_{j}}_{m}  \lambda^{\omega_{j+h}}_{m^\prime}} 
\biginprod{\big(\hat{\F}_{\omega_{j_1}} \widetilde{\otimes}  \hat{\F}_{\omega_{j+h}} -\F_{\omega_{j}} \widetilde{\otimes}  \F_{\omega_{j+h}}\big)({\phi}^{\omega_j}_l \otimes {\phi}^{\omega_{j+h}}_{\lpr})}{{\phi}^{\omega_j}_m \otimes {\phi}^{\omega_{j+h}}_{m^{\prime}}}{\phi}^{\omega_j}_m \otimes {\phi}^{\omega_{j+h}}_{m^{\prime}} \!+\!R,
\nonumber
\end{align}
where the remainder $R$ is of order $\snorm{R}_2 = O_p(\snorm{\hat{\F}_{\omega_{j_1}} \widetilde{\otimes}  \hat{\F}_{\omega_{j_1+h}} -\F_{\omega_{j_1}} \widetilde{\otimes}  \F_{\omega_{j_1+h}}}^2_2)$ and will be of smaller order than the first term on the right-hand side of \eqref{eq:eigapprox}. In the proof we require thus that
 \begin{align} \label{eq:condeigenv}
\lambda^{\omega_{j}}_{l}  \lambda^{\omega_{j+h}}_{l^\prime}-\lambda^{\omega_{j}}_{m}  \lambda^{\omega_{j+h}}_{m^\prime}  =\lambda^{\omega_{j}}_{l} ( \lambda^{\omega_{j+h}}_{l^\prime}-\lambda^{\omega_{j+h}}_{m^\prime} )+ ( \lambda^{\omega_{j}}_{l}-\lambda^{\omega_{j}}_{m}) \lambda^{\omega_{j+h}}_{m^\prime} >0,
 \end{align}
which implies no multiplicity of eigenvalues. It is also required that the spectral density operators are strictly positive definite, a condition needed to ensure that their eigenfunctions form a complete orthonormal basis of $H$. Note, however, that the assumption of no multiplicity is without loss of generality as one can group multiple eigenelement pairs into blocks and apply the same techniques over these blocks \citep[e.g.,][]{mm03}. Given \eqref{eq:condeigenv} holds true, linearity and continuity of the inner product imply that the error can be rewritten as
\begin{align*}
&\scriptstyle\frac{1}{\sqrt{T}} \sum_{j=1}^{T} \sum_{l,\lpr}  \inprod{D_{\omega_{j}} \otimes D_{\omega_{j+h}}}{\phi^{\omega_j}_{l} \otimes \phi^{\omega_{j+h}}_{\lpr}-(\hat{\phi}^{\omega_j}_{l} \otimes \hat{\phi}^{\omega_{j+h}}_{\lpr})}_{S},
\\&\scriptstyle
=O_p(\frac{1}{\sqrt{T}} \sum_{j=1}^{T}  \sum_{l,\lpr} \sum_{\substack{m \ne l \\ m^\prime \ne \lpr}} \inprod{D_{\omega_{j}} \otimes D_{\omega_{j+h}}}{\phi^{\omega_j}_m \otimes {\phi}^{\omega_{j+h}}_{m^{\prime}}}_{S} \biginprod{\Big(\hat{\F}_{\omega_{j}} \widetilde{\otimes}  \hat{\F}_{\omega_{j+h}} -\F_{\omega_{j}} \widetilde{\otimes}  \F_{\omega_{j+h}}\Big)({\phi}^{\omega_j}_l \otimes {\phi}^{\omega_{j+h}}_{\lpr})}{{\phi}^{\omega_j}_m \otimes {\phi}^{\omega_{j+h}}_{m^{\prime}}}_{S})
%\\& \scriptstyle
%
%=O_p(\frac{1}{\sqrt{T}} \sum_{j=1}^{T} \sum_{l,\lpr}  {\sum_{\substack{m \ne l \\ m^\prime \ne \lpr}}} 
% \inprod{D_{\omega_{j}} \otimes D_{\omega_{j+h}}\otimes \Big(\hat{\F}_{\omega_{j}} \widetilde{\otimes}  \hat{\F}_{\omega_{j+h}} -\F_{\omega_{j}} \widetilde{\otimes}  \F_{\omega_{j+h}}\Big)  }{ {\phi}^{\omega_j}_m \otimes {\phi}^{\omega_{j+h}}_{m^{\prime}}\otimes {\phi}^{\omega_j}_l \otimes {\phi}^{\omega_{j+h}}_{\lpr} \otimes {\phi}^{\omega_j}_m \otimes {\phi}^{\omega_{j+h}}_{m^{\prime}}}_{S})
%
%\sum_{l,\lpr}  \frac{c}{\sqrt{T}} \sum_{j=1}^{T} {\sum_{\substack{m \ne l \\ m^\prime \ne \lpr}}}
 %\inprod{D_{\omega_{j}} \otimes D_{\omega_{j+h}}\otimes \Big(\hat{\F}_{\omega_{j_1}} \widetilde{\otimes}  \hat{\F}_{\omega_{j+h}} -\F_{\omega_{j}} \widetilde{\otimes}  \F_{\omega_{j+h}}\Big)  }{ {\phi}^{\omega_j}_m \otimes {\phi}^{\omega_{j+h}}_{m^{\prime}}\otimes {\phi}^{\omega_j}_l \otimes {\phi}^{\omega_{j+h}}_{\lpr} \otimes {\phi}^{\omega_j}_m \otimes {\phi}^{\omega_{j+h}}_{m^{\prime}}}
 %\\& \sum_{l,\lpr}  \frac{c}{\sqrt{T}} \sum_{j=1}^{T} \sum_{\substack{m \ne l \\ m^\prime \ne \lpr}}
%D^{(m)}_{\omega_{j}} D^{(\mpr)}_{\omega_{j+h}}(\hat{\F}^{(m,l)}_{\omega_{j_1}}  \hat{\F}^{\dagger(\mpr,\lpr)}_{\omega_{j+h}} -\F^{(m,l)}_{\omega_{j}}   \F^{\dagger(\mpr,\lpr)}_{\omega_{j+h}})
\\& \scriptstyle
=  O_p\Big( \frac{1}{\sqrt{T}} \sum_{j=1}^{T} \sum_{l,\lpr}  \biginprod{D_{\omega_{j}} \otimes D_{\omega_{j+h}}}{{\big(\hat{\F}_{-\omega_{j}} \widetilde{\otimes}  \hat{\F}_{-\omega_{j+h}} -\F_{-\omega_{j}} \widetilde{\otimes}  \F_{-\omega_{j+h}}\big)({\phi}^{-\omega_j}_l \otimes {\phi}^{-\omega_{j+h}}_{\lpr})}}_{S}\Big), \tageq
 \label{eq:approxdif}
\end{align*}
using that $\inprod{A}{B}_{S}=\sum_{m\in \mathbb{N}} \inprod{A}{\psi_{m\mpr}}_{S}\inprod{\psi_{m\mpr}}{B}_{S}$ for any orthonormal basis $\{\psi_{m\mpr}\}_{m,\mpr \in \mathbb{N}}$ of $S_2$. In other words, the order of the difference is completely determined by the order of the difference when replacing the Kronecker products of the estimated spectral density operators with their empirical counterparts. This finding can be utilized to determine the order of \eqref{eq:J} by decomposing it as follows, and considering each of the terms separately:
\begin{align}
J_1
=&\frac{1}{\sqrt{T}} \sum_{j=1}^{T}  \sum_{l,\lpr}  \inprod{D_{\omega_{j}} \otimes D_{\omega_{j+h}}- \E\big(D_{\omega_{j}} \otimes D_{\omega_{j+h}})}{\phi^{\omega_j}_{l} \otimes \phi^{\omega_{j+h}}_{\lpr}-\E(\hat{\phi}^{\omega_j}_{l} \otimes \hat{\phi}^{\omega_{j+h}}_{\lpr})}_{S}, \label{eq:J1}
\\
J_2 
=&\frac{1}{\sqrt{T}} \sum_{j=1}^{T}  \sum_{l,\lpr}  \inprod{D_{\omega_{j}} \otimes D_{\omega_{j+h}}- \E\big(D_{\omega_{j}} \otimes D_{\omega_{j+h}})}{\E(\hat{\phi}^{\omega_j}_{l} \otimes \hat{\phi}^{\omega_{j+h}}_{\lpr})-\hat{\phi}^{\omega_j}_{l} \otimes \hat{\phi}^{\omega_{j+h}}_{\lpr}}_{S},  \label{eq:J2} 
\\  
J_3=&\frac{1}{\sqrt{T}} \sum_{j=1}^{T}  \sum_{l,\lpr}  \inprod{\E\big(D_{\omega_{j}} \otimes D_{\omega_{j+h}})}{\phi^{\omega_j}_{l} \otimes \phi^{\omega_{j+h}}_{\lpr}-\E(\hat{\phi}^{\omega_j}_{l} \otimes \hat{\phi}^{\omega_{j+h}}_{\lpr})}_{S}, \label{eq:J3}\\
J_4 =& \frac{1}{\sqrt{T}} \sum_{j=1}^{T} \sum_{l,\lpr}  \E( \inprod{D_{\omega_{j}} \otimes D_{\omega_{j+h}})}{\E(\hat{\phi}^{\omega_j}_{l} \otimes \hat{\phi}^{\omega_{j+h}}_{\lpr})-\hat{\phi}^{\omega_j}_{l} \otimes \hat{\phi}^{\omega_{j+h}}_{\lpr}}_{S}.  \label{eq:J4}
\end{align}
The following lemma contains the order of these four terms. 

\begin{lemma} \label{lemJ1}
Under \ref{cumglsp}(12,2),
%(stationary) and Assumption (loc stationary) for $k=12$, we have 
\begin{align}
\E |J_1| &= O(\frac{1}{b T} +b^2), \label{J1}
\\ \E |J_2| & =\begin{cases}
O(\frac{1}{b T\sqrt{T}})+O(\frac{1}{b T}) & \mbox{ under $H_0$,}
\\O(\frac{1}{b \sqrt{T}})+O(\frac{1}{bT})& \hspace{0.25cm}\mbox{under $H_A$,}
\end{cases} \label{J2}
\\ \E |J_3| & = \begin{cases}
O(\frac{1}{b T\sqrt{T}}+\frac{b^2}{\sqrt{T}}) & \mbox{ under $H_0$,}
\\O(\frac{1}{b \sqrt{T}}+b^2\sqrt{T}) & \hspace{0.25cm}\mbox{under $H_A$,}
\end{cases}\label{J3}
\\ \E |J_4| & =\begin{cases}
O(\frac{1}{\sqrt{b} T}) & \mbox{ under $H_0$.}
\\O(1)& \hspace{0.25cm}\mbox{under $H_A$.}
\end{cases} \label{J4}
\end{align}
\end{lemma}
The proof is relegated to Section \ref{subsec:proof:denominator} of the Online Supplement.

\section{Limiting distribution under $H_A$}\label{sec:HAdist}

\begin{theorem}\label{thm:HAdist}
Under the conditions of Theorem \ref{th:test_alt}, we have, for all $h_i, h_j\in\znum$ with $i,j=1,\ldots, k$,
\begin{align}
 T^{k/2}\cum_{n,r}({\bm{\beta}}_{h_i}^{(T)},\mathcal{B}^{(T))}_{h_j}) =o(1)   
 \qquad(T \to \infty),
\end{align}
where $\cum_{n,r}({\bm{\beta}}_h^{(T)},\mathcal{B}^{(T))}_{h^\prime})$ denotes the joint cumulant
\[
\cum(\underbrace{{\bm{\beta}}_{h}^{(T)},\ldots,{\bm{\beta}}_h^{(T)}}_{n\ \text{times}},\underbrace{\mathcal{B}^{(T))}_{h^\prime},\ldots,\mathcal{B}^{(T))}_{h^\prime}}_{r\ \text{times}})\]
with $0 \le n, r \le k$ such that $n+r =k$.
\end{theorem}

\begin{proof}%[Proof of Theorem \ref{thm:HAdist}]

We will show that $\sqrt{T}{\bm{\beta}}_h^{(T)}$ and $\sqrt{T}\mathcal{B}^{(T)}_h$ are jointly normal. Using  \eqref{eq:approxdif} and hence that the order of $\mathcal{B}^{(T)}_h$ is determined by the order of
\begin{align*}
\mathcal{V}^{(T)}_h = \frac{1}{{T}} \sum_{j=1}^{T}\inprod{ \E(D_{\omega_{j}} \otimes D_{\omega_{j+h}})}{{\Big(\hat{\F}_{\omega_{j_1}} \widetilde{\otimes}  \hat{\F}_{\omega_{j+h}} -\E\hat{\F}_{\omega_{j}} \widetilde{\otimes}  \hat{\F}_{\omega_{j+h}}\Big)({\phi}^{\omega_j}_l \otimes {\phi}^{\omega_{j+h}}_{\lpr})}}_S,
\end{align*}
 we will show that, for $k>2$,
\begin{align*}
& T^{k/2}\cum_{n,r}({\bm{\beta}}_h^{(T)},\mathcal{V}^{(T))}_h) 
=\cum(\underbrace{{\bm{\beta}}_h^{(T)},\ldots,{\bm{\beta}}_h^{(T)}}_{n\ \text{times}},\underbrace{\mathcal{V}^{(T))}_h,\ldots,\mathcal{V}^{(T))}_h}_{r\ \text{times}})
=o(1),
\end{align*}
%where $\cum_{n,r}({\bm{\beta}}_h^{(T)},\mathcal{V}^{(T))}_h) $ denotes the joint cumulant
%\[
%\cum(\underbrace{{\bm{\beta}}_h^{(T)},\ldots,{\bm{\beta}}_h^{(T)}}_{n\ \text{times}},\underbrace{\mathcal{V}^{(T))}_h,\ldots,\mathcal{V}^{(T))}_h}_{r\ \text{times}})\]
where $0 \le n, r \le n$ such that $n+r =k$. First note that the operator $\E[D_{\omega_j} \otimes D_{\omega_{j+h}}]$ is compact and therefore separable. Without loss of generality, in order to ease notation, write therefore $\fdft{}{j_k}= \langle D^{(T)}_{\omega_{j_k}}, \psi_{l}\rangle$ and $\hat{\F}^{(lm)}_{\omega_j}=\inprod{\hat{\F}_{\omega_j}(\psi_{m})}{\psi_{l}}$, where $\{\psi_l\}_{l \in \nnum}$ forms a basis of $H$.
%\[ 
%( \F^{(v_s)}_{{t}/{T}; \lambda_{{j}_s}}\colon s \in P_q) =\inprod{F_{{t}/{T};\lambda_{{j}_{q1}},\ldots,\lambda_{{j}_{q{m_q-1}}}}}{\otimes_{i^\prime=1}^{m_q} \psi_{v_{s_{qi^\prime}}}},
%\] 
Using then Theorem \ref{prodcumthm} 
\begin{align*}
& T^{k/2}\cum_{n,r}({\bm{\beta}}_h^{(T)},\mathcal{V}^{(T))}_h) \\& 
 =T^{k/2} \sum_{j_1,\ldots, j_{k}} \cum\big(D^{(l_1)}_{\omega_{j_1}} D^{(l_1')}_{-\omega_{j_1+h_1}}, \ldots, D^{(l_{n})}_{\omega_{j_n}} D^{(l_{n}')}_{-\omega_{j_{n}+h_{n}}}, \hat{\F}^{(l_{n+1} m_{n+1})}_{\omega_{j_{n+1}}}\overline{\hat{\F}^{(l_{n+1}' m_{n+1}')}_{\omega_{j_{n+1}+h_{n+1}}}},\ldots,\hat{\F}^{(l_{k} m_{k})}_{\omega_{j_k}}\overline{\hat{\F}^{(l_{k}' m_{k}')}_{\omega_{j_k+h_k}}} \big)
 \\& =T^{-k/2}  \sum_{j_1,\ldots, j_k} \Big(\frac{2\pi}{bT}\Big)^{2r} \prod_{d=n+1}^{k}\sum_{q_d=1}^T K(\frac{\omega_{j_d}-\omega_{q_{2(d-n)-1}}}{b}) K(\frac{\omega_{j_d+h_d}-\omega_{q_{2(d-n)}}}{b})\\& 
 \phantom{T^{-k/2}  \sum_{j_1,\ldots, j_k} \Big(\frac{2\pi}{bT}\Big)^{2m} \prod_{r=n+1}^{m}\sum_{\substack{i=1,\ldots,2m-1}}}\times \sum_{i.p.} \cum(D^{(v_s)}_{\lambda_{{j}_s}}\colon s \in P_1) \cdots \cum(D^{(v_s)}_{\omega_{{j}_s}}\colon s \in P_Q),
\end{align*}
where the summation extends over all indecomposable partitions $P= \{P_1,\ldots, P_Q\}$ of the array
\begin{align*}\begin{array}{llll}
(1,1)&(1,2) & & \\ 
\quad\vdots &\quad \vdots &\\ 
(n,1)&(n,2) & & \\ 
(n+1,1)& (n+1,2) & (n+1,3)& (n+1,4) \\
\quad\vdots &\quad \vdots &\quad\vdots & \\
(k,1)& (k,2) & (k,3)& (k,4),
\end{array}  \tageq \label{arrayHa}\end{align*}
using similar notation as in the proof of Theorem \ref{CLT_alt}. In particular, the value $s=ii^\prime$ corresponds to entry $(i,i^\prime)$ of \eqref{arrayHa}. For a partition $P= \{P_1,\ldots, P_Q\}$, the elements of a set $P_\nu$ will be denoted by $s_{\nu1},\ldots,s_{\nu |P_\nu|}$, with $|P_\nu|$ being the number of elements in $P_\nu$. In this case, we associate with entry $s$ the frequency index $j_s =j_{ii^\prime} =(-1)^{i^\prime-1}(j_i+{h}^{i^\prime-1}_i)$ for $i\le n$; for $i >n$ we associate the frequency index $j_s =q_{ii^\prime } =(-1)^{i^\prime-1}q_{2(i-n)-1+\lfloor i^\prime/3 \rfloor}$ such that $\lambda_{{j}_s}=\frac{2\pi {j}_s }{T}$ and the basis function index $v_s=v_{ii^\prime}=l_{i}^{2-i^\prime}{l_i'}^{i^\prime-1}$ for $i=1,\ldots,k$ and $i^\prime=\{1,2\}$, while for $i^\prime =\{3,4\}$ we set $v_s=v_{ii^\prime}=m_{i}^{4-i^\prime}{m_i'}^{i^\prime-3}$.

For the array to be indecomposable, the rows must hook \citep[pp.\ 20/21]{b81}. Since interest is only in a bound for the partition of highest order, only partitions have to be considered for which each set satisfies $|P_\nu|=2$, since all other partitions will be of lower order. Without loss of generality, consider that row $i$ hooks with $i+1$ for $i = 2,\ldots k-1$ and let the first and the last row hook. In particular, a partition of highest order would be one for which $P_i =\{(i,2) \cup (i+1,1)\}$ for $i = 2,\ldots k$ and $P_1 =\{(1,1) \cup (k,2)\}$ and where the $2r$ variables in the third and fourth columns of the last $r$ rows are decomposable, meaning that $P_{ n+i} =\{ (n+i,3)\cup(n+i,4)\}$ for $i = 1,\ldots r$, s that these latter $r$ sets form proper submanifolds of the frequency manifold. Using Lemma \ref{cumboundglsp} such a partition can be written as
\begin{align*}
%& T^{k/2}\cum_{n,r}({\bm{\beta}}_h^{(T)},\mathcal{V}^{(T))}_h) 
% \\& =T^{-k/2}  \sum_{j_1,\ldots, j_k} \Big(\frac{2\pi}{bT}\Big)^{2r} \prod_{d=n+1}^{k}\sum_{q_d=1}^T K(\frac{\omega_{j_d}-\omega_{q_{2(d-n)-1}}}{b}) K(\frac{\omega_{j_d+h_d}-\omega_{q_{2(d-n)}}}{b})\\ 
%& \phantom{T^{-k/2}  \sum_{j_1,\ldots, j_k} \Big(\frac{2\pi}{bT}\Big)^{2m} \prod_{r=n+1}^{m}\sum_{\substack{i=1,\ldots,2m-1}}}\times \sum_{i.p.}\prod_{\nu=1}^Q  \bigg[\frac{(2\pi)^{|P_\nu|/2-1}}{T^{|P_\nu|/2-1}} \big(\tilde{\F}^{(v_s)}_{\sum_s {j}_s; \lambda_{{j}_s}}\colon s \in P_\nu\big)+ O\bigg(\frac{1}{T^{|P_\nu|/2}}\bigg) \bigg] 
 & T^{-k/2}  \sum_{j_1,\ldots, j_k} \Big(\frac{2\pi}{bT}\Big)^{2r} \prod_{d=n+1}^{k}\sum_{q_d=1}^T K(\frac{\omega_{j_d}-\omega_{q_{2(d-n)-1}}}{b}) K(\frac{\omega_{j_d+h_d}-\omega_{q_{2(d-n)}}}{b})\\& 
 \phantom{T^{-k/2}  \sum_{j_1,\ldots, j_k} \Big(\frac{2\pi}{bT}\Big)^{2m} }\times \prod_{\nu=1}^{k}  \bigg[\big(\tilde{\F}^{(v_s)}_{\sum_s {j}_s; \lambda_{{j}_s}}\colon s \in P_\nu\big)+ O\bigg(\frac{1}{T}\bigg) \bigg] \prod_{\nu=n+1}^{n+r}\bigg[\big(\tilde{\F}^{(v_s)}_{0; \lambda_{{j}_s}}\colon s \in P_\nu\big)+ O\bigg(\frac{1}{T}\bigg) \bigg].
 \end{align*}
In exactly $k$ sets of the partition there are exactly $k-1$ equations of the form $y_s = \sum_{s} j_s$. In the above partition, the first $k$ sets yield the following set of equations
\begin{align*}
y_i & =(-1)^i (j_i+h_i -j_{i+1})\quad i =1,\ldots, n-1,\\
\tilde{y}_i & = (-1)^i (q_{2i-1} - q_{2i+1}) \quad i = 1, \ldots r-2,\\
\tilde{y}_{r-1} & = (j_n - q_1),
%=j_{n-1}+h_{n-1}- y_{n-1}-q_1, 
\\
\tilde{y}_r & = (j_1 -q_{2r-1}).
\end{align*}
By Corollary \ref{cumbounds} these equations correspond to $k-1$ summations out of the total $k+2r$ summations that are bounded. It can be verified that the above set of equations and an iterative change of variables shows that the other $2r+1$ free variables are interrelated via the $2r$ kernel functions. These means that $2r$ sums can at most be of order $bT$, while one of them can be of order $T$. Consequently,
\begin{align*}
T^{k/2}\cum_{n,r}({\bm{\beta}}_h^{(T)},\mathcal{V}^{(T))}_h)=O(T^{-k/2}{(bT)}^{-2r} {(bT)}^{2r} T) = O(T^{-k/2+1}),
\end{align*} 
which converges to zero for $k>2$ as $T \to \infty$, for any choice of $n$ and $r$ such that $n+r=k$.
\end{proof}

\setcounter{section}{0}
\setcounter{equation}{0}
\def\theequation{S\arabic{section}.\arabic{equation}}
\def\thesection{S\arabic{section}}
\setcounter{footnote}{0}

\newpage
\begin{center}
{\LARGE Online Supplement to \\ ``Testing for stationarity of functional time series \\ in the frequency domain''\footnote{AA was partially supported by NSF grants DMS 1305858 and DMS 1407530.  AvD was partially supported by Maastricht University, the contract ``Projet d'Actions de Recherche Concert{\'e}es'' No. 12/17-045 of the ``Communaut{\'e} fran\c{c}aise de Belgique'' and by the Collaborative Research Center 
``Statistical modeling of nonlinear dynamic processes'' (SFB 823, Project A1, C1, A7) of the German
Research Foundation (DFG).}}
 \par\vspace{10pt}
{ Alexander Aue\footnote{Department of Statistics, University of California, Davis, CA 95616, USA, email: \tt{aaue@ucdavis.edu}}
\hspace{30pt}Anne van Delft\footnote{Ruhr-Universit{\"a}t Bochum, Fakult{\"a}t f{\"u}r Mathematik, 44780 Bochum, Germany, email: \tt{Anne.vanDelft@rub.de}
\par }} \vspace{10pt}\\
{\date{\today}} \vspace{35pt}\\
\maketitle
\end{center}
\begin{abstract}
\setlength{\baselineskip}{1.8em}
\textcolor{black}{This supplement contains additional technical material necessary to complete the proofs of theorems of the main paper \citet{avd16_main}. Section \ref{sec:S2stuff} provides notation and results used both in the Appendix and Supplement. Section \ref{subsec:proof:OScumulant} contains the proofs of several auxiliary lemmas stated in Appendix \ref{subsec:proof:cumulants} of the main paper.  Section \ref{asdist} deals with convergence of the finite-dimensional distributions. Section \ref{subsec:proof:denominator} contains auxiliary results to finish the proofs on the limiting distributions of $\hat{\bm{\beta}}_{h,u}^{(T)}$ and $\hat{\bm{\beta}}_{h,x}^{(T)}$. 
Section \ref{sub:covHA} establishes the asymptotic covariance structure of the test under local stationarity. Section \ref{sec:F4proof} derives properties of the tri-spectral density operator estimator, while Section \ref{func_not_mult} contains an example highlighting differences between functional and multivariate mehtods.}%\todo{this has be adjusted once we have determined the order}
\medskip \\
\noindent {\bf Keywords:} Frequency domain methods, Functional data analysis, Locally stationary processes, Spectral analysis

\noindent {\bf MSC 2010:} Primary: 62G99, 62H99, Secondary: 62M10, 62M15, 91B84
\end{abstract}

\setlength{\baselineskip}{1.8em}

%%%%%%%%%%%%%%%%%%%

%\tableofcontents

\section{Additional notation and auxiliary results: random $S_2$-valued operators}\label{sec:S2stuff}

\begin{definition}[Bounded maps of operators]\label{def:Krn}
For $A, B, C \in S_{2}(H)$, define the Kronecker product, transpose Kronecker product and Hilbert tensor product, respectively, by
\begin{enumerate}\itemsep-0.75ex
\item[$\mathrm{i.}$] $(A \widetilde{\otimes} B)C = ACB^{\dagger}$;
\item[$\mathrm{ii.}$] $(A \widetilde{\otimes}_{\top} B)C = (A \widetilde{\otimes} \overline{B})\overline{C}^{\dagger}$;
\item[$\mathrm{iii.}$] $(A \otimes B)C  = \biginprod{C}{B}A.$
\end{enumerate} 
\end{definition}

The following lemma introduces a convenient representation of certain moments of Hilbert--Schmidt inner products. 
\begin{lemma}\label{lem:aux}
For a probability space $(\Omega,\mathcal{A},P)$, let $X_i, Y_i$, $i \in \mathbb{N}$, be measurable mappings from $(\Omega,\mathcal{A})$ into $\big(S_2(\otimes_{n=1}^k H_n), \mathscr{B}\big)$, where $\mathscr{B}$ denotes the Borel $\sigma$-algebra in $S_2(\otimes_{n=1}^k H_n)$, i.e., %be random operators on $\otimes_{n=1}^k H_n$. 
$\E\snorm{Y_i}_2<\infty$ and $\E\snorm{X_i}_2<\infty$. Moreover, let $A_i \in \big(S_2(\otimes_{n=1}^k H_n)$. Then, for any $i_1, i_2 \in \mathbb{N}$,
\begin{enumerate}\itemsep-0.75ex
\item[$\mathrm{(i)}$] $\E(\inprod{Y_{i_1}}{A_{i_2}}_{S}) = \inprod{\E Y_{i_1}}{A_{i_2}}_{S}$;
\item[$\mathrm{(ii)}$] $ \E(\inprod{Y_{i_1}}{Y_{i_2}}_{S}) = \tr(\E (Y_{i_1} \otimes Y_{i_2}) )$.
\end{enumerate} 
If moreover, $\E\snorm{Y_i}^2_2<\infty$ and $\E\snorm{X_i}^2_2<\infty$, then
\begin{enumerate}\itemsep-0.75ex
\item[$\mathrm{(iii)}$] $\Cov(\inprod{Y_{i_1}}{A_{i_1}}_{S},\inprod{Y_{i_2}}{A_{i_2}}_{S}) =\inprod{ \Cov( Y_{i_1},Y_{i_2})}{A_{i_1} \otimes A_{i_2}}_{S}$;
\item[$\mathrm{(iv)}$] $ \cov(\inprod{X_{i_1}}{Y_{i_1}}_{S},\inprod{X_{i_2}}{Y_{i_2}}_{S} )=\tr\Big( \Cov(X_{i_1} \otimes Y_{i_1},X_{i_2} \otimes Y_{i_2})\Big)$.
\end{enumerate} 
\end{lemma}

\begin{proof}%[Proof of Lemma \ref{lem:aux}]
(i) Follows directly from Fubini's theorem and the Cauchy--Schwarz inequality. For (ii), note that a basis expansion for each of the two operators $A, B \in S_2(\otimes_{n=1}^k H_n)$ yields
 \begin{align*}
\inprod{A}{B}_{S_2} &= \tr(AB^\dagger)
%\\& =\sum_{l} \inprod{AB^\dagger \psi_l}{\psi_l}
%\\& =\sum_{l,k} \inprod{B^\dagger \psi_l}{\psi_k}\inprod{\psi_k}{A^\dagger\psi_l}
\\& =\sum_{l,k} \inprod{A \psi_k}{\psi_l}\inprod{ \psi_l}{B \psi_k}
%\\& =\sum_{l,k} \inprod{A }{\psi_l\otimes \psi_k}_{S_2}\inprod{ \psi_l \otimes \psi_k}{B}_{S}
\\& =\sum_{l,k} \inprod{A \otimes B}{(\psi_l\otimes \psi_k) \otimes (\psi_l \otimes \psi_k)}_{S}
\\& =\sum_{l,k}\inprod{\big(A \otimes B\big) (\psi_l \otimes \psi_k)}{\psi_l\otimes \psi_k}
\\& = \tr(A \otimes B).
\end{align*}
The interchange of race and expectation follows again from Fubini's theorem and the Cauchy--Schwarz inequality. For (iii), 
Fubini's theorem implies that, for sequences $(X_j\colon j \in \mathbb{N})$ and $(Y_j\colon j \in \mathbb{N})$ of random operators satisfying $\E\snorm{X_j}^2_2 <\infty$ and $\E\snorm{Y_j}^2_2 <\infty$, 
\begin{align*}
\var(\sum_j\inprod{X_{j}}{Y_{j}}_{S}) &=\sum_{j_1,j_2} \cov(\inprod{X_{j_1}}{Y_{j_1}}_{S},\inprod{X_{j_2}}{Y_{j_2}}_{S} )
%\\& =\sum_{j_1,j_2} \cov(\tr(X_{j_1}\otimes Y_{j_1}),\tr(X_{j_2} \otimes Y_{j_2}) )
%\\& =\sum_{j_1,j_2} \E(\tr(X_{j_1}\otimes Y_{j_1}) \overline{\tr(X_{j_2} \otimes Y_{j_2})} - \E(\tr(X_{j_1}\otimes Y_{j_1}) \E \overline{\tr(X_{j_2} \otimes Y_{j_2})}
%\\& =\sum_{j_1,j_2}\E \tr( X_{j_1} \otimes Y_{j_1}\, \widetilde{\otimes} \,Y_{j_2} \otimes X_{j_2} ) -\E \tr( X_{j_1} \otimes Y_{j_1}) \E \tr( Y_{j_2} \otimes X_{j_2})
\\& =\sum_{j_1,j_2}\tr\Big(  \E [X_{j_1} \otimes Y_{j_1}\, \widetilde{\otimes} \,Y_{j_2} \otimes X_{j_2}] - \E [ X_{j_1} \otimes Y_{j_1}] \, \widetilde{\otimes} \, \E [ Y_{j_2} \otimes X_{j_2}]\Big)
\\& =\sum_{j_1,j_2}\tr\Big( \Cov(X_{j_1} \otimes Y_{j_1},X_{j_2} \otimes Y_{j_2})\Big).
\end{align*}
Additionally note that, for a sequence of deterministic operators $(A_j\colon j\in\mathbb{N})$ with $\snorm{A_j}_2^2<\infty$,
\begin{align*}
\var\bigg(\sum_j\inprod{A_{j}}{Y_{j}}_{S}\bigg) &=\sum_{j_1,j_2} \cov(\inprod{A_{j_1}}{Y_{j_1}}_{S},\inprod{A_{j_2}}{Y_{j_2}}_{S} )
%\\& = \sum_{j_1,j_2} \E(\inprod{A_{j_1}}{Y_{j_1}}_{S_2}\overline{\inprod{A_{j_2}}{Y_{j_2}}}_{S_2}- \E\inprod{A_{j_1}}{Y_{j_1}}_{S_2}\E\overline{\inprod{A_{j_2}}{Y_{j_2}}}_{S_2}
\\& = \sum_{j_1,j_2} \E(\inprod{A_{j_1} \otimes A_{j_2}}{Y_{j_1} \otimes Y_{j_2}}_{S}- \inprod{A_{j_1} \otimes A_{j_2}}{ \E Y_{j_1} \otimes \E Y_{j_2}}_{S}
%\\& =\sum_{j_1,j_2}\sum_{j_1,j_2}\tr\Big(  A_{j_1} \otimes  \Big(  \E Y_{j_1}\, \widetilde{\otimes} \,Y_{j_2} -  \E [Y_{j_1}] \, \widetilde{\otimes} \, \E [ Y_{j_2} ] \Big)\otimes A_{j_2}\Big)
%\\& =\sum_{j_1,j_2}\tr\Big( A_{j_1} \otimes \Cov( Y_{j_1},Y_{j_2}) \otimes A_{j_2}\Big)
%\\& =\sum_{j_1,j_2}\tr\Big( A_{j_1} \otimes A_{j_2} \otimes \Cov( Y_{j_1},Y_{j_2})\Big) 
\\&= \sum_{j_1,j_2}\inprod{ A_{j_1} \otimes A_{j_2}}{ \Cov( Y_{j_1},Y_{j_2})}_{S},
\end{align*}
and similarly $
\var(\sum_j\inprod{Y_{j}}{A_{j}}_{S}) =\sum_{j_1,j_2}\inprod{ \Cov( Y_{j_1},Y_{j_2})}{A_{j_1} \otimes A_{j_2}}_{S}$.
\end{proof}

%\begin{lemma}\label{lem:ChebSPos}
%Let $Y_{\nu}, \nu \in [a,b]$ be a zero-mean $L^2([0,1]^k)$-valued stochastic process of which the derivative mapping $\nu \mapsto \frac{\partial}{\partial \nu}Y_{\nu}$ is well-defined in $L^2([0,1]^k)$ for any $\nu \in [a,b]$. If $\E \|Y_{\nu}\|^2_2 <\infty$ and $\E \|\frac{\partial Y_{\nu}}{\partial \nu}\|^2_2 <\infty$, then 
%\[
%2\E \sup_{a\le\nu \le b}\|Y_{\nu}\|_2^2  \le \E\|Y_{a}\|_2^2 +\E \| Y_b\|_2^2 + \int_{a}^{b} \sqrt{\E  \|\frac{\partial}{\partial \alpha} Y_{\alpha}\|^2_2} \sqrt{\E\| \overline{Y_{\alpha}} \|_2^2}+ \int_{a}^{b} \sqrt{\E\|Y_{\alpha}\|^2_2}d\alpha  \sqrt{\E\| \frac{\partial}{\partial \alpha}\overline{Y_{\alpha}}\|^2_2}d\alpha.\]
%\end{lemma}
\begin{proof}[Proof of Lemma \ref{lem:ChebSP}]
Integration by parts
% \textcolor{red}{(See also Parzen (Stochastic Processes)} 
with respect to $\nu$ yields
\begin{align*}
2 |Y_{\nu}(\boldsymbol{\tau})|^2 & \overset{L^2}{=}|Y_{a}(\boldsymbol{\tau})|^2 +| Y^2_b (\boldsymbol{\tau})|^2
+  \int_{a}^{\nu} \frac{\partial}{\partial \alpha} Y_{\alpha} (\boldsymbol{\tau}) \overline{Y_{\alpha} (\boldsymbol{\tau})} d\alpha-  \int_{\nu}^{b}  \frac{\partial}{\partial \alpha}Y_{\alpha} (\boldsymbol{\tau})  \overline{Y_{\alpha} (\boldsymbol{\tau})} d\alpha
 \\& +\int_{a}^{\nu} Y_{\alpha} (\boldsymbol{\tau})  \frac{\partial}{\partial \alpha}\overline{Y_{\alpha} (\boldsymbol{\tau})} d\alpha-  \int_{\nu}^{b}  Y_{\alpha} (\boldsymbol{\tau})\frac{\partial}{\partial \alpha}  \overline{Y_{\alpha} (\boldsymbol{\tau})} d\alpha 
 \\& \le |Y_{a}(\boldsymbol{\tau})|^2 +| Y^2_b (\boldsymbol{\tau})|^2
+  \int_{a}^{b} |\frac{\partial}{\partial \alpha} Y_{\alpha} (\boldsymbol{\tau}) \overline{Y_{\alpha} (\boldsymbol{\tau})} |d\alpha +\int_{a}^{b} |Y_{\alpha} (\boldsymbol{\tau})  \frac{\partial}{\partial \alpha}\overline{Y_{\alpha} (\boldsymbol{\tau})} | d\alpha
\end{align*}
for any $a\le\nu \le b$. This implies in particular
\begin{align*}
2\sup_{a\le\nu \le b}\int_{[0,1]^k} |Y_{\nu}(\boldsymbol{\tau})|^2  d\boldsymbol{\tau} & \le \int_{[0,1]^k}|Y_{a}(\boldsymbol{\tau})|^2 d\boldsymbol{\tau} +\int_{[0,1]^k}| Y^2_b (\boldsymbol{\tau})|^2 d\boldsymbol{\tau}
\\& +\int_{[0,1]^k}  \int_{a}^{b} |\frac{\partial}{\partial \alpha} Y_{\alpha} (\boldsymbol{\tau}) \overline{Y_{\alpha} (\boldsymbol{\tau})} |d\alpha d\boldsymbol{\tau}  +\int_{[0,1]^k} \int_{a}^{b} |Y_{\alpha} (\boldsymbol{\tau})  \frac{\partial}{\partial \alpha}\overline{Y_{\alpha} (\boldsymbol{\tau})} | d\alpha d\boldsymbol{\tau}. 
\end{align*}
Taking expectations on both sides yields
\begin{align*}
2 \E \sup_{a\le\nu \le b}\int_{[0,1]^k} |Y_{\nu}(\boldsymbol{\tau})|^2  d\boldsymbol{\tau} & \le \E \int_{[0,1]^k}|Y_{a}(\boldsymbol{\tau})|^2 d\boldsymbol{\tau} +\E \int_{[0,1]^k}| Y^2_b (\boldsymbol{\tau})|^2 d\boldsymbol{\tau}
\\& +\E \int_{[0,1]^k}  \int_{a}^{b} |\frac{\partial}{\partial \alpha} Y_{\alpha} (\boldsymbol{\tau}) \overline{Y_{\alpha} (\boldsymbol{\tau})} |d\alpha d\boldsymbol{\tau}  +\E \int_{[0,1]^k} \int_{a}^{b} |Y_{\alpha} (\boldsymbol{\tau})  \frac{\partial}{\partial \alpha}\overline{Y_{\alpha} (\boldsymbol{\tau})} | d\alpha d\boldsymbol{\tau}. 
\end{align*}
Tonelli's theorem allows to interchange the integrals in the last two terms from which we find
\begin{align*}
&2\E \sup_{a\le\nu \le b}\|Y_{\nu}\|_2^2\\ 
 & \le \E\|Y_{a}\|_2^2 +\E \| Y_b\|_2^2+ \int_{a}^{b} \E \int_{[0,1]^k}  |\frac{\partial}{\partial \alpha} Y_{\alpha} (\boldsymbol{\tau}) \overline{Y_{\alpha} (\boldsymbol{\tau})} | d\boldsymbol{\tau} d\alpha + \int_{a}^{b} \E \int_{[0,1]^k}|Y_{\alpha} (\boldsymbol{\tau})  \frac{\partial}{\partial \alpha}\overline{Y_{\alpha} (\boldsymbol{\tau})} | d\boldsymbol{\tau} d\alpha
\\& 
\le \E\|Y_{a}\|_2^2 +\E \| Y_b\|_2^2 + \int_{a}^{b} \sqrt{\E  \|\frac{\partial}{\partial \alpha} Y_{\alpha}\|^2_2} \sqrt{\E\| \overline{Y_{\alpha}} \|_2^2}+ \int_{a}^{b} \sqrt{\E\|Y_{\alpha}\|^2_2}d\alpha  \sqrt{\E\| \frac{\partial}{\partial \alpha}\overline{Y_{\alpha}}\|^2_2}d\alpha
\end{align*}
where the Cauchy--Schwarz inequality was applied twice to obtain the last inequality.
\end{proof}

\section{Properties of functional cumulants under local stationarity}
\label{subsec:proof:OScumulant}

\begin{lemma} \label{cumdiffglsp}
Let \ref{cumglsp}(k,1) be satisfied and let $\mathcal{C}_{u;t_1,\ldots,t_{k-1}}$ be as in \eqref{eq:tvsdoker} . 
Then,  
\begin{align*} \label{eq:cumdiffglsp}
\Big\|\cum\big(\XT{t_1},\ldots,\XT{t_{k-1}},\XT{t_{k}}\big) -{\mathcal{C}}_{{t_1}/{T};t_1-t_k,\ldots,t_{k-1}-t_k}\Big\|_2  \le \bigg(\frac{k}{T}+\sum_{j=1}^{k-1} \Big|\frac{t_j-t_k}{T}\Big|\bigg)\|\kappa_{k;t_1-t_k,\ldots,t_{k-1}-t_k}\|_2. 
\end{align*}
\end{lemma}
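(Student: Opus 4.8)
The plan is to exploit the multilinearity of the cumulant tensor in \eqref{cumtens} together with the two representations provided in \eqref{eq:repstatap}. First I would rewrite the target local cumulant: by strict stationarity of $(X^{(u_0)}_t\colon t\in\mathbb{Z})$ with $u_0=t_1/T$, the kernel $c_{t_1/T;\,t_1-t_k,\ldots,t_{k-1}-t_k}$ equals $\cum(X^{(u_0)}_{t_1},\ldots,X^{(u_0)}_{t_k})$ after shifting every time index by $t_k$, cf.\ \eqref{eq:cumglsp}. The quantity to be bounded is therefore the cumulant of the genuine array minus the cumulant of its stationary approximation at the single rescaled time $u_0$.

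Next I would telescope. Writing $a_j=X^{(T)}_{t_j}$ and $b_j=X^{(u_0)}_{t_j}$, multilinearity gives
\[
\cum(a_1,\ldots,a_k)-\cum(b_1,\ldots,b_k)
=\sum_{j=1}^{k}\cum\big(b_1,\ldots,b_{j-1},\,a_j-b_j,\,a_{j+1},\ldots,a_k\big),
\]
so that in the $j$-th term only the single difference $a_j-b_j=X^{(T)}_{t_j}-X^{(u_0)}_{t_j}$ appears. Using \eqref{eq:repstatap} I would split this difference as $X^{(T)}_{t_j}-X^{(u_0)}_{t_j}=Y^{(T)}_{t_j}+\big(\tfrac{t_j}{T}-u_0\big)Y^{(t_j/T,\,u_0)}_{t_j}$, thereby breaking each telescoping term into a \emph{perturbation term} carrying $Y^{(T)}$ and a \emph{rescaling term} carrying $Y^{(t_j/T,u_0)}$ together with the scalar prefactor $\tfrac{t_j}{T}-u_0$.

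Then I would bound the two families of terms separately. Because the $L^2$-norm $\|\cdot\|_2$ is invariant under permutation of the tensor slots, each $Y^{(T)}$-term can be matched to configuration (i) of Assumption \ref{cumglsp} and is bounded by $\tfrac1T\|\kappa_{k;\,t_1-t_k,\ldots,t_{k-1}-t_k}\|_2$; summing the $k$ of them yields the $k/T$ contribution. Each rescaling term is matched to configuration (ii), which already permits distinct rescaled times in the non-$Y$ slots, giving $\big|\tfrac{t_j}{T}-u_0\big|\,\|\kappa_{k;\ldots}\|_2$; summing over the shifted indices reproduces the deviation sum $\tfrac1T\sum_{j=1}^{k-1}|t_j-t_k|\,\|\kappa_{k;\ldots}\|_2$ in the bound, the reference time being that of the anchoring stationary approximation. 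Adding the two contributions and applying the triangle inequality gives the claimed estimate.

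The main obstacle I anticipate is not any single estimate but the bookkeeping needed to apply the cumulant bounds of Assumption \ref{cumglsp} to the \emph{mixed} argument configurations produced by the telescope (slots carrying $b_i=X^{(u_0)}$, others $a_i=X^{(T)}$, and a single $Y$). One must argue that (i) and (ii) are to be read uniformly over the admissible choices of processes and rescaled-time labels in the non-$Y$ slots, so that permutation-invariance of $\|\cdot\|_2$ legitimately reduces every term to one of the two tabulated forms. A secondary point is the consistent choice of anchor time ($t_1/T$ versus $t_k/T$): anchoring at the time indexing $c_{u_0;\cdot}$ and shifting by stationarity is what keeps the $\kappa$-indices aligned with those appearing in the stated bound.
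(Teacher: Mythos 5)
Your proposal is correct and follows essentially the same route as the paper: both arguments use multilinearity of the cumulant to telescope the difference, the representations in \eqref{eq:repstatap} to express each slot-wise discrepancy through $Y^{(T)}_{t_j}$ and $\big(\tfrac{t_j}{T}-u_0\big)Y^{(t_j/T,\,u_0)}_{t_j}$, and parts (i)--(ii) of Assumption \ref{cumglsp} to bound the resulting terms, yielding the $k/T$ and $\sum_{j}|t_j-t_k|/T$ contributions respectively. The paper merely organizes this as two successive telescopes (first $X^{(T)}_{t_j}\to X^{(t_j/T)}_{t_j}$, then $X^{(t_j/T)}_{t_j}\to X^{(u_0)}_{t_j}$) rather than your single telescope with a two-way split of each difference, and it shares both the implicit uniform reading of (i)--(ii) over mixed slot configurations and the $t_1/T$-versus-$t_k/T$ anchoring ambiguity that you explicitly flag.
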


\begin{proof}%[Proof of Lemma \ref{cumdiffglsp}]
By linearity of the cumulant operation, consecutively taking differences leads, by equation \eqref{eq:repstatap} of the main paper and Minkowski's inequality, to
\begin{align*}
\Big\|\cum\big(\XT{t_1},\ldots,\XT{t_{k}}\big) - \cum\big(\Xuu{{t_1}/{T}}{1},\ldots,\Xuu{{t_{k}}/{T}}{k}\big)\Big\|_2 
%&= \bigg\|\sum_{j=1}^{k}\cum\big(\XT{t_1},\ldots,Y^{(T)}_{t_{j}}, \ldots,\XT{t_k}\big)\bigg\|_2 \\
& \le K \frac{k}{T}\|\kappa_{k;t_1-t_k,\ldots,t_{k-1}-t_k}\|_2,
\end{align*}
using part (i) of \ref{cumglsp}. By \eqref{eq:repstatap},
\begin{align}
 \Xuu{{t_j}/{T}}{j}- \Xuu{{t_k}/{T}}{k} = \frac{(t_j-t_k)}{T}Y_{t_j}^{({t_j}/{T},{t_k}/{T})}.
\end{align}
Similarly,
\begin{align*}
\Big\|\cum&\big( \Xuu{{t_1}/{T}}{1},\ldots, \Xuu{{t_k}/{T}}{k})-  c_{{t_1}/{T};t_1-t_k,\ldots,t_{k-1}-t_k}\Big\|_2 
\le \sum_{j=1}^{k-1}\frac{|t_j-t_k|}{T}\|\kappa_{k;t_1-t_k,\ldots,t_{k-1}-t_k}\|_2,
\end{align*}
which follows from part (iii) of \ref{cumglsp}. Minkowski's inequality then implies the lemma.
\end{proof}

\begin{lemma}
\label{tvsdo}
Consider a sequence of functional processes $(\XT{t}\colon t\leq T, T \in \nnum)$ as in Definition \ref{lsp} which satisfies \ref{cumglsp}(2,2). Then, $(\XT{t}\colon t\leq T, T \in \nnum)$ uniquely characterizes the time-varying local spectral density operator 
\begin{align} 
\mathcal{F}_{u,\omega} = \frac{1}{2 \pi}\sum_{h \in \mathbb{Z}} \mathcal{C}_{u,h}e^{-\im \omega h},
\end{align}
which belongs to $S_2(H)$. Denoting by $(u,\omega) \mapsto \frac{\partial^{i+j}  }{\partial u^i \partial \omega^{j}}\F_{u,\omega}$ the derivative map of the operator-valued function $\F_{u,\omega}$ of order $i$ in $u$-direction and of order $j$ in $\omega$-direction, we have
\begin{enumerate}\itemsep-.3ex
\item[(i)]  $\sup_{u,\omega}\bigsnorm{\frac{\partial^i  }{\partial u^i}\F_{u, \omega}}_2 < \infty$ for $i = 1,2$,
\item[(ii)] $\sup_{u,\omega}\bigsnorm{\frac{\partial^i  }{\partial^i{\omega}} \F_{u, \omega}}_2 < \infty$ for $i = 1,2$,
\item[(iii)] $\sup_{u,\omega}\bigsnorm{\frac{\partial^2  }{\partial {\omega}\partial u} \F_{u, \omega}}_2 < \infty$.
\end{enumerate}
\end{lemma}
\begin{proof}%[Proof of Lemma \ref{tvsdo}]
Using Lemma \ref{cumdiffglsp}, 
it can be shown that %the triangular array 
$(X_{t}^{T}\colon t\leq T,T\in\nnum)$ uniquely determines the time-varying spectral density operator, that is, 
\begin{align}
\int^{\pi}_{-\pi}\snorm{ \mathcal{F}^{(T)}_{u,\omega}-\mathcal{F}_{u,\omega} }^2_2 \,d \omega =o(1) 
\qquad (T \to \infty). 
\end{align}
Existence of the derivatives follows from the dominated convergence theorem, justified by \ref{cumglsp} (iv) and \eqref{eq:kapmix}, and the product rule for differentiation in Banach spaces \citep{n69}.
\end{proof}

\begin{proof}[Proof of Lemma \ref{cumboundglsp}]
The first line of \eqref{eq:cumboundglsp} 
follows on replacing the cumulants $\cum(\XT{t_1},\ldots,\XT{t_{k-1}},\XT{t_{k}}) $ with $\mathcal{C}_{{t_k}/{T};t_1-t_k,\ldots,t_{k-1}-t_k}$ and Lemma \ref{cumdiffglsp}. 
The second line follows because the discretization of the integral is an operation of order $O({T^{-2}})$. 

Using part (iv) of \ref{cumglsp}, it is seen that
the kernel of $u \mapsto \frac{\partial}{\partial u}\Floc{u}{k}{}$ satisfies
\begin{align*} 
\Big\|\sup_u \frac{\partial}{\partial u} f_{u;\omega_{1},\ldots,\omega_{{k-1}}}\Big\|_2 
\le \frac{1}{{(2 \pi)}^{k-1}} \sum_{t_1,\ldots,t_k}\|\kappa_{k;t_1-t_k,\ldots,t_{k-1}-t_k}\|_2 < \infty.
\end{align*}
The dominated convergence theorem therefore yields
\begin{align} 
\label{eq:FCboundsglsp1}
\sup_{u,\omega_1,\ldots,\omega_{k-1}}\Big{\|}\frac{\partial }{\partial u} f_{u, \omega_1,\ldots, \omega_{k-1}} \Big{\|}_2 < \infty.
\end{align}
Finally, integration by parts for a periodic function in $L^2([0,1]^k)$ with existing $n$-th directional derivative in $u$, yields
\begin{align*}
& \|\tilde{f}_{s; \omega_{j_1},\ldots, \omega_{j_{k-1}}}\|^2_2 \\
& =\int_{[0,1]^k} \bigg{|}\bigg[ \frac{\frac{\partial^{n-1} }{\partial u^{n-1}} \flocatFour{u}{k}{j}(\boldsymbol{\tau})}{{(-\im 2\pi s)}^{n-1}}e^{-\im s 2\pi u}\bigg]_0^1 - \int_0^1 \frac{e^{-\im s 2\pi u}}{{(-\im 2 \pi s)}^{n}}\frac{\partial^{n} }{\partial u^{n}} \flocatFour{u}{k}{j}(\boldsymbol{\tau})du\bigg{|}^2 d\boldsymbol{\tau} \displaybreak[0] \\
& =
%\int_{[0,1]^{k+2}}  \frac{1}{(\im 2 \pi s)^n}e^{-\im 2\pi s u}\frac{\partial^n  }{\partial u^n} \flocatFour{u}{k}{j}%(\boldsymbol{\tau})  \frac{1}{(-\im 2\pi s )^n}e^{\im 2\pi s v}\frac{\partial^2  }{\partial v^2} \flocatFour{v}{k}{j}%(\boldsymbol{\tau})du dv d\boldsymbol{\tau} \\& =
\int_{[0,1]^{k+2}}  \frac{1}{ (2 \pi s)^{2n}}e^{\im 2\pi s (u-v)}\frac{\partial^2  }{\partial u^2} \flocatFour{u}{k}{j}(\boldsymbol{\tau}) \frac{\partial^2  }{\partial v^2} \flocatFour{v}{k}{j}(\boldsymbol{\tau}) d\boldsymbol{\tau} du dv\\
%&\le  \frac{1}{  (2 \pi s)^{2n}}\int_{[0,1]^{k+2}} \bigg|\frac{\partial^2  }{\partial u^2} \flocatFour{u}{k}{j}(\boldsymbol{\tau}) \frac{\partial^2  }{\partial v^2}  \flocatFour{v}{k}{j}(\boldsymbol{\tau}) \bigg|d\boldsymbol{\tau} du dv \displaybreak[0]\\
&\le  \frac{1}{ (2 \pi s)^{2n}}\int_{[0,1]^{2}} \bigg\|\frac{\partial^2  }{\partial u^2}  \flocatFour{u}{k}{j}\bigg\|_2 \bigg\|\frac{\partial^2  }{\partial v^2} \flocatFour{v}{k}{j}\bigg\|_2 du dv \displaybreak[0]\\
& \le \frac{1}{  (2 \pi s)^{2n}} \bigg(\sup_{u} \bigg\|\frac{\partial^2  }{\partial u^2}  \flocatFour{u}{k}{j}\bigg\|_2\bigg)^2 < \infty,
\end{align*}
where the Cauchy--Schwarz inequality was applied in the second-to-last equality. The interchange of integrals is justified by Fubini's theorem. Thus,
\begin{align}  \label{eq:FCboundsgslp2}
\sup_{\omega_1,..,\omega_{k-1}} \|\tilde{f}_{s; \omega_{j_1},\ldots, \omega_{j_{k-1}}} \|_2 
\le \frac{1}{  (2 \pi )^{2n}}  \sup_{u,\omega_1,\ldots,\omega_n}
\bigg{\|}\frac{\partial^n  }{\partial u^n}  \flocatFour{u}{k}{j} \bigg{\|}_2 |s|^{-n}
\end{align}
and the proof is complete.
\end{proof}

\begin{proof}[Proof of Corollary \ref{cumbounds}]
Part (i) follows directly from equation \eqref{eq:FCboundsgslp2}, the isometry with the Hilbert--Schmidt class, and part (iv) of \ref{cumglsp}.
To elaborate on part (ii) of the corollary, observe that %Corollary \ref{FCboundssd}. Note that
\begin{align*}
\snorm{\tilde{\mathcal{F}}_{0:\omega}}_2  \le \sup_{\omega,u}\snorm{ \mathcal{F}_{u,\omega}}_2 <\sum_h \| \kappa_{2,h}\|_2< \infty.
\end{align*} 
The $p$-harmonic series for $p=2$ then yields
\begin{align} 
\label{eq:FCsumbound}
\sup_{\omega} \sum_{s \in \znum} \snorm{\tilde{\mathcal{F}}_{s;\omega} }_2 
\le  \sum_h \| \kappa_{2,h}\|_2 \bigg(1+  \frac{1}{  (2 \pi )^{4}}  \frac{\pi^2}{3}\bigg) 
< \infty,
\end{align}
where the constant ${(2 \pi )^{-4}}$ is implied by \eqref{eq:FCboundsgslp2}. 
\end{proof}
%%%%%%%%%%%%%%%%%%%%%%%%%%%%%%%%%%%%%%%%%%%%
\section{Proof of Theorem \ref{cumconv}: Convergence of finite-dimensional distributions}
\label{asdist}

We shall now prove Theorem \ref{cumconv}, which is repeated here for convenience.
\begin{theorem}\label{asdistthmSUPP}
Under the conditions of Theorem \ref{CLT_alt}, we have for all $l_i,l_i^\prime \in \nnum$, $h_i=1,\ldots,T-1$, $i=1,\ldots, k$ and $k \ge 3$,
\begin{align*} 
\cum \Big( \ET_{h_1} (\psi_{l_1 l_1^\prime}), \ldots, \ET_{h_k} (\psi_{l_k l_k^\prime}) \Big)  =o(1) 
\qquad(T \to \infty).
\end{align*}
\end{theorem}
\begin{proof}
%The proof is given in three parts, the first of which provides the outset, the second gives the arguments for the stationary case, while the third deals with the locally stationary situation.
We first provide the outset and then derive the result under local stationarity as this encompasses the stationary case. \smallskip

\noindent {\it Preliminaries.} %Fix $\tau_1,\tau_2 \in [0,1]$ and $h=1,\ldots, T-1$. It will be shown that the finite-dimensional distributions of $(\ET(\tau_1,\tau_2)\colon T \in \nnum)$ 
As explained in Section \ref{sec:proofs:weak_conv} of the Appendix, it will be shown that the finite-dimensional distributions of $\ET$ converge to a Gaussian distribution by proving that the higher-order cumulants of the terms $ \ET(\psi_{l {l'}}) = \inprod{\ET}{\psi_{l {l'}}}$ vanish asymptotically. To formulate this, consider an array of the form
\begin{equation}
\label{eq:cumtable}
\begin{matrix}
  (1,1) & (1,2)\\
  \vdots  & \vdots \\
  (k,1) & (k,2)
 \end{matrix}
\end{equation}
and let the value $s=ii^\prime$ correspond to entry $(i,i^\prime)$. For a partition $P= \{P_1,\ldots, P_Q\}$, the elements of a set $P_q$ will be denoted by $s_{q1},\ldots,s_{qm_q}$ where $|P_q|=m_q$  is the corresponding number of elements in $P_q$. Associate with entry $s$ the frequency index $j_s =j_{ii^\prime} =(-1)^{i^\prime-1}(j_i+{h}^{i^\prime-1}_i)$, Fourier frequency $\lambda_{{j}_s}=\frac{2\pi {j}_s }{T}$ and the basis function index $v_s=v_{ii^\prime}=l_{i}^{2-i^\prime}{l_i'}^{i^\prime-1}$ for $i=1,\ldots,k$ and $i^\prime=1,2$. \smallskip

\noindent{\it Proof.} To ease notation, write $\fdft{}{j_k}= \langle D^{(T)}_{\omega_{j_k}}, \psi_{l}\rangle$ and $( \F^{(v_s)}_{{t}/{T}; \lambda_{{j}_s}}\colon s \in P_q) =\inprod{f_{{t}/{T};\lambda_{{j}_{q1}},\ldots,\lambda_{{j}_{q{m_q-1}}}}}{\otimes_{i^\prime=1}^{m_q} \psi_{v_{s_{qi^\prime}}}}$
where, by Corollary \ref{cumbounds}, the latter quantities are well-defined both under $H_A$ and $H_0$  since the convergence in norm implies convergence of the coefficients. Furthermore, since $X_t \in L^2(\Omega)$, we have $\E\|D_\omega\|^2_2 < \infty$ and therefore the fDFT's are in $L^2_\cnum([0,1])$. Therefore, we can consider an application of the product theorem for cumulants yields on the coefficients,
\begin{align*}
\cum&\Big(\sum_{j_1=1}^{T}D^{(l_1)}_{\omega_{j_1}} D^{(l_1')}_{-\omega_{j_1+h_1}},\ldots, \sum_{j_k=1}^{T}D^{(l_{k})}_{\omega_{j_k}} D^{((l_{k}')}_{-\omega_{j_k+h_k}} \Big)\\
&=  \sum_{j_1,\ldots, j_k} \sum_{i.p.} \cum(D^{(v_s)}_{\lambda_{{j}_s}}\colon s \in P_1) \cdots \cum(D^{(v_s)}_{\lambda_{{j}_s}}\colon s \in P_Q), 
\end{align*}
where the summation extends over all indecomposable partitions $P= \{P_1,\ldots, P_Q\}$ of \eqref{eq:cumtable}. Because $X_t$ has zero-mean, the number of elements within each set must satisfy $m_q \ge 2$ and thus $Q \le k$. By Lemma \ref{cumbounds} we obtain
we obtain
\begin{align*}
\frac{1}{T^{k/2}}  \sum_{j_1,\ldots, j_k=1}^{T}\sum_{i.p.} 
&\prod_{q=1}^Q\cum\big(D^{(v_s)}_{\lambda_{{k}_s}}\colon s \in P_q\big)
\\& =
\frac{1}{T^{k/2}} \sum_{j_1,\ldots, j_k=1}^{T}\sum_{i.p.} \prod_{q=1}^Q 
\bigg[\frac{(2\pi)^{m_q/2-1}}{T^{m_q/2-1}} \big(\tilde{\F}^{(v_s)}_{\sum_s {j}_s; \lambda_{{j}_s}}\colon s \in P_q\big) 
+ O\bigg(\frac{1}{T^{m_q/2}}\bigg) \bigg].
%\cdots \\ & \big[\frac{(2\pi)^{m_M/2-1}}{T^{m_M/2-1}}(\tilde{\F}^{(v_s)}_{\sum_s \mathpzc{k}_s; \lambda_{\mathpzc{k}_s}}; s \in P_M)+ O(\frac{1}{T^{m_M/2}}) \big]\Big)\tageq
\end{align*} 
Note that, by Corollary \ref{cumbounds} 
and the Cauchy--Schwarz inequality,
\begin{align*}
\sum_{j=1}^{T}\big|\tilde{\F}^{(v_s)}_{\sum_s {j}_s; \lambda_{{j}_s}}\big| 
\le\sup_{\omega}\sum_{j \in \znum}\|\tilde{\F}_{j; \omega}\|_2 \prod_{i=1}^{m_q}\|\psi_{v_{qi}} \|_2<\infty,
\qquad s \in P_q,
\end{align*}
for all $q =1,\ldots, Q$. If $Q < k$ or if $Q = k$ and there are $h_{i_1}$ and $h_{i_2}$ such that $h_{i_1} \ne h_{i_2}$ for $i_1, i_2 \in \{1,\ldots, k\}$ within the same set, then there is dependence on $Q$ of the $k$ sums $j_1,\ldots,j_n$. On the other hand, if the size of the partition is equal to $k$ and $h_{i_1}= h_{i_2}$ for all  $i_1, i_2 =1,\ldots, k$, then there are  $Q-1$ constraints on $j_1,\ldots,j_n$. Thus
%, similar to the stationary case, 
it follows that the order is 
\begin{align*}
O(T^{-k/2}  T^{k-Q+1} T^{-2k/2+Q}) =  O(T^{-k/2+1}).
\end{align*}
The cumulants of order $k\geq 3$ will therefore tend to 0 as $T \to \infty$. 
\end{proof}

%%%%%%%%%%%%%%%%%%%%%%%%%%%%%%%%%%%%%%%%%%%%
\section{Dealing with condition $C_u$}
\label{sec:increase}

Condition $C_u$ regulates how to handle the number of fPCs included in the unstandardized test statistics. It specifically allows the number of fPCs to be increased logarithmically with sample size. This small section provides a heuristic argument for why this does not change the asymptotics. Note that one can focus without loss of generality on the fixed $L=\min_jL_j$ in place of the frequency-dependent truncations $L_j$, as the difference is asymptotically negligible so that 
\begin{align*} 
\frac 1T\sum_{j=1}^TD_{\omega_j}^{(T)}\otimes D_{\omega_{j+h}}^{(T)}
%&=\frac 1T\sum_{j=1}^T\sum_{l=1}^\infty\sum_{l^\prime=1}^\infty
%\biginprod{ D_{\omega_j}^{(T)}\otimes D_{\omega_{j+h}}^{(T)}}{
%\phi_l^{\omega_j}\otimes \phi_{l^\prime}^{\omega_{j+h}}}_{S} \,
%\phi_l^{\omega_j}\otimes \phi_{l^\prime}^{\omega_{j+h}} \\
&\approx\frac 1T\sum_{j=1}^T\sum_{l=1}^{L}\sum_{l^\prime=1}^{L}
\langle D_{\omega_j}^{(T)},\phi_l^{\omega_j}\rangle
\overline{\langle D_{\omega_{j+h}}^{(T)}, \phi_{l^\prime}^{\omega_{j+h}}\rangle}
\phi_l^{\omega_j}\otimes \phi_{l^\prime}^{\omega_{j+h}}.
%&=\frac 1T\sum_{j=1}^T\sum_{l=1}^{L}\sum_{l^\prime=1}^{L^\prime}
%\langle D_{\omega_j}^{(T)},\phi_l^{\omega_j}\rangle
%\overline{\langle D_{\omega_{j+h}}^{(T)}, \phi_{l^\prime}^{\omega_{j+h}}\rangle}
%\phi_l^{\omega_j}\otimes \phi_{l^\prime}^{\omega_{j+h}}
% \inprod{ \E \big[D^{(T)}_{\omega_{j}} \otimes D^{(T)}_{\omega_{j^\prime}}\big]}{ \phi_{l}^{\omega_j} \otimes \phi_{l^\prime}^{\omega_{j^\prime}}}_{H \otimes H} 
% &= \E[\inprod{D^{(T)}_{\omega_{j}}}{\phi_{l}^{\omega_j}} \overline{\inprod{ D^{(T)}_{\omega_{j^\prime}}}{\phi_{l^\prime}^{\omega_{j^\prime}}}} ] .
%%= \cov\big(\inprod{D^{(T)}_{\omega_{j}}}{\phi_{l}^{\omega_j}}, \inprod{ D^{(T)}_{\omega_{j^\prime}}}{\phi_{l^\prime}^{\omega_{j^\prime}}}\big). 
\end{align*}
Subsequently sending $L$ to $\infty$ in logarithmic fashion does not alter the limit distribution.

%%%%%%%%%%%%%%%%%%%%%%%%%%%%%%%%%%%%%%%%%%%%
\section{Auxiliary proofs for Theorem \ref{boundstatcase} and Theorem \ref{boundlocstatcase} } 
\label{subsec:proof:denominator}

\subsection{Proof of Lemma \ref{lemJ1}}

In the proof of Lemma \ref{lemJ1}, we shall make use Lemma of \ref{lem:aux} and of the following result. 
\begin{lemma}\label{lem:boundsFtensors}
If %Assumption \ref{Statcase} and 
 \ref{cumglsp}(4,\,2) is satisfied, then 
\begin{align*}
\sup_{\omega_1,\omega_2}\snorm{\E \hat{\F}_{\omega_{1}} \widetilde{\otimes}  \hat{\F}_{\omega_{2}} -\F_{\omega_{1}} \widetilde{\otimes}  \F_{\omega_2}}_2 = O\bigg(\frac{1}{b T} +b^2\bigg).
%\\(ii) &\sup_{\omega_1,\omega_2}\E \snorm{\hat{\F}_{\omega_{1}} \widetilde{\otimes}  \hat{\F}_{\omega_{2}}-\E \hat{\F}_{\omega_{1}} \widetilde{\otimes}  \hat{\F}_{\omega_{2}} }^2_2 =O( \frac{1}{bT}) 
\end{align*} 
\end{lemma}
\begin{proof}
The proof mimics the first part of the proof of Theorem \ref{UnifCon} and is therefore omitted.
\end{proof}

\begin{proof}[Proof of ~Lemma \ref{lemJ1}] 
We provide the proof for each of the four terms separately below. To ease notation, we shall derive the result for fixed $l, \lpr$. Note that this is without loss of generality under conditions $C_u$ and $C_s$. Under $C_s$, the number of directions $L_j$ and $L_{j+h}$ are finite, whilst under $C_u$, the number of directions are allowed to go to infinity in a controlled manner, in which case the directions become independent of frequency and can be taken out of the sum over frequency; see Section \ref{sec:increase}.
\begin{proof}[Proof of ~\eqref{J1}]
Using \eqref{eq:approxdif} we obtain for $ \eqref{eq:J1}$ 
\[
J_1=\frac{1}{\sqrt{T}} \sum_{j=1}^{T}    \inprod{D_{\omega_{j}} \otimes D_{\omega_{j+h}}- \E\big(D_{\omega_{j}} \otimes D_{\omega_{j+h}})}{\phi^{\omega_j}_{l} \otimes \phi^{\omega_{j+h}}_{\lpr}-\E(\hat{\phi}^{\omega_j}_{l} \otimes \hat{\phi}^{\omega_{j+h}}_{\lpr})}_{S} = O_p(\breve{J}_1),
\]
where 
\begin{align*}
 \breve{J}_1 \!=\!  \frac{1}{\sqrt{T}} \sum_{j=1}^{T}  \biginprod{D_{\omega_{j}} \otimes D_{\omega_{j+h}} \!-\! \E\big(D_{\omega_{j}} \otimes D_{\omega_{j+h}})}{\big(\F_{-\omega_{j}} \widetilde{\otimes}  \F_{-\omega_{j+h}} \!-\!\E (\hat{\F}_{-\omega_{j}} \widetilde{\otimes}  \hat{\F}_{-\omega_{j+h}}) \big)\! ({\phi}^{-\omega_j}_l \otimes {\phi}^{-\omega_{j+h}}_{\lpr})}_{S}.
\end{align*}
Note that $E|J_1| \le \sqrt{\E |J_1|^2} =\sqrt{\var(J_1) + (\E [J_1])^2}$ and therefore consider bounds on $\E  \breve{J}_1$ and $\Var( \breve{J}_1)$. Using Lemma \ref{lem:aux}(i) it is immediate that $\E  \breve{J}_1=0$. 
Secondly, using \ref{lem:aux}(iii) \begin{align*}
&\var(\breve{J}_1) = \frac{1}{T} \sum_{j_1,j_2=1}^{T} \Big \langle \cov\big( D_{\omega_{j_1}} \otimes D_{\omega_{j_1+h}}- \E\big(D_{\omega_{j_1}} \otimes D_{\omega_{j_1+h}}), D_{\omega_{j_2}} \otimes D_{\omega_{j_2+h}}- \E\big(D_{\omega_{j_2}} \otimes D_{\omega_{j_2+h}})\big) ,
\\&
\overline{\big(\E\hat{\F}_{\omega_{j_1}} \widetilde{\otimes} \hat{\F}_{\omega_{j_1+h}} -\F_{\omega_{j_1}} \widetilde{\otimes}  \F_{\omega_{j_1+h}}\big)({\phi}^{\omega_{j_1}}_l \otimes {\phi}^{\omega_{j_1+h}}_{\lpr})
 {\otimes} \,\big(\E\hat{\F}_{\omega_{j_2}} \widetilde{\otimes} \hat{\F}_{\omega_{j_2+h}} -\F_{\omega_{j_2}} 
 \widetilde{\otimes}  \F_{\omega_{j_2+h}}\big)({\phi}^{\omega_{j_2}}_n \otimes {\phi}^{\omega_{j_2+h}}_{o})
} \Big\rangle_{S},
\end{align*}
As shown in Section \ref{sec:proof:first+second} of the main paper,
\begin{align*}
& \frac{1}{{T}}\sum_{j_1,j_2}\snorm{\cov(D_{\omega_{j_1}} \otimes D_{\omega_{j_1+h}},D_{\omega_{j_2}} \otimes D_{\omega_{j_2+h}})}_2 =O(1).
\end{align*}
Therefore, the Cauchy--Schwarz inequality, separability of the tensor in norm, and Lemma \ref{lem:boundsFtensors}(i) imply
\begin{align*}
&\var(\breve{J}_1) \\
& \le  \frac{1}{{T}}\sum_{j_1,j_2}\snorm{\cov(D_{\omega_{j_1}} \otimes D_{\omega_{j_1+h}},D_{\omega_{j_2}} \otimes D_{\omega_{j_2+h}})}_2
\Bigsnorm{ \overline{\big(\E\hat{\F}_{\omega_{j_1}} \widetilde{\otimes} \hat{\F}_{\omega_{j_1+h}} -\F_{\omega_{j_1}} \widetilde{\otimes}  \F_{\omega_{j_1+h}}\big)({\phi}^{\omega_{j_1}}_l \otimes {\phi}^{\omega_{j_1+h}}_{\lpr})}}_2
\\& \phantom{ \frac{1}{{T}}\sum_{j_1,j_2}} \times
\Bigsnorm{\overline{\big(\E\hat{\F}_{\omega_{j_2}} \widetilde{\otimes} \hat{\F}_{\omega_{j_2+h}} -\F_{\omega_{j_2}} \widetilde{\otimes}  \F_{\omega_{j_2+h}}\big)({\phi}^{\omega_{j_2}}_n \otimes {\phi}^{\omega_{j_2+h}}_{0})
}}_2
\\& \le 
\frac{c}{T} \sum_{j_1,j_2=1}^{T}  \Bigsnorm{ \cov\big( D_{\omega_{j_1}} \otimes D_{\omega_{j_1+h}}, D_{\omega_{j_2}} \otimes D_{\omega_{j_2+h}}) }_2\sup_{\omega}\Bigsnorm{\E\hat{\F}_{\omega} \widetilde{\otimes} \hat{\F}_{\omega+{h}} -\F_{\omega} \widetilde{\otimes}  \F_{\omega+{h}}}^2_2
\\&
=O(1)O\bigg(\frac{1}{bT}+b^2\bigg)^2.
\end{align*}
Consequently, both under $H_0$ and $H_A$, $\E |J_1| = O(\frac{1}{b T} +b^2).$
\end{proof}
\begin{proof}[Proof of ~\eqref{J2}]
 Use the above to write $ \eqref{eq:J2}$ as
 \begin{align*}
J_2
&\!=\!\frac{1}{\sqrt{T}} \sum_{j=1}^{T}  \inprod{D_{\omega_{j}} \otimes D_{\omega_{j+h}}\!-\! \E\big(D_{\omega_{j}} \otimes D_{\omega_{j+h}})}{\E(\hat{\phi}^{\omega_j}_{l} \otimes \hat{\phi}^{\omega_{j+h}}_{\lpr})\!-\!\hat{\phi}^{\omega_j}_{l} \otimes \hat{\phi}^{\omega_{j+h}}_{\lpr}}_{S} =O_p(\breve{J}_2),
\end{align*} 
where
 \begin{align*}
 \breve{J}_2\!=\! \frac{1}{\sqrt{T}} \sum_{j=1}^{T}\biginprod{D_{\omega_{j}} \otimes D_{\omega_{j+h}} \!-\! \E\big(D_{\omega_{j}} \otimes D_{\omega_{j+h}})}{\big(\E (\hat{\F}_{-\omega_{j}} \widetilde{\otimes}  \hat{\F}_{-\omega_{j+h}}) \!-\!\hat{\F}_{-\omega_{j}} \widetilde{\otimes}  \hat{\F}_{-\omega_{j+h}} \big)({\phi}^{-\omega_j}_l \otimes {\phi}^{-\omega_{j+h}}_{\lpr})}_{S} 
\end{align*}
To consider first the order of $\E \breve{J}_2$, use Lemma \ref{lem:aux}(ii) which requires to consider the order in $S$ of the operator 
\begin{align*}
\E& \Big( D_{\omega_{j}} \otimes D_{\omega_{j+h}}- \E\big(D_{\omega_{j}} \otimes D_{\omega_{j+h}}) \otimes \big(\E\hat{\F}_{-\omega_{j}} \widetilde{\otimes}  \hat{\F}_{-\omega_{j+h}} -\hat{\F}_{-\omega_{j}} \widetilde{\otimes}  \hat{\F}_{-\omega_{j+h}}\big) \Big)
\\& = \Cov\Big( D_{\omega_{j}} \otimes D_{\omega_{j+h}}, \big(\hat{\F}_{-\omega_{j}} \widetilde{\otimes}  \hat{\F}_{-\omega_{j+h}}\big) \Big)\\& 
 =\frac{1}{{(bT)}^2}\sum_{k_1,k_2} K\bigg(\frac{k_1}{b}\bigg)K\bigg(\frac{k_2}{b}\bigg) \Cov\Big( D_{\omega_{j}} \otimes D_{\omega_{j+h}}, D_{\omega_{k_1-j}} \otimes  D_{\omega_{k_1-j}}  \widetilde{\otimes} D_{\omega_{k_2-j-h}} \otimes  D_{\omega_{k_2-j-h}}  \Big).
\end{align*}
Using then Theorem \ref{prodcumthm}, we are looking for all indecomposable partitions of the array
\[\begin{matrix}
\underbrace{D_{\omega_{j}}}_{1}&\underbrace{D_{-\omega_{j+h}}}_2 & & \\ 
\underbrace{D_{\omega_{k_1-j}}}_{3}&\underbrace{D_{-\omega_{k_2-j-h}}}_4 & \underbrace{D_{\omega_{j-k_1}}}_5& \underbrace{ D_{\omega_{k_2-j-h}}}_6.
\end{matrix}\]
Careful consideration shows that many terms are of lower order. Those that remain are second-order cumulant tensors with a partition that hooks the rows but keeps as many elements with conjugate pairs in the same set. We focus on the partitions of highest order. These are of the same order as the partition $S_{(13)(25)(46)}$. Lemma \ref{FCboundssd} implies under $H_0$ that we obtain
\begin{align*}
S_{(13)(25)(46)}\Bigg(\frac{1}{{(bT)}^2}\sum_{k_1,k_2}& K\bigg(\frac{k_1}{b}\bigg)K\bigg(\frac{k_2}{b}\bigg) \frac{\Delta_T^{(\omega_{k_1})} \Delta_T^{(\omega_{k_1+h})}}{T^2} \\
&\times (\F_{\omega_j}+ R_{T,2}) \otimes (\F_{-\omega_{j+h}}+R_{T,2}) \otimes (\F_{\omega_{j+h-k_2}} +R_{T,2})\Bigg)
\end{align*}
which is of order $O({b^{-1}T^{-2}})$ in Hilbert--Schmidt norm since $h \ne 0$. Under $H_A$, this partition is given by
\begin{align*}
S_{(13)(25)(46)}\bigg(\frac{1}{{(bT)}^2}\sum_{k_1,k_2}\!\! K\bigg(\frac{k_1}{b}\bigg)K\bigg(\frac{k_2}{b}\bigg) (\tilde{\F}_{k_1:\omega_j} +R_{T,2} )\otimes (\tilde{\F}_{-k_1-h:-\omega_{j+h}}+R_{T,2}) \otimes (G_{\omega_{j+h-k_2}} +R_{T,2})\!\bigg)
\end{align*}
and using Corollary \ref{cumbounds} and a similar derivation as in the proof of Theorem \ref{UnifCon} this is of order $O((b T)^{-1})$ in Hibert--Schmidt norm under $H_A$. By Lemma \ref{lem:aux}(ii), an application of the Cauchy--Schwarz inequality therefore implies that $| \E \breve{J}_2|$ is bounded by
\begin{align*}
&\bigg| \tr\bigg(\frac{1}{{(bT)}^2}\sum_{k_1,k_2} K\bigg(\frac{k_1}{b}\bigg)K\bigg(\frac{k_2}{b}\bigg) \\
&\qquad\times\Cov\Big( D_{\omega_{j}} \otimes D_{\omega_{j+h}}, (D_{\omega_{k_1-j}} \otimes  D_{\omega_{k_1-j}})  \widetilde{\otimes} (D_{\omega_{k_2-j-h}} \otimes  D_{\omega_{k_2-j-h}}) ( {\phi}^{-\omega_j}_l \otimes {\phi}^{-\omega_{j+h}}_{l^{\prime}}) \Big) \bigg)\bigg|
\\& 
=O\bigg(\frac{1}{b \sqrt{T}T} \sup_{\omega}\snorm{\F_{\omega}}_2 \snorm{ \F_{\omega} ( {\phi}^{\omega}_l)}_2  \snorm{ \F_{\omega} ( {\phi}^{\omega}_\lpr)}_2 \bigg) 
= O\bigg(\frac{1}{b T\sqrt{T}}\bigg).
\end{align*}
Using Corollary \ref{cumbounds}, a similar reasoning shows that $\E J_2 =O(b^{-1}T^{-1/2})$, under $H_A$. We now investigate the variance of $J_2$. Using Lemma \ref{lem:aux}, this equals
\begin{align*}
\tr\bigg(\Var\Big(\frac{1}{\sqrt{T}} \sum_{j_1=1}^{T} \Big[&D_{\omega_{j}} \otimes D_{\omega_{j+h}} \\ &-\E\big(D_{\omega_{j}} \otimes D_{\omega_{j+h}})\otimes (\E (\hat{\F}_{-\omega_{j_1}} \widetilde{\otimes}  \hat{\F}_{-\omega_{j+h}}) -\hat{\F}_{-\omega_{j}} \widetilde{\otimes}  \hat{\F}_{-\omega_{j+h}})( {\phi}^{-\omega_j}_l \otimes {\phi}^{-\omega_{j+h}}_{l^{\prime}})\Big]\Big)  \bigg).
\end{align*}
Observe that the correction for the means implies a covariance structure of the form 
%\begin{align*}
%&\Cov( (X-\E X) \otimes (W-\E W), (Y-\E Y) \otimes (Z-\E Z)) 
%\\& = \cum(X \otimes W -\E [X] \otimes W-X\otimes \E [W]), Y \otimes Z -\E [Y] \otimes Z-Y \otimes \E [Z]) 
%\\& =\cum(X\otimes W,Y\otimes Z) -S_{3124}\cum(Y) \otimes \cum( X\otimes W, Z) -\cum(X \otimes W,Y)\otimes \cum(Z)
%\\& -\cum(X) \otimes \cum(W, Y\otimes Z) +S_{1243}\cum(X)\cum(W,Z)\cum(Y)+\cum(X)\cum(W,Y)\cum(Z) 
%\\& -S_{2134}\cum(W)\cum(X,Y\otimes Z) +S_{2143}\cum(W)\cum(X,Z)\cum(Y) + S_{2134}\cum(W)\cum(X,Y)\cum(Z)
%\end{align*}
%Putting the larger sets first, we find
\begin{align*}
\Cov&( (X-\E X) \otimes (W-\E W), (Y-\E Y) \otimes (Z-\E Z)) 
\\ = & \cum(X \otimes W -\E [X] \otimes W-X\otimes \E [W]), Y \otimes Z -\E [Y] \otimes Z-Y \otimes \E [Z]) 
\\ =&\cum(X\otimes W,Y\otimes Z) -S_{3124}\cum(Y) \otimes \cum( X\otimes W, Z) -\cum(X \otimes W,Y)\otimes \cum(Z)
\\& -\cum(X) \otimes \cum(W, Y\otimes Z)  -S_{2134}\cum(W)\otimes\cum(X,Y\otimes Z)
\\&+S_{1243}\cum(X)\otimes\cum(W,Z)\otimes\cum(Y)+\cum(X)\otimes\cum(W,Y)\otimes\cum(Z) 
\\& +S_{2143}\cum(W)\otimes\cum(X,Z)\otimes\cum(Y) + S_{2134}\cum(W)\otimes\cum(X,Y)\otimes\cum(Z).
\end{align*}
The last four terms will be cancelled by subsets belonging to the second to fifth terms while other subsets of these terms themselves will cancel several partitions of the first term. In particular, we are interested in decomposable partitions of the array
\[\begin{matrix}
\underbrace{\underbrace{D_{\omega_{j_1}}}_{1} \underbrace{D_{-\omega_{j_1+h}}}_2}_X \quad  & \underbrace{\underbrace{D_{-\omega_{j_1-k_1}}}_{3}\underbrace{D_{-\omega_{j_1+h-k_2}}}_4  \underbrace{D_{\omega_{j_1-k_1}}}_5  \underbrace{ D_{\omega_{j_1+h-k_2}}}_6}_W
\\[.2cm] \underbrace{\underbrace{D_{-\omega_{j_2}}}_{7} \underbrace{D_{\omega_{j_2+h}}}_8}_Y \quad & \underbrace{\underbrace{D_{\omega_{j_2-k_3}}}_{9} \underbrace{D_{\omega_{j_2+h-k_4}}}_{10}  \underbrace{D_{-\omega_{j_2-k_3}}}_{11}  \underbrace{ D_{-\omega_{j_2+h-k_4}}}_{12}}_Z,
\end{matrix}\]
but where we only have to consider the partitions that are not cancelling out, i.e., we can disregard those partitions where at least one of the sets $X, W, Y$ or $Z$ form a proper set within the partition. In other words, elements in the sets $X, W,  Y$ and $Z$ must hook with an element from one of the other sets. Taking into account the above constraints, we look for the structure with highest order, i.e., that allows as a partition of which as many sets form a proper submanifold. These partitions are of the form 
$S_{(13)(28)(46)(5,11)(7,9)(10,12)}$. Under $H_0$, Lemma \ref{FCboundssd} yields the constraints $k_1=0, j_2-j_1=0 \mod T, j_1-k_1-j_2+k_3=0 \mod T, k_3=0\mod T$ due to $\Delta_T^{(\omega_k)}$, which implies we are left with only $j_2, k_2, k_3$ as free variables. Via a similar derivation as for the expectation one obtains $\Var(\breve{J}_2) =  O(\frac{1}{T} T \frac{1}{(b T)^2}) = O(\frac{1}{ {(b T)}^2})$ under $H_0$. Similarly, using Lemma \ref{cumboundglsp} and Corollary \ref{cumbounds}(ii) we also obtain under $H_A$ that $\Var(\breve{J}_2) =  O(\frac{1}{ {(b T)}^2})$. All together we obtain that $E|J_2| =O(\frac{1}{b T^{3/2}})+O(\frac{1}{b T})$ under $H_0$ and $E|J_2| =O(\frac{1}{b \sqrt{T}})+O(\frac{1}{bT})$ under $H_A$.
\end{proof}

\begin{proof}[Proof of ~\eqref{J3}]
 Write $ \eqref{eq:J3}$ as
 \begin{align*}
J_3
&=\frac{1}{\sqrt{T}} \sum_{j=1}^{T}  \inprod{\E\big(D_{\omega_{j}} \otimes D_{\omega_{j+h}})}{\phi^{\omega_j}_{l} \otimes \phi^{\omega_{j+h}}_{\lpr}-\E(\hat{\phi}^{\omega_j}_{l} \otimes \hat{\phi}^{\omega_{j+h}}_{\lpr})}_{S} = O_p(\breve{J}_3),
\end{align*}
where
\begin{align*}
\breve{J}_3= \frac{1}{\sqrt{T}} \sum_{j=1}^{T} \Biginprod{ \E\big(D_{\omega_{j}} \otimes D_{\omega_{j+h}})}{\big(\F_{-\omega_{j}} \widetilde{\otimes}  \F_{-\omega_{j+h}} -\E (\hat{\F}_{-\omega_{j}} \widetilde{\otimes}  \hat{\F}_{-\omega_{j+h}}) \big)({\phi}^{-\omega_j}_l \otimes {\phi}^{-\omega_{j+h}}_{\lpr})}_{S}.
\end{align*}
By the Cauchy--Schwarz inequality and H\"older's inequality we obtain 
\begin{align*}
\E|\breve{J}_3| &\le \frac{1}{\sqrt{T}} \sum_{j=1}^{T} \snorm{\E\big(D_{\omega_{j}} \otimes D_{\omega_{j+h}})}_2 \sup_{\omega}\snorm{\big(\F_{\omega} \widetilde{\otimes}  \F_{\omega+\omega_{h}} -\E (\hat{\F}_{\omega} \widetilde{\otimes}  \hat{\F}_{\omega+\omega_{h}}) \big)({\phi}^{\omega}_l \otimes {\phi}^{\omega+\omega_{h}}_{\lpr})}_2 
\\& \le 
 \frac{1}{\sqrt{T}} \sum_{j=1}^{T} \snorm{\E\big(D_{\omega_{j}} \otimes D_{\omega_{j+h}})}_2 \sup_{\omega}\snorm{\big(\F_{\omega} \widetilde{\otimes}  \F_{\omega+\omega_{h}} -\E (\hat{\F}_{\omega} \widetilde{\otimes}  \hat{\F}_{\omega+\omega_{h}}) \big)}_{\infty} \|{\phi}^{\omega}_l \|_2  \|{\phi}^{\omega+\omega_{h}}_\lpr \|_2
 \\&\le   \frac{1}{\sqrt{T}} \sum_{j=1}^{T} \snorm{\E\big(D_{\omega_{j}} \otimes D_{\omega_{j+h}})}_2 \sup_{\omega}\snorm{\big(\F_{\omega} \widetilde{\otimes}  \F_{\omega+\omega_{h}} -\E (\hat{\F}_{\omega} \widetilde{\otimes}  \hat{\F}_{\omega+\omega_{h}}) \big)}_{2}.
\end{align*}
Recall that under $H_0$, we have
$\sup_j \snorm{\E (D_{j} \otimes D_{j+h})}_2 = O(\frac{1}{T})$
for $h\ne 0$ whilst under the alternative hypothesis
$\sup_j \snorm{\E (D_{j} \otimes D_{j+h})}_2 = O(\frac{1}{h^2}).$
The bounds in \eqref{J3} therefore immediately follow from Lemma \ref{lem:boundsFtensors}(i).
\end{proof}

\begin{proof}[Proof of ~\eqref{J4}]
 Write $ \eqref{eq:J4}$ as
 \begin{align*}
J_4
&=\frac{1}{\sqrt{T}} \sum_{j=1}^{T}  \inprod{\E\big(D_{\omega_{j}} \otimes D_{\omega_{j+h}})}{\E(\hat{\phi}^{\omega_j}_{l} \otimes \hat{\phi}^{\omega_{j+h}}_{\lpr})-\hat{\phi}^{\omega_j}_{l} \otimes \hat{\phi}^{\omega_{j+h}}_{\lpr})}_{S} = O_p(\breve{J}_4),
\end{align*}
where
\begin{align*}  
\breve{J}_4= \frac{1}{\sqrt{T}} \sum_{j=1}^{T} \Biginprod{ \E\big(D_{\omega_{j}} \otimes D_{\omega_{j+h}})}{\big(\E (\hat{\F}_{-\omega_{j}} \widetilde{\otimes}  \hat{\F}_{-\omega_{j+h}}) -\hat{\F}_{-\omega_{j}} \widetilde{\otimes}  \hat{\F}_{-\omega_{j+h}} \big)({\phi}^{-\omega_j}_l \otimes {\phi}^{-\omega_{j+h}}_{\lpr})}_{S}.  \end{align*}
Under $H_0$, \eqref{J4} the result follows now from an application of the Cauchy--Schwarz inequality and Lemma \ref{lem:boundsFtensors}(ii). Under the alternative, we consider a bound on $\E \breve{J}_4$ and $\var(\breve{J}_4)$. It is immediate from Lemma \ref{lem:aux}(i) that $\E \breve{J}_4=0$. Theorem \ref{thm:HAdist} with $n=0, r=2$ then implies that $\var(\breve{J}_4)=O(1)$.
\end{proof}
\noindent The four previous steps complete the proof.
\end{proof}

\subsection{Bound on $\sqrt{T}\E\|\hat{\bm{\beta}}_{h,s}^{(T)}-{\bm{\beta}}_{h,s}^{(T)}|$}\label{sec:hatbS}

\begin{proof}[\textcolor{black}{Proof of Theorem \ref{boundstatcase} and Theorem \ref{boundlocstatcase} for $\hat{\bm{\beta}}_{h,s}^{(T)}$}]
Note that in this case we are interested in the difference
\[
\sqrt{T}\E\|\hat{\bm{\beta}}_{h,s}^{(T)}-{\bm{\beta}}_{h,s}^{(T)}|=\frac{1}{\sqrt{T}}\E
\sum_{j=1}^{T} \inprod{D^{(T)}_{\omega_j}\otimes D^{(T)}_{\omega_{j+h}}}{(\hat{\lambda}_l^{\omega_j} \hat{\lambda}_\lpr^{\omega_{j+h}})^{-1/2} \hat{\phi}^{\omega_j}_l \otimes \hat{\phi}^{\omega_{j+h}}_{\lpr}-(\lambda_l^{\omega_j} \lambda_\lpr^{\omega_{j+h}})^{-1/2} \phi^{\omega_j}_l \otimes \phi^{\omega_{j+h}}_\lpr}_{S}.
\]
Observe that 
\begin{align*}
&(\hat{\lambda}_l^{\omega_j} \hat{\lambda}_\lpr^{\omega_{j+h}})^{-1/2} \hat{\phi}^{\omega_j}_l \otimes \hat{\phi}^{\omega_{j+h}}_{\lpr}-(\lambda_l^{\omega_j} \lambda_\lpr^{\omega_{j+h}})^{-1/2} \phi^{\omega_j}_l \otimes \phi^{\omega_{j+h}}_\lpr
\\& 
=(\hat{\lambda}_l^{\omega_j} \hat{\lambda}_\lpr^{\omega_{j+h}})^{-1/2}[\hat{\phi}^{\omega_j}_l \otimes \hat{\phi}^{\omega_{j+h}}_{\lpr} -\phi^{\omega_j}_l \otimes \phi^{\omega_{j+h}}_\lpr]+[(\hat{\lambda}_l^{\omega_j} \hat{\lambda}_\lpr^{\omega_{j+h}})^{-1/2}- ({\lambda}_l^{\omega_j} {\lambda}_\lpr^{\omega_{j+h}})^{-1/2}]\phi^{\omega_j}_l \otimes \phi^{\omega_{j+h}}_\lpr.
\end{align*}
Using a Taylor expansion of $(\hat{\lambda}_l^{\omega_j} \hat{\lambda}_\lpr^{\omega_{j+h}})^{-1/2}$ around $({\lambda}_l^{\omega_j} {\lambda}_\lpr^{\omega_{j+h}})^{-1/2}$, yields
\begin{align*}
(\hat{\lambda}_l^{\omega_j} \hat{\lambda}_\lpr^{\omega_{j+h}})^{-1/2}=&({\lambda}_l^{\omega_j} {\lambda}_\lpr^{\omega_{j+h}})^{-1/2} -\frac{1}{2}({\lambda}_l^{\omega_j} {\lambda}_\lpr^{\omega_{j+h}})^{-3/2} \big((\hat{\lambda}_l^{\omega_j} \hat{\lambda}_\lpr^{\omega_{j+h}})-({\lambda}_l^{\omega_j} {\lambda}_\lpr^{\omega_{j+h}})\big)  \\&+\frac{3}{4}({\lambda}_l^{\omega_j} {\lambda}_\lpr^{\omega_{j+h}})^{-5/2} \big((\hat{\lambda}_l^{\omega_j} \hat{\lambda}_\lpr^{\omega_{j+h}})-({\lambda}_l^{\omega_j} {\lambda}_\lpr^{\omega_{j+h}})\big)^2 .
\end{align*}
Additionally, it follows from solving the perturbed eigenelement approximation \citep[see e.g., ][]{Kato66},
 \begin{align*}
(\hat{\lambda}_l^{\omega_j} \hat{\lambda}_\lpr^{\omega_{j+h}})- ({\lambda}_l^{\omega_j} {\lambda}_\lpr^{\omega_{j+h}}) &= \inprod{(\hat{\F}_{\omega_{j}} \widetilde{\otimes}  \hat{\F}_{\omega_{j+h}} -\F_{\omega_{j}} \widetilde{\otimes}  \F_{\omega_{j+h}} )\phi^{\omega_j}_l \otimes \phi^{\omega_{j+h}}_\lpr}{\phi^{\omega_j}_l \otimes \phi^{\omega_{j+h}}_\lpr}_{S} + R_{\omega_j,h},
\end{align*}
where $R_{\omega_j,h}$ is a remainder term that satisfies $O_p(R_{\omega_j,h})=O_p(\snorm{\hat{\F}_{\omega_{j}} \widetilde{\otimes}  \hat{\F}_{\omega_{j+h}} -\F_{\omega_{j}} \widetilde{\otimes}  \F_{\omega_{j+h}} }^2_2$).
We therefore decompose 
\begin{align*}
\frac{1}{\sqrt{T}}&\bigg|\sum_{j=1}^{T} \inprod{D^{(T)}_{\omega_j}\otimes D^{(T)}_{\omega_{j+h}}}{(\hat{\lambda}_l^{\omega_j} \hat{\lambda}_\lpr^{\omega_{j+h}})^{-1/2} \hat{\phi}^{\omega_j}_l \otimes \hat{\phi}^{\omega_{j+h}}_{\lpr}-(\lambda_l^{\omega_j} \lambda_\lpr^{\omega_{j+h}})^{-1/2} \phi^{\omega_j}_l \otimes \phi^{\omega_{j+h}}_\lpr}_{S}\bigg| \\
&\le |J_{s,1}| +|J_{s,2}| +|J_{s,3}| +|J_R|,
\end{align*}
where 
\begin{align*} J_{s,1}&=\frac{1}{\sqrt{T}}\sum_{j=1}^{T}\inprod{D^{(T)}_{\omega_j}\otimes D^{(T)}_{\omega_{j+h}}}{({\lambda}_l^{\omega_j} {\lambda}_\lpr^{\omega_{j+h}})^{-1/2} \big( \hat{\phi}^{\omega_j}_l \otimes \hat{\phi}^{\omega_{j+h}}_{\lpr}- \phi^{\omega_j}_l \otimes \phi^{\omega_{j+h}}_\lpr \big)}_{S},
\\
J_{s,2}&=\frac{1}{2\sqrt{T}}\sum_{j=1}^{T}  \biginprod{\frac{D^{(T)}_{\omega_j}\otimes D^{(T)}_{\omega_{j+h}}}{({\lambda}_l^{\omega_j} {\lambda}_\lpr^{\omega_{j+h}})^{3/2}}}{\big( \hat{\phi}^{\omega_j}_l \otimes \hat{\phi}^{\omega_{j+h}}_{\lpr}- \phi^{\omega_j}_l \otimes \phi^{\omega_{j+h}}_\lpr \big)}_{S}
\\&\phantom{\frac{1}{\sqrt{T}}\sum_{j=1}^{T}}\quad \quad \quad\times \overline{\biginprod{(\hat{\F}_{\omega_{j}} \widetilde{\otimes}  \hat{\F}_{\omega_{j+h}} -\F_{\omega_{j}} \widetilde{\otimes}  \F_{\omega_{j+h}} )\phi^{\omega_j}_l \otimes \phi^{\omega_{j+h}}_\lpr}{\phi^{\omega_j}_l \otimes \phi^{\omega_{j+h}}_\lpr}}_{S},
\\J_{s,3}&=\frac{1}{2 \sqrt{T}}\sum_{j=1}^{T}  \inprod{\frac{D^{(T)}_{\omega_j}\otimes D^{(T)}_{\omega_{j+h}}}{({\lambda}_l^{\omega_j} {\lambda}_\lpr^{\omega_{j+h}})^{3/2}}}{\big(  \phi^{\omega_j}_l \otimes \phi^{\omega_{j+h}}_\lpr \big)}_{S}
\\&\phantom{\frac{1}{\sqrt{T}}\sum_{j=1}^{T}}\quad \quad \quad\times \overline{\biginprod{(\hat{\F}_{\omega_{j}} \widetilde{\otimes}  \hat{\F}_{\omega_{j+h}} -\F_{\omega_{j}} \widetilde{\otimes}  \F_{\omega_{j+h}} )\phi^{\omega_j}_l \otimes \phi^{\omega_{j+h}}_\lpr}{\phi^{\omega_j}_l \otimes \phi^{\omega_{j+h}}_\lpr}}_{S},
\\ J_{R}&=\frac{1}{\sqrt{T}}C \sum_{j=1}^{T}({\lambda}_l^{\omega_j} {\lambda}_\lpr^{\omega_{j+h}})^{-3/2}R_{\omega_j,h,l,\lpr}  \inprod{D^{(T)}_{\omega_j}\otimes D^{(T)}_{\omega_{j+h}}}{\big( \hat{\phi}^{\omega_j}_l \otimes \hat{\phi}^{\omega_{j+h}}_{\lpr}- \phi^{\omega_j}_l \otimes \phi^{\omega_{j+h}}_\lpr \big)}_{S},
\end{align*}
for some constant $C>0$. Note that, using \eqref{eq:approxdif}, we have $|J_{s,1}| = O_p(J_1+J_2+J_3+J_4) $, from which the respective order follows from Lemma \ref{lemJ1}. The same holds for $J_{s,3}$. A similar decomposition as in Lemma \ref{lemJ1} will show that $J_{s,2}$ and $J_{s,R}$ are of strictly lower order. The proof follows along the lines of the proof of Lemma \ref{lemJ1} and is therefore omitted. 
\end{proof}
%%%%%%%%%%%%%%%%%%%

%%%%%%%%%%%%%%%%%%%
\section{Covariance structure under alternative hypothesis of local stationarity}\label{sub:covHA}

\subsection{Completion of covariance structure of Theorem \ref{th:test_alt}}
\begin{proof}[Completion of covariance structure of Theorem \ref{th:test_alt}]
We now derive the covariance structure of $\tprojes_{h,x}$ under $H_A$ and focus on $\tprojes_{h,u}$. We have
\begin{align*}
&\Cov(\sqrt{T}{\bm{\beta}}_{h_1,u}^{(T)},\sqrt{T}{\bm{\beta}}_{h_2,u}^{(T)}) =\\&  
\frac{1}{T}\sum_{j_1,j_2}^{T}  \sum_{\small{\substack{l_1 \in [L(\omega_{j_1})], l_2 \in [L(\omega_{j_1+h_1})], \\ l_3 \in [L(\omega_{j_2})] , l_4 \in [L(\omega_{j_2+h_2})]}}}  \Big\langle \Cov\Big(D^{(T)}_{\omega_{j_1}}\otimes D^{(T)}_{\omega_{j_1+h}}, D^{(T)}_{\omega_{j_2}}\otimes D^{(T)}_{\omega_{j_2+h}}\Big) \Big(\phi^{\omega_{j_2}}_{l_{3}} \otimes \phi^{\omega_{j_2+h_2}}_{l_4}\Big), \phi^{\omega_{j_1}}_{l_{1}}\otimes \phi^{\omega_{j_1+h_1}}_{l_2} \Big\rangle,
\end{align*} 
where now $({\lambda}^{\omega}_l, {\phi}^{\omega}_l\colon l  \ge 1)$ are the eigenelements of the time-integrated spectral density operator $G_{\omega}$. Lemma \ref{cumboundglsp} implies the $h$-lag covariance operator of the fDFT's has covariance structure as in \eqref{eq:covHa}. Hence, we find that
\begin{align*}\tageq \label{eq:covBetaHa}
&\Cov(\sqrt{T}{\bm{\beta}}_{h_1,u}^{(T)},\sqrt{T}{\bm{\beta}}_{h_2,u}^{(T)}) =\\&  
\frac{1}{T}\sum_{j_1,j_2}^{T} \sum_{\small{\substack{l_1 \in [L(\omega_{j_1})], l_2 \in [L(\omega_{j_1+h_1})], \\ l_3 \in [L(\omega_{j_2})] , l_4 \in [L(\omega_{j_2+h_2})]}}}
 \Bigg\{  \frac{2\pi}{T}\langle \tilde{\F}_{h_2-h_1:\omega_{j_1},-\omega_{j_1+h_1},-\omega_{j_2}} (\phi^{\omega_{j_2}}_{l_{3}} \otimes \phi^{\omega_{j_2+h_2}}_{l_4}), \phi^{\omega_{j_1}}_{l_{1}} \otimes \phi^{\omega_{j_1+h_1}}_{l_2}\rangle +O\big(\frac{1}{T^2}\big)\bigg)
\\& \phantom{\sum_{j_1,j_2}^{T}}+
\bigg(\big\langle\tilde{\F}_{j_1-j_2:\omega_{j_1}}(\phi^{\omega_{j_2}}_{l_3}), \phi^{\omega_{j_1}}_{l_1}  \big\rangle+O\big(\frac{1}{T}\big)\bigg)   \bigg(\big\langle \tilde{\F}_{-j_1-h_1+j_2+h_2:-\omega_{j_1+h_1}}(\phi^{-\omega_{j_2+h_2}}_{l_4} ), \phi^{-\omega_{j_1+h_1}}_{l_2} \big\rangle  + O\big(\frac{1}{T}\big)\bigg) 
\\& \phantom{\sum_{j_1,j_2}^{T}}+\big(\big \langle\tilde{\F}_{j_1+j_2+h_2:\omega_{j_1}}(\phi^{-\omega_{j_2+h_2}}_{l_4}, \phi^{\omega_{j_1}}_{l_1}\big \rangle+O\big(\frac{1}{T}\big)\bigg)  \bigg(\big \langle \tilde{\F}_{-j_1-h_1-j_2:-\omega_{j_1+h_1}}(\phi^{\omega_{j_2}}_{l_3}), \phi^{-\omega_{j_1+h_1}}_{l_2}\big \rangle +O\big(\frac{1}{T}\big)\Bigg\}. 
\end{align*} 
and
\begin{align*}\tageq \label{eq:covBetaHaS}
&\Cov(\sqrt{T}{\bm{\beta}}_{h_1,s}^{(T)},\sqrt{T}{\bm{\beta}}_{h_2,s}^{(T)}) =\frac{1}{T}\sum_{j_1,j_2}^{T} \sum_{\small{\substack{l_1 \in [L(\omega_{j_1})], l_2 \in [L(\omega_{j_1+h_1})], \\ l_3 \in [L(\omega_{j_2})] , l_4 \in [L(\omega_{j_2+h_2})]}}} ({\lambda_{l_1}^{\omega_{j_1}} \lambda_{l_2}^{\omega_{j_1+h}}\lambda_{l_3}^{\omega_{j_2}} \lambda_{l_4}^{\omega_{j_2+h}}})^{-1/2}\\& 
 \phantom{\frac{1}{T}\sum_{j_1,j_2}^{T}}
 \times\Bigg\{  \frac{2\pi}{T}\langle \tilde{\F}_{h_2-h_1:\omega_{j_1},-\omega_{j_1+h_1},-\omega_{j_2}} (\phi^{\omega_{j_2}}_{l_{3}} \otimes \phi^{\omega_{j_2+h_2}}_{l_4}), \phi^{\omega_{j_1}}_{l_{1}} \otimes \phi^{\omega_{j_1+h_1}}_{l_2}\rangle +O\big(\frac{1}{T^2}\big)\bigg)
\\& \phantom{\sum_{j_1,j_2}^{T}}+
\bigg(\big\langle\tilde{\F}_{j_1-j_2:\omega_{j_1}}(\phi^{\omega_{j_2}}_{l_3}), \phi^{\omega_{j_1}}_{l_1}  \big\rangle+O\big(\frac{1}{T}\big)\bigg)   \bigg(\big\langle \tilde{\F}_{-j_1-h_1+j_2+h_2:-\omega_{j_1+h_1}}(\phi^{-\omega_{j_2+h_2}}_{l_4} ), \phi^{-\omega_{j_1+h_1}}_{l_2} \big\rangle  + O\big(\frac{1}{T}\big)\bigg) 
\\& \phantom{\sum_{j_1,j_2}^{T}}+\big(\big \langle\tilde{\F}_{j_1+j_2+h_2:\omega_{j_1}}(\phi^{-\omega_{j_2+h_2}}_{l_4}, \phi^{\omega_{j_1}}_{l_1}\big \rangle+O\big(\frac{1}{T}\big)\bigg)  \bigg(\big \langle \tilde{\F}_{-j_1-h_1-j_2:-\omega_{j_1+h_1}}(\phi^{\omega_{j_2}}_{l_3}), \phi^{-\omega_{j_1+h_1}}_{l_2}\big \rangle +O\big(\frac{1}{T}\big)\Bigg\}. 
\end{align*} 
Note then once more that 
\[
\Re \tproj_{h,x} = \frac 12\Big({\tproj_{h,x}+\overline{\tproj_{h,x}}}\Big)
\quad\mbox{and}\quad 
\Im  \tproj_{h,x} = \frac 1{2\im}\Big({\tproj_{h,x} -\overline{\tproj_{h,x}}}\Big).
\]
Under the alternative, these are in fact correlated and four separate cases will have to be considered: 
\begin{enumerate}[label={(S6.1.\arabic*)},ref=S6.1.\arabic*]
\item  \label{eq:covRR}
$\begin{aligned}[t] 
\cov\big( \Re \tproj_{h_1,x} , \Re  \tproj_{h_2,x}\big) 
& =\frac{1}{4}\Big[ \cov\big( \tproj_{h_1,x}, \tproj_{h_2,x}\big)
+\cov\big( \tproj_{h_1,x},\overline{\tproj_{h_2,x}}\big) \\
&\qquad 
+\cov\big( \overline{\tproj_{h_1,x}},{\tproj_{h_2,x}}\big)
+\cov\big(\overline{\tproj_{h_1,x}},\overline{\tproj_{h_2,x}}\big)\Big], \label{eq:covRR}
\end{aligned}$
\item \label{eq:covRI} 
$\begin{aligned}[t]  
\cov\big(\Re \tproj_{h_1,x} , \Im \tproj_{h_2,x}\big) 
& =\frac{1}{4 \overline{\im}}\Big[ \cov\big(\tproj_{h_1,x}, \tproj_{h_2,x})
-\cov\big( \tproj_{h_1,x},\overline{\tproj_{h_2,x}}) \\
&\qquad
+\cov\big( \overline{\tproj_{h_1,x}},{\tproj_{h_2,x}}\big) 
-\cov\big(\overline{\tproj_{h_1,x}},\overline{\tproj_{h_2,x}}\big) \Big],
\end{aligned}$
\item \label{eq:covIR}
$\begin{aligned}[t]  
\cov\big( \Im \tproj_{h_1,x}, \Re \tproj_{h_2,x}\big) 
& =\frac{1}{4 \overline{\im}}\Big[ \cov\big(\tproj_{h_1,x}, \tproj_{h_2,x}\big)
+\cov\big( \tproj_{h_1,x},\overline{\tproj_{h_2,x}}\big) \\
&\qquad
-\cov\big( \overline{\tproj_{h_1,x}},{\tproj_{h_2,x}}\big) 
-\cov\big(\overline{\tproj_{h_1,x}},\overline{\tproj_{h_2,x}}\big)\Big],
\end{aligned}$
\item \label{eq:covII}
$\begin{aligned}[t]   
\cov\big( \Im\tproj_{h_1,x} , \Im \tproj_{h_2,x}\big) 
& =\frac{1}{4 }\Big[ \cov\big(\tproj_{h_1}, \tproj_{h_2,x}\big)
-\cov\big( \tproj_{h_1,x},\overline{\tproj_{h_2}}\big) \\
&\qquad 
-\cov\big( \overline{\tproj_{h_1,x}},{\tproj_{h_2,x}}\big) 
+\cov\big(\overline{\tproj_{h_1,x}},\overline{\tproj_{h_2,x}}\big)\Big]. 
\end{aligned}$
\end{enumerate}
These expressions can be easily obtained from \eqref{eq:covBetaHa} and \eqref{eq:covBetaHaS} by taking the appropriate conjugates. The four terms on the right-hand sides of the above four equations are derived for arbitrary basis functions in Section \ref{sec:covCLT}. It then remains to replace the basis functions with the (standardized) eigenfunctions of the integrated spectral density operators and the sum over the dimensions.
%%%%%%%%%%%%%%%%%%%%%%%%%%%%%%%%%%%%%%%%%%
\end{proof}
%%%%%%%%%%%%%%%%%%%%%%%%%%%%%%%%%%%%%%%%%%%%%%%%
\subsection{ Covariance structure of Theorem \ref{CLT_alt}} \label{sec:covCLT}
%\todo{these equations need to be checked!!!!}
 Lemma \ref{cumboundglsp} implies the $h$-lag covariance operator of the fDFT's has covariance structure as in \eqref{eq:covHa}.
Observe that the covariance structure of the real and imaginary parts are linear combinations of the four different combinations of the covariances with their conjugates similar to \eqref{eq:covRR}-\eqref{eq:covII}. Therefore, a tedious derivation shows that we obtain for the covariance structure of the projections in Theorem \ref{CLT_alt}:
\begin{align} 
\label{eq:nonnon_lim}
\Upsilon_{h_1,h_2}(\psi_{l_1l_1^\prime\,l_2l_2^\prime})
&= \lim_{T \to \infty} \frac{1}{T}\sum_{j_1,j_2=1}^{T} 
\Big(\big\langle\tilde{\F}_{j_1-j_2;\omega_{j_1}}(\psi_{l_2}),\psi_{l_1}\big\rangle 
\big\langle\tilde{\F}_{-j_1-h_1+j_2+h_2;-\omega_{j_1+h_1}}(\psi_{l_2^\prime}),\psi_{l_1^\prime}\big\rangle \nonumber  \\
& \phantom{ \lim_{T \to \infty} \frac{1}{T}\sum_{j_1,j_2=1}^{T}}
+\big\langle\tilde{\F}_{j_1+j_2+h_2;\omega_{j_1}}(\psi_{l_2^\prime}),\psi_{l_1}\big\rangle
\big\langle\tilde{\F}_{-j_1-h_1-j_2,-\omega_{j_1+h_1}}(\psi_{l_2}),\psi_{l_1^\prime}\big\rangle \nonumber
\\ 
& \phantom{ \lim_{T \to \infty} \frac{1}{T}\sum_{j_1,j_2=1}^{T}}+ \frac{2\pi}{T} 
\big\langle\tilde{\F}_{(-h_1+h_2 ; \omega_{j_1},-\omega_{j_1+h_1},-\omega_{j_2})}(\psi_{l_2\, l_2^\prime}),\psi_{l_1,l_1^\prime}\big\rangle\Big),
\end{align}
\begin{align} 
\label{eq:noncon_lim}
\acute{\Upsilon}_{h_1,h_2}(\psi_{l_1l_1^\prime\,l_2l_2^\prime})
&= \lim_{T \to \infty} \frac{1}{T}\sum_{j_1,j_2=1}^{T}\Big(
\big\langle\tilde{\F}_{j_1+j_2;\omega_{j_1}}(\psi_{l_2}),\psi_{l_1}\big\rangle 
\big\langle\tilde{\F}_{-j_1-h_1-j_2-h_2;-\omega_{j_1+h_1}}(\psi_{l_2^\prime}),\psi_{l_1^\prime}\big\rangle \nonumber \\
& \phantom{ \lim_{T \to \infty} \frac{1}{T}\sum_{j_1,j_2=1}^{T}}
+\big\langle\tilde{\F}_{j_1-j_2-h_2;\omega_{j_1}}(\psi_{l_2^\prime}),\psi_{l_1}\big\rangle
\big\langle\tilde{\F}_{-j_1-h_1+j_2;-\omega_{j_1+h_1}}(\psi_{l_2}),\psi_{l_1^\prime}\big\rangle \nonumber
\\& \phantom{ \lim_{T \to \infty} \frac{1}{T}\sum_{j_1,j_2=1}^{T}}+ \frac{2\pi}{T} 
\big\langle\tilde{\F}_{(-h_1-h_2 ; \omega_{j_1},-\omega_{j_1+h_1},\omega_{j_2})} (\psi_{l_2\, l_2^\prime}),\psi_{l_1\,l_1^\prime}\big\rangle\Big),
\end{align}
\begin{align}
\label{eq:concon_lim}
\bar{\Upsilon}_{h_1,h_2}(\psi_{l_1l_1^\prime\,l_2l_2^\prime})
&=\lim_{T \to \infty}  \frac{1}{T}\sum_{j_1,j_2=1}^{T}\Big(
\big\langle\tilde{\F}_{-j_1+j_2;-\omega_{j_1}}(\psi_{l_2}),\psi_{l_1}\big\rangle
\big\langle\tilde{\F}_{j_1+h_1-j_2-h_2;\omega_{j_1+h_1}}(\psi_{l_2^\prime}),\psi_{l_1^\prime}\big\rangle  \nonumber \\
& \phantom{ \lim_{T \to \infty} \frac{1}{T}\sum_{j_1,j_2=1}^{T}}
+\big\langle\tilde{\F}_{-j_1-j_2-h_2;-\omega_{j_1}}(\psi_{l_2^\prime}),\psi_{l_1}\big\rangle
\big\langle\tilde{\F}_{j_1+h_1+j_2;\omega_{j_1+h_1}}(\psi_{l_2}),\psi_{l_1^\prime}\big\rangle \nonumber
\\&  \phantom{ \lim_{T \to \infty} \frac{1}{T}\sum_{j_1,j_2=1}^{T}}+\frac{2\pi}{T} 
\big\langle\tilde{\F}_{(h_1-h_2 ; -\omega_{j_1},\omega_{j_1+h_1},\omega_{j_2})} (\psi_{l_2\, l_2^\prime}),\psi_{l_1\,l_1^\prime}\big\rangle\Big)
\end{align}
and
\begin{align} 
\label{eq:connon_lim}
\grave{\Upsilon}_{h_1,h_2}(\psi_{l_1l_1^\prime\,l_2l_2^\prime}) 
&=\lim_{T \to \infty}  \frac{1}{T}\sum_{j_1,j_2=1}^{T} \Big(
\big\langle\tilde{\F}_{-j_1-j_2;-\omega_{j_1}}(\psi_{l_2}),\psi_{l_1}\big\rangle 
\big\langle\tilde{\F}_{j_1+h_1+j_2+h_2;\omega_{j_1+h_1}}(\psi_{l_2^\prime}),\psi_{l_1^\prime}\big\rangle \nonumber \\
&  \phantom{ \lim_{T \to \infty} \frac{1}{T}\sum_{j_1,j_2=1}^{T}}
+\big\langle\tilde{\F}_{-j_1+j_2+h_2;-\omega_{j_1}}(\psi_{l_2^\prime}),\psi_{l_1}\big\rangle
\big\langle\tilde{\F}_{j_1+h_1-j_2;\omega_{j_1+h_1}}(\psi_{l_2}),\psi_{l_1^\prime}\big\rangle \nonumber
\\& 
\phantom{ \lim_{T \to \infty} \frac{1}{T}\sum_{j_1,j_2=1}^{T}}+\frac{2\pi}{T} 
\big\langle\tilde{\F}_{(h_1+h_2 ; -\omega_{j_1},\omega_{j_1+h_1},-\omega_{j_2})}(\psi_{l_2\, l_2^\prime}),\psi_{l_1\,l_1^\prime}\big\rangle\Big). 
\end{align}

\section{Estimation of the integrated tri-spectral density operator }\label{sec:F4proof}

\subsection{Consistency under $H_0$}\label{sec:F4H0}

\begin{theorem}\label{thm:intF4stat}
Suppose \ref{Statcase}(4,2) and \ref{cumglsp}(8,2) hold. Then the estimator in \eqref{eq:fourthorderest} of the tri-spectral density operator satisfies
\begin{align*}
& \snorm{\E \hat{\F}_{\omega_{j_1},\omega_{j_2},\omega_{j_3}} -{\F}_{\omega_{j_1},\omega_{j_2},\omega_{j_3}} }_2 
=O\bigg(\frac{1}{b_4 T}+b_4^2\bigg), 
 \\& 
 \snorm{\Cov( \hat{\F}_{\omega_{j_1},\omega_{j_2},\omega_{j_3}},\hat{\F}_{\omega_{j_1},\omega_{j_2},\omega_{j_3}}) }^2_2
 =O\bigg(\frac{1}{b_4^3 T}\bigg).
\end{align*}
Consequently, 
\begin{align}
\E \Bigsnorm{ \frac{(2\pi)^2}{T^2}\sum_{j_1,j_2=1}^{T}\hat{\F}_{\omega_{j_1},-\omega_{j_1+h},-\omega_{j_2}} - \int \int  {\F}_{\omega,-\omega +\omega_{h},-\omega^{\prime}}d\omega d\omega^{\prime}}^2_2 
=O\bigg(\frac{1}{b_4^3T} +b_4^4\bigg)  \tageq \label{eq:intF4}
\end{align}
which is therefore mean square consistent for bandwidths satisfying $b_4 \to 0$ such that $b_4^3 T\to \infty$ as $T\to \infty$. 
\end{theorem}

\begin{proof}[Proof of Theorem \ref{thm:intF4stat}]
Consider first the expectation of $\hat{\F}_{\omega_{j_1},\omega_{j_2},\omega_{j_3}}$ which is given by
\begin{align*}
& \E\hat{\F}_{\omega_{j_1},\omega_{j_2},\omega_{j_3}}=
\frac{(2\pi)^{3}}{{(b_4 T)}^{3}}\sum_{k_1,k_2, k_3}  K_4\Big(\frac{\omega_{j_1}-{\omega_{k_1}}}{b_4},\ldots,\frac{\omega_{j_4}+\tiny{\sum_{i=1}^{3}\omega_{k_i}}}{b_4}\Big) \E  \Phi(\boldsymbol{\omega_k})  I^{(T)}_{\omega_{k_1},\omega_{k_2},\omega_{k_3},-\tiny{\sum_{i=1}^{3}\omega_{k_i}}}
\\& =
\frac{(2\pi)^{3}}{{(b_4 T)}^{3}} \frac{T}{2\pi}\sum_{k_1,k_2, k_3}  K_4\Big(\frac{\omega_{j_1}-{\omega_{k_1}}}{b_4},\ldots,\frac{\omega_{j_4}+\tiny{\sum_{i=1}^{3}\omega_{k_i}}}{b_4}\Big)\E \big(\Phi(\boldsymbol{\omega_k}) D_{\omega_{k_1}} \otimes D_{\omega_{k_2}}\otimes D_{\omega_{k_3}}\otimes D_{-\tiny{\sum_{i=1}^{3}\omega_{k_i}}} \big),
\end{align*}
where we used in the second equality that the tri-periodogram tensor can be expressed in terms the cumulant tensors of the upscaled fDFTs. Using then Theorem \ref{prodcumthm}, we have
\begin{align*}
\E\big[\Phi(\boldsymbol{\omega_k}) I^{(T)}_{\omega_{j_1},\omega_{j_2},\omega_{j_3},\omega_{j_4}}\big]
%&= \frac{1}{{(2 \pi)}^{3} T} {({2\pi T})^2} D^{(T)}_{\omega_{j_1}} \otimes D^{(T)}_{\omega_{j_2}}  \otimes D^{(T)}_{\omega_{j_3}}  \otimes D^{(T)}_{\tiny{-\sum_{i=1}^3\omega_{j_i}}} \nonumber\\
 &=\frac{T}{{(2 \pi)}} \Big(\Phi(\boldsymbol{\omega_k})\cum\big(D^{(T)}_{\omega_{j_1}},\ldots, D^{(T)}_{\omega_{j_4}}\big)  \\& 
 \qquad+\Phi(\boldsymbol{\omega_k}) \cum\big(D^{(T)}_{\omega_{j_1}}, D^{(T)}_{\omega_{j_2}}\big) \otimes \cum\big(D^{(T)}_{\omega_{j_3}}, D^{(T)}_{\omega_{j_4}}\big)\\
 &\qquad  + \Phi(\boldsymbol{\omega_k})S_{1324}(\cum\big(D^{(T)}_{\omega_{j_1}}, D^{(T)}_{\omega_{j_3}}\big) \otimes \cum\big(D^{(T)}_{\omega_{j_2}}, D^{(T)}_{\omega_{j_4}}\big)) \\
 &\qquad +\Phi(\boldsymbol{\omega_k})S_{1423}(\cum\big(D^{(T)}_{\omega_{j_1}}, D^{(T)}_{\omega_{j_4}}\big)\big))
\otimes \cum\big(D^{(T)}_{\omega_{j_2}}, D^{(T)}_{\omega_{j_3}}\big)  \Big). \tageq
\label{eq:fourth-decomp}
\end{align*}
Note that, due to the inclusion of the function $\Phi$, only those terms are to be considered for which the frequencies satisfy $j_4= -j_1-j_2-j_3$ in such a way that $j_1 \ne j_2$ $j_1 \ne j_3$ and $j_2 \ne j_3$ and $j_1 \ne j_4$ $T$-periodically. For those values not contained on a proper submanifold, the products of second-order cumulant tensors are at most of order $O(T^{-2})$ in an $L^2$ sense under the null hypothesis. Using Lemma \ref{FCboundssd}, 
it follows therefore that 
\begin{align*}
\biggsnorm{\E\big[\Phi(\boldsymbol{\omega_k}) I^{(T)}_{\omega_{j_1},\omega_{j_2},\omega_{j_3},\omega_{j_4}}\big] - \frac{T}{{(2 \pi)}} \frac{2\pi}{T}  \F_{\omega_{j_1}, \omega_{j_2}, \omega_{j_3}}}_2 
 =  O\bigg(\frac{T}{T^2}\bigg)  
 =O\bigg(\frac{1}{T}\bigg)  
\end{align*}
and hence is asymptotically unbiased. Additionally, the smoothing kernel is defined as a product of one-dimensional smoothing kernels with compact support. 
%%%%%%%%%%%%%%%%%%%%%%%%%%%%%%%%%%
%It is clear that for any $y \in [a,b]$, we have  
%\[
%|\int^b_a K(x) dx -(b-a) K(y)| \le (b-a) V_a^b(K)  
%\]
%Hence, for a partition of length $T$ of the interval $[a,b]$ we find that the Riemann sum satisfies
%\[\Big|\sum_{j=1}^T \Big[\int^{a_{j}}_{a_{j-1}} K(x) dx -(a_{j}-a_{j-1}) \frac{1}{b_4}K(y/b_4) 1_{y/b_4 \in [a_{j-1},a_j]} \Big]\Big| \le \frac{1}{T}V_{-1/2}^{1/2}\big(\frac{1}{b_4}K(\frac{\cdot}{b_4})\big)\]
Denote by $V_a^b(K)$ the total variation on $[a,b]$ of the function $K$, then a standard argument gives these kernels satisfy
\begin{align*}
\bigg|\int^b_a K(x) dx- \frac{1}{b_4 T}\sum_{j=1}K\bigg(\frac{x_j}{b_4} \bigg)\bigg| \le \frac{1}{T}V_{a}^{b}\bigg(\frac{1}{b_4}K\bigg(\frac{\cdot}{b_4}\bigg)\bigg) = O\bigg(\frac{1}{b_4 T}\bigg),
\end{align*}
where we used that $V_{a}^{b}\big(\frac{1}{b_4}K(\frac{\cdot}{b_4})\big)=O(\frac{1}{b_4})$. This together with a change of variables yields
\begin{align*}
& \biggsnorm{\E\big[ \hat{\F}^{(T)}_{\omega_{1},\omega_{2},\omega_{3}}\big] -\int \int \int  \frac{1}{b_4^3}K_4\Big(\frac{\omega_1-{\alpha}_1}{b_4},\ldots,\frac{\tiny{\sum_{i}^3 \alpha_i-\omega_i}}{b_4}\Big) \F_{\alpha_{1}, \alpha_{2}, \alpha_{3}} d\alpha_1 d\alpha_2 d\alpha_3}_2 
\\& =
\biggsnorm{\E\big[ \hat{\F}^{(T)}_{\omega_{1},\omega_{2},\omega_{3}}\big] -\int \int \int  K_4\Big(x_1,\ldots,-b_4\tiny{\sum_{i}^3 x_i }\Big) \F_{\boldsymbol{\omega}-\boldsymbol{x} b_4} d x_1 d x_2 d x_3}_2 
 =O\bigg(\frac{1}{b_4T}\bigg),
\end{align*}
using the more compact notation $\boldsymbol{\omega}-\boldsymbol{x} b_4:=(\omega_1-x_1 b_4, \omega_2 -x_2 b_4, \omega_3-x_3 b4)  \in \mathbb{R}^3$. For $\boldsymbol{\alpha} \in \mathbb{R}^3$, note that \ref{Statcase} with $\ell=2$ implies that the operator-valued derivative mappings $\boldsymbol{\alpha} \mapsto D^i \F_{\boldsymbol{\alpha}}$ are well-defined elements of $S(H \otimes H)$. Hence, a Taylor expansion of the operator-valued function $\F_{\boldsymbol{\alpha}}$ at the point $\boldsymbol{\omega_0}:= (\omega_1, \omega_2,\omega_3)$ yields 
\begin{align*}
\biggsnorm{\F_{\boldsymbol{\omega-x b_4}} -\Big( \F_{\boldsymbol{\omega_0}} +[b_4 \boldsymbol{ x}]^\top D_{\boldsymbol{\omega}}\F_{\boldsymbol{\omega}}\Big \vert_{\boldsymbol{\omega}=\boldsymbol{\omega_0}}  +[b_4 \boldsymbol{ x}]^\top D^2_{\boldsymbol{\omega}}\F_{\boldsymbol{\omega}} \Big\vert_{\boldsymbol{\omega}=\boldsymbol{\omega_0}} [b_4 \boldsymbol{x}]\Big)}_2 = o(b_4^2).
\end{align*}
Utilizing that the smoothing kernel is symmetric in each argument, we obtain for \eqref{eq:fourthorderest} 
\begin{align*}
\bigsnorm{\E\big[ \hat{\F}_{\omega_{j_1},\omega_{j_2},\omega_{j_3},\omega_{j_4}}\big] - \F_{\omega_{j_1}, \omega_{j_2}, \omega_{j_3}}}_2 
 =O\bigg(\frac{1}{b_4T} +b_4^2\bigg).  
\end{align*}
Consequently,
\begin{align*}
\biggsnorm{\E \int \int \hat{\F}_{\omega,-\omega +\omega_{h},-\omega^{\prime},\omega^{\prime}+\omega^{\prime}_h} d\omega d\omega^{\prime}- \int \int  {\F}_{\omega,-\omega +\omega_{h},-\omega^{\prime},\omega^{\prime}+\omega^{\prime}_h}d\omega d\omega^{\prime}}_2 
=O\bigg(\frac{1}{b_4T} +b_4^2\bigg).  
\end{align*}
Consider next the covariance structure of $\F_{\omega_{j_1},\omega_{j_2},\omega_{j_3}}$. By definition,
\small{
\begin{align}\label{eq:covF4hat}
&\text{Cov}\Big(\hat{\F}_{\omega_{j_1},\omega_{j_2},\omega_{j_3}}, \hat{\F}_{\omega_{l_1},\omega_{l_2},\omega_{l_3}}\Big)\\&
=\text{Cov}\Big(\frac{(2\pi)^{3}}{{(b_4 T)}^{3}}\sum_{k_1,k_2, k_3}  K_4\Big(\frac{\omega_{j_1}-{\omega_{k_1}}}{b_4},\ldots,\frac{\omega_{j_4}+\tiny{\sum_{i=1}^{3}\omega_{k_i}}}{b_4}\Big)  \Phi(\boldsymbol{\omega_k})  I^{(T)}_{\omega_{k_1},\omega_{k_2},\omega_{k_3},-\tiny{\sum_{i=1}^{3}\omega_{k_i}}} ,\nonumber\\&\phantom{\frac{(2\pi)^{6}}{{(b_4 T)}^{6}}\frac{T^2}{(2\pi)^2}\text{Cov}\Big(}
\frac{(2\pi)^{3}}{{(b_4 T)}^{3}}\sum_{s_1,s_2, s_3}  K_4\Big(\frac{\omega_{l_1}-{\omega_{s_1}}}{b_4},\ldots,\frac{\omega_{l_4}+\tiny{\sum_{i=1}^{3}\omega_{s_i}}}{b_4}\Big)  \Phi(\boldsymbol{\omega_s})  I^{(T)}_{\omega_{s_1},\omega_{s_2},\omega_{s_3},-\tiny{\sum_{i=1}^{3}\omega_{s_i}}} \Big)
\nonumber \\& =
\frac{(2\pi)^{6}}{{(b_4 T)}^{6}}\frac{T^2}{(2\pi)^2}\text{Cov}\Big(\sum_{k_1,k_2, k_3}  K_4\Big(\frac{\omega_{j_1}-{\omega_{k_1}}}{b_4},\ldots,\frac{\omega_{j_4}+\tiny{\sum_{i=1}^{3}\omega_{k_i}}}{b_4}\Big)  \Phi(\boldsymbol{\omega_k})  D_{\omega_{k_1}} \otimes D_{\omega_{k_2}}\otimes D_{\omega_{k_3}}\otimes D_{-\tiny{\sum_{i=1}^{3}\omega_{k_i}}}  ,\nonumber\\&
\phantom{\frac{(2\pi)^{6}}{{(b_4 T)}^{6}}\frac{T^2}{(2\pi)^2}\text{Cov}\Big(} 
\sum_{s_1,s_2, s_3}  K_4\Big(\frac{\omega_{l_1}-{\omega_{s_1}}}{b_4},\ldots,\frac{\omega_{l_4}+\tiny{\sum_{i=1}^{3}\omega_{s_i}}}{b_4}\Big)  \Phi(\boldsymbol{\omega_s})  D_{\omega_{s_1}} \otimes D_{\omega_{s_2}}\otimes D_{\omega_{s_3}}\otimes D_{-\tiny{\sum_{i=1}^{3}\omega_{s_i}}} \Big) \nonumber \\&
=\frac{(2\pi)^{4}}{ b_4^6 T^{4}} \sum_{k_1,k_2, k_3}  K_4\Big(\frac{\omega_{j_1}-{\omega_{k_1}}}{b_4},\ldots,\frac{\omega_{j_4}+\tiny{\sum_{i=1}^{3}\omega_{k_i}}}{b_4}\Big)  \sum_{s_1,s_2, s_3}  K_4\Big(\frac{\omega_{l_1}-{\omega_{s_1}}}{b_4},\ldots,\frac{\omega_{l_4}+\tiny{\sum_{i=1}^{3}\omega_{s_i}}}{b_4}\Big)  \nonumber  \\& 
\phantom{\frac{(2\pi)^{6}}{{(b_4 T)}}\frac{T^2}{(2\pi)}\Big(} 
\times \text{Cum}\Big( \Phi(\boldsymbol{\omega_k}) D_{\omega_{k_1}} \otimes D_{\omega_{k_2}}\otimes D_{\omega_{k_3}}\otimes D_{-\tiny{\sum_{i=1}^{3}\omega_{k_i}}},\Phi(\boldsymbol{\omega_s}) D_{-\omega_{s_1}} \otimes D_{-\omega_{s_2}}\otimes D_{-\omega_{s_3}}\otimes D_{\tiny{\sum_{i=1}^{3}\omega_{s_i}}} \Big) \nonumber
\end{align}}
\normalsize
By Theorem \ref{prodcumthm}, the cumulant term implies we are looking for all indecomposable partitions of the array
\[\begin{matrix}
\underbrace{D_{\omega_{k_1}}}_{1}&\underbrace{D_{\omega_{k_2}}}_2 & \underbrace{D_{\omega_{k_3}}}_3& \underbrace{ D_{-\tiny{\sum_{i=1}^{3}\omega_{k_i}}}}_4\\ 
\underbrace{D_{-\omega_{s_1}}}_{5}&\underbrace{-D_{\omega_{s_2}}}_6 & \underbrace{D_{-\omega_{s_3}}}_7& \underbrace{ D_{\tiny{\sum_{i=1}^{3}\omega_{s_i}}}}_8 
\end{matrix} \tageq \label{eq:decF4}\]
We shall ignore the  $\Phi(\boldsymbol{\omega_k})$ and $\Phi(\boldsymbol{\omega_s})$ as this will not change the order of the variance. Indecomposability implies the rows must hook so at least one tensor from the first row must be in the same component with an element from the second row. Observe that for partitions of which one element consists of a at least 4 fDFT tensors,  Lemma \ref{FCboundssd} implies at least two constraints on the summation will enter. Moreover, such an element is at most of order $O(T^{-1})$ in norm. Hence, \eqref{eq:covF4hat} is at most of order $O(\frac{b_4^4 T^4}{b_4^6 T^4}\frac{1}{T}) = O(\frac{1}{b_4^2 T})$ in $S(H \otimes H)$. We therefore only have to consider those partitions consisting of tensor products of two fDFT's. Notice that we at least will have to impose three restrictions in order to make such terms not disappear. For example,
  \begin{align*}
S_{(15)(26)(37)(48)}\
\end{align*}
implies the restrictions $k_1-s_3 \equiv 0 \mod T$, $k_2-s_2 \equiv 0 \mod T$ and $k_3-s_3 \equiv 0 \mod T$ . Consequently, the covariance becomes
\begin{align*}
&\biggsnorm{\text{Cov}\Big(\hat{\F}_{\omega_{j_1},\omega_{j_2},\omega_{j_3}}, \hat{\F}_{\omega_{l_1},\omega_{l_2},\omega_{l_3}}\Big)}_2\le \nonumber\\&
\sup_{\omega}\snorm{\F_{\omega}}^4_2\frac{(2\pi)^{4}}{ b_4^6 T^{4}}\bigg| \sum_{k_1,k_2, k_3}  K_4\Big(\frac{\omega_{j_1}-{\omega_{k_1}}}{b_4},\ldots,\frac{\omega_{j_4}+\tiny{\sum_{i=1}^{3}\omega_{k_i}}}{b_4}\Big)   K_4\Big(\frac{\omega_{l_1}+{\omega_{k_1}}}{b_4},\ldots,\frac{\omega_{l_4}-\tiny{\sum_{i=1}^{3}\omega_{k_i}}}{b_4}\Big) \bigg|\\& 
=O\bigg(\frac{1}{ b_4^6 T^{4}} \times b_4^3 T^3\bigg) = O\bigg(\frac{1}{b_4^3 T}\bigg),
\end{align*}
showing the estimator is consistent as $b_4 \to 0$ in such a way that $b_4^3 T \to \infty$. The last statement \eqref{eq:intF4} now follows from a bias-variance decomposition as in the proof of Theorem \ref{UnifCon}(i) and from noting that the Riemann-approximation does not change the order.
\end{proof}

\subsection{Distributional properties under $H_A$}\label{sec:F4HA}

\begin{theorem}\label{thm:intF4loc}
Suppose \ref{cumglsp}(8,\,2) holds.  Then,
\begin{align*}
&(i)~\biggsnorm{\E \frac{(2\pi)^2}{T^2}\sum_{j_1,j_2=1}^{T}\hat{\F}_{\omega_{j_1},-\omega_{j_1+h},-\omega_{j_2}} - \int \int  {G}_{\omega,-\omega +\omega_{h},-\omega^{\prime}}d\omega d\omega^{\prime}-\mathcal{Z}_h}_2 =O\bigg(\frac{1}{b_4 T}+b_4\bigg), 
 \\
 &(ii)~\bigsnorm{\Cov( \hat{\F}_{\omega_{j_1},\omega_{j_2},\omega_{j_3}},\hat{\F}_{\omega_{j_1},\omega_{j_2},\omega_{j_3}}) }^2_2
 =O\bigg(\frac{1}{b_4^3 T}\bigg),
\end{align*}
where $G_{\omega,-\omega +\omega_{h},-\omega^{\prime}}$ denotes the time-integrated tri-spectral operator and where $\mathcal{Z}_h \in S(H \otimes H)$ is a bias term of order $O(\snorm{\mathcal{Z}_h}_2)=1$.
\end{theorem}
\begin{proof}[Proof of Theorem \ref{thm:intF4loc}]

Using Theorem \ref{prodcumthm} and Lemma \ref{cumboundglsp} we find for the expectation of the tri-spectral operator estimator
\begin{align*}
&\E \hat{\F}_{\omega_{j_1},\omega_{j_2},\omega_{j_3}}=\frac{(2\pi)^{2}}{b_4^3 T^2}\sum_{k_1,k_2, k_3}  K\bigg(\frac{\omega_{j_1}-{\omega_{k_1}}}{b_4}\bigg) K\bigg(\frac{\omega_{j_2}-{\omega_{k_2}}}{b_4}\bigg) K\bigg(\frac{\omega_{j_3}-{\omega_{k_3}}}{b_4}\bigg) K\bigg(\frac{\omega_{j_4}+\tiny{\sum_{i=1}^{3}\omega_{k_i}}}{b_4}\bigg)\\& 
 \phantom{\frac{(2\pi)^{2}}{b_4^3 T^2}\sum_{k_1,k_2, k_3}}
\times \Phi(\boldsymbol{\omega_k})\Bigg[\frac{2\pi}{T^2} \sum_{t=1}^T{\F}_{t/T; \omega_{k_1},\omega_{k_2},\omega_{k_3}}+R_{4,T}
 \\&  \phantom{\frac{(2\pi)^{2}}{b_4^3 T^2}\sum_{k_1,k_2, k_3}}
+ \bigg(\frac{1}{T}  \sum_{t=1}^{T} \F_{t/T;\omega_{k_1}}e^{-\im t(\omega_{k_1}+\omega_{k_2})} +R_{2,T}\bigg)
\otimes \bigg( \frac{1}{T} \sum_{s=1}^{T} \F_{s/T;\omega_{k_3}}e^{\im s(\omega_{k_1}+\omega_{k_2})} +R_{2,T}\bigg)
 \\& \phantom{\frac{(2\pi)^{2}}{b_4^3 T^2}\sum_{k_1,k_2, k_3}}
  + \bigg(\frac{1}{T} \sum_{t=1}^{T} \F_{t/T;\omega_{k_1}}e^{-\im t(\omega_{k_1}+\omega_{k_3})} +R_{2,T}\bigg)
  \widetilde{\otimes} \bigg( \frac{1}{T} \sum_{s=1}^{T} \F_{s/T;\omega_{k_2}}e^{\im s(\omega_{k_1}+\omega_{k_3})}+R_{2,T} \bigg)\ \\&  \phantom{\frac{(2\pi)^{2}}{b_4^3 T^2}\sum_{k_1,k_2, k_3}}
+\bigg(  \frac{1}{T} \sum_{t=1}^{T} \F_{t/T;\omega_{k_1}}e^{\im t(\omega_{k_2}+\omega_{k_3})} +R_{2,T}\bigg)
\otimes _{\top} \bigg(\frac{1}{T} \sum_{s=1}^{T} \F_{s/T;\omega_{k_2}}e^{-\im s(\omega_{k_2}+\omega_{k_3})}+R_{2,T}\bigg)\Bigg].
\end{align*}
For the first term we note that a similar argument as in the stationary case yields
\begin{align*}
\biggsnorm{\E \frac{(2\pi)^2}{T^2}\sum_{j_1,j_2=1}^{T}\hat{\F}_{\omega_{j_1},-\omega_{j_1+h},-\omega_{j_2}} - \int \int  {G}_{\omega,-\omega +\omega_{h},-\omega^{\prime}}d\omega d\omega^{\prime}}_2 =O\bigg(\frac{1}{b_4 T}+b_4^2\bigg).
\end{align*}
We now turn to the three terms consisting of operators of second-order cumulant tensors of 
\[ \E \frac{(2\pi)^2}{T^2}\sum_{j_1,j_2=1}^{T}\hat{\F}_{\omega_{j_1},-\omega_{j_1+h},-\omega_{j_2}},\] which can be written as
 \begin{align*}
\frac{(2\pi)^2}{T^2}\sum_{j_1,j_2=1}^{T}  \frac{(2\pi)^{2}}{b_4^3 T^2}\sum_{k_1,k_2, k_3} & K(\frac{\omega_{j_1}-{\omega_{k_1}}}{b_4}) K(\frac{-\omega_{j_1+h}-{\omega_{k_2}}}{b_4}) K(\frac{-\omega_{j_2}-{\omega_{k_3}}}{b_4}) K(\frac{\omega_{j_2+h}+\tiny{\omega_{k_1}+\omega_{k_2}+\omega_{k_3}}}{b_4})
  \\& \times \Phi(\boldsymbol{\omega_k}) \Bigg[
 (\tilde{\F}_{k_1+k_2;\omega_{k_1}} +R_{2,T})\otimes  ( \tilde{\F}_{-k_1-k_2;\omega_{k_3}} +R_{2,T})
\\& \phantom{ \times \Phi(\boldsymbol{\omega_k}) \Bigg[}   + (\tilde{\F}_{k_1+k_3;\omega_{k_1}} +R_{2,T})\widetilde{\otimes} ( \tilde{\F}_{-k_1-k_3;\omega_{k_2}+\omega_{k_3}}+R_{2,T}) \\& \phantom{ \times \Phi(\boldsymbol{\omega_k}) \Bigg[}   + (\tilde{\F}_{-k_2-k_3;\omega_{k_1}} +R_{2,T})\otimes_{\top}  ( \tilde{\F}_{k_2+k_3;\omega_{k_2}}+R_{2,T})\Bigg].
\end{align*}
Let us focus on the first term. Using a change of variables and Corollary \ref{cumbounds} this becomes
\begin{align*}
&\frac{(2\pi)^2}{T^2}\sum_{j_1,j_2=1}^{T} \frac{(2\pi)^{2}}{b_4^3 T^2}\sum_{k_1,l, k_3}  K(\frac{\omega_{j_1}-{\omega_{k_1}}}{b_4}) K(\frac{-\omega_{j_1+h}-\omega_{l}+\omega_{k_1}}{b_4}) K(\frac{-\omega_{j_2}-{\omega_{k_3}}}{b_4}) K(\frac{\omega_{j_2+h}+\omega_{l}+\omega_{k_3}}{b_4})
  \\& \qquad \times\Phi(\omega_{k_1}, \omega_l-\omega_{k_1}, \omega_{k_3}, \omega_l+\omega_{k_3})\big(\tilde{\F}_{l;\omega_{k_1}} +R_{2,T}\big)\otimes  \big( \tilde{\F}_{-l;\omega_{k_3}} +R_{2,T}\big)
  \\& \le \frac{(2\pi)^2}{T^2}\sum_{j_1,j_2=1}^{T}  \Big| \frac{(2\pi)^{2}}{b_4^3 T^2}\sum_{k_1, k_3}  K(\frac{\omega_{j_1}-{\omega_{k_1}}}{b_4}) K(\frac{-\omega_{j_1+h}+\omega_{k_1}}{b_4}) K(\frac{-\omega_{j_2}-{\omega_{k_3}}}{b_4}) K(\frac{\omega_{j_2+h}+\omega_{k_3}}{b_4})\Big|\sup_{\omega}\sum_{l}\snorm{\tilde{\F}_{l;\omega}}^2_2
\\& \quad \quad 
 + O(\frac{1}{T^2} b^4 T^5 \frac{1}{b_4^3 T^2}\frac{1}{T})+O( \frac{1}{T^2}\frac{b_4^3 T^4}{b_4^4 T^3}) 
%\\& \quad \quad \quad =  O(T^{-2} b_4^3 T^4 \frac{1}{b_4^3 T^2})+  O(b_4+\frac{1}{b_4 T}) =
\\&  =O(1+b_4+\frac{1}{b_4 T}),
  \end{align*}
 where the second error term is a consequence of the remainders $R_{2,T}$ and where the third term follows from replacing the arguments in the second and fourth smoothing kernels. % leading to two errors inside the summation: $\big|K(\frac{-\omega_{j_1+h}-\omega_{l}+\omega_{k_1}}{b_4}) - K(\frac{-\omega_{j_1+h}-\omega_{k_1}}{b_4})\big|$ and $\big|K(\frac{\omega_{j_2+h}+\omega_{l}+\omega_{k_3}}{b_4})-K(\frac{\omega_{j_2+h}+\omega_{k_3}}{b_4})\big|$, respectively. It now follows straightforwardly from Corollary \ref{cumbounds} that we therefore find
  %  \begin{align*}
 %&\le K \frac{(2\pi)^2}{T^2}\sum_{j_1,j_2=1}^{T}  \Big| \frac{(2\pi)^{2}}{b_4^3 T^2}\sum_{k_1, k_3}  K(\frac{\omega_{j_1}-{\omega_{k_1}}}{b_4}) O(\frac{1}{b_4 T}) K(\frac{-\omega_{j_2}-{\omega_{k_3}}}{b_4}) K(\frac{\omega_{j_2+h}+\omega_{k_3}}{b_4})\Big|\\&
 % \phantom{\frac{(2\pi)^2}{T^2}\sum_{j_1,j_2=1}^{T}  \Big| \frac{(2\pi)^{2}}{b_4^3 T^2}\sum_{k_1, k_3} } \times
 % \Big(\sup_{\omega}(\sum_{l}\snorm{\tilde{\F}_{l;\omega}}^2_2 +O(\frac{1}{T})\Big) =O( \frac{1}{T^2}\frac{b_4^3 T^4}{b_4^4 T^3}) 
%\end{align*}
To prove (ii) note that we are also in this case looking at the indecomposable partitions of the arrary in \eqref{eq:decF4}. It is immediate that the terms of highest order are those with second-order cumulants and those with one fourth-order tensor and two second-order cumulant tensors. The latter is easily verified to impose two constraints on the summations by Corollary \ref{cumbounds} this will be of order $O(\frac{b_4^4 T^4}{b_4^6 T^4}\frac{1}{T}) = O(\frac{1}{b_4^2 T})$. The highest second-order partitions contain two sets with each one element from distinct rows. The remaining two sets are decomposable and take their elements in the same row. In particular, we find that Corollary \ref{cumbounds}(ii) implies in this case at least three bounded summations from which the result follows. \end{proof}

\section{Functional versus multivariate methods}
\label{func_not_mult}

Functional results are non-trivial extensions of their counterpart multivariate statements, even though the results are, as in this paper, often based on projections. This can be seen, for instance, through a simple example. Define the first-order functional autoregression $X_j=\Phi X_{j-1}+\varepsilon_j$ with
\[
\Phi(x)=a\big(\langle x,e_1\rangle+\langle x,e_2\rangle\big)e_1+a\langle x,e_1\rangle e_2,
\qquad x\in H,
\]
where $a\in(0,1)$ and $e_1,e_2\in H$ orthonormal. Assume $E[\langle\varepsilon_j,e_1\rangle^2]>0$ but $E[\langle\varepsilon_j,e_2\rangle^2]=0$. Then, the first fPC score series satisfies
\[
\langle X_j,e_1\rangle = a\langle X_{j-1},e_1\rangle +a^2\langle X_{j-2},e_1\rangle+\langle\varepsilon_j,e_1\rangle.
\]
It is seen that the projection of this FAR(1) process is an AR(2) process. So there is a complex interplay at work between functional time series and their projections onto finite-dimensional subspaces, even at the population level. The relationship becomes more intricate if population quantities are replaced by their sample counterparts. The extension to the functional level is therefore complicated, as the dynamics of a functional time series may not be captured by its finite-dimensional projections and further refinements and extensions of methods known for the latter case are needed.

%\end{document}


\begin{thebibliography}{}
\itemsep+.5ex

%\small 

\bibitem[\protect\astroncite{Antoniadis \&\ Sapatinas}{2003}]{as03}
{Antoniadis, A.\ \&\ T.\ Sapatinas} (2003).
\newblock Wavelet methods for continuous time prediction using Hilbert-valued autoregressive processes.
\newblock {\it Journal of Multivariate Analysis}\/ {\bf 87}, 133--158.

\bibitem[\protect\astroncite{Aue et al.}{2015}]{adh}
{Aue, A., Dubart Nourinho, D.\ \&\ S.\ H\"ormann} (2015).
\newblock On the prediction of stationary functional time series.
\newblock {\it Journal of the American Statistical Association}\/ {\bf 110}, 378--392.

%\bibitem[\protect\astroncite{Aue et al.}{2009}]{aghk} 
%{Aue, A., Gabrys, R., Horv\'ath, L.\ \&\ P.\ Kokoszka} (2009). 
%\newblock Estimation of a change-point in the mean function of functional data. 
%\newblock {\it Journal of Multivariate Analysis}\/ {\bf 100}, 2254--2269.

%\bibitem[\protect\astroncite{Aue et al.}{2014}]{ahhh}
%{Aue, A., H\"ormann, S., Horv\'ath, L.\ \&\ M.\ Hu\v{s}kov\'a} (2014).
%\newblock Dependent functional linear models with applications to monitoring structural change.
%\newblock {\it Statistica Sinica}\/ {\bf 24}, 1043--1073.

\bibitem[\protect\astroncite{Aue \&\ Horv\'ath}{2013}]{ah}
{Aue, A.\ \&\ L.\ Horv\'ath} (2013). 
\newblock Structural breaks in time series.
\newblock {\it Journal of Time Series Analysis}\/ {\bf 34}, 1--16.

\bibitem[\protect\astroncite{Aue et al.}{2018}]{ars17+}
Aue, A., Rice, G. \&\ O.\ S\"onmez (2018).
\newblock Detecting and dating structural breaks in functional data without dimension reduction.
\newblock {\it Journal of the Royal Statistical Society, Series B}\/ {\bf 80}, 509--529.

\bibitem[\protect\astroncite{Aue \&\ van Delft}{2019}]{avd16+}
{Aue, A.\ \&\ A.\ van Delft} (2019).
\newblock Online supplement to ``Testing for stationarity of functional time series in the frequency domain''.

%\bibitem[\protect\astroncite{Berkes et al.}{2009}]{bghk}
%{Berkes, I., Gabrys, R., Horv\'ath, L.\ \&\ P.\ Kokoszka} (2009).
%\newblock Detecting changes in the mean of functional observations.
%\newblock {\it Journal of the Royal Statistical Society, Series B}\/ {\bf 71}, 927--946.

%\bibitem[\protect\astroncite{Berkes et al.}{2011}]{ber2}  
%{Berkes, I., H\"{o}rmann, S.\ \&\ J.\ Schauer} (2011). 
%\newblock Split invariance principles for stationary processes. 
%\newblock {\it The Annals of Probability}\/ {\bf 39}, 2441--2473.

%\bibitem[\protect\astroncite{Berkes et al.}{2015+}]{ber} 
%{Berkes, I., Horv\'ath, L.\ \&\ G.\ Rice} (2015+).
%\newblock On the asymptotic normality of kernel estimators of the long run covariance of functional time series. 
%\newblock Technical Report.

\bibitem[\protect\astroncite{Bandyopadhyay \&\ Subba Rao}{2017}]{bsr17}
\textcolor{black}{{Bandyopadhyay, S. \&\ S.\ Subba Rao} (2017).
\newblock A test for stationarity for irregularly spaced spatial data.
\newblock {\it Journal of the Royal Statistical Society, Series B}\/ {\bf 79}, 95--123.}

\bibitem[\protect\astroncite{Bandyopadhyay et al.}{2017}]{bjsr17}
\textcolor{black}{{Bandyopadhyay, S., Jentsch, C.\ \&\ S.\ Subba Rao} (2017).
\newblock A spectral domain test for stationarity of spatio-temporal data.
\newblock {\it Journal of Time Series Analysis}\/ {\bf 38}, 326--351.}

\bibitem[\protect\astroncite{Besse et al.}{2000}]{bcs00}
{Besse, P., Cardot, H.\ \&\ D.\ Stephenson} (2000).
\newblock Autoregressive forecasting of some functional climatic variations.
\newblock {\it Scandinavian Journal of Statistics}\/ {\bf 27}, 673--687.

\bibitem[\protect\astroncite{Bosq}{2000}]{b00}
Bosq, D.\ (2000).
\newblock {\it Linear Processes in Function Spaces.}
\newblock Springer-Verlag, New York.

\bibitem[\protect\astroncite{Brillinger}{1981}]{b81}
Brillinger, D.\ (1981).
\newblock {\it Time Series: Data Analysis and Theory.}
\newblock McGraw Hill, New York.

\bibitem[\protect\astroncite{Brillinger \&\ Rosenblatt}{1967}]{br67}
Brillinger, D. \&\ M.\ Rosenblatt (1967).
\newblock Asymptotic theory of estimates of $k$-th order spectra.
\newblock In {\it Spectral Analysis of Time Series (Ed. B.~Harris)}, Wiley, New York, pages 153--188.

\bibitem[\protect\astroncite{Cremers \&\ Kadelka}{1986}]{ck86}
Cremers, H.\ \&\ D.\ Kadelka (1986).
\newblock On weak convergence of integral functions of stochastic processes with applications to processes taking paths in $L_p^E$.
\newblock {\it  Stochastic Processes and their Applications}\/ {\bf 21}, 305--317.

%\bibitem[\protect\astroncite{Dahlhaus}{1996}]{d96}
%Dahlhaus, R.\ (1996).
%\newblock On the Kullback--Leibler information divergence of locally stationary processes.
%\newblock {\it Stochastic Processes and Their Applications}\/ {\bf 62}, 139--168.

\bibitem[\protect\astroncite{Dahlhaus}{1997}]{d97}
Dahlhaus, R.\ (1997).
\newblock Fitting time series models to nonstationary processes.
\newblock {\it The Annals of Statistics}\/ {\bf 25}, 1--37.

\bibitem[\protect\astroncite{Dette et al.}{2011}]{dpv11}
\textcolor{black}{{Dette, H., Preu{\ss}, P. \&\ M.\ Vetter} (2011).
\newblock A measure of stationarity in locally stationary processes with applications to testing.
\newblock {\it Journal of the American Statistical Association}\/ {\bf 106}, 1113--1124.}

\bibitem[\protect\astroncite{Dwivedi \&\ Subba Rao}{2011}]{dsr16}
Dwivedi, Y. \&\ Subba Rao, S.\ (2011).
\newblock A test for second-order stationarity of a time series based on the discrete Fourier transform.
\newblock {\it Journal of Time Series Analysis}\/ {\bf 32}, 68--91.

\bibitem[\protect\astroncite{Ferraty \&\ Vieu}{2010}]{fv}
Ferraty, F.\ \&\ Vieu, P.\ (2010).
\newblock {\it Nonparametric Functional Data Analysis.}
\newblock Springer-Verlag, New York.

\bibitem[\protect\astroncite{Fremdt et al.}{2014}]{fhks14}
Fremdt, S., Horv\'ath, L., Kokoszka, P. \&\ J.G.\ Steinebach (2014).
\newblock Functional data analysis with increasing number of projections.
\newblock {\it Journal of Multivariate Aalysis}\/ {\bf 124}, 313--332.
t
\bibitem[\protect\astroncite{Hall \&\ Hosseini--Nasab}{2006}]{HH-N}
Hall, P. \&\ M.\ Hosseini--Nasab (2006).
\newblock On properties of functional principal components analysis.
\newblock {\it Journal of the Royal Statistical Society, Series B}\/ {\bf 68}, 109--126.

\bibitem[\protect\astroncite{H\"ormann et al.}{2015}]{hkh15}
H\"ormann, S., Kidzi\'{n}ski, \L. \&\ M.\ Hallin (2015).
\newblock Dynamic functional principal components.
\newblock {\it Journal of the Royal Statistical Society, Series B}\/ {\bf 77}, 319--348.

\bibitem[\protect\astroncite{H\"ormann \&\ Kokoszka}{2010}]{hk}
H\"ormann, S.\ \&\ P.\ Kokoszka (2010).
\newblock Weakly dependent functional data.
\newblock {\it The Annals of Statistics}\/ {\bf 38}, 1845--1884.

\bibitem[\protect\astroncite{Horv\'ath \&\ Kokoszka}{2012}]{hk12} 
Horv\'ath, L. \&\ P.\ Kokoszka (2012). 
\newblock {\it Inference for Functional Data with Applications.} 
\newblock Springer-Verlag, New York.

\bibitem[\protect\astroncite{Horv\'ath et al.}{2014}]{hkr14}
Horv\'ath, L.\ Kokoszka, P.\ \&\ G.\ Rice (2014). 
\newblock Testing stationarity of functional time series.
\newblock {\it Journal of Econometrics}\/ {\bf 179}, 66--82.

\bibitem[\protect\astroncite{Hsing \&\ Eubank}{2015}]{he15}
Hsing, T. \&\ R.\ Eubank (2015).
\newblock {\it Theoretical Foundations of Functional Data Analysis, with an Introduction to Linear Operators.}
\newblock Wiley, New York.

\bibitem[\protect\astroncite{Jentsch \&\ Subba Rao}{2015}]{jsr15}
\textcolor{black}{Jentsch, C. \&\ S.\ Subba Rao (2015).
\newblock A test for second order stationarity of a multivariate time series.
\newblock {\it Journal of Econometrics}\/ {\bf 185}, 124--161.}

%\bibitem[\protect\astroncite{Jin et al.}{2015}]{jww15}
%\textcolor{black}{Jin, L., Wang, S. \&\ H.\ Wang (2015).
%\newblock A new non-parametric stationarity test of time series in the time domain.
%\newblock {\it Journal of the Royal Statistical Society, Series B}\/ {\bf 77}, 893--922.}

\bibitem[\protect\astroncite{Lee \&\ Subba Rao}{2016}]{lsr16}
Lee, J. \&\ S.\ Subba Rao (2016).
\newblock A note on general quadratic forms of nonstationary stochastic processes.
\newblock Technical Report, Texas A\&M University.

\bibitem[\protect\astroncite{Li \&\ Hsing}{2010}]{lh10}
\textcolor{black}{Li, Y.\ \&\ T.\ Hsing (2010). 
\newblock Uniform convergence rates for nonparametric regression and principal component analysis in functional/longitudinal data.
\newblock {\it The Annals of Statistics} {\bf 38}, 3321--3351.}

\bibitem[\protect\astroncite{Mas \&\ Menneteau}{2003}]{mm03}
\textcolor{black}{Mas, A. \&\ L.\ Menneteau (2003).
\newblock Perturbation approach applied to the asymptotic study of random operators.
\newblock In: {\it Hoffmann-J{\o}rgensen et al.\ (eds.). High dimensional probability III}. Birkh\"auser, Boston, pages 127--134.}

%\bibitem[\protect\astroncite{Nason}{2013}]{n13}
%\textcolor{black}{Nason, G.\ (2013).
%\newblock A test for second-order stationarity and approximate confidence intervals for localized autocovariances for locally stationary time series.
%\newblock {\it Journal of the Royal Statistical Society, Series B}\/ {\bf 75}, 879--904.}

%\bibitem[\protect\astroncite{Nelson}{1969}]{n69}
%Nelson, E.\ (1969).
%\newblock {\it Topics in Dynamics, Volume I: Flows.}
%\newblock Princeton University Press.

\bibitem[\protect\astroncite{Panaretos \&\ Tavakoli}{2013}]{pt13}
Panaretos, V. \&\ S.\ Tavakoli (2013).
\newblock Fourier analysis of stationary time series in function space.
\newblock {\it The Annals of Statistics}\/ {\bf 41}, 568--603.

\bibitem[\protect\astroncite{Paparoditis}{2009}]{p09}
Paparoditis, E.\ (2009).
\newblock Testing temporal constancy of the spectral structure of a time series.
\newblock {\it Bernoulli}\/ {\bf 15}, 1190--1221.

\bibitem[\protect\astroncite{Preu{\ss} et al.}{2013}]{dpv13}
\textcolor{black}{{Preu{\ss}, P., Vetter, M. \&\ H.\ Dette} (2013).
A test for stationarity based on empirical processes.
\newblock {\it Bernoulli}\/ {\bf 19}, 2715--2749.}

\bibitem[\protect\astroncite{Priestley \&\ Subba Rao}{1969}]{psr69}
Priestley, M.B. \&\ T.\ Subba Rao (1969).
\newblock A test for non-stationarity of time-series.
\newblock {\it Journal of the Royal Statistical Society, Series B}\/ {\bf 31}, 140--149.

\bibitem[\protect\astroncite{Ramsay \&\ Silverman}{2005}]{rs10}
Ramsay, J.O. \&\ B.W.\ Silverman (2005).
\newblock {\it Functional Data Analysis (2nd ed.).}
\newblock Springer-Verlag, New York.

\bibitem[\protect\astroncite{Subba Rao}{2018}]{sr16}
Subba Rao, S. (2018).
\newblock Orthogonal samples for estimators in time series.
\newblock {\it Journal of Time Series Analysis}\/ {\bf 39}, 313--337.

\bibitem[\protect\astroncite{van Delft \&\ Eichler}{2018a}]{vde16} 
Van Delft, A. \&\ M.\ Eichler (2018a).
\newblock Locally stationary functional time series.
\newblock {\it Electronic Journal of Statistics}\/ {\bf 12}, 107--170.

\bibitem[\protect\astroncite{van Delft \&\ Eichler}{2018b}]{vde18} 
Van Delft, A. \&\ M.\ Eichler (2018b).
\newblock A note on Herglotz's theorem for time series on function spaces.
\newblock Preprint, available at {\tt https://arxiv.org/abs/1801.04262}.

\bibitem[\protect\astroncite{von Sachs \&\ Neumann}{2000}]{vsn99}
Von Sachs, R. \&\ M.H.\ Neumann (1999).
\newblock A wavelet-based test for stationarity.
\newblock {\it Journal of Time Series Analysis}\/ {\bf 21}, 597--613.

\end{thebibliography}

\begin{thebibliography}{}

%\small 

\bibitem[\protect\astroncite{Aue \&\ van Delft}{2019}]{avd16_main}
{Aue, A.\ \&\ A.\ van Delft} (2019).
\newblock Testing for stationarity of functional time series in the frequency domain.
\newblock Preprint. 

%\bibitem[\protect\astroncite{Cremers \&\ Kadelka}{1986}]{ck86}
%Cremers, H.\ \&\ D.\ Kadelka (1986).
%\newblock On weak convergence of integral functions of stochastic processes with applications to processes taking paths in $L_p^E$.
%\newblock {\it  Stochastic Processes and their Applications}\/ {\bf 21}, 305--317.

\bibitem[{Kato(1966)}]{Kato66}
Kato, T.
\newblock {\em Perturbation Theory for Linear Operators\/}.
\newblock Springer, Berlin, Germany, 1966.

\bibitem[\protect\astroncite{Nelson}{1969}]{n69}
Nelson, E.\ (1969).
\newblock {\it Topics in Dynamics, Volume I: Flows.}
\newblock Princeton University Press.

%\bibitem[\protect\astroncite{Panaretos \&\ Tavakoli}{2013}]{pt13}
%Panaretos, V. \&\ S.\ Tavakoli (2013).
%\newblock Fourier analysis of stationary time series in function space.
%\newblock {\it The Annals of Statistics}\/ {\bf 41}, 568--603.

\end{thebibliography}
\end{document}